\documentclass[letterpaper, 11pt, reqno]{amsart}
\usepackage[foot]{amsaddr}
\usepackage[T1]{fontenc}

\usepackage{macros}
\usepackage{fullpage}
\let\textHungarianAccent\H
\usepackage{notations}
\AtBeginEnvironment{thebibliography}{\let\H\textHungarianAccent}

\title{Fast Almost-Uniform Sampling of Random \(k\)-SAT Solutions}

\author{Kun He, Zhidan Li, Kuan Yang}
\address[Kun He]{DEKE Lab, Renmin University of China, Beijing, China. \textnormal{E-mail: \url{hekun2023@ruc.edu.cn}}}
\address[Zhidan Li, Kuan Yang]{School of Computer Science, Shanghai Jiao Tong University, Shanghai, China. \textnormal{Emails: \url{yueEnTeRnAL@sjtu.edu.cn}, \url{kuan.yang@sjtu.edu.cn}}.}

\def\DensityCondition{{\alpha \le \frac{2^k}{k^{16}}}}
\newcommand{\defeq}{\triangleq}

\numberwithin{equation}{section}

\begin{document}

\begin{abstract}
We study approximately uniform sampling of satisfying assignments from random $k$-SAT formulas. Fix a constant $B>0$. For every $k\ge 2^{20}$ and density $0<\alpha\le 2^k/k^{16}$, we prove that, with high probability over a random formula on $n$ variables, every accuracy $n^{-B}\le\varepsilon\le 1/2$ admits a sampler whose output distribution is within total variation distance $\varepsilon$ of the uniform distribution on satisfying assignments and whose expected running time is
$$
    \widetilde O_B\!\left(\operatorname{poly}(k,\alpha+1)n^{2+\eta}\right),
    \qquad 0<\eta\le 2^{16-k}.
$$
In particular, the exponent of $n$ is $2+O(2^{-k})$, while the dependence on the width and density is polynomial with a universal degree. This improves the parameter-dependent polynomial exponent of the sampling algorithm of Chen, Lonkar, Wang, Yang, and Yin (STOC 2025) \cite{CLWYY25} in the density regime $2^k/\operatorname{poly}(k)$.

Our algorithm separates a high-degree core from the remaining variables. Every residual formula on the good variables is a deletion subformula of a common bounded-dependency formula. Adapting the rejection--repair framework of Liu, Wang, Yin, Zhang, and
Zhou~\cite{LWYZZ26}, we construct an executable sampler for all such residual formulas using ordinary connected violation sets in place of weighted $2$-tree witnesses, together with a canonical parity pairing, recursively generated proposals, and a truncated source with global abort and restart. The same construction yields local marginal coins, which are used to simulate approximate block heat-bath updates on the core. To analyze the resulting core chain, we integrate out the good variables, represent the core marginal by an absolutely convergent polymer expansion, and combine a component-swap variance comparison with an adaptive exploration bound. Absorbing single-component interactions into the local weights leaves interactions of cluster size at least two; retaining their additional decay gives a spectral gap of at least $(2s n^{\gamma_{\mathrm{mix}}})^{-1}$, where $s\le n$ is the number of core components and $0<\gamma_{\mathrm{mix}}\le2^{-k+12}$. Together with the residual sampler, this yields the stated running time.
\end{abstract}

\maketitle

\newpage

\tableofcontents

\newpage

\section{Introduction} \label{sec:introduction}

Random constraint satisfaction problems provide a fundamental setting in which
to study the relationship between the existence of solutions, their
geometry, and the computational cost of finding them. A central example
is random \(k\)-SAT: a formula \(\Phi = \Phi(k, n, m)\) on \(n\) Boolean variables consists
of \(m=\lfloor\alpha n\rfloor\) independently generated clauses, each
containing \(k\) random literals. The density \(\alpha\) controls the
number of constraints and the interactions among variables. As the
density increases, random formulas undergo a transition from being
satisfiable with high probability to being unsatisfiable with high
probability.

Numerical experiments and statistical-physics analyses of this transition
motivated the \emph{satisfiability conjecture}~\cite{MPZ02,MMZ05}.
It asserts that, for each fixed \(k\ge3\), there is a critical density
\(\alpha_{\mathrm{sat}}(k)\) such that
\[
 \lim_{n\to\infty}
 \Pr{\Phi(k, n,\lfloor\alpha n\rfloor)
                  \text{ is satisfiable}}
 =\begin{cases}
   1, & \alpha<\alpha_{\mathrm{sat}}(k),\\
   0, & \alpha>\alpha_{\mathrm{sat}}(k).
  \end{cases}
\]

Establishing the conjecture and determining the threshold became central challenges in the study of random constraint systems.
Friedgut~\cite{Fri99} established a sharp transition, although this alone does not identify a limiting critical density. Using the second-moment method, Achlioptas and Peres~\cite{AP03} narrowed the satisfiability bounds to an additive gap of order \(k\) around \(2^k\ln2\).
Coja-Oghlan and Panagiotou~\cite{COP16} subsequently obtained the predicted constant-order correction in the large-\(k\) asymptotics.
Ding, Sly, and Sun~\cite{DSS22} then proved the existence and identified the value of the limiting threshold for each sufficiently large fixed \(k\), confirming the predicted asymptotic value
\[
 \alpha_{\mathrm{sat}}(k)
 =2^k\ln 2-\frac{1+\ln 2}{2}+o_k(1).
\]

Knowing where solutions exist leaves a further algorithmic question:
how can they be found efficiently? The threshold proof does not supply a polynomial-time search algorithm throughout the satisfiable regime.
Coja-Oghlan's search algorithm~\cite{Coj10} finds a satisfying assignment at densities up to \((1-o_k(1))2^k\ln k/k\), a \(\Theta(\log k/k)\) factor below the satisfiability threshold but matching the so-called \emph{clustering threshold}, where the emergence of well-separated clusters and frozen variables creates long-range constraints that obstruct local decisions~\cite{AC08}.
Rigorous failure results for belief propagation and basic survey propagation guided decimation~\cite{CO17,Het16}, together with lower bounds for low-degree polynomial algorithms~\cite{BH21}, show that broad families of algorithms encounter barriers at or before a constant multiple of this threshold, motivating the conjecture that the polynomial-time search transition coincides asymptotically with the clustering threshold; this remains open for general polynomial-time algorithms.
This gap also highlights the distinction between existence and algorithmic access, and motivates asking how much of the solution space can be accessed efficiently at high densities.

The statistical-physics picture also describes changes within the satisfiable phase: a dominant connected region gives way to many separated clusters, and near the satisfiability threshold a few clusters carry most of the solution weight~\cite{MMZ05,DSS22,CLWYY25}.
Figure~\ref{fig:phase-transitions} places these geometric transitions alongside the density scales.
\begin{figure}[t]
    \centering
    \includegraphics[scale=0.15]{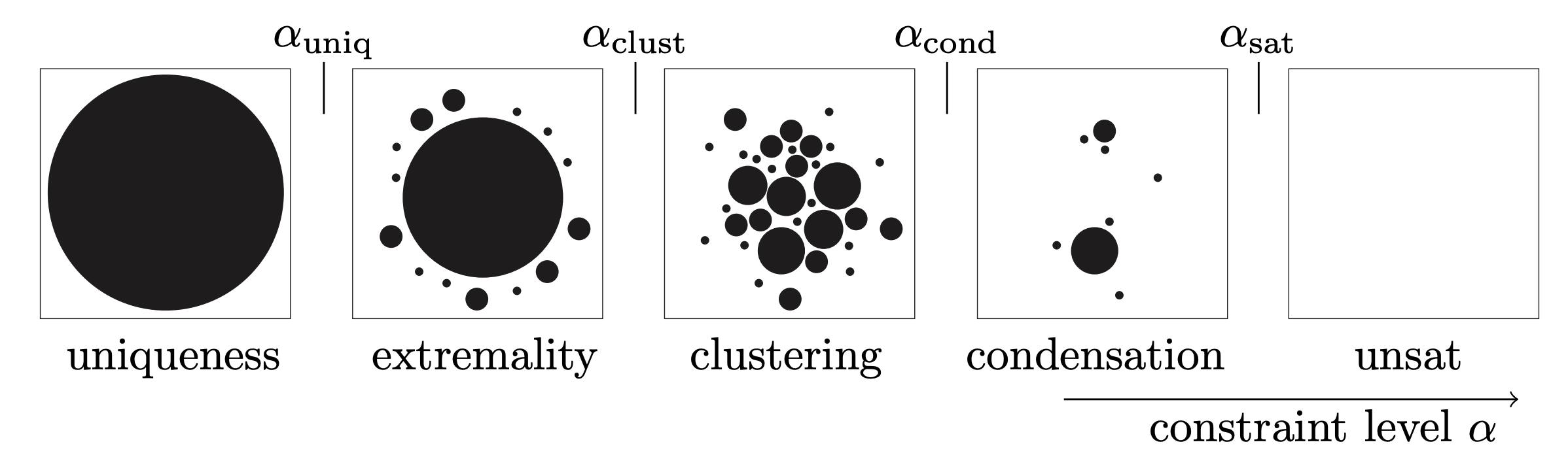}
    \caption{\small The conjectural phase diagram of random \(k\)-SAT, adapted from~\cite{KMRSZ07, DSS22}.
    }
    \label{fig:phase-transitions}
\end{figure}

Beyond finding a single solution, we study \emph{approximately uniform sampling}: given \(\Phi\), generate a satisfying assignment whose distribution is within a prescribed total variation distance \(\eps\) of the uniform law \(\mu_\Phi\).
Sampling asks for more than a certificate of satisfiability. A search algorithm may consistently select a small, atypical part of the solution space; a sampler must represent the relative weights of all its parts. Such access allows us to estimate marginal probabilities and other observables of a typical solution, connecting the algorithmic study of random formulas with their Gibbs distributions and partition functions. Sampling is also closely related to approximate counting, provided that the required conditional instances remain tractable. Thus the sampling problem tests whether the rich probabilistic information carried by a random formula can be made computationally accessible.

Progress in sampling has two dimensions: the range of densities covered and the dependence of the running time on the parameters. An algorithm taking time \((n/\eps)^{\poly{k,\alpha}}\) is polynomial-time for every fixed \(k,\alpha\), but its polynomial degree can increase with both parameters. This distinction is particularly important when the admissible density itself is exponential in \(k\). Our objective is to combine the asymptotically optimal density \(2^k/\poly{k}\) with a running time whose polynomial degree is independent of the width and density.

Early work of Montanari and Shah~\cite{MS07} used correlation decay to approximate the normalized logarithm of a partition function for a permissive, positive-temperature version of random \(k\)-SAT, at densities up to \((1+o_k(1))2\ln k/k\). This is the Gibbs uniqueness threshold and is exponentially below the satisfiability threshold. Galanis, Goldberg, Guo, and Yang~\cite{GGGY21} first extended relative approximation of the number of satisfying assignments to exponentially large densities.
Their adaptation of Moitra's linear-programming method works, for sufficiently large \(k\), when \(\alpha<2^{k/300}/k^3\). A central innovation was to isolate high-degree variables and the variables heavily influenced by them, thereby recovering control of the remaining marginals.
The resulting running time has a parameter-dependent polynomial exponent.
Two concurrent works substantially improved both the density range and the running time. Chen, Galanis, Goldberg, Guo, Herrera-Poyatos, Mani, and Moitra~\cite{CGGGHMM24} developed a spectral-independence analysis robust to high-degree variables and obtained a nearly linear-time sampler up to density \(\alpha\le 2^{0.039k}\), based on a uniform-block Glauber dynamics on a carefully selected subset of variables.
Independently, He, Wu, and Yang~\cite{HWY23} obtained an almost-linear time sampling algorithm reaching the larger density range \(2^{k/3}/\poly{k}\). Their approach replaces the enumeration implicit in the linear-programming method with recursive sampling, supported by a refined analysis of the random formula's structure.
Chen, Lonkar, Wang, Yang, and Yin~\cite{CLWYY25} subsequently reached \(2^k/\poly{k}\) for both approximate counting and sampling, with running time \((n/\eps)^{\poly{k,\alpha}}\). Their analysis of recursive coupling uses denser witness structures in place of the usual \(2\)-tree witnesses, exploiting expansion of random constraint systems. This established that the exponential density condition for sampling can match that of satisfiability and search, while leaving a polynomial factor in \(k\) between the respective density bounds.

These results rely crucially on the random structure of the formula. 
Approximate counting for general \(k\)-CNF formulas encounters hardness at variable degrees of order \(2^{k/2}\), under the usual complexity assumptions~\cite{BGGGS19}. The random model at density \(2^k/\poly{k}\) has average occurrence degree \(k\alpha\), well beyond that exponential bound. The result of~\cite{CLWYY25} therefore demonstrates the advantage of random structure for counting and sampling. The remaining question is whether this advantage can be obtained without a parameter-dependent exponent in the running time:
\begin{quote}
\raggedright\itshape Can one sample approximately uniform solutions of random \(k\)-SAT at density \(2^k/\poly{k}\) in expected time polynomial in \(n,k,\alpha,\eps^{-1}\), with a universal polynomial degree?
\end{quote}

Recent counting and sampling versions of the Lov\'asz local lemma suggest a possible route to a universal polynomial degree. Liu, Wang, Yin, Zhang, and Zhou~\cite{LWYZZ26} achieved a breakthrough on the counting and sampling LLL problem, obtaining an approximate counting algorithm and an exact sampler with expected
near-linear running time under the essentially optimal symmetric condition \(p(D+1)^2=O(1)\). In concurrent work, Jain, Mizgerd, and Pham~\cite{JMP26} establish a general asymmetric counting and sampling criterion on the square of the dependency graph. Under this criterion, they give an FPRAS for the satisfaction probability and an approximate sampler with polynomial expected running time.

However, neither result applies directly as a black box in our regime. For random \(k\)-SAT, even if we ignore high-degree irregularities, a clause of random \(k\)-SAT has violation probability \(p\asymp 2^{-k}\) and typical dependency degree \(D\asymp \alpha k^2\). The symmetric specialization of either squared-dependency condition therefore requires $\alpha \le \frac{2^{k/2}}{\poly{k}}$, exponentially below our target density \(2^k/\poly{k}\). Moreover, the maximum dependency degree of the original random formula is not uniformly bounded as \(n\) grows. We therefore use the random structure of the formula more strongly to overcome these obstacles: high-degree variables are handled by a separate core dynamics, while the residual sampler adapts the rejection--repair mechanism of~\cite{LWYZZ26} from \(2\)-tree witnesses to ordinary connected violation sets. A formula-specific root estimate, valid simultaneously for all residual deletion subformulas, replaces the worst-case \(pD^2\) bound.

\subsection{Our result}

We answer this question affirmatively in the occurrence model with inverse-polynomial accuracy. In particular, our sampler supports error tolerances down to \(n^{-B}\) for any fixed \(B>0\). Here each clause chooses its \(k\) variable occurrences independently and uniformly from \([n]\), and each occurrence receives an independent fair sign. Repeated variables, tautological clauses, and different clause labels with identical contents are allowed.

\begin{theorem} \label{thm:fast-sampler}
    Fix an integer $k \ge 2^{20}$, a real density $\alpha > 0$ with
    \begin{align*}
        \DensityCondition
    \end{align*}
    and a constant $B > 0$.
    For all $n$ above a threshold depending on $k,\alpha,B$, with high probability over $\Phi\sim\Phi(k,n,m)$ with $m=\floor{\alpha n}$, the following holds simultaneously for $n^{-B}\le\eps\le1/2$.
    On input $(\Phi,\eps)$, our randomized algorithm terminates almost surely and outputs a satisfying assignment $X$ in time $T$ such that
    \begin{align*}
        \TV{\Law(X)}{\mu_\Phi} \le \eps, \qquad \E{T} \le \widetilde{O}_B\ab(\poly{k, \alpha + 1} n^{2 + \eta}), \quad 0 < \eta \le 2^{16 - k},
    \end{align*}
    where the polynomial in $k,\alpha+1$ has universal degree, $\widetilde O_B$ hides factors polylogarithmic in $n$ and a constant depending on $B$, and $\eps$ is given under the binary input convention of \Cref{subsec:full-sampler-input}.
\end{theorem}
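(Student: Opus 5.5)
The plan follows the architecture described in the abstract. The proof has four pieces: a structural lemma about the random formula, a residual sampler, a mixing bound for the core chain, and the final accounting of running time and error.

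\textbf{Structural decomposition.} First I would show that, with high probability over $\Phi$, the variables split into a high-degree \emph{core} $C$ and \emph{good} variables $V\setminus C$. The core is the closure of the variables of degree above $\poly{k}\alpha$ under ``clauses with many core occurrences.'' I would establish the following.
\begin{enumerate}
\item The core decomposes into connected components of size $O(\log n)$.
\item The residual formula on the good variables, under any partial assignment of $C$, is a deletion subformula of one fixed formula $\Phi_{\mathrm{good}}$. In $\Phi_{\mathrm{good}}$ every clause keeps at least $k/2$ good literals, and the dependency degree is bounded by $\poly{k,\alpha+1}$.
\item Every set of clauses that is connected in the dependency graph and has size $\ell\ge \log n$ has at most $O(\ell k)$ distinct variables.
\end{enumerate}
These are first-moment and expansion computations over the occurrence model. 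I would take them from the earlier structural sections and then fix such a typical $\Phi$.

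\textbf{Residual sampler.} On $\Phi_{\mathrm{good}}$ and all of its deletion subformulas, I would run the adapted rejection--repair sampler. The analysis replaces the $pD^2$ condition with a root estimate of the form
\[
\sum_{S\ni c\text{ connected}} \prod_{c'\in S}2^{-k/2}\,\mathrm{e}^{|S|}\le 1 .
\]
This follows from the expansion property and $\DensityCondition$, because the number of connected sets of size $\ell$ containing $c$ is at most $(\mathrm{e}\,k^2\alpha)^\ell$. Given the estimate, the sampler outputs an exact residual sample in expected time $\poly{k,\alpha+1}\cdot|\text{residual}|$, up to the truncation/abort probability, which is $n^{-B-2}$. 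The same machinery gives local marginal coins with $\eps'$-bias at polylogarithmic cost.

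\textbf{Core chain mixing (the main obstacle).} The algorithm runs block heat-bath dynamics on the $s$ core components, with the good variables integrated out. The stationary law is the core marginal $\nu(\sigma_C)\propto \prod_i w_i(\sigma_{C_i})\,Z_{\mathrm{good}}(\sigma_C)$. I would expand $\log Z_{\mathrm{good}}$ as a polymer series in connected violation sets. Absolute convergence follows from the same root estimate.

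Absorbing the single-component terms into the $w_i$ leaves cross-component interactions. Each of these involves a polymer touching at least two components, so its weight is at most $2^{-k|S|/2}$ with $|S|\ge 2$. I would then compare variances via a component-swap argument, showing
\[
\mathrm{Var}_\nu(f)\le 2s\,n^{\gamma}\sum_i \E{\mathrm{Var}_i(f)}.
\]
The loss factor $n^{\gamma}$ is bounded through an adaptive exploration: from each component, grow the polymer clusters that reach other components. The total interaction per component is at most $\log n\cdot 2^{-k+O(\log k)}$, and exponentiating this gives $n^{2^{-k+12}}$. The delicate point is keeping these bounds uniform over components of size up to $\log n$ without losing a $\poly{n}$ factor. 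This is where the extra decay from cluster size at least $2$ is essential.

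\textbf{Running time and error.} The relaxation time is $2sn^{\gamma}$, and $\log(1/\nu_{\min})=O(n)$. So $O(sn^{\gamma}(n+B\log n))$ block updates suffice for TV error $\eps/3$. One might worry that this gives $n^{2+\gamma}$ updates, each costing a residual sample of cost $n$. To avoid that, each block update only needs the local marginal coins. These cost $\poly{k,\alpha+1}\,\mathrm{polylog}\, n$ per update via local exploration, giving roughly $n^{2+\gamma}$ total. A final full residual sample costs $O(n)$.

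The simulated updates are approximate, with error $n^{-B-3}$ each. A union bound over $n^{2+\gamma}$ steps keeps the accumulated error below $\eps/3$ when $\eps\ge n^{-B}$. Restarts occur with probability $o(1)$, which preserves the expected time. Finally, setting $\eta=\gamma_{\mathrm{mix}}+o(1)\le 2^{16-k}$ gives the stated bound in \Cref{thm:fast-sampler}.
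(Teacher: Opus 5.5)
Your architecture matches the paper's, but the root estimate, which everything else rests on, fails as you set it up. You only guarantee that residual clauses keep $k/2$ good literals, and you then weight each clause by $2^{-k/2}$. A residual clause typically has dependency degree about $k^2\alpha$. Your count $(\mathrm e k^2\alpha)^\ell$ is already too optimistic, because the connected-set counting lemma needs the \emph{maximum} residual degree, which is bounded only by $\overline D=\Theta(k^8\lambda)$. Even with that count, at $\alpha=2^k/k^{16}$ your series has ratio about $\mathrm e^2k^2\alpha\,2^{-k/2}=\mathrm e^2\,2^{k/2}/k^{14}\gg1$. Under your weights, the connected triples through a fixed clause alone contribute on the order of $2^{k/2}/k^{28}$.

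With a per-clause factor $2^{-k/2}$, three ingredients of your argument hold only for $\alpha\le 2^{k/2}/\mathrm{poly}(k)$: the Koteck\'y--Preiss condition, the bounded proposal masses in the residual sampler, and the cross-component interaction bound. That is exactly the worst-case barrier the paper is built to overcome. So the claim that your estimate follows from expansion and $\alpha\le 2^k/k^{16}$ is false. For the same reason, your bound $\log n\cdot 2^{-k+O(\log k)}$ on the cross-component interaction per component does not follow once the $D^{\ell-1}$ combinatorial factor is included.

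What is missing is a violation weight of $2^{-k+O(1)}$ per clause that holds for whole connected sets. The paper gets it in three steps.
\begin{itemize}
\item It closes the core under every clause containing more than a \emph{constant} number (eight) of core variables.
\item It controls this aggressive closure with small-set expansion: every set $S$ of at most $r=\lfloor q_0n\rfloor$ clauses has $|\mathrm{vbl}(S)|\ge(k-4)|S|$. This keeps the closure of any set of high-degree variables within a factor $(k-4)/5$ of its size. It also gives $|\mathrm{vbl}^{\mathrm{good}}(S)|\ge(k-12)|S|$, hence $q_S\le p_*^{|S|}$ with $p_*=2^{-k+12}$. Against $(\mathrm e\overline D)^{\ell-1}$ connected sets, the series then has ratio $\mathrm e^{2+\tau}\overline Dp_*\ll1$.
\item For large connected sets no deterministic expansion can work, since more than $n/(k-12)$ clauses cannot carry $(k-12)|S|$ distinct variables. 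There the paper averages over the random good-literal signs, using $\mathbb E[q_S]\le 2^{-\sum_{c\in S}|\mathrm{vbl}^{\mathrm{good}}(c)|}$, and applies Markov's inequality above a logarithmic cutoff.
\end{itemize}
Your item~3 (``at most $O(\ell k)$ distinct variables'') is vacuous and points the wrong way; it supplies neither the small-set lower bound nor the large-set tail.

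There are also two smaller slips.
\begin{itemize}
\item Your variance inequality has an extra factor $s$. It would give relaxation time $2s^2n^\gamma$ rather than the $2sn^\gamma$ you then use. The comparison must read $\mathrm{Var}(f)\le 2n^\gamma\sum_i\mathbb E[\mathrm{Var}_i(f)]$.
\item A simulated block update is not polylogarithmic. Drawing a uniform feasible candidate costs up to $n^{\kappa}$. The clause-by-clause acceptance with local coins is only guaranteed with probability $(1-\mathrm e p_*)^{r_i}\ge n^{-O(2^{-k})}$. These add $\kappa_0+\xi=O(2^{-k})$ to $\eta$, not $o(1)$, though $\eta\le 2^{16-k}$ still holds.
\end{itemize}
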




\begin{remark}
    The cutoff \(2^{20}\), the power \(16\), and the running-time degree are conservative. We have not optimized these constants as much as possible.
\end{remark}

\Cref{thm:fast-sampler} applies throughout $\alpha\le 2^k/k^{16}$. In the density and accuracy ranges covered by both results, the sampling bound of~\cite{CLWYY25} is $(n/\eps)^{\poly{k,\alpha}}$. Here the exponent of $n$ is $2+\eta$, with $0<\eta\le 2^{16-k}$, and the polynomial factor in $k$ and $\alpha+1$ has universal degree.


\subsection{Technique overview}

Our algorithm has two levels: a recursive sampler for the good variables conditional on an assignment to a high-degree core, and a block heat-bath chain targeting the marginal distribution on the core. The main tasks are to make the conditional sampler efficient throughout the high-density regime and to prove that the core chain mixes in polynomial time.

Previous fast sampling algorithms for random $k$-SAT~\cite{CGGGHMM24,HWY23} handle the good variables first and complete the remaining variables conditional on the sampled values.
Their analysis cannot reach density $2^{k}/\poly{k}$. In contrast, we place the bad core in the outer sampling layer, using a block heat-bath chain, while an inner sampler handles the good variables and the residual clauses conditional on the current core assignment.

\subsubsection*{A core-first treatment of high-degree variables.}
Following the separation approach of~\cite{GGGY21,HWY23,CLWYY25}, we collect the high-degree variables and those strongly influenced by them into a bad core. With high probability, every core component contains at most $\log_2 n$ variables and has a nonempty feasible set. Removing the core literals from every non-core clause produces a fixed good formula $\Psi$. For any feasible core assignment $\sigma$, the conditional formula on the good variables is a deletion subformula $\Psi[D_\sigma]$: clauses already satisfied by $\sigma$ disappear, whereas all remaining clauses keep the same good literals. We sample the core marginal by updating one whole component at a time, and use the residual sampler only as an inner routine for implementing these updates and for the final completion. In particular, we never enumerate the joint state space of the core or condition on a previously sampled global assignment to the good variables.

\subsubsection*{A fast rejection--repair sampler for the good variables.}
The residual problem can in principle be approached by adapting two recent algorithms for sampling Lov\'asz local lemma, either the recursive insertion framework of~\cite{JMP26} or the rejection--repair framework of~\cite{LWYZZ26}. We use the latter because its local transport mechanism runs in near-linear time and thus leads to the sharper running-time bound needed here. The bounded-degree analysis of~\cite{LWYZZ26} is based on weighted $2$-tree witnesses and, when applied through its worst-case condition, reaches only the density bound $2^{k/2}/\poly{k}$. We replace these witnesses by ordinary connected violation sets and exploit properties specific to random formulas. Expansion bounds the probability that a small connected family of residual clauses is violated simultaneously; for larger families, averaging over the independent random literal signs controls their total contribution. The resulting root estimate holds on one high-probability event simultaneously for every deletion subformula of $\Psi$, including those selected adaptively by the core chain.

This uniform estimate supports the entire executable rejection--repair construction: a canonical parity pairing, recursively generated connected-set proposals, exact Bernoulli factories, and a truncated source with global abort and restart. It gives an almost-uniform sampler for every residual formula in expected time $\widetilde O(n+m\,\poly{k,\overline D})$. The same proposal mechanism also produces local marginal coins for conditional clause-violation probabilities. These coins allow a candidate assignment to one core component to be tested clause by clause, thereby implementing an approximate heat-bath update without computing extension counts.

\subsubsection*{Rapid mixing of the core dynamics.}
Small core components make individual updates affordable, but do not by themselves imply rapid mixing: after the good variables are integrated out, distinct components remain coupled through their numbers of satisfying extensions. We express this core marginal by an absolutely convergent polymer expansion. Regrouping its cluster terms yields weak, possibly high-order interactions among the core components. For finite truncations of this interaction representation, a two-copy component-swap argument compares the global variance with the sum of the single-component conditional variances. The comparison reduces the dependence between components to connected regions of an auxiliary random hypergraph whose interaction labels can be dominated independently of the sampled core configurations.

We then explore the component containing a fixed block in this dominating hypergraph. The random-formula incidence bound controls the number of clauses touching the explored region, and the core-growth estimate relates its size to the high-degree variables it contains. To obtain the sharper exponent, we first absorb every single-component interaction into its local weight. Each remaining interaction has cluster size at least two, so the weighted polymer-root estimate improves its unweighted root bound from $\rho$ to $\rho_0=\rho e^{-2\tau}$. If $T$ is the total cluster size of the open interactions exposed during exploration, the number of discovered blocks is at most $1+T$. Combining this bound with the exponential moment of $T$ gives
\[
    \operatorname{Var}_{\mu^{\bad}}(f)
    \le 2n^{\gamma_{\mathrm{mix}}}\sum_{i=1}^s
    \mathbb E_{\mu^{\bad}}\!\left[
        \operatorname{Var}\!\left(f\mid X_{-i}\right)
    \right],
\]
where $\gamma_{\mathrm{mix}}=64\rho_0/(1-25/(3k^3))\le2^{-k+12}$. Thus the core heat-bath chain has spectral gap at least $(2s n^{\gamma_{\mathrm{mix}}})^{-1}$ and mixes from any feasible initial state in $O_B(s n^{1+\gamma_{\mathrm{mix}}})$ updates. Local marginal coins simulate these updates with an accumulated error at most $\eps/4$, and a last call to the residual sampler draws the good variables conditional on the mixed core assignment. Accounting for feasible-candidate rejection and for all rejected heat-bath trials adds only $O(2^{-k})$ to the exponent, yielding $\widetilde O_B(\poly{k,1+\alpha}n^{2+O(2^{-k})})$ overall.

\subsection{Organization}
The remainder of the paper is organized as follows. \Cref{sec:preliminary} fixes the notation and reviews the probabilistic and analytic tools used throughout, including polymer models, the Lov\'asz local lemma, and functional inequalities for Markov chains. \Cref{sec:random-sat-structure} establishes the expansion and incidence properties of random formulas, constructs the high-degree core, and proves that feasible core assignments extend to satisfying assignments of the residual formula.
\Cref{sec:core-fast-sampling} proves a polynomial spectral-gap bound for the heat-bath dynamics on the core using a polymer representation.
\Cref{secd:rejection-repair-sampler} constructs the residual sampler and the local marginal coins used in each core update. \Cref{sec:full-sampler-design} combines these routines and proves \Cref{thm:fast-sampler}. \Cref{sec:appendix} supplies the component-swap, exploration, and reciprocal-test proofs.

\subsection*{AI disclosure}
ChatGPT suggested the main approach to proving \Cref{thm:fast-sampler}, combining the rapid convergence of the heat-bath dynamics on the core with a modification of the rejection-repair dynamics in \cite{LWYZZ26} for the good formula to obtain a faster running time.
GPT was also used throughout the preparation of this manuscript for language editing and formatting.
The authors have verified and refined all proofs and take full responsibility for the paper.

\section{Preliminaries} \label{sec:preliminary}

This section fixes notations and provides some necessary preliminaries.

\subsection{Notations}
For a natural number $n$, $[n]$ is the set $\set{1, \ldots, n}$.
Both $\log$ without a subscript and $\ln$ denote the natural logarithm; logarithms to base two are written as $\log_2$.
For an event $\mathcal E$, the indicator of $\mathcal E$ is denoted by $\id{\mathcal E}$.

For an assignment $\sigma$ on $U$, we use $\sigma(v)$ to denote the value of $\sigma$ on $v$.
For a subset $S$ of $U$, denote by $\sigma(S)$ the partial assignment $\sigma|_S$ of $\sigma$ on $S$.
To assign values, we use $S \gets \sigma$ to describe the term that $S$ is assigned the partial assignment $\sigma$.
For the event $A$ which is measurable by assignments, we use $\one_A$ to denote the indicator function of $A$, \IE,
\begin{align*}
    \one_A(\sigma) = \id{A\;\text{holds under the assignment}\;\sigma}.
\end{align*}

Fix a probability distribution $\pi$ on a state space $\Omega$.
For every test function $f : \Omega \to \mathbb{R}$, we define the expectation and the variance of $f$ with respect to $\pi$ as
\begin{align*}
    \E[\pi]{f} \triangleq \sum_{x \in \Omega} \pi(x) f(x), \quad \Var[\pi]{f} \triangleq \E[\pi]{f^2} - \E[\pi]{f}^2.
\end{align*}
The total variation distance of two probability distributions $\mu$ and $\nu$ on a common state space $\Omega$ is defined as
\begin{align} \label{eq:def-TV-distance}
    \TV{\mu}{\nu} \triangleq \frac{1}{2} \sum_{x \in \Omega} \abs{\mu(x) - \nu(x)}. 
\end{align}
A \emph{coupling} of two probability distributions $\mu$ and $\nu$ is a
joint distribution of a pair of random objects $(X,Y)$ such that
$X\sim\mu$ and $Y\sim\nu$. Equivalently, the first and second marginals
of the joint distribution are $\mu$ and $\nu$, respectively. The random
objects $X$ and $Y$ need not be independent; their dependence may be
chosen to facilitate a comparison while preserving both marginal laws.

We also extend \eqref{eq:def-TV-distance} to zero-mean functions.
For every function $f : \Omega \to \mathbb R$ such that $\sum_{x \in \Omega} f(x) = 0$, define $\norm{f}_{\mathrm{TV}} \triangleq \frac{1}{2} \sum_{x \in \Omega} \abs{f(x)}$.

\subsection{Random SAT formula}
A random $k$-SAT instance $\Phi \sim \Phi(k, n, m)$ is a $k$-SAT formula on $n$ variables $\V = \set{v_1, \ldots, v_n}$ with $m$ labeled clauses $\C$.
Each clause $C_a$ of label $a$ selects $k$ literals $\ell_{a, 1}, \ldots, \ell_{a, k}$ independently and uniformly at random from $\set{v_1, \ldots, v_n, \neg v_1, \ldots, \neg v_n}$.
We often refer to $k$ as the width of the formula.
For simplicity, we abuse the use of the clause $C_a$ and its label $a$.
Let $\var(\ell)$ be the variable name of a literal $\ell$.
For a clause label $a$ and a set $S\subseteq\C$ of clause labels, define
\[
    \vbl(a)=\vbl(C_a)\triangleq
    \set{\var(\ell_{a,i})\mid i\in[k]},\qquad
    \vbl(S)\triangleq\bigcup_{a\in S}\vbl(a).
\]
These sets record syntactic variable occurrences, regardless of literal signs, and include the variables occurring in tautological clauses.

Let $\nu$ denote the uniform distribution over all assignments, regardless of whether they satisfy $\Phi$.
For a family $\F$ of constraints, we denote by $\Good(\F)$ the event that all constraints in $\F$ are satisfied.
We use $\mu_{\F}(\cdot)$ to denote the distribution $\nu\left(\cdot \;|\; \Good(\F)\right)$.

For an event $\mathcal{E}$, $\vbl(\mathcal{E})$ is the collection of variables on which it depends. This semantic support may be smaller than the syntactic variable set of a clause defining the event; in particular, a tautological clause has a constant violation event. The structural quantities $\deg(v)$ and $I(U)$, and the graphs $\H_\Phi$ and $\G_\Phi$ below, always use syntactic variable occurrences.
For a variable $v \in \V$, the degree $\deg(v)$ is the number of clause labels containing $v$ and for a variable subset $U \subseteq \V$, the number of its incident clause labels is denoted by $I(U)$, \IE,
\begin{equation}
    I(U) \triangleq \abs{\set{a \in \C \midc \vbl(a) \cap U \neq \emptyset}}.
    \label{eq:incident-clause-count}
\end{equation}

The dependency and primal graphs induced by $\Phi$ are introduced to analyze the structure of a SAT formula.
The dependency graph $\H = \H_{\Phi}$ has a vertex set $V(\H) = \C$.
We add an edge in $\H$ for two distinct clause labels if their representing clauses share at least one common variable. For a vertex $v$ and vertex subset $S$ in a graph $\H$, we denote by $N_{\H}(v)$ and $N_{\H}[v] = N_{\H}(v) \cup {v}$ (respectively $N_{\H}[S]$) the open and closed neighborhoods.
Meanwhile, the primal graph $\G = \G_{\Phi}$ of $\Phi$ has a vertex set $V(\G) = \V$ and joins two distinct variables if they occur in a common clause.
For a variable subset $U$, we say $U$ is connected if and only if $\G[U]$ is connected.

\subsection{Abstract polymer models and the Koteck\'y--Preiss criterion}
An abstract polymer model consists of a finite set $\mathcal P$ of
polymers, a symmetric incompatibility relation $\not\sim$ on
$\mathcal P$, and an activity $w_\gamma\in\mathbb C$ for each
$\gamma\in\mathcal P$. The incompatibility relation is part of the
model: $\gamma\not\sim\gamma'$ means that the polymers $\gamma$ and
$\gamma'$ are incompatible and cannot occur together in a compatible
family. We regard every polymer as incompatible with itself, so
$\gamma\not\sim\gamma$ for every $\gamma\in\mathcal P$. A family
$\mathcal F\subseteq\mathcal P$ is compatible if no two distinct members
are incompatible. Its partition function is
\begin{equation}
 \Xi(w)=
 \sum_{\mathcal F\ {\rm pairwise\ compatible}}
       \prod_{\gamma\in\mathcal F}w_\gamma,
 \label{eq:polymer-partition-prelim}
\end{equation}
where the empty family contributes one.

For an ordered tuple
$\Gamma=\left(\gamma_1,\ldots,\gamma_\ell\right)\in\mathcal P^\ell$, repetitions
allowed, let $I(\Gamma)$ be its incompatibility graph on $[\ell]$:
distinct $i,j$ are adjacent exactly when
$\gamma_i\not\sim\gamma_j$. Define its Ursell coefficient by
\begin{equation}
 \varphi(\Gamma)=
 \sum_{\substack{F\subseteq E\left(I(\Gamma)\right)\\
                 ([\ell],F)\ {\rm connected}}}
       (-1)^{|F|}.
 \label{eq:ursell-coefficient}
\end{equation}
For $\ell=1$, this coefficient is one. If $I(\Gamma)$ is disconnected,
then $\varphi(\Gamma)=0$. Define the formal cluster series
\begin{equation}
 \mathcal L(w)\triangleq
 \sum_{\ell\ge1}\frac1{\ell!}
 \sum_{\left(\gamma_1,\ldots,\gamma_\ell\right)\in\mathcal P^\ell}
 \varphi\left(\gamma_1,\ldots,\gamma_\ell\right)
 \prod_{i=1}^{\ell}w_{\gamma_i}.
 \label{eq:polymer-cluster-expansion-prelim}
\end{equation}
The series in
\eqref{eq:polymer-cluster-expansion-prelim} is infinite because tuples
may contain repeated polymers.

\begin{theorem}[Koteck\'y--Preiss criterion \cite{KP86}]
\label{thm:kotecky-preiss}
Let $u_\gamma\ge0$ satisfy
$|w_\gamma|\le u_\gamma$ for every $\gamma\in\mathcal P$, and let
$A:\mathcal P\to[0,\infty)$ be a penalty function. Suppose that
\begin{equation}
 \forall\,\gamma\in\mathcal P,\qquad
 \sum_{\gamma'\not\sim\gamma}
       u_{\gamma'}e^{A\left(\gamma'\right)}\le A(\gamma).
 \label{eq:kotecky-preiss-condition}
\end{equation}
Then the series $\mathcal L(w)$ in
\eqref{eq:polymer-cluster-expansion-prelim} converges absolutely and
\[
 \mathcal L(w)=\log\Xi(w).
\]
Here $\log\Xi(w)$ denotes the analytic branch satisfying
$\log\Xi(0)=0$.
Moreover, for every fixed polymer $\gamma_0$,
\begin{equation}
 \sum_{\ell\ge1}\frac1{(\ell-1)!}
 \sum_{\gamma_2,\ldots,\gamma_\ell\in\mathcal P}
 \left|\varphi\left(\gamma_0,\gamma_2,\ldots,\gamma_\ell\right)\right|
 \prod_{i=2}^{\ell}u_{\gamma_i}
 \le e^{A\left(\gamma_0\right)}.
 \label{eq:kotecky-preiss-pinned}
\end{equation}
For $\ell=1$, the inner tuple and product in
\eqref{eq:kotecky-preiss-pinned} are empty, and their contribution is
one.
\end{theorem}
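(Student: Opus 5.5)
The statement to be proved is the Kotecký--Preiss criterion, and I will follow the standard inductive route of Dobrushin/Fernández--Procacci. I will work with the nonnegative majorant $u$ throughout and aim at the pinned bound \eqref{eq:kotecky-preiss-pinned} first. That bound controls the absolute series $\sum_{\ell}\frac{1}{\ell!}\sum_{\Gamma}|\varphi(\Gamma)|\prod|w_{\gamma_i}|$: summing the pinned bound over the first coordinate $\gamma_0$ with weight $u_{\gamma_0}$ dominates the total (the factor $1/\ell$ from fixing the first slot only helps). The sum is then at most $\sum_{\gamma}u_\gamma e^{A(\gamma)}$, which is finite because $\mathcal P$ is finite. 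Absolute convergence of $\mathcal L(w)$ in a polydisc, uniformly in $|w|\le u$, follows.

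\textbf{Step 1: tree-graph bound.} I will invoke the Penrose tree-graph inequality: $|\varphi(\Gamma)|$ is at most the number of spanning trees of the incompatibility graph $I(\Gamma)$. The pinned sum is then bounded by a sum over rooted labeled trees on $[\ell]$ with root $1=\gamma_0$, where adjacent vertices carry incompatible polymers. Reorganizing by rooted plane trees, with the $1/(\ell-1)!$ absorbing the labelings, the quantity $T(\gamma_0)$ satisfies the recursive inequality
\[
 T(\gamma_0)\le \sum_{j\ge0}\frac{1}{j!}\Bigl(\sum_{\gamma'\not\sim\gamma_0}u_{\gamma'}T(\gamma')\Bigr)^{j}
 =\exp\Bigl(\sum_{\gamma'\not\sim\gamma_0}u_{\gamma'}T(\gamma')\Bigr).
\]

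\textbf{Step 2: induction on tree depth.} Let $T_N(\gamma_0)$ be the contribution of trees of depth at most $N$. Then $T_0=1\le e^{A}$. If $T_N\le e^{A}$ everywhere, the recursion together with \eqref{eq:kotecky-preiss-condition} gives $T_{N+1}(\gamma_0)\le\exp(\sum_{\gamma'\not\sim\gamma_0}u_{\gamma'}e^{A(\gamma')})\le e^{A(\gamma_0)}$. Monotone convergence in $N$ then yields \eqref{eq:kotecky-preiss-pinned}.

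\textbf{Step 3: identification with $\log\Xi$.} Expanding $\log$ of the finite polynomial $\Xi$ as a formal power series in the variables $w_\gamma$ produces exactly the Mayer expansion $\mathcal L(w)$. This is the standard combinatorial identity: write $\Xi$ as a sum over multisets with the factor $\prod_{i<j}(1+g_{ij})$, where $g_{ij}=-\mathbf 1[\gamma_i\not\sim\gamma_j]$ (using self-incompatibility), and take the logarithm, which keeps the connected terms. Both sides are analytic on the polydisc $\{|w_\gamma|\le u_\gamma\}$, where $\mathcal L$ converges absolutely, and they agree as power series near $0$. Hence $\exp\mathcal L=\Xi$ there, so $\Xi\neq0$ and $\mathcal L$ is the branch of $\log\Xi$ vanishing at $0$.

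The main obstacle is Step 1: the Penrose/tree-graph bound and the careful bookkeeping of labeled versus ordered tuples. Repeated polymers are allowed and the $1/(\ell-1)!$ normalization must exactly absorb the tree labelings. If this proves delicate, I would instead use the Fernández--Procacci cluster bound directly, which gives an even stronger statement from which the pinned estimate follows.
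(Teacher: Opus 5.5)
The paper does not prove this theorem. It imports it from Koteck\'y and Preiss \cite{KP86} and uses it as a black box, so your argument is an independent route rather than a reconstruction.

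What you propose is the tree-graph proof in the style of Fern\'andez and Procacci, and it is correct. Penrose's inequality bounds $|\varphi(\Gamma)|$ by the number of spanning trees of $I(\Gamma)$. After that, the depth-truncated tree sums satisfy the exact recursion $T_{N+1}(\gamma_0)=\exp\bigl(\sum_{\gamma'\not\sim\gamma_0}u_{\gamma'}T_N(\gamma')\bigr)$; the inequality enters only through Penrose, not through the recursion. Since $\gamma_0\not\sim\gamma_0$, the term $\gamma'=\gamma_0$ accounts for repeated polymers exactly as in the hypothesis \eqref{eq:kotecky-preiss-condition}. The advantage of this route is that it gives the pinned estimate in precisely the form \eqref{eq:kotecky-preiss-pinned}: no weight at $\gamma_0$ and right-hand side $e^{A(\gamma_0)}$. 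That is the form the paper actually uses in Lemma~\ref{lem:core-cluster-expansion}. The classical inductive proofs of Koteck\'y--Preiss and Dobrushin avoid the Penrose inequality, but are usually stated for the sum over clusters incompatible with a fixed polymer.

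One detail in Step~3 should be tightened. The identity theorem needs an open domain, and the closed polydisc is degenerate in any coordinate with $u_\gamma=0$. Either discard polymers with $u_\gamma=0$ (then $w_\gamma=0$, so they contribute to neither side) and argue on the open polydisc, or apply the identity theorem to the one-variable function $t\mapsto\mathcal L(tw)$ on $|t|<1$. Then extend $\exp\mathcal L=\Xi$ to $|w_\gamma|\le u_\gamma$ using the uniform convergence you already have. This gives $\Xi\neq0$ there and identifies $\mathcal L$ as the continuous logarithm on this convex set that vanishes at $0$.
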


For a fixed polymer $\gamma$, the left-hand side of
\eqref{eq:kotecky-preiss-condition} sums over all polymers directly
incompatible with $\gamma$. The factor $e^{A\left(\gamma'\right)}$ accounts for
possible further extensions of a connected cluster from $\gamma'$.
Thus \eqref{eq:kotecky-preiss-condition} bounds the total weighted
contribution of such extensions by $A(\gamma)$. The estimate
\eqref{eq:kotecky-preiss-pinned} is called the pinned form of the
criterion because the polymer $\gamma_0$ is fixed; it bounds the total
absolute weight of all connected clusters containing this prescribed
polymer.

\subsection{\texorpdfstring{\Lovasz}{Lov\'asz} local lemma}
The \Lovasz local lemma, first established in \cite{EL75}, provides a sufficient condition for the satisfiability of a CNF formula.
In this work, we use the following asymmetric form of the local lemma.
\begin{theorem}[{Asymmetric \Lovasz local lemma \cite{HSS11}}] \label{thm:aymmetric-local-lemma}
    Given bad events $B_1, \ldots, B_m$ in a product probability space, set $N(a)\triangleq\set{c\in[m]\setminus\{a\}\mid \vbl(B_a)\cap\vbl(B_c)\neq\emptyset}$. If there exist $\set{x_a \in (0, 1)}_{a = 1}^m$ such that
    \begin{align*}
        \forall a = 1, \ldots, m, \quad \Pr{B_a} \le x_a \prod_{c \in N(a)} \left(1 - x_c\right),
    \end{align*}
    then the probability of avoiding all bad events is at least $\prod_{a = 1}^m \left(1 - x_a\right) > 0$.
    Moreover, for any event $\mathcal E$, it holds that
    \begin{align*}
        \Pr{\mathcal E \mid \bigwedge_{a = 1}^m \overline{B_a}} \le \Pr{\mathcal E} \prod_{a \in N(\mathcal E)} \left(1 - x_a\right)^{-1},
    \end{align*}
    where $N(\mathcal E) \triangleq \set{a \in [m] \mid \vbl(\mathcal E) \cap \vbl\left(B_a\right) \neq \emptyset}$.
\end{theorem}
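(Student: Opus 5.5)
The plan is to use the classical inductive argument of Erd\H{o}s--Lov\'asz in its asymmetric form. The key fact is that in a product probability space, an event $B_a$ is independent of every event determined by variables disjoint from $\vbl(B_a)$. The core claim, proved by induction on $\abs{S}$, is
\[
 \Pr{B_a \;\Big|\; \textstyle\bigwedge_{c\in S}\overline{B_c}} \le x_a
 \qquad\text{for all } S\subseteq[m] \text{ and } a\notin S,
\]
together with the positivity $\Pr{\bigwedge_{c\in S}\overline{B_c}}>0$, which keeps every conditioning well defined.

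\textbf{Inductive step.} For $S=\emptyset$, the hypothesis gives $\Pr{B_a}\le x_a\prod_{c\in N(a)}(1-x_c)\le x_a$. For general $S$, split it as $S_1=S\cap N(a)$ and $S_2=S\setminus N(a)$. Then
\[
 \Pr{B_a \mid \textstyle\bigwedge_{S}\overline{B_c}}
 =\frac{\Pr{B_a\wedge\bigwedge_{S_1}\overline{B_c}\mid\bigwedge_{S_2}\overline{B_c}}}{\Pr{\bigwedge_{S_1}\overline{B_c}\mid\bigwedge_{S_2}\overline{B_c}}}.
\]
\begin{itemize}
\item \emph{Numerator.} Drop the $S_1$ conjunction. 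Since $\vbl(B_a)$ is disjoint from $\vbl(B_c)$ for every $c\in S_2$, independence in the product space bounds the numerator by $\Pr{B_a}$.
\item \emph{Denominator.} Write $S_1=\set{c_1,\dots,c_r}$ and expand by the chain rule:
\[
 \prod_{j=1}^{r}\ab(1-\Pr{B_{c_j}\mid \textstyle\bigwedge_{i<j}\overline{B_{c_i}}\wedge\bigwedge_{S_2}\overline{B_c}}).
\]
Each conditioning set has size less than $\abs{S}$ and excludes $c_j$, so the induction hypothesis bounds each factor below by $1-x_{c_j}$. Hence the denominator is at least $\prod_{c\in N(a)}(1-x_c)$.
\end{itemize}
Combining the two bounds with the hypothesis $\Pr{B_a}\le x_a\prod_{c\in N(a)}(1-x_c)$ yields $x_a$. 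Positivity then propagates:
\[
 \Pr{\textstyle\bigwedge_{S\cup\{a\}}\overline{B_c}}\ge(1-x_a)\Pr{\textstyle\bigwedge_S\overline{B_c}}>0.
\]

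\textbf{Avoidance probability.} Apply the chain rule to $\Pr{\bigwedge_{a=1}^m\overline{B_a}}=\prod_a\bigl(1-\Pr{B_a\mid\bigwedge_{c<a}\overline{B_c}}\bigr)$. The claim bounds each factor below by $1-x_a$, giving the product bound $\prod_a(1-x_a)>0$.

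\textbf{Conditional bound for $\mathcal E$.} Let $T=[m]\setminus N(\mathcal E)$. The numerator is $\Pr{\mathcal E\wedge\bigwedge_a\overline{B_a}}\le\Pr{\mathcal E\wedge\bigwedge_T\overline{B_c}}=\Pr{\mathcal E}\Pr{\bigwedge_T\overline{B_c}}$, by independence since $\vbl(\mathcal E)$ is disjoint from $\vbl(B_c)$ for $c\in T$. The denominator factors as
\[
 \Pr{\textstyle\bigwedge_a\overline{B_a}}=\Pr{\textstyle\bigwedge_T\overline{B_c}}\cdot\Pr{\textstyle\bigwedge_{N(\mathcal E)}\overline{B_c}\mid\bigwedge_T\overline{B_c}}.
\]
Using the chain rule and the claim exactly as in the denominator step, the second factor is at least $\prod_{a\in N(\mathcal E)}(1-x_a)$. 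Dividing gives the stated inequality.

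\textbf{Main obstacle.} The only delicate point is the independence step. We must use that $N(a)$ and $N(\mathcal E)$ are defined through the semantic supports $\vbl(\cdot)$, which is exactly what a product space needs: an event measurable in disjoint coordinates is independent of any conjunction of events measurable in the complementary coordinates. The induction must also carry positivity along, so that every conditional probability in the argument is defined.
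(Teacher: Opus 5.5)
Your proposal is correct. It is the standard inductive proof of the local lemma: induct on the conditioning set, split it into dependent and independent parts, and apply the same numerator/denominator split to $\mathcal E$, as in \cite{HSS11}. The paper gives no proof of this theorem and simply cites \cite{HSS11}, whose argument is essentially the one you wrote.
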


\subsection{The Efron--Stein inequality}
\label{subsec:efron-stein}
We use the following conditional-variance form of the Efron--Stein
inequality, also known as variance tensorization for product measures;
see \cite[Theorems~4--5]{BBL04}.

\begin{lemma}[Efron--Stein inequality]
\label{lem:efron-stein}
Let $r\ge1$ and let $Z_1,\ldots,Z_r$ be independent random variables,
where $Z_i$ takes values in a finite set $\mathcal S_i$.
Write $Z=\left(Z_1,\ldots,Z_r\right)$ and, for each $i\in[r]$, let
$Z_{-i}=\left(Z_j\right)_{j\in[r]\setminus\{i\}}$ denote all coordinates except $i$.
Then, for every $h:\prod_{i=1}^r\mathcal S_i\to\mathbb R$,
\begin{equation}
 \operatorname{Var}\left(h(Z)\right)
 \le\sum_{i=1}^r\mathbb E\!\left[
       \operatorname{Var}\left(h(Z)\mid Z_{-i}\right)\right].
 \label{eq:efron-stein-conditional-variance}
\end{equation}
\end{lemma}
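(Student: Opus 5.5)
The plan is to prove \eqref{eq:efron-stein-conditional-variance} by induction on $r$, using the law of total variance together with a Jensen-type contraction for conditional variances. Independence of the coordinates enters only in that contraction step. Since every $\mathcal S_i$ is finite, all conditional expectations are finite sums, so no measurability issues arise.

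\emph{Base case.} For $r=1$ the conditioning on $Z_{-1}$ is trivial. Hence $\operatorname{Var}(h(Z)\mid Z_{-1})=\operatorname{Var}(h(Z_1))$ and the inequality is an equality.

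\emph{Inductive step.} For $r\ge2$, define
\[
g(z_1,\ldots,z_{r-1})\triangleq\mathbb E\left[h(z_1,\ldots,z_{r-1},Z_r)\right],
\]
the average over $Z_r$ alone. By independence, $g(Z_{-r})=\mathbb E[h(Z)\mid Z_{-r}]$. The law of total variance gives
\[
\operatorname{Var}(h(Z))=\mathbb E\left[\operatorname{Var}(h(Z)\mid Z_{-r})\right]+\operatorname{Var}\left(g(Z_1,\ldots,Z_{r-1})\right).
\]
The first term is exactly the $i=r$ summand. Apply the induction hypothesis to $g$ on the $r-1$ independent variables $Z_1,\ldots,Z_{r-1}$:
\[
\operatorname{Var}(g)\le\sum_{i<r}\mathbb E\left[\operatorname{Var}\left(g\mid Z_j,\ j\ne i,r\right)\right].
\]
It then remains to show, for each $i<r$,
\[
\mathbb E\left[\operatorname{Var}\left(g\mid Z_j,\ j\ne i,r\right)\right]\le\mathbb E\left[\operatorname{Var}\left(h\mid Z_{-i}\right)\right].
\]

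\emph{Key contraction step.} Fix $i<r$ and freeze all coordinates other than $Z_i$ and $Z_r$. Then $Z_i$ and $Z_r$ are independent, with their original laws. Write $m_r(z_r)\triangleq\mathbb E_{Z_i}[h(\cdot,Z_i,\cdot,z_r)]$ for the mean of $h$ over $Z_i$ at a fixed value $z_r$ of the $r$-th coordinate. By Fubini, $\mathbb E_{Z_i}[g]=\mathbb E_{Z_r}[m_r(Z_r)]$. Jensen's inequality for the convex map $x\mapsto x^2$, applied to the inner average over $Z_r$, then gives
\[
\operatorname{Var}_{Z_i}\left(\mathbb E_{Z_r}h\right)=\mathbb E_{Z_i}\left[\left(\mathbb E_{Z_r}\left[h-m_r(Z_r)\right]\right)^2\right]\le\mathbb E_{Z_r}\,\mathbb E_{Z_i}\left[\left(h-m_r(Z_r)\right)^2\right]=\mathbb E_{Z_r}\left[\operatorname{Var}_{Z_i}(h)\right].
\]
Taking expectation over the frozen coordinates yields the claimed inequality. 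Summing over $i<r$ and adding the $i=r$ term closes the induction.

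The main point requiring care is this contraction step: conditioning on the other coordinates must leave the joint law of $(Z_i,Z_r)$ a product with the original marginals. This is exactly where the independence hypothesis is used. Everything else is bookkeeping with the law of total variance.

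Alternatively, one could write $h-\mathbb E h=\sum_i\Delta_i$ with the Doob martingale differences $\Delta_i=\mathbb E[h\mid Z_1,\ldots,Z_i]-\mathbb E[h\mid Z_1,\ldots,Z_{i-1}]$. Orthogonality of these differences gives $\operatorname{Var}(h)=\sum_i\mathbb E[\Delta_i^2]$. The same Jensen argument then bounds each $\mathbb E[\Delta_i^2]$ by $\mathbb E[\operatorname{Var}(h\mid Z_{-i})]$. I would present the induction, which is the shorter route.
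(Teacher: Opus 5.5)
Your argument is correct. The paper does not prove this lemma itself; it imports it from \cite[Theorems~4--5]{BBL04}. The proof there is the Doob-martingale route you mention at the end. With $\Delta_i=\mathbb E[h\mid Z_1,\ldots,Z_i]-\mathbb E[h\mid Z_1,\ldots,Z_{i-1}]$, orthogonality of increments gives $\operatorname{Var}(h)=\sum_i\mathbb E[\Delta_i^2]$. Independence gives $\Delta_i=\mathbb E\bigl[h-\mathbb E[h\mid Z_{-i}]\bigm| Z_1,\ldots,Z_i\bigr]$, and one conditional Jensen step per coordinate yields $\mathbb E[\Delta_i^2]\le\mathbb E[\operatorname{Var}(h\mid Z_{-i})]$.

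Your induction is the same mechanism with different bookkeeping. If you unroll it, the exact steps reproduce precisely the orthogonal decomposition $\operatorname{Var}(h)=\sum_i\mathbb E[\Delta_i^2]$. These are the laws of total variance at each level, beginning with $\mathbb E[\operatorname{Var}(h\mid Z_{-r})]=\mathbb E[\Delta_r^2]$. The single Jensen step for coordinate $i$ is replaced by a chain of $r-i$ two-coordinate contractions $\operatorname{Var}_{Z_i}(\mathbb E_{Z_{r'}}\phi)\le\mathbb E_{Z_{r'}}\operatorname{Var}_{Z_i}(\phi)$, one for each later coordinate $r'>i$.

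What your version buys is that it needs neither a filtration nor orthogonality of martingale increments. It uses only the law of total variance and an elementary Jensen inequality for two independent coordinates, and the use of independence is isolated transparently in that contraction. The martingale version is shorter and bounds each coordinate's contribution in a single inequality rather than a chain. For the finite product measures to which the paper applies the lemma (the fair swap bits $\varepsilon$ and the product measure $\nu_C$), either proof is fully adequate.
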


For each $i$, conditioning on $Z_{-i}$ fixes every coordinate except
$Z_i$. Thus the corresponding summand measures the average variance
arising from the randomness of $Z_i$ alone. The inequality bounds the
total variance by the sum of these single-coordinate contributions.
Independence of the coordinates is required, but the random variables
$Z_1,\ldots,Z_r$ need not have identical distributions.

\subsection{Markov chains and functional inequalities}
A Markov chain on a state space $\Omega$ consists of a transition kernel $P$ on $\Omega$ with its stationary distribution $\pi$.
In this paper, we only consider \emph{ergodic} and \emph{reversible} Markov chains (see details in \cite{LP17}).

The \emph{Dirichlet form} is usually used to study the convergence of Markov chains.
For simplicity, we discuss more details of this relationship in \Cref{sec:core-fast-sampling} and only introduce basic concepts here.
The Dirichlet form associated with $P$ is the following functional defined as: for two test functions $f, g : \Omega \to \mathbb R$,
\begin{align*}
    \DF_P(f, g) \triangleq \frac{1}{2} \sum_{x, y \in \Omega} \pi(x) P(x, y) \left(f(x) - f(y)\right)\left(g(x) - g(y)\right).
\end{align*}
The Rayleigh--Ritz variational characterization of the spectral gap (see \cite[Lemma~13.7 and Remark~13.8]{LP17}) gives\footnote{
    Lemma~13.7 uses nonzero mean-zero functions. For $f$ with positive variance, set $g = f - \mathbb E_{\pi}[f]$.
    Then $\mathbb E_{\pi}[g]=0$, $\mathcal E_{P}(g,g)=\mathcal E_{P}(f,f)$, and $\|g\|_{L^2\left(\pi\right)}^2=\operatorname{Var}_{\pi}(f)$.
    Thus centering yields the equivalent variance formulation displayed here; see Remark~13.8.
}
\[
 \gap(P)\triangleq 1-\lambda_2(P)
 =\inf_{\substack{f:\Omega\to\mathbb R\\
                  \operatorname{Var}_{\pi}(f)>0}}
       \frac{\mathcal E_{P}(f,f)}
            {\operatorname{Var}_{\pi}(f)},
\]
where $\lambda_2(P)$ is the second-largest eigenvalue of $P$.
\section{Structural Properties of Random SAT Formulae} \label{sec:random-sat-structure}

For the design of our sampler, we discuss the structure of random $k$-SAT formulae here.
We follow the routine of high-degree separation in \cite{GGGY21,HWY23,CLWYY25}: given a chosen degree cut-off, we form a \emph{core} of a random $k$-SAT formula by including variables of high degree or `significantly influenced' by high-degree variables.
With high probability, the core part of the original formula is composed of small components and therefore we can enumerate all feasible assignments on each component.

Before we state the structural properties, we clarify necessary parameters in our design and analysis here.
Set
\begin{equation} \label{eq:parameter-choice}
\begin{gathered}
    \lambda = 1 + m/n, \quad \Delta = 6k^7 \lambda, \quad C_{\INC} = 100k^2\lambda, \quad \overline D = \ceil{6k^8\lambda}+1, \\
    n_0 = \e^{k - 3} \lambda(k - 4)^4, \quad q_0 = \left(4n_0\right)^{-1/3}, \quad r = \floor{q_0 n}.
\end{gathered}
\end{equation}
Before presenting detailed theorems and properties of random formulae, we first briefly introduce the meaning of these parameters here:
\begin{itemize}
    \item \(\lambda\) is the clause density plus one. It keeps the density-dependent bounds valid even for sparse formulas.
    \item \(\Delta\) is the high-degree cutoff. Every variable appearing in at least \(\Delta\) clauses is viewed as a `high-degree' variable.
    \item \(C_{\mathrm{INC}}\) controls how many clauses touch a connected variable set.
    \item \(\overline D\) bounds the number of residual clauses in a clause’s closed dependency neighborhood, including the clause itself.
    \item \(n_0\) is the coefficient in the expansion bound. The probability that some set of \(t\) clauses contains fewer than \((k-4)t\) distinct variables is at most \(\bigl[n_0(t/n)^3\bigr]^t\).
    \item \(q_0\) sets the fraction at which this expansion bound decays geometrically, while \(r\) is the integer cutoff.
\end{itemize}

\subsection{Expansion of random SAT formulae}
We apply the following expansion of a random $k$-SAT formula.
\begin{lemma} \label{lem:expansion}
    Suppose that $k \ge 30$ and $n^2 \ge 2n_0$.
    With probability at least $1 - \frac{2n_0}{n^2} - \frac{4}{3} \cdot 4^{-n^{1/3}}$, for every set $S$ of at most $r$ clauses, it holds that
    \begin{align} \label{eq:small-set-expansion}
        \abs{\vbl(S)} \ge (k - 4) \abs{S}.
    \end{align}
\end{lemma}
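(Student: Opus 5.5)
We will prove \Cref{lem:expansion} by a first-moment (union-bound) calculation over pairs consisting of a set of clause labels and a small variable set that could contain all of their occurrences. Fix $t\in[1,r]$. If some set $S$ of $t$ clauses violates \eqref{eq:small-set-expansion}, then there is a set $U\subseteq\V$ with $\abs{U}=\floor{(k-4)t}$ (pad if needed) such that all $kt$ variable occurrences of the clauses in $S$ lie in $U$. In the occurrence model each occurrence is uniform on $[n]$, independently of everything else. Hence, for fixed $S$ and $U$, this event has probability $(\abs{U}/n)^{kt}$, and the expected number of bad pairs is at most
\begin{equation*}
    \binom{m}{t}\binom{n}{(k-4)t}\left(\frac{(k-4)t}{n}\right)^{kt}
    \le \left(\frac{\e m}{t}\right)^{t}\left(\frac{\e n}{(k-4)t}\right)^{(k-4)t}\left(\frac{(k-4)t}{n}\right)^{kt}.
\end{equation*}
Using $m\le \lambda n$ and collecting powers gives a per-$t$ term of the form
\begin{equation*}
    \left[\e\lambda\,\e^{k-4}(k-4)^{4}\left(\tfrac{t}{n}\right)^{3}\right]^{t}
    =\left[n_0\left(\tfrac{t}{n}\right)^{3}\right]^{t}.
\end{equation*}
This matches the definition $n_0=\e^{k-3}\lambda(k-4)^4$ in \eqref{eq:parameter-choice} exactly, which confirms that this is the intended computation. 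The remaining work is to sum this bound over $1\le t\le r$.

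We split the sum at $t_0\approx n^{1/3}$.

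\textbf{Small range.} For $t\le n^{1/3}$ the ratio of consecutive terms is small. The $t=1$ term is $n_0/n^3$. Summing over $t\le n^{1/3}$ with $n_0(t/n)^3\le n_0/n^2\le 1/2$ (using $n^2\ge 2n_0$), we bound the contribution by a geometric-type sum $\sum_t (n_0 t^3/n^3)^t$. The $t=1$ term dominates, and the tail is controlled using $t^3\le n$. We aim for a total of at most $2n_0/n^2$; since $n_0/n^3\le n_0/n^2$, there is slack, and it must also cover $t=2,3,\dots$ via the ratio bound.

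\textbf{Large range.} For $n^{1/3}<t\le r=\floor{q_0 n}$ we have $n_0(t/n)^3\le n_0q_0^3=1/4$. Each term is therefore at most $4^{-t}$, and the sum is at most $\sum_{t>n^{1/3}}4^{-t}\le \tfrac43\,4^{-n^{1/3}}$. This is exactly the second error term.

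The main obstacle is the bookkeeping in the first step. The padding when $(k-4)t$ is not an integer needs care: we use $\abs{U}=\ceil{(k-4)t}-1$, or simply note that $\abs{\vbl(S)}<(k-4)t$ forces containment in a set of size at most $(k-4)t$. The binomial estimate $\binom{n}{u}(u/n)^{kt}$ is monotone in $u$ in the relevant range, so replacing $u$ by $(k-4)t$ is an upper bound. Repeated variables and tautologies cause no trouble, because only syntactic occurrences matter and the count is over occurrences. The hypothesis $k\ge30$ is used only to make the constants (for instance $\e^{k-4}$ versus the binomial factor, and the small-range tail) fit within $2n_0/n^2$.
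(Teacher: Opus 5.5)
Your proposal is correct and matches the paper's proof step for step: a union bound over pairs $(S,U)$ with $|U|=(k-4)t$, the same per-$t$ bound $[n_0(t/n)^3]^t$, and the split at $n^{1/3}$, using $t^3\le n$ with $n_0/n^2\le 1/2$ in the small range and $n_0q_0^3=1/4$ in the large range. The padding discussion is unnecessary because $(k-4)t$ is an integer; the only point to check, which the paper states explicitly, is that $(k-4)t\le (k-4)q_0 n<n$, so that such a set $U$ exists.
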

\begin{proof}
    Fix an integer $t \le r$.
    Assume that there exists a subset $S$ of exactly $t$ clauses such that
    \begin{align*}
        \abs{\vbl(S)} < (k - 4) \abs{S} = (k - 4)t.
    \end{align*}
    Since $(k - 4)t \le (k - 4) q_0 n < n$ by \eqref{eq:parameter-choice}, it holds that all variables in $S$ completely lie in some $U \subseteq [n]$ of size $(k - 4)t$.

    For a fixed subset $S$ of clause labels with $\abs{S} = t$ and a variable subset $U$ of size $(k - 4)t$, we bound the probability that all variables in $S$ lie in $U$ as
    \begin{align*}
        \Pr{\vbl(S) \subseteq U} \le \ab(\frac{(k - 4)t}{n})^{k t}.
    \end{align*}
    By the union bound together with $\binom{n}{k} \le (\e n/k)^k$,
    \begin{align*}
        \Pr{\exists S,\; \abs{S} = t,\; \abs{\vbl(S)} < (k - 4) \abs{S}} &\le \binom{m}{t} \binom{n}{(k - 4)t} \ab(\frac{(k - 4)t}{n})^{kt} \\
        &\le \ab(\frac{\e m}{t})^t \ab(\frac{\e n}{(k - 4)t})^{(k - 4)t} \frac{(k - 4)^{kt}t^{kt}}{n^{kt}} \\
        &\le \ab(\e^{k - 3} \cdot \frac{m}{n}(k - 4)^4 \ab(\frac{t}{n})^3)^t \\
        &\le \ab(n_0 \cdot \frac{t^3}{n^3})^t.
    \end{align*}
    For $t \le n^{1/3}$, $n_0 t^3/n^3 \le n_0/n^2 \le 1/2$.
    Then
    \begin{align*}
        \sum_{t = 1}^{n^{1/3}} \ab(n_0 \cdot \frac{t^3}{n^3})^t \le \frac{n_0/n^2}{1 - n_0/n^2} \le \frac{2n_0}{n^2}.
    \end{align*}
    Otherwise, using $t/n \le q_0$ and $n_0 \cdot q_0^3 = 1/4$,
    \begin{align*}
        \sum_{t = \lfloor n^{1/3}\rfloor + 1}^{\lfloor q_0 n \rfloor} \ab(n_0 \cdot \frac{t^3}{n^3})^t \le \sum_{t > n^{1/3}} 4^{-t} \le \frac{4}{3} \cdot 4^{-n^{1/3}}.
    \end{align*}
    Thus the two cases together conclude the lemma.
\end{proof}

\subsection{Estimate for incident clauses}
We need the estimate for the number of incident clauses of a connected variable subset.
\begin{lemma} \label{lem:incident-clause-estimate}
    For every $k \ge 2$ and $m \ge 0$, with probability at least $1 - n^{-3}$, every connected variable subset $U$ satisfies
    \begin{align*}
        I(U) < C_{\INC} \abs{U} + 4\ln n.
    \end{align*}
\end{lemma}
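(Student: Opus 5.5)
The plan is a first-moment (union-bound) argument over connected sets, with connectivity certified by a spanning tree. The difficulty is that connectivity of \(U\) in \(\G_\Phi\) is itself a random event, so it cannot be separated directly from the count \(I(U)\). I would expose a small witness for connectivity first, and then use the independence of the remaining clauses to control how many of them touch \(U\).

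\emph{Witness step.} Fix \(u\ge1\) and a set \(U\) with \(|U|=u\). If \(U\) is connected, then \(\G_\Phi[U]\) has a spanning tree \(T\) on \(U\). Every edge \(\{x,y\}\) of \(T\) is realized by some clause label containing both \(x\) and \(y\); fix one such label \(c(e)\) for each edge \(e\). Let \(W\) be the set of these labels, so \(|W|\le u-1\).

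The union bound runs over the following data:
\begin{itemize}
\item the set \(U\), with \(\binom nu\le (en/u)^u\) choices;
\item a labeled tree on \(U\), with \(u^{u-2}\) choices;
\item a label map \(e\mapsto c(e)\), with \(m^{u-1}\) choices.
\end{itemize}
Given these choices, the clauses in \(W\) must contain the prescribed pairs. Distinct clauses are independent. A single clause asked to contain \(j\) prescribed pairs spanning \(j+1\) variables has probability at most \((k^2/n^2)^{j}\cdot n^{\,\cdot}\)-type bounds that, by tree structure, are dominated by \(\prod_e k^2/n^2\) up to a factor \(k^{O(j)}\). More simply, I would bound each edge by \(k^2/n^2\) after accounting for shared clauses. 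Altogether this yields a factor of roughly
\[
\frac{(en)^u u^{u-2}}{u^u}\,\Bigl(\frac{mk^2}{n^2}\Bigr)^{u-1}\le n\,(e k^2\lambda)^{u-1}.
\]

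\emph{Tail step.} Condition on the contents of the clauses in \(W\). The other \(m-|W|\) clauses are independent of them, and each touches \(U\) with probability at most \(ku/n\). Hence \(I(U)\le |W|+Z\), where \(Z\sim\mathrm{Bin}(m,ku/n)\) has mean at most \(k\lambda u\). The event \(I(U)\ge C_{\INC}u+4\ln n\) therefore forces \(Z\ge s\) with \(s\triangleq(C_{\INC}-1)u+4\ln n\). The Chernoff bound \(\Pr{Z\ge s}\le(ek\lambda u/s)^s\) applies, and since \(s\ge 99k^2\lambda u\) we have \(ek\lambda u/s\le e/(99k)\le e^{-1}\). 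This gives
\[
\Pr{Z\ge s}\le e^{-s}= n^{-4}e^{-(C_{\INC}-1)u}.
\]

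\emph{Summation.} Multiplying the two bounds, the total contribution at size \(u\) is at most
\[
n^{-3}(ek^2\lambda)^{u-1}e^{-99k^2\lambda u}\le n^{-3}2^{-u}.
\]
Summing over \(u\ge1\) gives probability at most \(n^{-3}\). This requires a little slack in the constant \(4\ln n\), which I would recover by keeping the factor \(e^{-99k^2\lambda u}\) to absorb the leftover constants.

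The main obstacle I expect is the witness step when several tree edges map to the same clause. In that case the per-edge factors \(k^2/n^2\) are not independent, so a naive product of them is not justified. I would handle this by grouping the edges by their label \(c(e)\). A clause assigned a subtree-forest with \(j\) edges must contain at least \(j+1\) specified variables, which has probability at most \(k^{j+1}(1/n)^{j+1}\le (k^2/n^2)^{j}\cdot n^{1-j}\cdot k^{1-j}\)-type bounds. I would then check that this is no larger than what the edge-by-edge count charges. If that bookkeeping becomes awkward, the fallback is to choose \(W\) minimal and then count hyperedge-trees directly: choose \(t\le u-1\) clauses whose union covers \(U\) connectedly, paying \((ku/n)^{\text{occurrences}}\) for each.
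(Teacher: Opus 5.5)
Your tail step and final summation are sound and agree with the paper. Conditionally on the witness clauses, the remaining clauses are independent, and the Chernoff bound gives $n^{-4}e^{-(C_{\mathrm{INC}}-1)u}$. The gap is the witness step, exactly where you flag it, and the repair you sketch does not close it.

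The per-map factor $(k^2/n^2)^{u-1}$ behind your display is valid only for injective label maps. Suppose $j\ge2$ tree edges forming a subtree are sent to one clause. That clause must then contain $j+1$ prescribed variables, an event of probability of order $(k/n)^{j+1}$, which exceeds the charge $(k^2/n^2)^j$ by $(n/k)^{j-1}$.

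Grouping by label does make each block affordable on its own. Such a block has only $m$ label choices, and $m(k/n)^{j+1}\le(\lambda k^2/n)^j$, the injective charge for $j$ edges. But your block bound uses only that $j$ edges span at least $j+1$ variables. It therefore charges every partition of the $u-1$ edges of a fixed tree into blocks of size at most $k-1$ as much as the injective case. The number of such partitions is superexponential in $u$: for $k\ge4$ the pairings alone number about $(u/e)^{u/2}$. No factor $e^{-(C_{\mathrm{INC}}-1)u}$ absorbs this once $\ln u$ is large compared with $C_{\mathrm{INC}}$, and $u$ ranges up to $n$. Rescuing this route would require using the extra factor $k/n$ for each additional component of a block, counted jointly over all trees. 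That bookkeeping is exactly what is missing, so the bound $n(ek^2\lambda)^{u-1}$ is unproved.

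The missing idea, which is the paper's argument and what your fallback should become, is to make the clause set $W$ the witness instead of a tree on $U$. Here $W$ is connected in $\mathcal H_\Phi$, $t=|W|\le u-1$, and $U\subseteq\vbl(W)$. Revealing a spanning tree of $W$ parent-first and using Cayley's formula bounds the expected number $N_t$ of connected $t$-sets by $\mathbb E N_t\le n(e\lambda k^2)^t$. For fixed $W$, summing over $U$ gives
\[
\sum_{|U|=u}\mathbb P\bigl(W\text{ connected},\,U\subseteq\vbl(W)\bigr)
=\mathbb E\Bigl[\mathbf 1\{W\text{ connected}\}\tbinom{|\vbl(W)|}{u}\Bigr]
\le\tbinom{kt}{u}\,\mathbb P(W\text{ connected}).
\]
So choosing $U$ costs only $(ek)^u$. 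You never pay $\binom{n}{u}$ or count trees on $U$, and your Chernoff step then finishes the proof.

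If you prefer to keep tree witnesses, map $(U,T,c)$ to $(W,U)$ with $W=c(E(T))$, and root $T$. Let $d_W(v)$ be the number of clauses of $W$ containing $v$. Each non-root $v$ then picks its parent among at most $k\,d_W(v)$ neighbours and its label among at most $d_W(v)$ clauses. Since $\sum_{v\in U}d_W(v)\le kt\le ku$, AM--GM gives at most $k^{3u}$ pairs $(T,c)$ per $(W,U)$, which reduces you to the same count.
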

\begin{proof}
    At first, we consider the case $\abs{U} > 1$.
    For a connected variable subset $U$, pick a spanning tree of $\G[U]$.
    For every edge in this spanning tree, select a clause containing both endpoints of it.
    After discarding repeated clause labels, we obtain a clause certificate $T$ such that
    \begin{align*}
        \abs{T} \le \abs{U} - 1, \quad U \subseteq \vbl(T).
    \end{align*}
    Moreover, $T$ is connected in $\H$.
    
    Next, we count the number of connected clauses.
    Let $N_t$ be the number of connected $t$-clause labels.
    For a collection of $t$ clause labels, if they are connected, there is a spanning tree of size $t$.
    Root the tree at an arbitrary label.
    We reveal such a tree in the order that the parent precedes its children.
    Then conditional on everything already revealed, the probability that a new clause intersects its parent is at most $k \cdot k/n = k^2/n$.
    Consequently, by Cayley's formula,
    \begin{align*}
        \E{N_t} &\le \binom{m}{t} t^{t - 2} \ab(\frac{k^2}{n})^{t - 1} \\
        &\le \frac{\e^t m^t}{t^t} \cdot t^{t - 2} \cdot \frac{k^{2(t - 1)}}{n^{t - 1}} \\
        &\le \frac{n}{k^2 t^2} \ab(\e (m/n) k^2)^t \\
        &\le n \left(\e \lambda k^2\right)^t.
    \end{align*}

    We take the union bound over fixed pairs $(T,U)$, where $T$ is a set of $t$ clause labels and $U$ is a set of $\ell$ variables with $1\le t\le\ell-1$.
    Let $A_{T,U}$ be the event that $T$ is connected in $\H$ and $U\subseteq\vbl(T)$.
    This event depends only on the clauses in $T$.
    Conditional on $A_{T,U}$ (when it has positive probability), all clauses outside $T$ therefore retain their independent, original laws.
    Thus the number $X$ of clauses outside $T$ that intersect the fixed set $U$ satisfies
    \begin{align*}
        X \sim \Bin\ab(m - \abs{T}, p) \quad \text{where}\; p = 1 - \ab(1 - \frac{\abs{U}}{n})^k \le \frac{k\abs{U}}{n}.
    \end{align*}
    Since $\abs{T} \le \abs{U} - 1$, the event $\set{I(U) \ge C_{\INC} \abs{U} + 4\ln n}$ implies
    \begin{align*}
        X \ge \left(C_{\INC} - 1\right) \abs{U} + 4\ln n.
    \end{align*}
    Note that $\left(C_{\INC} - 1\right) \abs{U} \ge \e^2 \cdot \E{X \mid A_{T,U}}$ since $\E{X \mid A_{T,U}} \le m \cdot k\abs{U}/n = (m/n) k \abs{U} \le \lambda k\abs{U}$.
    By the Chernoff bound,
    \begin{align*}
        \Pr{X \ge \left(C_{\INC} - 1\right)\abs{U} + 4\ln n \mid A_{T,U}} \le \e^{-\left(\left(C_{\INC} - 1\right) \abs{U} + 4\ln n\right)}.
    \end{align*}
    Set the event
    \begin{align*}
        \mathcal{E}_{\ell} \triangleq \set{\exists U,\; \abs{U} = \ell, \mathcal G_\Phi[U] \text{ is  connected, and } I(U) \ge C_{\INC} \abs{U} + 4\ln n}.
    \end{align*}
    For each fixed $T$, summing $\Pr{A_{T,U}}$ over all $\ell$-variable sets $U$ gives
    $\E{\id{T\text{ is connected}}\binom{|\vbl(T)|}{\ell}}$.
    Since $|\vbl(T)|\le kt$, the union bound over the fixed pairs gives
    \begin{align*}
        \Pr{\mathcal{E}_{\ell}} &\le \sum_{t = 1}^{\min\;\set{\ell - 1, m}} \E{N_t} \binom{kt}{\ell} \e^{-\left(C_{\INC} - 1\right) \ell - 4\ln n} \\
        &\le \ell n \left(\e \lambda k^2\right)^\ell (\e k)^{\ell} \e^{-\left(C_{\INC} - 1\right) \ell - 4\ln n} \\
        &= n^{-3} \cdot \ell \exp\ab(-\ell\ab(C_{\INC} - 1 - \ln\left(\e^2 \lambda k^3\right))).
    \end{align*}
    By \eqref{eq:parameter-choice}, for $k \ge 2$ and $\lambda \ge 1$,
    \begin{align*}
        C_{\INC} - 1 - \ln\left(\e^2 \lambda k^3\right) \ge 100k^2 \lambda - 3 - 3\ln k - \ln \lambda \ge 2.
    \end{align*}
    Therefore,
    \begin{align*}
        \Pr{\bigcup_{\ell \ge 2} \mathcal{E}_{\ell}} &\le n^{-3} \sum_{\ell \ge 2} \ell \cdot \e^{-2\ell} \\
        &= n^{-3} \cdot \frac{\e^{-4}\left(2 - \e^{-2}\right)}{\left(1 - \e^{-2}\right)^2} \\
        &\le \frac{1}{20} \cdot n^{-3}.
    \end{align*}

    To conclude the lemma, it remains to consider the case $\abs{U} = 1$, \IE, $U = \set{v}$.
    Observe that $I(U)\sim\Bin(m,1-(1-1/n)^k)$ and $\E{I(U)}\le mk/n$.
    By the Chernoff bound,
    \begin{align*}
        \Pr{I(U) \ge C_{\INC} + 4\ln n} \le n^{-4}\e^{-C_{\INC}}.
    \end{align*}
    By the union bound, the failure probability is at most
    \begin{align*}
        \frac{1}{20} \cdot n^{-3} + n \cdot \e^{-C_{\INC}} n^{-4} \le n^{-3}.
    \end{align*}
    Therefore we conclude the lemma.
\end{proof}

\subsection{High-degree separation}
For a subset $S \subseteq \V$, we construct the core induced by $S$ as:
\begin{enumerate}
    \item initialize the core as $S$;
    \item repeatedly select a remaining clause containing more than $8$ distinct variables in the current core, and add all its variables to the core.
\end{enumerate}
We use $\Core(S)$ to denote the final set generated by $S$.

Consider the set collecting all variables of high degree:
\begin{align*}
    \V^{\high} \triangleq \set{v \in \V \mid \deg(v) \ge \Delta}.
\end{align*}
The core of $\Phi$ is defined as $\V^{\bad} = \Core\left(\V^{\high}\right)$ and denote by $\V^{\good} = \V \setminus \V^{\bad}$ the collection of good variables.

A clause $C_a$ of label $a$ is called a \emph{core} clause if its variables are completely contained in $\V^{\bad}$ (equivalently, $\vbl(a) \subseteq \V^{\bad}$).
We set $\C^{\good}$ as the collection of all non-core clause labels, \IE,
\begin{align*}
    \C^{\good} \triangleq \set{a \in \C \mid \vbl(a) \cap \V^{\good} \neq \emptyset}.
\end{align*}
For every $a \in \C^{\good}$, let $C_a^{\good}$ be the clause of $C_a$ after violating all bad variables in it, \IE,
\begin{align*}
    C_a^{\good} \triangleq \bigvee_{i \in [k]\; :\; \var\left(\ell_{a, i}\right) \in \V^{\good}} \ell_{a, i}.
\end{align*}
and the good formula is set as
\begin{align*}
    \Psi \triangleq \bigwedge_{a \in \C^{\good}} C_a^{\good}.
\end{align*}

We first bound the size of $\V^{\high}$.
\begin{lemma} \label{lem:high-degree-tail-bound}
    For $k \ge 30$ and $h = \ceil{q_0 n}$, it holds that
    \begin{align*}
        \Pr{\abs{\V^{\high}} \le r} \ge 1 - \e^{-h}.
    \end{align*}
\end{lemma}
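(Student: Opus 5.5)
We bound the probability of the complementary event $\abs{\V^{\high}} > r$, i.e.\ $\abs{\V^{\high}} \ge r+1$, by a first-moment count over candidate sets of high-degree variables, in the same spirit as the proof of \Cref{lem:expansion}. If there are at least $r+1$ high-degree variables, then there is a set $U$ of exactly $h' \defeq r+1$ variables, each of degree at least $\Delta$. (We have $h' \ge h-1$; the bookkeeping between $\floor{q_0 n}+1$ and $\ceil{q_0 n}$ is harmless, and we may simply work with sets of size $h$ when $q_0 n$ is not an integer, or with size $h' \ge h$ in general.) Each $v \in U$ then has at least $\Delta$ incident clause occurrences, so the total number of variable occurrences landing in $U$ is at least $\Delta h'/k$ counted by clause labels. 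It is cleaner to count occurrences: among the $km$ independent uniform occurrences, at least $\Delta h'$ must land in $U$. This holds because each incident clause label contributes at least one occurrence to each variable it contains, and $\deg(v)\ge\Delta$ for every $v\in U$.

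Fix $U$ with $\abs{U}=h'$. The number $Y_U$ of occurrences landing in $U$ is $\Bin(km, h'/n)$, with mean $km h'/n \le k\lambda h'$. Since $\Delta = 6k^7\lambda$ exceeds this mean by a factor $6k^6 \ge \e^2$, the Chernoff bound $\Pr{Y \ge t} \le (\e\mu/t)^t$ gives
\[
\Pr{Y_U \ge \Delta h'} \le \ab(\frac{\e k\lambda h'}{\Delta h'})^{\Delta h'} = \ab(\frac{\e}{6k^6})^{6k^7\lambda h'}.
\]
A union bound over $\binom{n}{h'} \le (\e n/h')^{h'} \le (\e/q_0)^{h'}$ choices of $U$, using $h' \ge q_0 n$, gives a total failure probability of at most
\[
\ab(\frac{\e}{q_0}\ab(\frac{\e}{6k^6})^{6k^7\lambda})^{h'}.
\]

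The remaining step is to check that the base is at most $\e^{-1}$. Here $q_0^{-1} = (4n_0)^{1/3} = (4\e^{k-3}\lambda(k-4)^4)^{1/3}$ is only $\e^{O(k)}\lambda^{1/3}$. The other factor is $\exp(-6k^7\lambda\ln(6k^6/\e))$, which is vastly smaller. So for $k\ge 30$ and $\lambda \ge 1$, the base is far below $\e^{-1}$, and the probability is at most $\e^{-h'} \le \e^{-h}$ (or, if $h'$ falls one short of $h$, there is still enough slack to absorb the difference). This final comparison is the only place where care is needed: the real issue is handling $\lambda$ uniformly, and writing $\lambda^{1/3} \le \e^{\lambda}$ makes that straightforward. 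There is no genuine obstacle beyond this arithmetic.
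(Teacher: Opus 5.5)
Your proposal is correct and follows essentially the same route as the paper: a union bound over variable sets of size about $q_0 n$, the Chernoff bound $\Pr{Y\ge t}\le(\e\mu/t)^t$ applied to the $\Bin(km,\cdot)$ count of occurrences landing in the set, and the check that $\frac{\e}{q_0}\bigl(\e/(6k^6)\bigr)^{\Delta}\le\e^{-1}$ for $k\ge 30$ and $\lambda\ge 1$. Your hedging about $h'$ is unnecessary, since $r+1=\floor{q_0 n}+1\ge\ceil{q_0 n}=h$ always holds, so $\e^{-h'}\le\e^{-h}$ follows directly.
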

\begin{proof}
    Since $\abs{\V^{\high}}$ is an integer, $\abs{\V^{\high}} > r$ implies $\abs{\V^{\high}} \ge h$.
    Thus it suffices to show that $\Pr{\abs{\V^{\high}} \ge h} \le \e^{-h}$.
    Fix a variable subset $U$ of size $h$.
    Let $X_U$ be the random variable to denote the number of literals whose variable names are in $U$.
    Then $X_U$ is a binomial variable:
    \begin{align*}
        X_U \sim \Bin\ab(km, \frac{h}{n}) \quad \text{thus} \quad \E{X_U} = \frac{m}{n}k h \le \lambda kh.
    \end{align*}
    If $U \subseteq \V^{\high}$, it must hold that $X_U \ge h \Delta$.
    The Chernoff bound provides that
    \begin{align*}
        \Pr{X_U \ge \Delta h} \le \ab(\frac{\e \lambda k h}{\Delta h})^{\Delta h} = \ab(\frac{\e}{6k^6})^{\Delta h}
    \end{align*}
    since $\Delta = 6k^7 \lambda$.
    By the union bound,
    \begin{align*}
        \Pr{\abs{\V^{\high}} \ge h} &\le \binom{n}{h} \ab(\frac{\e}{6k^6})^{\Delta h} \\
        &\le \ab(\frac{\e n}{h})^h \ab(\frac{\e}{6k^6})^{\Delta h} \\
        &\le \ab(\frac{\e}{q_0} \ab(\frac{\e}{6k^6})^\Delta)^h.
    \end{align*}
    For $k \ge 30$, by choices of parameters \eqref{eq:parameter-choice},
    \begin{align*}
        \frac{\e}{q_0} \ab(\frac{\e}{6k^6})^{\Delta} \le \e^{-1}.
    \end{align*}
    Hence we conclude the lemma.
\end{proof}

Another important property is the bounded growth of the core for every high-degree variable subset.
\begin{lemma} \label{lem:bounded-core-growth}
    Assume $r \ge 1, \abs{\V^{\high}} \le r$ and the expansion event of \Cref{lem:expansion}.
    For every $S \subseteq \V^{\high}$, it holds that
    \begin{align*}
        \abs{\Core(S)} \le \frac{k - 4}{5} \abs{S} \le \frac{k}{2} \abs{S}.
    \end{align*}
\end{lemma}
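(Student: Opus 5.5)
We argue by a counting argument on the clauses added during the construction of $\Core(S)$, combined with the expansion property \eqref{eq:small-set-expansion}. Fix $S\subseteq\V^{\high}$ and run the core procedure. Let $a_1,\dots,a_t$ be the clauses selected in order, each having more than $8$ distinct variables in the current core at the time it is selected. After step $j$ the core is $S\cup\vbl(\{a_1,\dots,a_j\})$. Each step adds at most $k-9$ new variables, so
\[
 |\Core(S)|\le |S|+\Bigl|\vbl\bigl(\{a_1,\dots,a_t\}\bigr)\setminus S\Bigr|\le |S|+(k-9)t .
\]
The task is therefore to bound $t$ linearly in $|S|$.

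\textbf{Key inequality.} Let $T_j=\{a_1,\dots,a_j\}$. Clause $a_i$ contributes at most $k-9$ variables outside $S\cup\vbl(T_{i-1})$, hence
\[
 |\vbl(T_j)|\le |S|+(k-9)j
\]
for every $j$. If moreover $j\le r$, expansion gives $|\vbl(T_j)|\ge (k-4)j$. Combining the two yields $5j\le |S|$, i.e.\ $j\le |S|/5$.

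\textbf{Handling large $t$.} Expansion applies only to sets of at most $r$ clauses. Suppose $t>|S|/5$. Since $|S|\le|\V^{\high}|\le r$, we have $\lfloor |S|/5\rfloor+1\le r$ (using $r\ge1$: if $|S|<5$ this index is $1\le r$). Take $j=\lfloor |S|/5\rfloor+1\le\min(t,r)$. Then $j>|S|/5$, which contradicts the key inequality applied to $T_j$. Hence $t\le |S|/5$, and
\[
 |\Core(S)|\le |S|+(k-9)|S|/5=\frac{k-4}{5}|S|.
\]
Finally, $(k-4)/5\le k/2$ trivially.

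The only delicate point is this truncation at $j\le r$, which is needed so that the expansion event of \Cref{lem:expansion} applies to every prefix $T_j$. The choice $j=\lfloor |S|/5\rfloor+1$ together with $|S|\le r$ handles it, because the prefix sets are nested and any violating prefix can be shortened to length at most $r$.
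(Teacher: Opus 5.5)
Your proof is correct and takes the same approach as the paper: bound each selected clause's contribution by $k-9$ new variables, then apply the expansion inequality to prefixes of the selected clauses to get $5j\le|S|$. The only difference is in the truncation step. You apply expansion to a prefix of length $\lfloor|S|/5\rfloor+1\le r$, while the paper applies it to a prefix of length $r$ to reach the contradiction $5r\le|S|\le r$; both enforce the same at-most-$r$ restriction.
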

\begin{proof}
    By construction of $\Core(S)$, every selected clause has at most $k - 9$ new variables added to the core.
    Therefore, after any $t \le r$ selections, by \Cref{lem:expansion},
    \begin{align*}
        (k - 4)t \le \abs{\vbl(\text{selected clauses})} \le \abs{S} + (k - 9)t
    \end{align*}
    which implies that $5t \le \abs{S}$.
    It also implies that the process cannot reach $t = r$ otherwise it must hold $5r \le \abs{S} \le r$.
    Therefore
    \begin{align*}
        \abs{\Core(S)} \le \abs{S} + (k - 9)\frac{\abs{S}}{5} = \frac{k - 4}{5} \abs{S},
    \end{align*}
    which concludes this lemma.
\end{proof}

\subsubsection{Small components in the bad core}
A key property of $\V^{\bad}$ is that with high probability, its components have small size.
Let $B_1, \ldots, B_s$ be the maximal connected components of $\G[\V^{\bad}]$.
\begin{lemma} \label{lem:component-growth-size}
    Assume the joint event of \Cref{lem:expansion,lem:incident-clause-estimate,lem:high-degree-tail-bound}.
    For any non-empty subset $J \subseteq [s]$ and $U = \bigcup_{i \in J} B_i$, it holds that
    \begin{align*}
        \abs{U} \le \frac{k}{2} \abs{U \cap \V^{\high}}.
    \end{align*}
    Moreover, each component is of size at most $\log_2 n$.
\end{lemma}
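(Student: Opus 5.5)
The plan has two parts: first show that every union of components is contained in the core generated by its own high-degree variables, so that \Cref{lem:bounded-core-growth} applies; then play the degree lower bound on high-degree variables against the incidence upper bound of \Cref{lem:incident-clause-estimate} to obtain the $\log_2 n$ bound.

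\textbf{Step 1: $U\subseteq\Core(U\cap\V^{\high})$.} Fix $J$, let $U=\bigcup_{i\in J}B_i$ and $S=U\cap\V^{\high}$. Consider the run of the $\Core$ process that produced $\V^{\bad}=\Core(\V^{\high})$, and let $a_1,a_2,\ldots$ be the clauses it selected, in order.
\begin{itemize}
\item Each $\vbl(a_j)$ is contained in $\V^{\bad}$ and is a clique in $\G$. Hence it lies inside a single component $B_i$.
\item Let $a_{j_1},a_{j_2},\ldots$ be the subsequence of selected clauses with $\vbl(a_{j})\subseteq U$. Every variable of $U\setminus S$ lies in some such clause, because it entered the core through a selected clause.
\item By induction along the subsequence, set $W_0=S$ and $W_\ell=W_{\ell-1}\cup\vbl(a_{j_\ell})$. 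When $a_{j_\ell}$ was selected in the original run, it had more than $8$ distinct variables in the current core. Each of these variables lies in $U$, and so is either in $S$ or was added by an earlier clause of the subsequence. Therefore $a_{j_\ell}$ has more than $8$ variables in $W_{\ell-1}$.
\item The final set $\Core(S)$ is closed: no clause outside it has more than $8$ variables inside it. By induction each $W_\ell\subseteq\Core(S)$, and hence $U\subseteq\Core(S)$.
\end{itemize}
The same argument shows that every component contains a high-degree variable, so $S\neq\emptyset$. Since $|\V^{\high}|\le r$ by \Cref{lem:high-degree-tail-bound} and the expansion event holds, \Cref{lem:bounded-core-growth} gives $|U|\le|\Core(S)|\le\frac k2|S|$.

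\textbf{Step 2: each component has at most $\log_2 n$ variables.} Apply Step 1 with $U=B_i$, which is connected in $\G$, and write $t=|B_i\cap\V^{\high}|$.
\begin{itemize}
\item Lower bound: each of the $t$ high-degree variables lies in at least $\Delta$ clause labels, and each clause contains at most $k$ of them. Hence $I(B_i)\ge t\Delta/k=6k^6\lambda t$.
\item Upper bound: by \Cref{lem:incident-clause-estimate} together with $|B_i|\le\frac k2 t$, we get $I(B_i)<C_{\INC}\cdot\frac k2 t+4\ln n=50k^3\lambda t+4\ln n$.
\item Combining the two bounds gives $t\,(6k^6-50k^3)\lambda<4\ln n$, so $t\le \ln n/k^{5}$ for $k\ge30$.
\item Consequently $|B_i|\le\frac k2 t\le \ln n/k^4\le\log_2 n$.
\end{itemize}

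The main obstacle is the subsequence argument in Step 1. Selected clauses from other components could in principle have contributed core variables that made a clause inside $U$ eligible. This is excluded because the variables of a selected clause form a clique in $\G$, so they cannot straddle two components. The rest is arithmetic with the parameters in \eqref{eq:parameter-choice}.
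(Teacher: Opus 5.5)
Your proof is correct and follows the paper's route. Like the paper, you restrict the sequence of clause additions that built $\V^{\bad}$ to the clauses lying in $U$ (a clause's variables form a clique, so they sit in one component) to get $U\subseteq\Core(U\cap\V^{\high})$, apply \Cref{lem:bounded-core-growth}, and then play the degree sum $\Delta\abs{B_i\cap\V^{\high}}\le k\,I(B_i)$ against \Cref{lem:incident-clause-estimate}. The differences are minor. You prove only the inclusion you need, justified cleanly through closedness of $\Core(S)$, whereas the paper states the full identity $\Core(U\cap\V^{\high})=U$. You also solve directly for $\abs{B_i\cap\V^{\high}}$ rather than arguing by contradiction from $\abs{B_i}>\log_2 n$, which gives the sharper bound $\abs{B_i}\le \ln n/(2k^4)$.
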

\begin{proof}
    The key ingredient to prove the first inequality is the following identity:
    \begin{align} \label{eq:component-identity}
        \Core\left(U \cap \V^{\high}\right) = U.
    \end{align}

    To show $\Core\left(U \cap \V^{\high}\right) \subseteq U$, suppose that there is a clause containing at least $9$ variables in $U$.
    All its variables should be in $\V^{\bad}$ by the construction of $\V^{\bad}$ and they are connected in $\G[\V^{\bad}]$.
    Note that each $B_i$ is a maximal connected component for $i = 1, \ldots, s$.
    These variables must lie in one component.
    Therefore they are in $U$ since the clause meets $U$.

    Now we prove that $U \subseteq \Core\left(U \cap \V^{\high}\right)$.
    Take a sequence of clause additions constructing $\V^{\bad}$ from $\V^{\high}$.
    Then we retain only the additions whose clauses lie in $U$.
    It is clear that this is a legal sequence to construct $\Core\left(U \cap \V^{\high}\right)$ from $U \cap \V^{\high}$, and the sequence finally produces all of $U$ since $U \subseteq \V^{\bad}$.
    Hence $U \subseteq \Core\left(U \cap \V^{\high}\right)$.

    Putting two directions together, we conclude \eqref{eq:component-identity}.
    By \Cref{lem:bounded-core-growth},
    \begin{align*}
        \abs{U} = \abs{\Core\left(U \cap \V^{\high}\right)} \le \frac{k}{2} \abs{U \cap \V^{\high}}
    \end{align*}
    which is exactly the first conclusion.

    To bound the size of each component, suppose that there is a component $B_i \subseteq \V^{\bad}$ such that $\abs{B_i} > \log_2 n$.
    By the inequality above, it holds that
    \begin{align*}
        \abs{B_i} \le \frac{k}{2} \abs{B_i \cap \V^{\high}}.
    \end{align*}
    Consider the number of clauses incident to $B_i \cap \V^{\high}$.
    Since every variable in $B_i \cap \V^{\high}$ has degree at least $\Delta$,
    \begin{align*}
        \Delta \abs{B_i \cap \V^{\high}} \le \sum_{v \in B_i \cap \V^{\high}} \deg(v) \le k I\left(B_i\right).
    \end{align*}
    The last inequality holds since every clause incident to $B_i$ contains at most $k$ distinct variables in $B_i \cap \V^{\high}$.
    By \Cref{lem:incident-clause-estimate},
    \begin{align*}
        \frac{2\Delta \abs{B_i}}{k} \le \Delta \abs{B_i \cap \V^{\high}} \le k C_{\INC} \abs{B_i} + 4k \ln n.
    \end{align*}
    By the assumption $\abs{B_i} > \log_2 n$, $\ln n < \abs{B_i} \cdot \ln 2$.
    Then we have
    \begin{align*}
        \frac{2\Delta}{k} \abs{B_i} < k\left(C_{\INC} + 4\ln 2\right) \abs{B_i}
    \end{align*}
    giving that $1 < k^2\left(C_{\INC} + 4\ln 2\right)/(2\Delta)$.
    By \eqref{eq:parameter-choice},
    \begin{align*}
        \frac{k^2\left(C_{\INC} + 4\ln 2\right)}{2\Delta} = \frac{25}{3k^3} + \frac{\ln 2}{3k^5\lambda} < 1
    \end{align*}
    for $k \ge 30$ and $\lambda \ge 1$.
    This contradiction concludes that $\abs{B_i} \le \log_2 n$ for every $i = 1, \ldots, s$.
\end{proof}

Now we check the feasibility of the core formula with arbitrary literal signs, which provides a guarantee of validity for updates to the block dynamics in our sampling algorithm.
As a first step, we show that, under the preceding structural properties, every connected component of core clauses is small.
\begin{proposition} \label{prop:small-core-clause-component}
    Assume the joint event of \Cref{lem:expansion,lem:incident-clause-estimate,lem:high-degree-tail-bound}.
    For every integer $L \in [\log_2{n}, r]$ and $i = 1, \ldots, s$, there are at most $L$ core clauses meeting $B_i$.
\end{proposition}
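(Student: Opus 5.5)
The plan is to combine the small-set expansion of \Cref{lem:expansion} with the component-size bound of \Cref{lem:component-growth-size}, arguing by contradiction. The first step is to localize the clauses in question. If a core clause $a$ meets $B_i$, then $\vbl(a)\subseteq\V^{\bad}$ by definition, and all variables of $a$ are pairwise adjacent in $\G$, so they lie in $\G[\V^{\bad}]$ together with the shared vertex in $B_i$. By maximality of the component $B_i$, this gives $\vbl(a)\subseteq B_i$. Thus the core clauses meeting $B_i$ are exactly the clauses whose variable sets lie inside $B_i$. By \Cref{lem:component-growth-size}, $|B_i|\le\log_2 n$.

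Next, suppose for contradiction that more than $L$ core clauses meet $B_i$, and choose a set $S$ of exactly $L$ of them. Since $L\le r$, the expansion event of \Cref{lem:expansion} applies to $S$ and gives
\begin{align*}
    (k-4)L \le \abs{\vbl(S)} \le \abs{B_i} \le \log_2 n \le L .
\end{align*}
Because $k\ge 30$, this forces $L\le 0$. But $L\ge\log_2 n>0$ for $n\ge 2$, which is the desired contradiction. Hence at most $L$ core clauses meet $B_i$.

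The only delicate points are bookkeeping. First, we need the interval $[\log_2 n, r]$ to contain an integer with $L\ge 1$; this holds for the large $n$ considered, where $r=\lfloor q_0n\rfloor\ge\log_2 n$. Second, the expansion lemma is stated for sets of at most $r$ distinct clause labels, so we must take $S$ as a set of labels. Repeated labels with identical contents cause no trouble, since they are still distinct labels in $S$. I expect no real obstacle beyond the localization step $\vbl(a)\subseteq B_i$, which uses that a clause's variables form a clique in $\G$.
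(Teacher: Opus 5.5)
Your argument is correct and matches the paper's proof: you place the core clauses meeting $B_i$ inside $B_i$, take $L$ of them, and contradict the expansion bound $|\vbl(S)|\ge(k-4)L>L$ with $|\vbl(S)|\le|B_i|\le\log_2 n\le L$. The paper's only difference is that it assumes just \emph{at least} $L$ such clauses, which gives the slightly stronger conclusion of fewer than $L$; your version proves exactly what is stated.
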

\begin{proof}
    Suppose that there exists $B_i$ met by at least $L$ core clauses.
    Select $L$ of them to form $S$.
    By \Cref{lem:expansion}, since $\abs{S}=L\le r$ and $k-4>1$, it holds that
    \[ \abs{\vbl(S)} \ge (k-4)\abs{S} = (k-4)L > L. \]
    On the other hand, every clause in $S$ is a core clause meeting $B_i$, so all its variables lie in $B_i$. Hence $\vbl(S)\subseteq B_i$. By \Cref{lem:component-growth-size},
    \begin{align*}
        \abs{\vbl(S)} \le \abs{B_i} \le \log_2 n \le L,
    \end{align*}
    which leads to a contradiction.
\end{proof}

Now we are ready to place the following lemma.
\begin{lemma} \label{lem:core-feasible-assignment}
    Assume the joint event of \Cref{lem:expansion,lem:incident-clause-estimate,lem:high-degree-tail-bound} and $r\ge\ceil{\log_2 n}$.
    Then the core formula is satisfiable with any choice of literal signs, and its feasible set is
    \begin{align*}
        \Omega \triangleq \prod_{i = 1}^s \Omega_i \neq \emptyset
    \end{align*}
    where $\Omega_i$ consists of all feasible assignments to $B_i$ satisfying all core clauses meeting that component.

    Precisely, let $c_i$ be the number of core clauses in $B_i$ and $r_i$ be the number of mixed clauses incident to $B_i$.
    Then it holds that
    \begin{align*}
        I(B_i) \le \frac{4\ln{n}}{1 - 25/(3k^3)}, \quad c_i \le \frac{\abs{B_i}}{k - 4}
    \end{align*}
    and the expected cost $U_i$ of the rejection sampler to generate a feasible assignment in $\Omega_i$ is at most 
    \begin{align*}
        U_i \le \widetilde O\ab((\abs{B_i} + k c_i + 1) \exp\ab(c_i \frac{2^{-k + 4}}{1 - 2^{-k + 4}})) = \widetilde O\ab(\poly{k, \lambda} n^\kappa)
    \end{align*}
    where
    \begin{align*}
        \kappa = \frac{2^{-k + 4}}{1 - 2^{-k + 4}} \cdot \frac{2k^2}{\Delta (1 - 25/(3k^3)) (k - 4)}.
    \end{align*}
\end{lemma}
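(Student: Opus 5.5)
The plan is to prove the four assertions in the order they depend on each other: first $I(B_i)$, then $c_i$, then satisfiability, then the rejection cost. Throughout we work on the joint event of \Cref{lem:expansion,lem:incident-clause-estimate,lem:high-degree-tail-bound}.

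\textbf{Bound on $I(B_i)$.} We reuse the chain of inequalities from the proof of \Cref{lem:component-growth-size}, but keep $I(B_i)$ as the unknown instead of deriving a contradiction. Counting degrees of the high-degree variables gives
\[
  \Delta\,\abs{B_i\cap\V^{\high}} \le k\,I(B_i),
\]
and \Cref{lem:component-growth-size} gives $\abs{B_i}\le \frac k2\abs{B_i\cap\V^{\high}}$. Together these yield $\abs{B_i}\le \frac{k^2}{2\Delta}I(B_i)$. Since $B_i$ is connected, \Cref{lem:incident-clause-estimate} gives
\[
  I(B_i) < C_{\INC}\abs{B_i}+4\ln n \le \frac{C_{\INC}k^2}{2\Delta}\,I(B_i)+4\ln n .
\]
By \eqref{eq:parameter-choice}, $C_{\INC}k^2/(2\Delta)=25/(3k^3)<1$, so rearranging gives $I(B_i)\le 4\ln n/(1-25/(3k^3))$. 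As a by-product we record
\[
  \abs{B_i}\le \frac{k^2}{2\Delta}\cdot\frac{4\ln n}{1-25/(3k^3)},
\]
which is needed at the end.

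\textbf{Bound on $c_i$.} Every core clause meeting $B_i$ has all its variables in $\V^{\bad}$, and these variables are connected, so they lie in $B_i$. Take $L=\ceil{\log_2 n}\le r$ in \Cref{prop:small-core-clause-component}; it shows $c_i\le L\le r$. The set $S$ of these $c_i$ clauses is therefore small enough for \Cref{lem:expansion}, which gives $(k-4)c_i\le\abs{\vbl(S)}\le\abs{B_i}$. Hence $c_i\le\abs{B_i}/(k-4)$.

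\textbf{Satisfiability for arbitrary signs.} Expansion applies to every subfamily $S'$ of these core clauses and gives $\abs{\vbl(S')}\ge(k-4)\abs{S'}\ge\abs{S'}$. By Hall's theorem, the core clauses meeting $B_i$ therefore have a system of distinct representative variables. Setting each representative so that its literal is true satisfies every clause, whatever the signs; a tautological clause is satisfied automatically. Core clauses lie inside single components, so the constraints on different components are independent, and $\Omega=\prod_i\Omega_i\neq\emptyset$.

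\textbf{Cost of the rejection sampler.} The sampler repeatedly draws a uniform assignment on $B_i$ and checks the $c_i$ clauses. One trial costs $\widetilde O(\abs{B_i}+kc_i+1)$, and the expected number of trials is $1/\Pr{\text{all } c_i \text{ clauses satisfied}}$. To lower-bound this probability I would use a peeling order derived from expansion: order the clauses $a_1,\dots,a_{c_i}$ so that each $a_j$ has many variables that do not occur in $a_{j+1},\dots,a_{c_i}$. Such an order exists because the total number of variables is at least $(k-4)$ times the number of clauses, so some clause has many private variables; remove it and repeat. The chain rule, conditioning each clause on the satisfaction of the later ones, then gives a product of factors $(1-2^{-(\text{private count})})$.

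This step is the main obstacle. The naive averaging argument only guarantees about $k-8$ private variables, not the $k-4$ needed for the stated exponent $2^{-k+4}$. I would first try to sharpen the peeling to attain $k-4$ private variables per clause. If that fails, I would instead run the asymmetric local lemma (\Cref{thm:aymmetric-local-lemma}) with $x_a\approx 2^{-k+4}$ to obtain $\Pr{\text{all satisfied}}\ge(1-2^{-k+4})^{c_i}$ directly. Granting this bound, the expected number of trials is at most
\[
  (1-2^{-k+4})^{-c_i}\le\exp\Bigl(c_i\,\tfrac{2^{-k+4}}{1-2^{-k+4}}\Bigr),
\]
which is the stated bound on $U_i$. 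Finally, substituting $c_i\le\abs{B_i}/(k-4)$ and the by-product bound on $\abs{B_i}$ turns the exponential factor into $n^{\kappa}$ with exactly the stated $\kappa$. The prefactor $\abs{B_i}+kc_i+1$ is $O(k\log n)$, which is absorbed into $\widetilde O(\poly{k,\lambda})$.
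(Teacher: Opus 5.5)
Your bounds on $I(B_i)$ and $c_i$, the satisfiability argument via Hall's theorem, and the final conversion of the exponential factor into $n^{\kappa}$ all agree with the paper. The gap is the step you flagged yourself: the bound
\[
 \frac{\abs{\Omega_i}}{2^{\abs{B_i}}}\ge\bigl(1-2^{-k+4}\bigr)^{c_i}
\]
is never proved, and neither of your two routes can prove it.

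Sharpening the peeling to $k-4$ private variables per clause is impossible in general. Let each clause have $k-5$ variables of its own and share each of its other $5$ variables with exactly one neighbor in a $5$-regular graph on the clauses. Every subfamily $S'$ then has $\abs{\vbl(S')}\ge(k-5)\abs{S'}+\tfrac52\abs{S'}$, so expansion holds. Yet the first clause of any order has only $k-5$ variables outside the later clauses.

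The local lemma fails for a different reason. Nothing in the hypotheses bounds the dependency degree among the core clauses of $B_i$ independently of $n$: all of them may contain one common variable, which is compatible with expansion, while $c_i$ is only bounded by $O(\ln n/(k^6\lambda))$. Write $p_a$ for the violation probability of $C_a$, and suppose the clauses form such a clique. Summing the condition $p_a\le x_a\prod_{c\in N(a)}(1-x_c)$ over $a$, and using $\sum_a x_a\prod_{b\ne a}(1-x_b)\le1$, forces $\sum_a p_a\le1$. This fails once $c_i>2^k$. Even when the condition holds, it forces $x_a>p_a$ for every clause with a neighbor. So for a clause with only $k-4$ distinct variables, it cannot return the factor $1-2^{-k+4}$ that determines $\kappa$.

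The missing idea is a $(k-4)$-fold version of the Hall argument you already use for satisfiability. Replace each core clause meeting $B_i$ by $k-4$ copies with the same variable neighborhood. A set of copies whose underlying clauses form $T$ has at most $(k-4)\abs{T}$ elements. Since $\abs{T}\le c_i\le r$, it also has at least $\abs{\vbl(T)}\ge(k-4)\abs{T}$ neighbors. Hall's theorem therefore gives a matching saturating all copies, that is, pairwise disjoint sets $G_a\subseteq\vbl(a)$ with $\abs{G_a}=k-4$.

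Fix one occurrence in $C_a$ of each variable of $G_a$, and consider the event that some fixed literal of $C_a$ is true. These events depend on disjoint sets of variables, so they are independent under the uniform assignment. Each has probability exactly $1-2^{-(k-4)}$, and their intersection satisfies every core clause of $B_i$. This gives the displayed bound with no ordering and no degree condition. Your computation of the expected number of trials and of $n^{\kappa}$ then goes through unchanged.
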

\begin{proof}
    For every $i = 1, \ldots, s$, let $\C_i$ be the collection of core clauses meeting $B_i$.
    Construct a bipartite graph with:
    \begin{itemize}
        \item one left vertex for each clause label $a \in \C_i$;
        \item one right vertex for each variable $v \in B_i$;
        \item an edge $a \sim v$ if and only if $v \in \vbl(a)$.
    \end{itemize}
    For every $S \subseteq \C_i$, the neighbors of $S$ are exactly $\vbl(S)$.
    By \Cref{prop:small-core-clause-component}, for every subset $S \subseteq \C_i$, $\abs{S} \le \abs{\C_i} \le r$, thus $\abs{\vbl(S)} \ge (k - 4) \abs{S}$.
    Thus by Hall's theorem, there is an injection $f_i: \C_i \to B_i$ such that $f_i(a)$ occurs in the clause $C_a$.
    
    Now we construct a satisfying assignment based on the matching.
    Given any choice of literal signs, for every clause $a \in \C_i$,
    \begin{itemize}
        \item choose one occurrence of $f_i(a)$ in $C_a$;
        \item set $f_i(a)$ making $C_a$ true.
    \end{itemize}
    Since $f_i$ is injective, different clauses prescribe values to different variables.
    Finally, we set all unmatched variables in $B_i$ arbitrarily.
    Hence we conclude that $\Omega_i \neq \emptyset$ for every $i = 1, \ldots, s$.
    
    Since $B_1, \ldots, B_s$ are disjoint and thus each core clause belongs to exactly one component, then it holds that every assignment in $\Omega \triangleq \prod_{i = 1}^s \Omega_i$ is feasible for the full core formula.
    On the other hand, given a feasible assignment of the core formula, its restriction to $B_i$ also satisfies $\C_i$ for each $i = 1, \ldots, s$, meaning that it belongs to $\Omega$.
    Thus we conclude that the feasible set of the core formula is exactly $\Omega$.

    Since $B_i$ is connected, the incidence bound gives $I(B_i)\le C_{\INC}|B_i|+4\ln n$. The degree sum and \Cref{lem:component-growth-size} give $|B_i|\le k^2 I(B_i)/(2\Delta)$. Substituting this into the incidence bound and using $C_{\INC}k^2/(2\Delta)=25/(3k^3)$ yields
    \[
        I(B_i)\le\frac{4\ln n}{1-25/(3k^3)},\qquad
        |B_i|\le\frac{2k^2\ln n}{\Delta(1-25/(3k^3))}.
    \]
    Furthermore, the expansion property provides that $c_i \le \abs{B_i}/(k - 4)$.

    Replace each core clause on the left side of the bipartite graph by $k-4$ copies with the same neighbors. For any set of copies, let $T$ be its set of original clauses. The set contains at most $(k-4)|T|$ copies and has $|\vbl(T)|\ge(k-4)|T|$ neighbors. Hall's theorem therefore gives a matching saturating the copies. Each clause receives $k-4$ distinct matched variables, and the groups assigned to different clauses are disjoint.
    Choose one occurrence of each matched variable in its assigned clause. Under a uniform assignment, the event that at least one of these literals is true has probability $1-2^{-(k-4)}$ for each clause. These events are independent across clauses and their intersection satisfies the entire core component. Hence
    \begin{align*}
        \frac{\abs{\Omega_i}}{2^{\abs{B_i}}} \ge \ab(1 - 2^{-k + 4})^{c_i} \ge \exp\ab(-c_i \frac{2^{-k + 4}}{1 - 2^{-k + 4}}).
    \end{align*}
    This also implies that this rejection sampler terminates almost surely.
    To generate a candidate, it takes $\abs{B_i}$ fair bits and checks at most $k c_i$ literal occurrences.
    Therefore the expected cost is
    \begin{align*}
        U_i \le \widetilde O\ab((\abs{B_i} + k c_i + 1) \exp\ab(c_i \frac{2^{-k + 4}}{1 - 2^{-k + 4}})) = \widetilde O\ab(\poly{k, \lambda} n^\kappa)
    \end{align*}
    by above bounds for $\abs{B_i}$ and $c_i$.
\end{proof}

\subsubsection{Extension from the core formula to good variables}
Now we consider the good formula $\Psi$.
Let $\H_{\Psi}$ be the dependency graph of $\Psi$ and $\vbl^{\good}(\cdot)$ denote the variables in the good formula $\Psi$.
\begin{proposition} \label{prop:good-property}
    Assume the joint event of \Cref{lem:expansion,lem:incident-clause-estimate,lem:high-degree-tail-bound}.
    Then the following properties hold.
    \begin{itemize}
        \item The dependency graph $\H_{\Psi}$ of $\Psi$ has maximum degree $d \le k\Delta$ and thus $d + 1 \le \overline{D}$ by \eqref{eq:parameter-choice}.
        \item For every $S \subseteq \C^{\good}$ with $\abs{S} \le r$,
        \begin{align*}
            \abs{\vbl^{\good}(S)} \ge (k - 12) \abs{S}.
        \end{align*}
    \end{itemize}
\end{proposition}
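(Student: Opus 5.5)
Both items reduce to one structural fact: every non-core clause has at most $8$ variables in $\V^{\bad}$. If a clause $a\in\C^{\good}$ had $9$ or more distinct variables in $\V^{\bad}$, the core-construction rule would select it and add all its variables to the core. Then $\vbl(a)\subseteq\V^{\bad}$, so $a$ would be a core clause, contradicting $a\in\C^{\good}$. Hence $\abs{\vbl(a)\cap\V^{\bad}}\le 8$ for every $a\in\C^{\good}$, and every good variable occurring in such a clause has degree less than $\Delta$, since $\V^{\high}\subseteq\V^{\bad}$.

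For the degree bound, fix $a\in\C^{\good}$. A neighbour $b$ of $a$ in $\H_\Psi$ must share a good variable with $a$, because $\Psi$ keeps only good literals. The clause $a$ has at most $k$ good variables, and each has $\deg(v)<\Delta$, so it lies in at most $\Delta-1$ clauses other than $a$. Hence $d\le k(\Delta-1)\le k\Delta$. For the comparison with $\overline D$, note $k\Delta=6k^8\lambda$, so $d+1\le 6k^8\lambda+1\le\ceil{6k^8\lambda}+1=\overline D$. One caveat: $\vbl^{\good}(a)$ could be empty if all good occurrences were absent, but $a\in\C^{\good}$ meets $\V^{\good}$, so this does not happen. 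Neighbourhoods are counted with syntactic occurrences, which only helps the bound.

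For the expansion bound, take $S\subseteq\C^{\good}$ with $\abs{S}\le r$. By \Cref{lem:expansion}, $\abs{\vbl(S)}\ge(k-4)\abs{S}$. Every variable in $\vbl(S)\setminus\vbl^{\good}(S)$ is a bad variable lying in some clause of $S$, and each clause contributes at most $8$ of them by the fact above. Hence $\abs{\vbl(S)\cap\V^{\bad}}\le 8\abs{S}$, and
\[
\abs{\vbl^{\good}(S)}\ge(k-4)\abs{S}-8\abs{S}=(k-12)\abs{S}.
\]

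The only delicate point is the structural claim itself: we need the selection rule "more than $8$ distinct variables in the current core" to apply to the final core $\V^{\bad}$. Since the process stops only when no remaining clause qualifies, every clause not selected has at most $8$ core variables at termination. Selected clauses have $\vbl\subseteq\V^{\bad}$ and so are core clauses. This gives the claim. The remaining steps are direct counting.
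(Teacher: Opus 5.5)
Your proof is correct and matches the paper's argument. The degree bound follows because every good variable has degree below $\Delta$, and the expansion bound subtracts at most $8$ bad variables per clause from the $(k-4)\abs{S}$ bound of \Cref{lem:expansion}. You also justify explicitly, via termination of the core-closure rule, that each non-core clause meets $\V^{\bad}$ in at most $8$ variables; the paper uses this without comment.
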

\begin{proof}
    Observe that $\V^{\high} \subseteq \V^{\bad}$ and $\V^{\good} = \V \setminus \V^{\bad}$.
    Then for every variable $v \in \V^{\good}$, it holds that $\deg(v) < \Delta$.
    Therefore the maximum degree of $\H_{\Psi}$ is at most $k\Delta$.

    For every $S \subseteq \C^{\good}$ with $\abs{S} \le r$, by \Cref{lem:expansion},
    \begin{align*}
        \abs{\vbl^{\good}(S)} &\ge \abs{\vbl(S)} - \sum_{a \in S} \abs{\vbl(a) \cap \V^{\bad}} \\
        &\ge (k - 4) \abs{S} - 8 \abs{S} \\
        &\ge (k - 12) \abs{S}
    \end{align*}
    which concludes the second property.
\end{proof}

The following lemma shows that every feasible assignment to the core formula can be extended to a satisfying assignment of the original formula $\Phi$.
For every subset $F \subseteq \C^{\good}$, the deletion sub-formula of $\Psi$ induced by $F$ is defined as
\begin{align*}
    \Psi[F] \triangleq \bigwedge_{a \in F} C_a^{\good}.
\end{align*}
Let $\nu^{\good}$ be the uniform product measure on $\set{0, 1}^{\V^{\good}}$.
For $F\subseteq\C^{\good}$, write $\Good(F)$ for $\Good(\Psi[F])$ and define
\[
    Z_F\triangleq\nu^{\good}(\Good(F)),\qquad
    \mu_F(\cdot)\triangleq\nu^{\good}(\cdot\mid\Good(F))
\]
whenever $Z_F>0$. The following lemma guarantees this positivity for every $F$.
For $a \in \C^{\good}$, let $E_a$ be the event that the residual clause $C_a^{\good}$ is violated, and for $S \subseteq \C^{\good}$ define
\begin{align*}
    E_S = \bigcap_{a \in S} E_a, \qquad q_S = \nu^{\good}\left(E_S\right).
\end{align*}
\begin{lemma} \label{lem:core-assignment-good-variable-extension}
    Assume $r\ge1$ and the joint event of \Cref{lem:expansion,lem:incident-clause-estimate,lem:high-degree-tail-bound}.
    Put $p_* = 2^{-k + 12}$ and $C_* = 10^4$.
    If
    \begin{align} \label{eq:good-variable-condition}
        p_* \overline{D} \le \frac{1}{2C_*},
    \end{align}
    then every deletion sub-formula $\Psi[F]$ of $\Psi$ is satisfiable and it holds that for every $a \notin F$,
    \begin{align} \label{eq:violating-upper-bound}
        \mu_F(E_a) \le \e p_*.
    \end{align}
    Consequently for $J \subseteq F$, $Z_{F \setminus J} \le (1 - \e p_*)^{-\abs{J}} Z_F$.
    
    Moreover, for every $\sigma \in \Omega$, it extends to a satisfying assignment of $\Phi$.
    Let $D_{\sigma}$ be the collection of labels whose bad-variable literals are all false at $\sigma$.
    Then it holds that
    \begin{align*}
        \mu_{\Phi}\ab(\cdot \mid \V^{\bad} \gets \sigma) = \mu_{\Psi[D_{\sigma}]}(\cdot).
    \end{align*}
\end{lemma}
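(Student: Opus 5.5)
The plan is to apply the asymmetric local lemma (\Cref{thm:aymmetric-local-lemma}) to the residual violation events $\{E_a\}_{a\in F}$ under $\nu^{\good}$, with the uniform choice $x_a = \e p_*$.

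\textbf{Step 1: local lemma bound.} By \Cref{prop:good-property}, every residual clause $C_a^{\good}$ keeps at least $k-12$ good literals. Here I apply expansion to the singleton $S=\{a\}$, which gives $|\vbl^{\good}(a)|\ge k-12$ distinct good variables. If the clause has no tautology, then $\nu^{\good}(E_a)\le 2^{-(k-12)}=p_*$. If it is tautological, then $E_a$ is empty, which is even better. The semantic neighborhood of $E_a$ inside $F$ lies in the $\H_\Psi$-neighborhood, of size at most $d\le\overline D-1$. The local lemma condition therefore reduces to
\[
  p_*\le \e p_*(1-\e p_*)^{\overline D},
\]
and this follows from $(1-\e p_*)^{\overline D}\ge 1-\e p_*\overline D\ge 1-\e/(2C_*)\ge \e^{-1}$ by \eqref{eq:good-variable-condition}. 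Hence $Z_F\ge(1-\e p_*)^{|F|}>0$, so every $\Psi[F]$ is satisfiable.

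\textbf{Step 2: bound \eqref{eq:violating-upper-bound}.} For $a\notin F$, I take $\mathcal E=E_a$ in the second part of the lemma. This gives
\[
  \mu_F(E_a)\le p_*(1-\e p_*)^{-|N(E_a)|}\le p_*(1-\e p_*)^{-\overline D}\le \e p_*,
\]
using the same numerics as in Step 1. For the consequence, I peel clauses of $J$ one at a time. Order $J=\{a_1,\dots,a_j\}$ and set $F_i=F\setminus\{a_1,\dots,a_i\}$. Then
\[
  \frac{Z_{F_{i-1}}}{Z_{F_i}}=\mu_{F_i}(\overline{E_{a_i}})\ge 1-\e p_*,
\]
where I apply \eqref{eq:violating-upper-bound} to $F_i$ with $a_i\notin F_i$. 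Multiplying these ratios gives $Z_{F\setminus J}\le(1-\e p_*)^{-|J|}Z_F$.

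\textbf{Step 3: extension and conditional law.} Fix $\sigma\in\Omega$. Core clauses are satisfied by $\sigma$ by \Cref{lem:core-feasible-assignment}. A non-core clause $a$ is already satisfied by $\sigma$ unless $a\in D_\sigma$, and for $a\in D_\sigma$ it is satisfied exactly when $C_a^{\good}$ is. Hence, for $\tau\in\{0,1\}^{\V^{\good}}$, the combined assignment $(\sigma,\tau)$ satisfies $\Phi$ if and only if $\tau\in\Good(D_\sigma)$. This set is nonempty by Step 1, which gives the extension. Since $\mu_\Phi$ is uniform on solutions, conditioning on $\V^{\bad}\gets\sigma$ gives the uniform law on $\Good(D_\sigma)$, and that is $\mu_{\Psi[D_\sigma]}$.

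\textbf{Main obstacle.} The only delicate point is whether the neighborhood count $\overline D$ dominates the dependency degree uniformly over all deletion subformulas, and whether the numerical constant $C_*$ leaves enough slack for $(1-\e p_*)^{-\overline D}\le \e$. Deleting clauses only shrinks neighborhoods, so the first concern is harmless. For the second, $\e p_*\overline D\le \e/(2\cdot 10^4)$ is tiny, which gives ample room.
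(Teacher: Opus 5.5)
Your proof is correct. Steps~1 and~3 coincide with the paper: the same uniform choice $x_a=\e p_*$, the same numerical check $(1-x)^d\ge \e^{-1}$, and the same observation that $(\sigma,\tau)$ satisfies $\Phi$ iff $\tau\in\Good(D_\sigma)$.

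The difference is in Step~2. The paper does not invoke the conditional clause of \Cref{thm:aymmetric-local-lemma}. Instead it proves \eqref{eq:violating-upper-bound} directly by induction on $\abs{F}$:
\begin{itemize}
\item it splits $F$ into $S=F\cap N_{\H_\Psi}(a)$ and $T=F\setminus S$;
\item it uses independence of $E_a$ from $\Good(T)$ to get $\nu^{\good}(E_a\mid\Good(F))\le p_*/\nu^{\good}(\Good(S)\mid\Good(T))$;
\item it bounds the denominator below by $(1-\e p_*)^{d}$ through the chain rule and the inductive hypothesis on smaller families.
\end{itemize}
This is the standard proof of the conditional local-lemma bound, so your citation is a legitimate shortcut. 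Since $a\notin F$, the set $N(E_a)$ consists of $\H_\Psi$-neighbors of $a$ and has at most $d\le\overline D-1$ elements.

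The paper's route is self-contained and uses the local lemma only for satisfiability. Yours is shorter, and it also spells out two things the paper leaves implicit: the tautological-clause case and the peeling argument behind $Z_{F\setminus J}\le(1-\e p_*)^{-\abs{J}}Z_F$.
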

\begin{proof}
    Fix a deletion sub-formula $\Psi[F]$. Since $r \ge 1$ by assumption, applying \Cref{prop:good-property} to the singleton set $\set{a}$ gives $\abs{\vbl^{\good}(a)} \ge k-12$. Consequently, every clause in $\Psi[F]$ has violating probability at most $p_*$, and has at most $d + 1 \le \overline{D}$ dependencies by \Cref{prop:good-property}.
    Take $x_a = x \triangleq \e p_*$ in \Cref{thm:aymmetric-local-lemma} for every clause label $a \in F$.
    It holds that for every $a$, by $x \overline{D} \le \e/\left(2C_*\right) < 1/128$ and $x < 1/2$,
    \begin{align*}
        p_* = x \e^{-1} \le x (1 - x)^{d}.
    \end{align*}
    Therefore there exists a satisfying assignment for $\Psi[F]$.

    To show \eqref{eq:violating-upper-bound}, we prove it by induction on $\abs{F}$.
    First, since every deletion sub-formula $\Psi[F]$ is satisfiable, $Z_F > 0$.
    For $F = \emptyset$, it is clear to see $\mu_\emptyset(E_a) \le p_* \le \e p_*$.
    Now fix $F$ with $\abs{F} = t \ge 1$ and $a \notin F$.
    Split $F$ into the neighbors and non-neighbors of $a$:
    \begin{align*}
        S = F \cap N_{\H_\Psi}(a), \quad T = F \setminus S
    \end{align*}
    and order $S = \set{b_1, \ldots, b_\ell}$ where $\ell \le d$.
    
    The event $E_a$ uses variables disjoint from every clause in $T$.
    Hence
    \begin{align*}
        \nu^{\good}\ab(E_a \mid \Good(F)) = \frac{\nu^{\good}\ab(E_a \cap \Good(S) \mid \Good(T))}{\nu^{\good}\ab(\Good(S) \mid \Good(T))} \le \frac{p_*}{\nu^{\good}\ab(\Good(S) \mid \Good(T))}
    \end{align*}
    To bound the denominator, by the chain rule,
    \begin{align*}
        \nu^{\good}\ab(\Good(S) \mid \Good(T)) = \prod_{j = 1}^\ell \ab(1 - \nu^{\good}\ab(E_{b_j} \mid \Good(T \cup \set{b_1, \ldots, b_{j - 1}}))).
    \end{align*}
    By induction, each factor is at least $1 - \e p_*$.
    Therefore
    \begin{align*}
        \nu^{\good}\ab(\Good(S) \mid \Good(T)) \ge (1 - \e p_*)^\ell \ge (1 - \e p_*)^d.
    \end{align*}
    Thus we conclude that $\nu^{\good}\ab(E_a \mid \Good(F)) \le p_*/(1 - \e p_*)^d \le \e p_*$.

    To complete the proof, fix a partial assignment $\sigma \in \Omega$ on the core formula.
    Each non-core clause either is already satisfied by $\sigma$ and disappears, or retains its residual clause in $\Psi$.
    Then it is trivial to see the satisfiability under $\sigma$ and the identity of conditional law.
\end{proof}

\subsection{Connected subsets of good variables}
We bound the total weight of connected clause sets containing a fixed clause of $\Psi$.

For every good clause $a \in \C^{\good}$, let $\CS_{\Psi}(a)$ denote the nonempty connected sets in $\H_{\Psi}$ containing $a$.
For $k\ge2^{20}$ and $m/n\le 2^k/k^{16}$ we have $1\le 2^k/k^{16}$. Hence $\lambda\le 2\cdot 2^k/k^{16}$ and the rounded degree bound satisfies
\[
    \overline D\le 6k^8\lambda+2\le 14\cdot 2^k/k^8,
    \qquad p_*\overline D\le 14\cdot 2^{12}/k^8\le 2^{-40}.
\]
In particular, \eqref{eq:good-variable-condition} holds. Define
\begin{align*}
    \tau = \ln\frac{1}{C_* p_* \overline{D}}, \qquad x = \e p_*, \qquad \rho = \frac{4}{C_* \cdot \overline{D}}.
\end{align*}
Since $C_*=10^4$, the preceding bound gives $\tau\ge\ln(2^{40}/C_*)>1$, $x<1/2$, and $x\overline D\le\e/(2C_*)$.

We need the following fact to bound the number of connected sets with given size.
\begin{lemma}[{\cite[Lemma 2.1]{BCKL13}}] \label{lem:connected-set-count}
    In a graph of maximum degree at most $\Delta$, the number of connected sets of size $\ell$ containing a certain vertex is at most $(\e \Delta)^{\ell - 1}$.
\end{lemma}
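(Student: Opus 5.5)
The plan is to encode each connected set injectively as a rooted subtree of an infinite $\Delta$-ary tree, and then count those subtrees exactly using the Fuss--Catalan formula.

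\emph{Encoding.} Fix the root vertex $v$. At every vertex $u$ of the graph, fix an injective labelling of its neighbours by ports $\set{1,\ldots,\Delta}$; this is possible since the maximum degree is at most $\Delta$. Let $\mathbb T_\Delta$ be the infinite rooted tree in which every node has $\Delta$ children indexed by $[\Delta]$. Define the walk map $\pi$ from nodes of $\mathbb T_\Delta$ to vertices of the graph by sending the root to $v$ and following ports, so a node $x$ with child $x\cdot j$ is mapped to the $j$-th neighbour of $\pi(x)$. This is only partially defined, since some ports may be empty.

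\emph{From connected sets to subtrees.} Given a connected set $S\ni v$ with $|S|=\ell$, run a canonical breadth-first search inside $\G[S]$ from $v$. Scan the vertices in BFS order, and at each vertex scan its ports in increasing order. This produces a spanning tree of $\G[S]$ together with, for each tree vertex, the set of ports leading to its children. Lifting this through $\pi$ gives a subtree $\mathcal T(S)$ of $\mathbb T_\Delta$ that contains the root and has exactly $\ell$ nodes. The map $S\mapsto\mathcal T(S)$ is injective, because $S=\pi(\mathcal T(S))$ and $\pi$ is determined by the fixed port labelling. Hence the number of connected sets of size $\ell$ containing $v$ is at most $t_\ell$, the number of root-containing subtrees of $\mathbb T_\Delta$ with $\ell$ nodes.

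\emph{Counting subtrees.} These subtrees are exactly the $\Delta$-ary plane trees with $\ell$ nodes. The classical Fuss--Catalan count, proved for instance via the cycle lemma or Lagrange inversion, gives
\[
    t_\ell=\frac{1}{(\Delta-1)\ell+1}\binom{\Delta\ell}{\ell}=\frac1\ell\binom{\Delta\ell}{\ell-1}.
\]
Using $\binom{N}{j}\le (\e N/j)^j$ and $(\ell/(\ell-1))^{\ell-1}\le \e$, this yields
\[
    t_\ell\le\frac1\ell\ab(\frac{\e\Delta\ell}{\ell-1})^{\ell-1}\le \frac{\e}{\ell}(\e\Delta)^{\ell-1}\le(\e\Delta)^{\ell-1}\qquad\text{for }\ell\ge3.
\]
The remaining small cases are checked directly. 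For $\ell=1$ the only set is $\set{v}$, giving count $1$. For $\ell=2$ there are at most $\Delta$ sets, and $\Delta\le \e\Delta$. Both agree with the bound $(\e\Delta)^{\ell-1}$.

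\emph{Main obstacle.} The only delicate point is the sharp count of $\Delta$-ary plane trees. A cruder encoding that chooses $\ell-1$ of the $\ell\Delta$ slots gives $\binom{\ell\Delta}{\ell-1}$, which loses the factor $1/\ell$ and would only yield $\e^{\ell}\Delta^{\ell-1}$. I would therefore invoke the Fuss--Catalan identity via the cycle lemma, citing it as standard rather than reproving it.
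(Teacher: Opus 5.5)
Your argument is correct and is essentially the standard proof of the cited \cite[Lemma 2.1]{BCKL13}; the paper gives no proof of its own. That proof likewise maps connected sets to root-containing subtrees of the infinite $\Delta$-ary tree via a fixed port labelling and counts those subtrees exactly by the Fuss--Catalan number $\frac{1}{(\Delta-1)\ell+1}\binom{\Delta\ell}{\ell}$.

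One correction to your ``main obstacle'' remark: the sharp count is not needed. Your crude slot encoding already gives $\binom{\Delta\ell}{\ell-1}\le\frac{(\Delta\ell)^{\ell-1}}{(\ell-1)!}=\Delta^{\ell-1}\frac{\ell^{\ell}}{\ell!}\le(e\Delta)^{\ell-1}$ for every $\ell\ge1$, using the elementary bound $\ell!\ge e(\ell/e)^{\ell}$. The extra factor $e$ came only from the lossy estimate $\binom{N}{j}\le(eN/j)^j$.
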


We view the generation of $\Phi$ as the following three-stage process.
\begin{itemize}
    \item \textbf{The structure generation:} sample all variable-occurrences without signs and separate variables into the bad core part $\V^{\bad}$ and the good part $\V^{\good}$.
    
    \item \textbf{Bad-variable signs:} sample bad-variable occurrence signs independently at random.
    
    \item \textbf{Good-variable signs:} sample good-variable occurrence signs independently at random.
\end{itemize}
We remark here that the dependency graph $\H_\Phi$ and the primal graph $\G_\Phi$ depend only on the structure generation.

\begin{proposition} \label{prop:root-event}
    Assume the joint event of \Cref{lem:expansion,lem:incident-clause-estimate,lem:high-degree-tail-bound}. Suppose $m \ge 1$.
    Fix $\zeta \in (0, 1)$ such that $L \le r = \floor{q_0 n}$ for $L = \left\lceil \log_2 (32m/\zeta)\right\rceil$.
    Conditional on the structure generation and every choice of bad-variable signs, with probability at least $1 - \zeta$ over the good-variable signs,
    \begin{align} \label{eq:good-root-event}
        \sup_{a \in \C^{\good}}\; \sum_{\gamma \in \CS_{\Psi}(a)} q_\gamma \e^{(1 + \tau) \abs{\gamma}} \le \rho.
    \end{align}
\end{proposition}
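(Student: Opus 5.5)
The plan is to split the sum in \eqref{eq:good-root-event} by the size $\ell=\abs{\gamma}$ at the threshold $L$. Sets with $\ell<L$ are handled deterministically by expansion. Sets with $\ell\ge L$ are handled in expectation over the good-variable signs, followed by Markov's inequality. Throughout, the structure and the bad-variable signs are fixed. Hence $\Psi$, $\H_\Psi$ and $\CS_\Psi(a)$ are fixed, and the only randomness is the set of independent fair good-variable signs. By \Cref{prop:good-property}, $\H_\Psi$ has maximum degree $d\le\overline D-1$. So \Cref{lem:connected-set-count} bounds the number of $\gamma\in\CS_\Psi(a)$ with $\abs{\gamma}=\ell$ by $(\e\overline D)^{\ell-1}$.

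\emph{Small sets, $\ell<L\le r$.} The event $E_\gamma$ requires every good literal of every clause in $\gamma$ to be false. If two occurrences of one variable carry opposite signs, then $q_\gamma=0$. Otherwise $E_\gamma$ fixes all of $\vbl^{\good}(\gamma)$, so
\[
q_\gamma\le 2^{-\abs{\vbl^{\good}(\gamma)}}\le 2^{-(k-12)\ell}=p_*^{\ell}
\]
by \Cref{prop:good-property}, for every sign choice. Substituting $\e^{\tau}=1/(C_*p_*\overline D)$, the contribution of these sets is at most
\[
\sum_{\ell\ge1}(\e\overline D)^{\ell-1}p_*^{\ell}\e^{(1+\tau)\ell}
=\frac{\e}{C_*\overline D}\sum_{\ell\ge1}\ab(\frac{\e^{2}}{C_*})^{\ell-1}
\le\frac{\e}{(1-\e^2/C_*)\,C_*\overline D}.
\]
This is at most $2.8/(C_*\overline D)$, which leaves slack of about $1.2/(C_*\overline D)$ below $\rho=4/(C_*\overline D)$.

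\emph{Large sets, $\ell\ge L$.} Expansion is no longer available here, so I would average over the signs instead. For a fixed $\gamma$ and a fixed assignment, each good occurrence is false independently with probability $1/2$. Each clause keeps at least $k-8$ good occurrences, because a non-core clause has at most $8$ bad variables. Hence
\[
\E{q_\gamma}\le 2^{-(k-8)\ell}=(p_*/16)^{\ell}.
\]
Let $Y$ be the sum, over all $a\in\C^{\good}$ and all $\gamma\in\CS_\Psi(a)$ with $\abs{\gamma}\ge L$, of $q_\gamma\e^{(1+\tau)\abs{\gamma}}$. The same geometric computation gives
\[
\E{Y}\le m\cdot\frac{\e}{C_*\overline D}\,16^{-L}\sum_{j\ge0}\ab(\frac{\e^2}{16C_*})^{j}.
\]
Since $2^{L}\ge 32m/\zeta$, we have $16^{-L}\le 2^{-L}\le \zeta/(32m)$. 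Therefore $\E{Y}\le \zeta/(C_*\overline D)$, up to a harmless constant below $1$. By Markov's inequality, $Y\le 1/(C_*\overline D)$ with probability at least $1-\zeta$.

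On this event, the large-set part of the sum for every $a$ is at most $Y$, which fits inside the slack. This yields \eqref{eq:good-root-event} uniformly in $a$, and the bound holds for every bad-sign choice because nothing in the argument depended on those signs. The main obstacle is bookkeeping the constants so that both the small-set bound and the Markov bound fit under $\rho$. I would first verify that $\e/(1-\e^2/C_*)+1<4$, which it is. The hypothesis $L\le r$ is used exactly so that the deterministic expansion bound covers every $\ell<L$.
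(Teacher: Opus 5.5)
Your route is the paper's: the same split of the sum at $L$, the same deterministic bound $q_\gamma\le p_*^{\abs{\gamma}}$ from \Cref{prop:good-property} for the small sets, and the same averaging over good-variable signs followed by Markov. Your single sum $Y$ over all roots is equivalent to the paper's union bound over $a$.

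\textbf{The false step.} In the large-set case you assert that every clause of $\C^{\good}$ keeps at least $k-8$ good \emph{occurrences}, because it has at most $8$ bad variables. The core construction only guarantees at most $8$ \emph{distinct} bad variables. In the occurrence model a bad variable may occur several times in one clause. For example, a non-core clause can have $8$ distinct bad variables, one of them occurring twice, and hence only $k-9$ good occurrences. This is compatible with the expansion event, since the clause still has $k-1\ge k-4$ distinct variables. So $\mathbb E[q_\gamma]\le 2^{-(k-8)\abs{\gamma}}$ can fail already for singletons $\gamma=\{c\}$.

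\textbf{The repair.} Use the paper's count instead: the number of good occurrences in $c$ is at least $\abs{\vbl^{\good}(c)}\ge k-12$, by \Cref{prop:good-property} applied to $\{c\}$. Equivalently, singleton expansion allows at most $4$ repeated occurrences, hence at most $12$ bad ones. This gives only $\mathbb E[q_\gamma]\le p_*^{\abs{\gamma}}$, without your extra factor $16^{-\abs{\gamma}}$, but you never needed that factor. With $z=\e^2/C_*$ your tail becomes
\[
\mathbb E[Y]\le m\,\frac{\e}{C_*\overline D}\sum_{\ell\ge L}z^{\ell-1}=\frac{m\,\e\,z^{L-1}}{(1-z)\,C_*\overline D}.
\]
Since $z<1/2$ and $2^L\ge 32m/\zeta$, we have $m z^{L-1}\le 2m\,2^{-L}\le\zeta/16$. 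Hence $\mathbb E[Y]<\zeta/(5C_*\overline D)$, and Markov closes exactly as you planned. The final constant check $\e/(1-z)+1<4$ is unchanged, so with this correction the proof is complete.
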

\begin{proof}
    Write
    \begin{align*}
        A_a \triangleq \sum_{\gamma \in \CS_{\Psi}(a), \abs{\gamma} \le L} q_\gamma \e^{(1 + \tau) \abs{\gamma}}, \quad
        T_a \triangleq \sum_{\gamma \in \CS_{\Psi}(a), \abs{\gamma} > L} q_\gamma \e^{(1 + \tau) \abs{\gamma}}.
    \end{align*}
    Put $z = \e^2 / C_* < 10^{-3}$.
    Since $L \le r$, \Cref{prop:good-property} implies that, for every $S \subseteq \C^{\good}$ with $\abs{S} \le L$, the violating probability is at most
    \begin{align*}
        q_S \le 2^{-(k - 12) \abs{S}} = p_*^{\abs{S}}.
    \end{align*}
    Then with \Cref{lem:connected-set-count}, the contribution from $\abs{\gamma} \le L$ is at most
    \begin{align*}
        A_a \le \sum_{\ell \ge 1} \ab(\e \overline{D})^{\ell - 1} p_*^{\ell} \e^{(1 + \tau)\ell} = \sum_{\ell \ge 1} p_* \e^{1+\tau} \ab(\e^{2+\tau}\overline{D}p_*)^{\ell-1} = \sum_{\ell\ge 1}\frac{\e}{C_* \overline{D}} z^{\ell-1}  = \frac{\e}{C_* \overline{D}(1 - z)}.
    \end{align*}
    For the tail $T_a$ with $\abs{\gamma} > L$, interchange expectation over good-variable signs with integration against $\nu^{\good}$.
    For every fixed assignment, all good-variable literal occurrences in $S$ are independently false with probability $1/2$, even when variables repeat within or across clauses.
    The number of these occurrences is at least $\sum_{c\in S}|\vbl^{\good}(c)|$.
    Applying \Cref{prop:good-property} separately to each singleton $\{c\}$ (using $r\ge L\ge1$) therefore gives, for every $S\subseteq\C^{\good}$, without a restriction on $|S|$,
    \begin{align*}
        \E[\text{good signs}]{q_S} \le 2^{-\sum_{c \in S} \abs{\vbl^{\good}(c)}} \le p_*^{\abs{S}}.
    \end{align*}
    Therefore
    \begin{align*}
        \E{T_a} \le \sum_{\ell = L + 1}^{\infty} \ab(\e \overline{D})^{\ell - 1} p_*^{\ell} \e^{(1 + \tau) \ell} = \sum_{\ell \ge L + 1} p_* \e^{1+\tau} \ab(\e^{2+\tau}\overline{D}p_*)^{\ell-1} = \sum_{\ell\ge L + 1}\frac{\e}{C_* \overline{D}} z^{\ell-1} = \frac{\e z^{L}}{C_* \cdot \overline{D} (1 - z)}.
    \end{align*}
    The Markov inequality gives that
    \begin{align*}
        \Pr{\max_{a \in \C^{\good}} \; T_a > \frac{1}{C_* \cdot \overline{D}}} \le 3m z^L < \zeta.
    \end{align*}
    Consequently, with probability at least $1 - \zeta$, for every $a \in \C^{\good}$,
    \begin{align*}
        A_a + T_a \le \frac{\e}{C_* \cdot \overline{D}(1 - z)} + \frac{1}{C_* \cdot \overline{D}} \le \rho,
    \end{align*}
    which concludes the proposition.
\end{proof}

Putting all structural properties together, we present the following criterion to verify whether a $k$-SAT formula $\Phi$ is nice.
\begin{condition}[Nice formula] \label{cond:nice-formula}
    With parameters defined as above, a formula $\Phi$ on $n$ Boolean variables with $m$ clauses with width $k$ is said to be \emph{nice} if the following conditions hold.
    \begin{itemize}
        \item \textbf{Parameter conditions:} $k \ge 2^{20}$, $m/n \le 2^k/k^{16}$, and $r\ge\max\{1,\ceil{\log_2 n}\}$. In particular, $\tau\ge1$, $x < 1/2$ and $x \overline{D} \le \e/(2C_*)$.
        \item \textbf{Structural conditions:} \Cref{lem:expansion,lem:incident-clause-estimate,lem:high-degree-tail-bound} hold for $\Phi$.
        \item \textbf{Polymer-root estimate:} for every $a \in \C^{\good}$,
        \begin{align*}
            \sum_{S \in \CS_\Psi(a)} q_S \exp\ab((1 + \tau) \abs{S}) \le \rho.
        \end{align*}
    \end{itemize}
\end{condition}

\begin{lemma} \label{lem:random-k-SAT-nice}
    Fix an integer $k \ge 2^{20}$ and a real density $0 < \alpha \le 2^k/k^{16}$.
    Let $\Phi \sim \Phi(k, n,\floor{\alpha n})$ be a random $k$-SAT formula on $n$ variables and $\floor{\alpha n}$ clauses.
    Then $\Phi$ satisfies \Cref{cond:nice-formula} with probability $1 - o_n(1)$ for all sufficiently large $n$.
\end{lemma}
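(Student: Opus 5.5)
The plan is to verify the three groups of requirements in \Cref{cond:nice-formula} one at a time and then take a union bound over the failure events.

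\emph{Parameter conditions.} Since $k\ge 2^{20}$ and $m/n\le\alpha\le 2^k/k^{16}$ are given, only $r\ge\max\{1,\lceil\log_2 n\rceil\}$ needs an argument. Here $q_0=(4n_0)^{-1/3}$ is a positive constant depending only on $k$ and $\alpha$, so $r=\lfloor q_0 n\rfloor$ grows linearly in $n$. It therefore exceeds $\lceil\log_2 n\rceil$ once $n$ is large. The consequences $\tau\ge1$, $x<1/2$ and $x\overline D\le \e/(2C_*)$ were already derived from $p_*\overline D\le 2^{-40}$ in the discussion preceding \Cref{prop:root-event}.

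\emph{Structural conditions.} I would invoke the three earlier lemmas.
\begin{itemize}
\item \Cref{lem:expansion} requires $k\ge30$ and $n^2\ge 2n_0$, where $n_0$ is a constant. It fails with probability at most $2n_0/n^2+\tfrac43 4^{-n^{1/3}}=o(1)$.
\item \Cref{lem:incident-clause-estimate} fails with probability at most $n^{-3}$.
\item \Cref{lem:high-degree-tail-bound} fails with probability at most $\e^{-\lceil q_0 n\rceil}$.
\end{itemize}
The joint structural event therefore holds with probability $1-o(1)$.

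\emph{Polymer-root estimate.} This is the step that needs the most care. I would apply \Cref{prop:root-event} with $\zeta=1/n$, so $L=\lceil\log_2(32mn)\rceil=O(\log n)$. This satisfies $L\le r$ for large $n$ because $r=\Theta(n)$. The case $m=0$ is vacuous, since then $\C^{\good}=\emptyset$.

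The proposition is stated conditionally on the structure generation and on the bad-variable signs, and it assumes the structural events. The subtlety is that those structural events depend only on the unsigned occurrences, i.e.\ on the structure generation. Also, $\V^{\bad}$, $\Psi$'s variable sets and $\H_\Psi$ are determined there. So on the structural event $\mathcal S$, the conditional probability over the good signs that \eqref{eq:good-root-event} fails is at most $\zeta$, uniformly in the conditioning. Integrating gives
\[
\Pr{\mathcal S\text{ and root estimate fails}}\le \zeta=1/n .
\]
Even though the $q_S$ depend on the signs, this conditioning step is legitimate because \Cref{prop:root-event} gives a bound uniform over all structures in $\mathcal S$ and all bad signs.

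\emph{Conclusion.} A union bound gives failure probability at most
\[
\frac{2n_0}{n^2}+\frac43 4^{-n^{1/3}}+n^{-3}+\e^{-q_0 n}+\frac1n=o_n(1).
\]
This proves the lemma for all sufficiently large $n$, with the threshold depending on $k$ and $\alpha$.
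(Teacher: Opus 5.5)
Your proposal is correct and follows the paper's proof: you verify the parameter conditions, union-bound the failure probabilities of \Cref{lem:expansion,lem:incident-clause-estimate,lem:high-degree-tail-bound}, and apply \Cref{prop:root-event}, taking $\zeta=1/n$ where the paper takes $\zeta=n^{-2}$, which changes nothing. Your remark that the structural event depends only on the unsigned occurrences, so the conditional bound integrates to $\Pr{\mathcal S\text{ and root estimate fails}}\le\zeta$, makes explicit a step the paper leaves implicit. One minor correction: $n_0$ and $q_0$ depend on $n$ through $\lambda=1+m/n$, so they are bounded via $\lambda\le1+\alpha$ rather than literally constant.
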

\begin{proof}
    For fixed $k,\alpha$, $\lambda\le1+\alpha$, so $q_0$ is bounded below by a positive constant and $r\ge\max\{1,\ceil{\log_2 n}\}$ for all sufficiently large $n$. The other parameter conditions follow from the density bound.
    By \Cref{lem:expansion,lem:incident-clause-estimate,lem:high-degree-tail-bound}, union bound gives that for all sufficiently large $n$, the structural conditions hold with probability at least
    \begin{align*}
        1 - \ab(\frac{2n_0}{n^2} + \frac{4}{3} 4^{-n^{1/3}} + n^{-3} + \e^{-\ceil{q_0 n}}) = 1 - o_n(1).       
    \end{align*}
    For the polymer-root estimate, by \Cref{prop:root-event} with choice of $\zeta = n^{-2}$, \eqref{eq:good-root-event} holds with probability at least $1 - n^{-2}$ for every $a \in \C^{\good}$.
    Putting everything together concludes this lemma.
\end{proof}

\section{Rapid Mixing of Heat-bath Dynamics for the Bad Core}
\label{sec:core-fast-sampling}

We prove a spectral-gap bound of order $1/(s n^{O(2^{-k})})$ for the heat-bath dynamics on the bad core. The small correction in the exponent comes from retaining the exponential tilt of interactions that touch at least two core components.
Recall that $B_1, \ldots, B_s$ are the connected components of $\G_\Phi[\V^{\bad}]$, $\Omega_i$ is the
set of assignments to $B_i$ satisfying its core clauses, and $\Omega = \prod_{i = 1}^s \Omega_i$.
A coordinate of $\Omega$ is an assignment to an entire component $B_i$.

Write $\mu^{\bad}$ for the marginal of $\mu_\Phi$ on $\V^{\bad}$.
We assume the core feasibility and extension condition
\begin{equation}
 \forall\, i\in[s],\quad \Omega_i\ne\emptyset,\qquad
 \forall\sigma\in\Omega,\ \exists\eta\in\{0,1\}^{\V}
 \text{ such that }\eta|_{\V^{\bad}}=\sigma\text{ and }\eta\models\Phi.
 \label{eq:core-feasibility-extension}
\end{equation}
Thus $\mu^{\bad}$ is strictly positive on $\Omega$. Condition
\eqref{eq:core-feasibility-extension} is established in the structural
section, with $n$ sufficiently large for its cutoff conditions.

For a random core configuration $X=\left(X_1,\ldots,X_s\right)$ and a fixed
configuration $\sigma=\left(\sigma_1,\ldots,\sigma_s\right)\in\Omega$, each coordinate
$\sigma_i=\sigma|_{B_i}\in\Omega_i$ denotes the assignment to the entire
core component $B_i$. For $j\in[s]$, define
\[
 X_{-j}=\left(X_i\right)_{i\in[s]\setminus\{j\}},\qquad
 \sigma_{-j}=\left(\sigma_i\right)_{i\in[s]\setminus\{j\}}.
\]
Thus the event $X_{-j}=\sigma_{-j}$ means that $X_i=\sigma_i$ for every
$i\ne j$, or equivalently that the assignments on
$\V^{\bad}\setminus B_j$ agree with $\sigma$.
For $A\subseteq[s]$, write $\sigma_A=\left(\sigma_i\right)_{i\in A}$ and
$\sigma_{-A}=\sigma_{[s]\setminus A}$, and use the same convention for
random configurations.

Let $\K$ denote the transition kernel of the random-scan block
heat-bath chain on $\Omega$. From a configuration $\sigma\in\Omega$,
the chain chooses $j\in[s]$ uniformly and replaces the assignment
on $B_j$ by a sample from
$\mu^{\bad}\left(\,\cdot\mid X_{-j}=\sigma_{-j}\right)$,
leaving all other core assignments unchanged.
For a probability measure $\pi$ on $\Omega$ and a real-valued function
$f:\Omega\to\mathbb R$, let $X\sim\pi$ and set
\[
 V_j^\pi(f)=
 \mathbb E_\pi\!\left[\operatorname{Var}_\pi\left(f(X)\mid X_{-j}\right)\right].
\]
For $s\ge1$, the Dirichlet form of $\K$ with respect to $\mu^{\bad}$,
following the definition in Section~\ref{sec:preliminary}, is
\[
 \mathcal E_{\K}(f,f) = \frac12\sum_{\sigma,\omega\in\Omega}
       \mu^{\bad}(\sigma)\K(\sigma,\omega)
       \bigl(f(\omega)-f(\sigma)\bigr)^2.
\]
Viewing each block $B_j$ as one coordinate, the standard heat-bath
Dirichlet-form identity (see
\cite[Definition~3.3(a)]{CLV23}) gives
\begin{equation}
 \mathcal E_{\K}(f,f)
 =\frac1s\sum_{j=1}^s V_j^{\mu^{\bad}}(f).
 \label{eq:core-dirichlet}
\end{equation}
Each block update contributes its expected conditional variance, and the
factor $1/s$ comes from the uniform choice of the updated block.
For any constant $C>0$, the variance inequality
\begin{equation}\label{eq-conditiononnorm}
 \operatorname{Var}_{\mu^{\bad}}(f)
 \le C\sum_{j=1}^s V_j^{\mu^{\bad}}(f)
\end{equation}
for every $f:\Omega\to\mathbb R$ is equivalent to
$\operatorname{Var}_{\mu^{\bad}}(f)\le sC\,\mathcal E_{\K}(f,f)$.
Each block heat-bath update satisfies detailed balance with respect to
$\mu^{\bad}$, so their average $\K$ is reversible. For $|\Omega|>1$,
the Rayleigh--Ritz characterization of the spectral gap in
\Cref{sec:preliminary} gives
\[
 \gap(\K)=\inf_{\substack{f:\Omega\to\mathbb R\\
                  \operatorname{Var}_{\mu^{\bad}}(f)>0}}
       \frac{\mathcal E_{\K}(f,f)}
            {\operatorname{Var}_{\mu^{\bad}}(f)}
 \ge\frac1{sC}.
\]
We will prove \eqref{eq-conditiononnorm} with $C=2n^{\gamma_{\mathrm{mix}}}$, where
$\gamma_{\mathrm{mix}}=O(2^{-k})$ is made explicit below. If $|\Omega|=1$, every
function is constant, and we adopt the convention $\gap(\K)=1$.

\subsection{The core marginal and its polymer representation}
\label{subsec:core-polymer}

We use the parameters of the structural section:
\begin{equation}
 \lambda=1+m/n,\qquad
 \Delta=6k^7\lambda,\qquad
 C_{\INC}=100k^2\lambda,\qquad
 \overline D=\ceil{6k^8\lambda}+1,\qquad
 \rho=\frac4{10^4\overline D}.
 \label{eq:core-parameters}
\end{equation}
We assume that $\Phi$ satisfies \Cref{cond:nice-formula}, \IE, $\Phi$ is a \emph{nice} formula.
To make our analysis clear, we list here these deterministic properties.
\begin{enumerate}
    \item For every nonempty $J \subseteq [s]$, the variable set $W_J \triangleq \bigcup_{i \in J} B_i$ satisfies
    \begin{equation}
        \abs{W_J} \le \frac{k}{2}\abs{W_J \cap \V^{\high}}, \qquad
        \forall\, v \in \V^{\high}, \quad \deg(v) \ge \Delta.
        \label{eq:core-component-growth-input}
    \end{equation}
    
    \item Every connected variable set $U$ in the original primal graph $\G_\Phi$ satisfies
    \begin{equation}
        I(U) \le C_{\INC}\abs{U} + 4\ln n.
        \label{eq:core-incidence-input}
    \end{equation}
    
    \item The residual dependency graph $\H_\Psi$ satisfies
    \begin{equation}
        \forall\, a \in \C^{\good}, \quad \abs{N_{\H_\Psi}[a]} \le \overline{D}.
        \label{eq:core-dependency-input}
    \end{equation}

    \item Let $\CS_{\Psi}(a)$ consist of the connected clause sets containing $a$. By \Cref{prop:root-event}, a nice formula satisfies the following polymer-root estimate with the value of $\tau$ from \Cref{sec:random-sat-structure}:
    \begin{equation}
      \tau\ge1, \quad \forall a\in\C^{\good},\qquad
      \sum_{S\in\CS_{\Psi}(a)} q_S\exp\ab((1+\tau)|S|) \le \rho.
      \label{eq:core-polymer-root}
    \end{equation}
\end{enumerate}
Throughout this section, polymer connectivity is in $\H_\Psi$, whose edges correspond to shared \emph{good} variables, not in the original dependency graph $\H_\Phi$.

Then we have the following theorem.

\begin{theorem}[Spectral gap of the core heat bath]
\label{thm:core-spectral-gap}
Suppose $n\ge2$ and $k\ge30$, with the parameters defined in
\eqref{eq:core-parameters}. Assume
\eqref{eq:core-feasibility-extension},
\eqref{eq:core-component-growth-input},
\eqref{eq:core-incidence-input},
\eqref{eq:core-dependency-input}, and
\eqref{eq:core-polymer-root}. Define
\begin{equation}
 c_k=1-\frac{25}{3k^3},\qquad
 \rho_0=\rho e^{-2\tau},\qquad
 \gamma_{\mathrm{mix}}=\frac{64\rho_0}{c_k}.
 \label{eq:core-sharp-exponent}
\end{equation}
If $s\ge1$, then, for every $f:\Omega\to\mathbb R$,
\begin{equation}
 \operatorname{Var}_{\mu^{\bad}}(f)
 \le 2n^{\gamma_{\mathrm{mix}}}\sum_{j=1}^s V_j^{\mu^{\bad}}(f),
 \qquad
 \gap(\K)\ge\frac1{2s n^{\gamma_{\mathrm{mix}}}}\ge\frac1{2n^{1+\gamma_{\mathrm{mix}}}}.
 \label{eq:core-gap}
\end{equation}
Consequently, for every $\sigma_0\in\Omega$ and integer $t\ge0$,
\begin{equation}
 \TV{\K^t\left(\sigma_0,\cdot\right)}{\mu^{\bad}}
 \le \frac12\,2^{n/2}\exp\left(-\frac{t}{2s n^{\gamma_{\mathrm{mix}}}}\right).
 \label{eq:core-tv}
\end{equation}
\end{theorem}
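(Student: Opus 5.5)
The plan is to write $\mu^{\bad}$ as a product of local weights times weak multi-block interactions, and then compare the global variance with the block conditional variances. By \Cref{lem:core-assignment-good-variable-extension}, for $\sigma\in\Omega$ we have $\mu^{\bad}(\sigma)\propto 2^{-|\V^{\bad}|}\,Z_{D_\sigma}$, where $Z_{D_\sigma}=\nu^{\good}(\Good(D_\sigma))$. Expanding $\Good(D_\sigma)$ by inclusion--exclusion gives the polymer representation
\[
Z_{D_\sigma}=\sum_{\text{compatible }\{\gamma_i\}}\prod_i (-1)^{|\gamma_i|}q_{\gamma_i}\,\mathbf 1[\gamma_i\subseteq D_\sigma].
\]
Here the polymers are connected sets $\gamma$ in $\H_\Psi$, and incompatibility means being adjacent or overlapping in $\H_\Psi$. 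I will check the Koteck\'y--Preiss condition \eqref{eq:kotecky-preiss-condition} using \eqref{eq:core-polymer-root} with penalty $A(\gamma)=\tau|\gamma|$ (or $(\tau+1)|\gamma|$ if needed). The factor $e^{(1+\tau)|\gamma|}$ is exactly what pays for the at most $\overline D|\gamma|$ incompatible neighbours via \eqref{eq:core-dependency-input}. \Cref{thm:kotecky-preiss} then gives $\log Z_{D_\sigma}=\sum_{\text{clusters }\Gamma}\varphi(\Gamma)\prod w_\gamma\,\mathbf 1[\cup\Gamma\subseteq D_\sigma]$, absolutely convergent uniformly in $\sigma$.

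Each cluster term depends on $\sigma$ only through the indicator $\mathbf 1[\cup\Gamma\subseteq D_\sigma]$. This indicator depends on $\sigma_A$, where $A=A(\Gamma)$ is the set of core components met by bad literals of clauses in $\cup\Gamma$. Grouping clusters by $A$ gives $\mu^{\bad}(\sigma)\propto\prod_i\psi_i(\sigma_i)\exp\bigl(\sum_{|A|\ge2}\Phi_A(\sigma_A)\bigr)$. The terms with $|A|\le1$ are absorbed into the local weights $\psi_i$ on $\Omega_i$. The remaining interactions come from clusters of total size at least $2$, because a single clause lying in $D_\sigma$ and meeting two components already has size one. For the higher-order decay I would use that such clusters have size $\ge2$: the pinned bound \eqref{eq:kotecky-preiss-pinned} with weights $u_\gamma e^{\tau|\gamma|}$ shows that the total weight of clusters of size $\ge2$ through a clause is $O(\rho e^{-2\tau})=O(\rho_0)$. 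To make the swap argument rigorous I will work with finite truncations of the cluster sum and pass to the limit at the end.

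For the variance comparison I will take two independent copies $X,Y\sim\mu^{\bad}$ and write $\mu^{\bad}$ as a product measure $\prod\psi_i$ tilted by $e^{H}$, where $H=\sum\Phi_A$. I then swap component sets. Expanding $e^{\Phi_A}-1$ as a sum of nonnegative-dominated terms attaches to each multi-component cluster a Bernoulli ``open'' label with probability at most $e^{|\text{cluster weight}|}-1$. Conditional on the labels, the measure factorizes over connected components of the hypergraph of open interactions. Efron--Stein (\Cref{lem:efron-stein}) applied to these independent blocks bounds the variance by a sum over blocks of conditional variances. Each block conditional variance is in turn at most (number of core components in the block) $\times\sum_{j}$ single-component conditional variances. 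Hence $C\le 2\,\mathbb E[|\text{explored block containing }j|]$, with the expectation taken in a sense that is uniform over the conditioning.

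I expect the main obstacle to be bounding the size of this explored region, because it must give $n^{\gamma_{\mathrm{mix}}}$ rather than a constant power. I will explore from $B_j$. Each newly added component brings in its incident clauses; by \eqref{eq:core-incidence-input} and \eqref{eq:core-component-growth-input} these number at most $I(W)\le C_{\INC}|W|+4\ln n$, which is at most $\tfrac{4\ln n}{c_k}$ after the degree-sum argument of \Cref{lem:core-feasible-assignment}. Every open cluster through such a clause has weight $O(\rho_0)$ per clause. The number of discovered blocks is at most $1+T$, where $T$ is the total cluster size of the exposed open interactions. I will then dominate $T$ by a sum over the at most $4\ln n/c_k$ incident clauses of independent variables with exponential moment $\mathbb E e^{T}\le\exp(64\rho_0\ln n/c_k)=n^{\gamma_{\mathrm{mix}}}$, yielding $\mathbb E[1+T]\le n^{\gamma_{\mathrm{mix}}}$ and hence $C=2n^{\gamma_{\mathrm{mix}}}$. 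The preceding discussion then gives $\gap(\K)\ge 1/(2sn^{\gamma_{\mathrm{mix}}})$, and $s\le n$ gives the second inequality. Finally, \eqref{eq:core-tv} follows from the standard $L^2$ bound $\|\K^t(\sigma_0,\cdot)-\mu^{\bad}\|_{\mathrm{TV}}\le\frac12\mu^{\bad}(\sigma_0)^{-1/2}(1-\gap)^t$. Here $\mu^{\bad}(\sigma_0)\ge 2^{-n}$ because every $\sigma_0\in\Omega$ carries at least one satisfying extension among at most $2^n$ satisfying assignments.
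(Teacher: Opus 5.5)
Your overall architecture matches the paper's (activities $(-1)^{|\gamma|}q_\gamma\prod_{a\in\gamma}\chi_a(\sigma)$, Koteck\'y--Preiss, regrouping by the scope $J_\Gamma$, finite truncations, a two-copy component swap, exploration from block $j$). However, the variance-comparison step fails as you state it.

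You claim that a block's conditional variance is at most $|C|\sum_{j\in C}$ of the single-component conditional variances. This ignores the interactions inside the block and is false in general. Take two binary blocks with $\pi(0,0)=\pi(1,1)=(1-\epsilon)/2$, $\pi(0,1)=\pi(1,0)=\epsilon/2$ and $f=\sigma_1$. Then $\operatorname{Var}_\pi(f)=1/4$ but $V_1^\pi(f)+V_2^\pi(f)=\epsilon(1-\epsilon)$.

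Conditioning on the open labels does not help. The label-conditioned law of a component carries the hard indicator factors $h_e$, so it is not a product measure. Moreover, Efron--Stein over the label-conditioned product of components would only control $\mathbb E[\operatorname{Var}(f\mid\text{labels})]$, not $\operatorname{Var}_\pi(f)$.

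The paper proceeds in three steps. First, it applies Efron--Stein to the independent swap bits $\varepsilon_C$, using that the pair law is invariant under swapping components of $g$. Second, it centers $f$ at the $\pi$-conditional mean $f-\mathbb E_\pi[f\mid X_{-C}]$, which is unchanged when $C$ is swapped. This gives $\operatorname{Var}_\pi(f)\le 2\sum_C\theta_C\,\mathbb E_\pi[\operatorname{Var}_\pi(f\mid X_{-C})]$ with $\theta_C=\sup_\sigma\mathbb P(C\in\operatorname{Comp}(g)\mid X=\sigma)\le\mathbb P_\zeta(C\subseteq\mathcal A_j)$. Third, and only then, it compares $\pi(\cdot\mid X_{-C}=y)$ with the product of the local weights, whose density ratio is at most $e^{R(C)}$.

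Summing over $C\ni j$ yields the prefactor $\mathbb E_\zeta[2^{|\mathcal A_j|}e^{R(\mathcal A_j)}]$, not $2\,\mathbb E|\mathcal A_j|$. The factor $e^{R}$ is not negligible: the available bound on $R(\{i\})$ is already $4\rho_0\ln n/c_k$, and it is one of the two sources of $n^{\gamma_{\mathrm{mix}}}$.

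The exploration estimate has a second gap. The bound $I(B_i)\le 4\ln n/c_k$ of \Cref{lem:core-feasible-assignment} holds for a single connected block. The explored set $W=W_{\mathcal A_j}$ is a union of distinct components of $\G_\Phi[\V^{\bad}]$ and is generally not connected in $\G_\Phi$. So \eqref{eq:core-incidence-input} does not apply to it, and applying it block by block loses a factor $|\mathcal A_j|$.

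The paper fixes this by adding to $W$, for each open cluster $\Gamma$, one good connector variable per spanning-tree edge of $\H_\Psi[U_\Gamma]$. This gives a connected $L\supseteq W$ with $|L|\le|W|+T$, hence $I(W)\le(C_{\INC}T+4\ln n)/c_k$. Consequently the number $Q$ of queried clauses grows with $T$ itself, and the queried clauses are chosen adaptively. So $T$ cannot be dominated by independent contributions from a fixed set of $4\ln n/c_k$ clauses.

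The paper closes this feedback with the stopped-supermartingale bound $\mathbb E e^{T-15\rho_0 Q}\le 1$, which follows from $\sum_{\Gamma:\,a\in U_\Gamma,\,|J_\Gamma|\ge2}|a_\Gamma|(e^{K_\Gamma}-1)\le e^2\rho_0$. It combines this with $Q\le I(W)$, $R(\mathcal A_j)\le\rho_0 I(W)$, $|\mathcal A_j|\le 1+T$ and the slack $\ln 2+16\rho_0C_{\INC}/c_k<1$. That slack absorbs both $2^{|\mathcal A_j|}$ and the $C_{\INC}T$ term into $e^{T}$, leaving exactly $2n^{64\rho_0/c_k}$.

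Three smaller points. Clusters with $|J_\Gamma|\ge2$ have $K_\Gamma\ge2$ because one original clause meets at most one core component (its bad variables are pairwise adjacent), not for the reason you give. The $e^{-2\tau}$ gain comes from running Koteck\'y--Preiss with $u_\gamma=q_\gamma e^{\tau|\gamma|}$ and $A(\gamma)=|\gamma|$, which is what \eqref{eq:core-polymer-root} verifies. Finally, the $L^2$-to-TV step needs the spectrum of $\K$ on mean-zero functions to lie in $[0,1-\gap(\K)]$, which holds because $\K$ is an average of orthogonal projections.
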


In the regime of \Cref{thm:fast-sampler}, we have
$\tau=\ln(1/(C_*p_*\overline D))$ and
$p_*\overline D\le2^{-40}$ by \Cref{sec:random-sat-structure}.
Consequently,
\begin{equation}
 \rho_0=4C_*p_*^2\overline D,\qquad
 \gamma_{\mathrm{mix}}=\frac{256C_*}{c_k}(p_*\overline D)p_*
 \le p_* = 2^{-k+12},
 \label{eq:core-uniform-exponent}
\end{equation}
since $C_*=10^4$ and $c_k\ge99/100$. Thus the variance constant
is at most $2n^{p_*}$, uniformly over the allowed densities.

We first derive precisely the interaction expansion needed for the proof.
For $\sigma\in\Omega$ put
\[
 \chi_a(\sigma)=
 \id{\text{all bad-variable literals of $C_a$ are false at $\sigma$}},
 \qquad D_\sigma=\{a\in\C^{\good}:\chi_a(\sigma)=1\}.
\]
If $C_a$ contains no bad-variable literals, we set
$\chi_a(\sigma)=1$, since its satisfaction depends entirely on the
good variables.
All bad variables in an original
clause lie in one component $B_i$, so a nonconstant $\chi_a$ depends on
one core coordinate. The probability of a good-variable extension is
\begin{equation}
 Z(\sigma)=
 \mathbb E_{\nu^{\good}}\prod_{a\in\C^{\good}}
    \left(1-\chi_a(\sigma)\one_{E_a}\right),
 \qquad
 \mu^{\bad}(\sigma)=\frac{Z(\sigma)}
                         {\sum_{\omega\in\Omega}Z(\omega)}.
 \label{eq:core-extension-weight}
\end{equation}
In particular, this law need not be uniform on $\Omega$.

A polymer is a nonempty set of clause labels
$\gamma\subseteq\C^{\good}$ such that the induced graph
$\H_\Psi[\gamma]$ is connected; it need not be a maximal connected
component of $\H_\Psi$.
Two polymers are incompatible if they intersect or an edge of
$\H_\Psi$ joins them; a polymer is incompatible with itself.
We assign to each polymer the following signed weight, called its
\emph{activity}:
\begin{equation}
 z_\gamma(\sigma)=(-1)^{|\gamma|}q_\gamma
                         \prod_{a\in\gamma}\chi_a(\sigma).
 \label{eq:core-polymer-activity}
\end{equation}
The factor $q_\gamma$ is the joint probability of the violation events
indexed by $\gamma$. The product of the $\chi_a(\sigma)$ ensures that
this activity is zero if any clause in $\gamma$ is already satisfied
by its bad-variable literals.

For completeness, we derive the polymer representation directly from
\eqref{eq:core-extension-weight}. For fixed $\sigma$, expanding the
product and using
$\mathbb E_{\nu^{\good}}\prod_{a\in S}\one_{E_a}=q_S$ gives
\begin{equation}
 Z(\sigma)=
 \sum_{S\subseteq\C^{\good}}(-1)^{|S|}q_S
                           \prod_{a\in S}\chi_a(\sigma).
 \label{eq:core-inclusion-exclusion}
\end{equation}
For $S=\emptyset$, we use $q_\emptyset=1$ and the empty product is one.

For each nonempty $S$, let $\mathcal F(S)$ be the collection of vertex
sets of the connected components of the induced graph $\H_\Psi[S]$.
These components are polymers and are pairwise compatible. Distinct
components share no good variable, since such a variable would give
an edge of $\H_\Psi$ between them. Hence their joint violation events
are independent under the product measure $\nu^{\good}$, and
\[
 q_S=\prod_{\gamma\in\mathcal F(S)}q_\gamma.
\]
No independence is assumed among the violation events within a single
polymer. Since $\mathcal F(S)$ partitions $S$, the corresponding
summand in \eqref{eq:core-inclusion-exclusion} factors as
\[
 (-1)^{|S|}q_S\prod_{a\in S}\chi_a(\sigma)
 =\prod_{\gamma\in\mathcal F(S)}
    \left((-1)^{|\gamma|}q_\gamma
          \prod_{a\in\gamma}\chi_a(\sigma)\right)
 =\prod_{\gamma\in\mathcal F(S)}z_\gamma(\sigma).
\]
Conversely, for any set $\mathcal F$ of pairwise compatible polymers,
its members are exactly the connected components of
$\H_\Psi[\bigcup_{\gamma\in\mathcal F}\gamma]$.
Thus $S\mapsto\mathcal F(S)$, with $\mathcal F(\emptyset)=\emptyset$,
is a bijection between clause subsets and sets of pairwise compatible
polymers. Reindexing \eqref{eq:core-inclusion-exclusion} yields
\begin{equation}
 Z(\sigma)=
 \sum_{\mathcal F\ {\rm pairwise\ compatible}}
       \prod_{\gamma\in\mathcal F}z_\gamma(\sigma).
 \label{eq:core-polymer-partition}
\end{equation}
The sum includes the empty set of polymers, whose contribution is one.

For an ordered tuple
$\Gamma=\left(\gamma_1,\ldots,\gamma_\ell\right)$ of polymers, not
necessarily pairwise compatible and with repetitions allowed, let
$I(\Gamma)$ and $\varphi(\Gamma)$ denote
its incompatibility graph and Ursell coefficient, respectively, as
defined in Section~\ref{sec:preliminary}. Define
\begin{equation}
 \begin{aligned}
 U_\Gamma&=\bigcup_{i=1}^{\ell}\gamma_i,\qquad
 K_\Gamma=\sum_{i=1}^{\ell}|\gamma_i|,\qquad
 a_\Gamma&=\frac{\varphi(\Gamma)}{\ell!}
                   \prod_{i=1}^{\ell}(-1)^{|\gamma_i|}q_{\gamma_i},\qquad
 Y_\Gamma(\sigma)&=\prod_{a\in U_\Gamma}\chi_a(\sigma).
 \end{aligned}
 \label{eq:core-cluster-data}
\end{equation}
We index the cluster-expansion terms $a_\Gamma Y_\Gamma(\sigma)$ by
the ordered polymer tuples $\Gamma$.

By \eqref{eq:core-cluster-data}, we have the following lemma.
\begin{lemma}\label{lem-core-cluster-connectivity}
If $a_\Gamma\ne0$, then $\H_\Psi[U_\Gamma]$ is connected.
\end{lemma}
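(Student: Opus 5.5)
The plan is a contrapositive argument: assume $a_\Gamma\ne0$ and deduce that $\H_\Psi[U_\Gamma]$ is connected, using only the definitions in \eqref{eq:core-cluster-data} and \eqref{eq:ursell-coefficient}.

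First, $a_\Gamma\ne0$ forces $\varphi(\Gamma)\ne0$. The factor $1/\ell!$ and the signs $(-1)^{|\gamma_i|}$ are never zero, so the only other way $a_\Gamma$ could vanish is through a factor $q_{\gamma_i}=0$; nonvanishing of $a_\Gamma$ therefore implies nonvanishing of $\varphi(\Gamma)$. As recorded in Section~\ref{sec:preliminary}, $\varphi(\Gamma)=0$ whenever the incompatibility graph $I(\Gamma)$ is disconnected. This holds because the defining sum in \eqref{eq:ursell-coefficient} runs over spanning connected subgraphs of $I(\Gamma)$, and there are none when $I(\Gamma)$ is disconnected. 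Hence $I(\Gamma)$ is connected on $[\ell]$.

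Second, I transfer connectivity from $I(\Gamma)$ to $\H_\Psi[U_\Gamma]$. Two facts suffice.
\begin{itemize}
\item Each $\gamma_i$ is a polymer, so $\H_\Psi[\gamma_i]$ is connected. Since $\gamma_i\subseteq U_\Gamma$, the set $\gamma_i$ lies inside a single connected component of $\H_\Psi[U_\Gamma]$.
\item Suppose $i\sim j$ in $I(\Gamma)$, meaning $\gamma_i\not\sim\gamma_j$. Then either $\gamma_i\cap\gamma_j\ne\emptyset$, or some edge of $\H_\Psi$ joins a clause of $\gamma_i$ to a clause of $\gamma_j$. Both endpoints of such an edge lie in $U_\Gamma$, so the edge belongs to the induced graph $\H_\Psi[U_\Gamma]$. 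In either case $\gamma_i$ and $\gamma_j$ lie in the same component of $\H_\Psi[U_\Gamma]$. (The case $i=j$ is trivial.)
\end{itemize}
Walking along paths in the connected graph $I(\Gamma)$, every $\gamma_i$ lies in one common component of $\H_\Psi[U_\Gamma]$. Because $U_\Gamma=\bigcup_i\gamma_i$, that component is all of $U_\Gamma$, so $\H_\Psi[U_\Gamma]$ is connected.

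There is no real obstacle here. The only point that needs care is that the adjacency edge witnessing incompatibility has both endpoints inside $U_\Gamma$, and so survives passage to the induced subgraph; this is immediate from $\gamma_i,\gamma_j\subseteq U_\Gamma$.
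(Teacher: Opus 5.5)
Your proposal is correct and follows the same route as the paper: $a_\Gamma\ne0$ forces $\varphi(\Gamma)\ne0$, hence $I(\Gamma)$ is connected, and connectivity transfers to $\H_\Psi[U_\Gamma]$ because each polymer is connected and each incompatibility (an intersection or an $\H_\Psi$-edge between two subsets of $U_\Gamma$) links polymers within the induced graph. One small point: the argument is direct rather than contrapositive, but this does not affect its validity.
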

\begin{proof}
Suppose that $a_\Gamma\ne0$. By \eqref{eq:core-cluster-data}, this implies
$\varphi(\Gamma)\ne0$. If the incompatibility graph $I(\Gamma)$ were
disconnected, there would be no connected spanning subgraph in the sum
defining $\varphi(\Gamma)$ in \eqref{eq:ursell-coefficient}, and hence
$\varphi(\Gamma)=0$, a contradiction. Thus $I(\Gamma)$ is connected.

Each polymer $\gamma_i$ is a connected set of clause labels in
$\H_\Psi$. If $i$ and $j$ are adjacent in $I(\Gamma)$, then
$\gamma_i$ and $\gamma_j$ are incompatible: they either intersect, or
an edge of $\H_\Psi$ joins a clause in $\gamma_i$ to a clause in
$\gamma_j$. In either case, $\gamma_i\cup\gamma_j$ is connected in
$\H_\Psi$. Since $I(\Gamma)$ is connected, following paths in
$I(\Gamma)$ shows that the full union
$U_\Gamma=\bigcup_{i=1}^{\ell}\gamma_i$ is connected in $\H_\Psi$.
\end{proof}

The next lemma is the key bridge from the polymer representation of
$Z(\sigma)$ to a controlled interaction expansion of the core marginal.
It writes $\ln Z(\sigma)$ as a sum of cluster terms
$a_\Gamma Y_\Gamma(\sigma)$ and bounds the total contribution of terms
touching any fixed clause. This absolute-summability bound will let us
regroup the interactions by core components and pass to finite truncations.

\begin{lemma}
\label{lem:core-cluster-expansion}
Under \eqref{eq:core-polymer-root},
\begin{equation}
 \ln Z(\sigma)=\sum_\Gamma a_\Gamma Y_\Gamma(\sigma); \qquad \qquad
 \forall\, a\in\C^{\good},\quad
 \sum_{\Gamma:a\in U_\Gamma}|a_\Gamma|e^{\tau K_\Gamma}\le\rho.
 \label{eq:core-cluster-root}
\end{equation}
Furthermore, we have \[
 \sum_\Gamma\sup_{\sigma\in\Omega}|a_\Gamma Y_\Gamma(\sigma)|
 \le m\rho<\infty.
\]
Hence, the series $\sum_\Gamma a_\Gamma Y_\Gamma(\sigma)$ converges absolutely
and uniformly for $\sigma\in\Omega$.
\end{lemma}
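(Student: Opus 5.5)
We apply the Koteck\'y--Preiss criterion (\Cref{thm:kotecky-preiss}) to the polymer model in \eqref{eq:core-polymer-partition}, with a dominating weight chosen independently of $\sigma$ and tilted by $e^{\tau|\gamma|}$.

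\textbf{Choice of weights and penalty.} For every polymer $\gamma$ and every $\sigma$ we have $|z_\gamma(\sigma)|\le q_\gamma$. Set $u_\gamma = q_\gamma e^{\tau|\gamma|}$, which dominates the tilted activity $z_\gamma(\sigma)e^{\tau|\gamma|}$, and take the penalty $A(\gamma)=|\gamma|$.

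\textbf{Verifying the criterion.} Fix $\gamma$. Every $\gamma'\not\sim\gamma$ contains some clause $a\in N_{\H_\Psi}[\gamma]$, and there are at most $\overline D|\gamma|$ such clauses by \eqref{eq:core-dependency-input}. Summing over them and using \eqref{eq:core-polymer-root},
\[
\sum_{\gamma'\not\sim\gamma}u_{\gamma'}e^{|\gamma'|}\le \sum_{a\in N[\gamma]}\sum_{\gamma'\in\CS_\Psi(a)}q_{\gamma'}e^{(1+\tau)|\gamma'|}\le \overline D|\gamma|\rho\le|\gamma|,
\]
since $\rho\overline D=4/C_*<1$. The criterion therefore holds, both for the activities $z_\gamma(\sigma)$ (with $u_\gamma\ge q_\gamma$) and for the tilted activities.

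\textbf{Cluster identity.} \Cref{thm:kotecky-preiss} now gives $\ln Z(\sigma)=\mathcal L(z(\sigma))$, where the series converges absolutely. Expand each term $\prod_i z_{\gamma_i}(\sigma)$. Because $\chi_a\in\{0,1\}$ is idempotent, $\prod_i\prod_{a\in\gamma_i}\chi_a=\prod_{a\in U_\Gamma}\chi_a=Y_\Gamma(\sigma)$. This yields $\ln Z(\sigma)=\sum_\Gamma a_\Gamma Y_\Gamma(\sigma)$. Since $Z(\sigma)>0$ and the branch is fixed by $\log\Xi(0)=0$ along the real segment $tz$, $t\in[0,1]$, this branch coincides with the real logarithm.

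\textbf{Root bound. (Main obstacle.)} The difficulty is to bound the clusters that merely contain a fixed clause $a$, rather than clusters pinned at a fixed polymer. Any such $\Gamma$ has some $\gamma_i\ni a$. Use the symmetry of $\varphi$ under permutations to move that polymer to the first slot; this costs a factor $\ell$, which cancels $1/\ell!$ down to $1/(\ell-1)!$. Since $e^{\tau K_\Gamma}=\prod_i e^{\tau|\gamma_i|}$, we get
\[
\sum_{\Gamma:a\in U_\Gamma}|a_\Gamma|e^{\tau K_\Gamma}\le\sum_{\gamma_0\ni a}u_{\gamma_0}\sum_{\ell}\frac1{(\ell-1)!}\sum_{\gamma_2,\dots,\gamma_\ell}|\varphi(\gamma_0,\gamma_2,\ldots,\gamma_\ell)|\prod_{i\ge2}u_{\gamma_i}.
\]
By the pinned form \eqref{eq:kotecky-preiss-pinned}, the inner sum is at most $e^{A(\gamma_0)}$. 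The right-hand side is therefore at most $\sum_{\gamma_0\in\CS_\Psi(a)}q_{\gamma_0}e^{(1+\tau)|\gamma_0|}\le\rho$.

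\textbf{Global summability.} Every $\Gamma$ with $a_\Gamma\neq0$ has nonempty $U_\Gamma$, and $|Y_\Gamma|\le1$. Summing the root bound over the at most $m$ choices of $a\in\C^{\good}$ overcounts each $\Gamma$ at least once, and $e^{\tau K_\Gamma}\ge1$. Hence $\sum_\Gamma\sup_\sigma|a_\Gamma Y_\Gamma(\sigma)|\le m\rho$. Uniform absolute convergence then follows from the Weierstrass M-test.
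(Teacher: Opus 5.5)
Your proposal is correct and follows the paper's proof essentially step for step: the same Koteck\'y--Preiss majorant $u_\gamma=q_\gamma e^{\tau|\gamma|}$ with penalty $A(\gamma)=|\gamma|$, the same symmetrization over the position of a polymer containing $a$ (trading $1/\ell!$ for $1/(\ell-1)!$) followed by the pinned bound, and the same overcounting argument over $a\in\C^{\good}$ for global summability. Your extra remark identifying the Koteck\'y--Preiss branch with the real logarithm is a small justification that the paper leaves implicit: the cluster series has real terms and exponentiates to $Z(\sigma)>0$.
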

\begin{proof}
Apply Theorem~\ref{thm:kotecky-preiss} with
\[
 u_\gamma=q_\gamma e^{\tau|\gamma|},
 \qquad A(\gamma)=|\gamma|.
\]
For every polymer $\gamma$ and every $\sigma\in\Omega$, each
$\chi_a(\sigma)$ belongs to $\{0,1\}$.
Hence, by \eqref{eq:core-polymer-activity},
\[
 |z_\gamma(\sigma)|e^{\tau|\gamma|}
 =q_\gamma e^{\tau|\gamma|}
      \prod_{a\in\gamma}\chi_a(\sigma)
 \le q_\gamma e^{\tau|\gamma|}
 =u_\gamma.
\]
Thus the upper bound $u_\gamma$ is independent of the core
configuration $\sigma$.
Every polymer $\gamma'$ incompatible with $\gamma$ contains at least one
clause in $N_{\H_\Psi}[\gamma]$. Therefore, using
\eqref{eq:core-polymer-root}, \eqref{eq:core-dependency-input} and \eqref{eq:core-parameters},
\begin{align*}
 \sum_{\gamma'\not\sim\gamma}u_{\gamma'}e^{|\gamma'|}=\sum_{\gamma'\not\sim\gamma}
       q_{\gamma'}e^{(1+\tau)|\gamma'|}\le\sum_{a\in N_{\H_\Psi}[\gamma]}
       \sum_{\gamma'\in\CS_\Psi(a)}
       q_{\gamma'}e^{(1+\tau)|\gamma'|}\le |N_{\H_\Psi}[\gamma]|\rho
 \le\overline D\rho|\gamma|
 \le|\gamma|.
\end{align*}
Thus
\eqref{eq:kotecky-preiss-condition} holds.

Applying Theorem~\ref{thm:kotecky-preiss} to the partition function
\eqref{eq:core-polymer-partition}, with $w_\gamma=z_\gamma(\sigma)$,
gives the absolutely convergent series
\eqref{eq:polymer-cluster-expansion-prelim}. Thus, for every
$\sigma\in\Omega$,
\begin{equation}
 \ln Z(\sigma)
 =\sum_{\ell\ge1}\frac1{\ell!}
   \sum_{\gamma_1,\ldots,\gamma_\ell}
      \varphi\left(\gamma_1,\ldots,\gamma_\ell\right)
      \prod_{i=1}^{\ell}z_{\gamma_i}(\sigma),
 \label{eq:core-log-polymer-expansion}
\end{equation}
where the inner sum ranges over all ordered tuples of polymers, with
repetitions allowed.

For a tuple $\Gamma=\left(\gamma_1,\ldots,\gamma_\ell\right)$ and $a\in U_\Gamma$,
let $r_a(\Gamma)\triangleq|\{i\in[\ell]:a\in\gamma_i\}|\ge1$.
Since $\chi_a(\sigma)\in\{0,1\}$, the definition of $Y_\Gamma$ in
\eqref{eq:core-cluster-data} gives
\[
 \prod_{i=1}^{\ell}\prod_{a\in\gamma_i}\chi_a(\sigma)
 =\prod_{a\in U_\Gamma}\chi_a(\sigma)^{r_a(\Gamma)}
 =\prod_{a\in U_\Gamma}\chi_a(\sigma)
 =Y_\Gamma(\sigma).
\]
Substituting the activities from \eqref{eq:core-polymer-activity}
and using the definition of $a_\Gamma$ in
\eqref{eq:core-cluster-data}, we obtain
\begin{align*}
 \frac{\varphi(\Gamma)}{\ell!}
      \prod_{i=1}^{\ell}z_{\gamma_i}(\sigma)
 =\frac{\varphi(\Gamma)}{\ell!}
      \left(\prod_{i=1}^{\ell}(-1)^{|\gamma_i|}q_{\gamma_i}\right)
      \left(\prod_{i=1}^{\ell}\prod_{a\in\gamma_i}\chi_a(\sigma)\right)=a_\Gamma Y_\Gamma(\sigma).
\end{align*}
Consequently, \eqref{eq:core-log-polymer-expansion} becomes
\[
 \ln Z(\sigma)
 =\sum_{\ell\ge1}\sum_{\Gamma=\left(\gamma_1,\ldots,\gamma_\ell\right)}
      a_\Gamma Y_\Gamma(\sigma)
 =\sum_\Gamma a_\Gamma Y_\Gamma(\sigma),
\]
which proves the expansion asserted in \eqref{eq:core-cluster-root}.

For every fixed polymer $\gamma_0$, the estimate
\eqref{eq:kotecky-preiss-pinned}, with $A\left(\gamma_0\right)=|\gamma_0|$, gives
\begin{equation}
 \sum_{\ell\ge1}\frac1{(\ell-1)!}
  \sum_{\gamma_2,\ldots,\gamma_\ell}
   |\varphi\left(\gamma_0,\gamma_2,\ldots,\gamma_\ell\right)|
   \prod_{i=2}^{\ell}u_{\gamma_i}
 \le e^{|\gamma_0|}.
 \label{eq:core-pinned-majorant}
\end{equation}
Fix a clause $a\in\C^{\good}$ and put
\begin{equation}
 S_a\triangleq
 \sum_{\Gamma:a\in U_\Gamma}|a_\Gamma|e^{\tau K_\Gamma}.
 \label{eq:core-clause-root-sum}
\end{equation}
By \eqref{eq:core-cluster-data} and the definition of $u_\gamma$,
every tuple $\Gamma=\left(\gamma_1,\ldots,\gamma_\ell\right)$ satisfies
\[
 |a_\Gamma|e^{\tau K_\Gamma}
 =\frac{|\varphi(\Gamma)|}{\ell!}
      \prod_{i=1}^{\ell}u_{\gamma_i}.
\]
Also, $a\in U_\Gamma$ if and only if $a\in\gamma_i$ for at least one
position $i$. 
Thus,
\[
 \id{a\in U_\Gamma}
 \le\sum_{i=1}^{\ell}\id{a\in\gamma_i}.
\]
The right-hand side counts every position whose polymer contains $a$,
including distinct occurrences of the same polymer. Since all terms
below are nonnegative, this gives
\begin{align}\label{eq:core-clause-root-marking}
 S_a=\sum_{\ell\ge1}\frac1{\ell!}
    \sum_{\gamma_1,\ldots,\gamma_\ell}
      \id{a\in U_\Gamma}|\varphi(\Gamma)|
      \prod_{i=1}^{\ell}u_{\gamma_i}\le\sum_{\ell\ge1}\frac1{\ell!}
    \sum_{\gamma_1,\ldots,\gamma_\ell}
      \left(\sum_{i=1}^{\ell}\id{a\in\gamma_i}\right)
      |\varphi(\Gamma)|\prod_{i=1}^{\ell}u_{\gamma_i}.
\end{align}
For $\ell\ge1$ and $i\in[\ell]$, define the sum associated with
position $i$ by
\[
 T_{\ell,i}(a)\triangleq
 \sum_{\gamma_1,\ldots,\gamma_\ell}
    \id{a\in\gamma_i}
    |\varphi\left(\gamma_1,\ldots,\gamma_\ell\right)|
    \prod_{j=1}^{\ell}u_{\gamma_j}.
\]
Swapping positions $i$ and $1$ bijectively maps the tuples with
$a\in\gamma_i$ to those with $a\in\gamma_1$, while preserving
the Ursell coefficient and the product $\prod_j u_{\gamma_j}$. Hence,
\[
 T_{\ell,i}(a)=T_{\ell,1}(a)
 \qquad\text{for every }\ell\ge1\text{ and }i\in[\ell].
\]
Writing the right-hand side of \eqref{eq:core-clause-root-marking}
in terms of these sums therefore gives
\begin{equation}
 \begin{aligned}
 S_a
 \le\sum_{\ell\ge1}\frac1{\ell!}
        \sum_{i=1}^{\ell}T_{\ell,i}(a)=\sum_{\ell\ge1}\frac{\ell}{\ell!}T_{\ell,1}(a)
  =\sum_{\ell\ge1}\frac1{(\ell-1)!}T_{\ell,1}(a).
 \end{aligned}
 \label{eq:core-first-position-sum}
\end{equation}
In $T_{\ell,1}(a)$, rename the first polymer $\gamma_1$ as $\gamma_0$
and factor out $u_{\gamma_0}$. This yields
\[
 T_{\ell,1}(a)
 =\sum_{\gamma_0:\,a\in\gamma_0}u_{\gamma_0}
   \sum_{\gamma_2,\ldots,\gamma_\ell}
    |\varphi\left(\gamma_0,\gamma_2,\ldots,\gamma_\ell\right)|
    \prod_{j=2}^{\ell}u_{\gamma_j}.
\]
Substituting this identity into \eqref{eq:core-first-position-sum}
and interchanging the nonnegative sums over $\ell$ and $\gamma_0$
gives
\[
 S_a\le
 \sum_{\gamma_0:\,a\in\gamma_0}u_{\gamma_0}
 \left[
  \sum_{\ell\ge1}\frac1{(\ell-1)!}
   \sum_{\gamma_2,\ldots,\gamma_\ell}
    |\varphi\left(\gamma_0,\gamma_2,\ldots,\gamma_\ell\right)|
    \prod_{i=2}^{\ell}u_{\gamma_i}
 \right].
\]
Applying \eqref{eq:core-pinned-majorant} to the bracketed sum and then
\eqref{eq:core-polymer-root} yields
\begin{equation}
 S_a\le\sum_{\gamma_0:\,a\in\gamma_0}u_{\gamma_0}e^{|\gamma_0|}
 =\sum_{\gamma_0\in\CS_\Psi(a)}
      q_{\gamma_0}e^{(1+\tau)|\gamma_0|}
 \le\rho.
 \label{eq:core-clause-root-bound}
\end{equation}
Since $a\in\C^{\good}$ was arbitrary, this proves
$\sum_{\Gamma:a\in U_\Gamma}|a_\Gamma|e^{\tau K_\Gamma}\le\rho$
for every such $a$, as asserted in \eqref{eq:core-cluster-root}.

Finally, we show that the series $\sum_\Gamma a_\Gamma Y_\Gamma(\sigma)$
converges absolutely and uniformly on $\Omega$.
Since every polymer is a nonempty subset of $\C^{\good}$,
the definition of $U_\Gamma$ in \eqref{eq:core-cluster-data} gives
$\emptyset\ne U_\Gamma\subseteq\C^{\good}$ for every tuple.
By \eqref{eq:core-clause-root-sum},
\begin{equation}
 \begin{aligned}
 \sum_{a\in\C^{\good}}S_a&=\sum_{a\in\C^{\good}}
     \sum_{\Gamma:a\in U_\Gamma}
       |a_\Gamma|e^{\tau K_\Gamma}=\sum_{a\in\C^{\good}}\sum_\Gamma
       \id{a\in U_\Gamma}|a_\Gamma|e^{\tau K_\Gamma}\\&=\sum_\Gamma
       \left(\sum_{a\in\C^{\good}}\id{a\in U_\Gamma}\right)
       |a_\Gamma|e^{\tau K_\Gamma}=\sum_\Gamma
       |U_\Gamma|\,|a_\Gamma|e^{\tau K_\Gamma}.
 \end{aligned}
 \label{eq:core-clause-counting}
\end{equation}
Since \eqref{eq:core-polymer-root} implies $\tau\ge0$ and
\eqref{eq:core-cluster-data} gives $K_\Gamma\ge1$, we have
$e^{\tau K_\Gamma}\ge1$. Using $|U_\Gamma|\ge1$,
\eqref{eq:core-clause-counting}, and the bound $S_a\le\rho$ just
proved in \eqref{eq:core-clause-root-bound}, we obtain
\begin{align*}
 \sum_\Gamma|a_\Gamma|\le\sum_\Gamma|a_\Gamma|e^{\tau K_\Gamma}\le\sum_\Gamma
       |U_\Gamma|\,|a_\Gamma|e^{\tau K_\Gamma}=\sum_{a\in\C^{\good}}S_a\le |\C^{\good}|\rho\le m\rho.
\end{align*}
The last inequality uses $\C^{\good}\subseteq\C$ and $|\C|=m$.
By \eqref{eq:core-cluster-data}, $|Y_\Gamma(\sigma)|\le1$ for every
$\sigma\in\Omega$. Consequently,
\[
 \sum_\Gamma\sup_{\sigma\in\Omega}|a_\Gamma Y_\Gamma(\sigma)|
 \le\sum_\Gamma|a_\Gamma|\le m\rho<\infty.
\]
The Weierstrass $M$-test therefore gives absolute and uniform
convergence of $\sum_\Gamma a_\Gamma Y_\Gamma(\sigma)$ on $\Omega$.
\end{proof}

\subsection{Variance tensorization for finite-interaction measures}
\label{subsec:core-swap-comparison}

The cluster expansion above represents $\log Z(\sigma)$ as a sum of
polymer-cluster interactions. In any finite truncation, terms involving no
core component contribute only a constant, terms involving one component
are absorbed into its local weight, and terms coupling multiple components
become labelled hyperedges. The following lemma gives a variance bound for
measures with finitely many such interactions. It controls the global
variance by the sum of the coordinatewise conditional variances, with a
prefactor determined by the expected sizes and incident interaction
strengths of connected components in an associated random hypergraph.
We
will apply it to finite truncations of the cluster expansion and then pass
to the limit.

\begin{lemma}[Component-swap variance comparison]
\label{lem:core-swap-comparison}
Let $s\ge1$ be an integer, let $\Omega_1,\ldots,\Omega_s$ be nonempty
finite sets, and put $\Omega=\prod_{i=1}^s\Omega_i$.
For each $i\in[s]$, let $\ell_i:\Omega_i\to(0,\infty)$ be a local
weight function. Let $\mathcal J$ be a finite set of labelled hyperedges
on vertex set $[s]$. Each $e\in\mathcal J$ has a nonempty vertex set
$V_e\subseteq[s]$; distinct hyperedges may have the same vertex set,
but are distinguished by their labels.
For each $e\in\mathcal J$, fix a real coefficient $a_e\in\mathbb R$
and an indicator function
$Y_e:\prod_{i\in V_e}\Omega_i\to\{0,1\}$.
Define the probability measure $\pi$ on $\Omega$ by
\begin{equation}
 \pi(\sigma)=\frac1{Z_\pi}
 \left(\prod_{i=1}^s\ell_i\left(\sigma_i\right)\right)
 \exp\left(\sum_{e\in\mathcal J}a_eY_e\left(\sigma_{V_e}\right)\right),
 \qquad \sigma\in\Omega,
 \label{eq:core-swap-measure}
\end{equation}
where $Z_\pi$ is the normalizing constant.
This measure is well defined and strictly positive. For $A\subseteq[s]$,
define
\begin{equation}
 R(A)=\sum_{\substack{e\in\mathcal J\\V_e\cap A\ne\emptyset}}|a_e|.
 \label{eq:core-swap-oscillation}
\end{equation}
Let $\zeta$ be a random labelled hypergraph on vertex set $[s]$, and write
$\zeta_e=\id{e\in\zeta}$ for its edge indicators. These indicators are
independent and satisfy
\begin{equation}
 \forall e\in\mathcal J,\qquad
 \mathbb P_\zeta(\zeta_e=1)=1-e^{-2|a_e|}.
 \label{eq:core-zeta-law}
\end{equation}
For $j\in[s]$, let $\mathcal A_j$ be the vertex set of the connected
component of $\zeta$ containing $j$; in particular,
$\mathcal A_j=\{j\}$ when $j$ is isolated.
Then, for every $f:\Omega\to\mathbb R$,
\begin{equation}
 \operatorname{Var}_\pi(f)
 \le\left(
   \max_{j\in[s]}\mathbb E_\zeta
   \left[2^{|\mathcal A_j|}e^{R\left(\mathcal A_j\right)}\right]
   \right)\sum_{j=1}^s V_j^\pi(f).
 \label{eq:core-swap-tensorization}
\end{equation}
\end{lemma}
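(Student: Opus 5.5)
The plan is a two-copy argument in which the interactions are decomposed into a random-cluster-type mixture. That lets us swap whole connected clusters of components between the two copies, and each cluster is then handled by Efron--Stein for a product measure.

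\textbf{Step 1: write each interaction factor as a mixture.} Normalize each interaction factor by its maximum:
\[
 g_e\left(\sigma_{V_e}\right)=\exp\left(a_eY_e\left(\sigma_{V_e}\right)-|a_e|\right)\in\left[e^{-2|a_e|},1\right].
\]
Put $p_e=1-e^{-2|a_e|}$ and write
\[
 g_e=(1-p_e)\cdot1+p_e\,h_e,\qquad h_e=\frac{g_e-e^{-2|a_e|}}{p_e}\in[0,1]
\]
(when $a_e=0$, take $p_e=0$). Expanding $\prod_e g_e$ then gives a joint law of $(\zeta,\sigma)$ proportional to
\[
 \prod_{e}\left(p_e^{\zeta_e}(1-p_e)^{1-\zeta_e}\right)\prod_i\ell_i(\sigma_i)\prod_{e\in\zeta}h_e\left(\sigma_{V_e}\right).
\]
Its $\sigma$-marginal is $\pi$. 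The $\zeta$-marginal of this joint law is not the product law \eqref{eq:core-zeta-law}; it is that law tilted by $Z_\zeta/Z_\pi$, where $Z_\zeta$ is the normalizing constant of the conditional law given $\zeta$. Given $\zeta$, the measure $\pi_\zeta$ on $\Omega$ factorizes over the connected components of $\zeta$.

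\textbf{Step 2: two-copy swap identity.} Start from $\operatorname{Var}_\pi(f)=\tfrac12\mathbb E[(f(\sigma)-f(\tau))^2]$ with $\sigma,\tau$ i.i.d.\ from $\pi$. I would couple the two copies by drawing them through the joint law above. I would then rearrange the product of the two copies' weights so that both share an auxiliary configuration whose clusters are unions of components of the two edge sets. Because each cluster's factor is a product over that cluster, exchanging the restrictions of $\sigma$ and $\tau$ to one cluster $\mathcal A$ is measure-preserving.

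Ordering the clusters and telescoping $f(\sigma)-f(\tau)$ through the successive hybrid configurations, then using Cauchy--Schwarz, reduces the global variance to a sum over clusters $\mathcal A$ of
\[
 \mathbb E\left[\operatorname{Var}_\pi\left(f\mid X_{-\mathcal A}\right)\right]\cdot(\text{weight of }\mathcal A).
\]
The key point is that the comparison with the reference $\zeta$ of \eqref{eq:core-zeta-law} costs only $e^{R(\mathcal A)}$. This holds because the tilt $Z_\zeta$ changes by at most $e^{\pm|a_e|}$ for each edge meeting $\mathcal A$ whose status is re-randomized.

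\textbf{Step 3: from a cluster to single blocks.} Fix a set $A$ and the outside configuration $\sigma_{-A}$. Compared with the product measure $\prod_{i\in A}\ell_i$, the conditional law $\pi(\cdot\mid\sigma_{-A})$ has density in $[e^{-R(A)},e^{R(A)}]$ up to normalization, since only edges meeting $A$ vary. Efron--Stein (\Cref{lem:efron-stein}) applied to the product measure on $\prod_{i\in A}\Omega_i$ bounds the block variance by the sum of single-coordinate conditional variances under the product measure. Transferring the variances back to $\pi$ through these density bounds gives
\[
 \operatorname{Var}_\pi\left(f\mid X_{-A}\right)\le(\text{const})^{|A|}e^{R(A)}\sum_{j\in A}\mathbb E\left[\operatorname{Var}_\pi\left(f\mid X_{-j}\right)\mid X_{-A}\right].
\]
I expect to organize this so that the $e^{\pm}$ factors pair into a single $e^{R(A)}$, with the combinatorial factor from the telescoping chain absorbed into $2^{|A|}$.

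\textbf{Step 4: regroup.} Summing over $j$ and charging each cluster to the vertices it contains gives the bound $\mathbb E_\zeta\bigl[2^{|\mathcal A_j|}e^{R(\mathcal A_j)}\bigr]$ for each $j$, and taking the maximum over $j$ yields \eqref{eq:core-swap-tensorization}.

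\textbf{Main obstacle.} The delicate step is Step 2, specifically removing the $Z_\zeta$-tilt so that the final expectation is with respect to the independent law \eqref{eq:core-zeta-law}. The constant must come out as exactly $2^{|\mathcal A|}e^{R(\mathcal A)}$ rather than $e^{2R}$. My first attempt would use a pairwise swap that keeps the edges of both copies in a symmetric configuration, so that the tilts of the two copies cancel up to one factor $e^{|a_e|}$ per edge crossing the swapped cluster.
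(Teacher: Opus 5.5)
Your overall architecture matches the paper's: lifted edge variables, two independent copies, swapping whole clusters, and a local Efron--Stein comparison through density bounds. But Step~2, which is the heart of the lemma, is not carried out, and the devices you propose for it would not give the stated constant.

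\textbf{Telescoping.} Telescoping $f(\sigma)-f(\tau)$ through hybrids and then applying Cauchy--Schwarz loses a factor equal to the number of clusters, which can be of order $s$. The paper avoids this as follows. It attaches an independent fair swap bit $\varepsilon_C$ to every component $C$ of the union hypergraph $g$ (the edges with $\eta_e\vee\xi_e=1$). It then uses the antithetic identity $X^{\bar\varepsilon}=(X')^{\varepsilon}$ and Efron--Stein in the bits $\varepsilon$. This gives $\operatorname{Var}_\pi(f)\le\frac12\mathbb E\bigl[\sum_{C\in\operatorname{Comp}(g)}\mathbb E_\varepsilon[(f(X^\varepsilon)-f(X^{\varepsilon^C}))^2]\bigr]$ with no such loss.

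\textbf{Tilt removal.} Your claim that removing the $Z_\zeta$-tilt costs only $e^{R(\mathcal A)}$ has no argument behind it. Even if it held, it would combine with the $e^{R(A)}$ from your Step~3 into exactly the $e^{2R}$ you flag as the obstacle.

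\textbf{Step 1.} There is also a slip that makes the target law unreachable. The factor $\exp(a_eY_e-|a_e|)$ takes the two values $\{1,e^{-|a_e|}\}$ if $a_e>0$, and $\{e^{-|a_e|},e^{-2|a_e|}\}$ if $a_e<0$. So $e^{-2|a_e|}$ is not the minimum of the factor, and $h_e$ is not an indicator. For $a_e>0$, a single copy opens $e$ with conditional probability up to $1-e^{-2|a_e|}$. The union of two copies is then dominated only by $1-e^{-4|a_e|}$, not by the law of $\zeta$.

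\textbf{The missing idea.} No tilt ever has to be removed, because the comparison with $\zeta$ is made conditionally on the first configuration.
\begin{itemize}
\item Floor each factor at its true minimum: $e^{a_eY_e}\propto 1+w_eh_e$ with $w_e=e^{|a_e|}-1$ and $h_e\in\{0,1\}$. Then, given $X=\sigma$, the lifted bits are independent Bernoulli$(p_eh_e(\sigma))$ with $p_e=1-e^{-|a_e|}$, and likewise for the independent second copy.
\item Realize them with fresh uniforms as $\eta_e=\mathbf 1\{U_e\le p_eh_e(X)\}$ and $\xi_e=\mathbf 1\{U'_e\le p_eh_e(X')\}$, and set $\zeta_e=\mathbf 1\{U_e\le p_e\text{ or }U'_e\le p_e\}$. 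This gives $g\subseteq\zeta$, with $\zeta$ independent of $X$ and with edge law exactly $1-(1-p_e)^2=1-e^{-2|a_e|}$.
\item To use this, center by $\widetilde f_C=f-\mathbb E_\pi[f\mid X_{-C}]$, which leaves $f(X^\varepsilon)-f(X^{\varepsilon^C})$ unchanged. Also use that $X^\varepsilon$ has the law of $X$ given $g$. This yields $\operatorname{Var}_\pi(f)\le 2\sum_C\theta_C\,\mathbb E_\pi[\operatorname{Var}_\pi(f\mid X_{-C})]$, where $\theta_C=\sup_\sigma\mathbb P(C\in\operatorname{Comp}(g)\mid X=\sigma)\le\mathbb P_\zeta(C\subseteq\mathcal A_j)$ for any $j\in C$, at no cost.
\item The only factor $e^{R(C)}$ then comes from your Step~3. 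There the max/min density ratio of $\pi(\cdot\mid X_{-C})$ against $\prod_{i\in C}\ell_i$ is at most $e^{R(C)}$, with no $(\mathrm{const})^{|C|}$.
\item The factor $2^{|\mathcal A_j|}$ arises because components of $g$ are only subsets of components of $\zeta$. Summing over $C\ni j$ gives $2\,\mathbb E_\zeta\bigl[\sum_{j\in C\subseteq\mathcal A_j}e^{R(C)}\bigr]\le\mathbb E_\zeta\bigl[2^{|\mathcal A_j|}e^{R(\mathcal A_j)}\bigr]$.
\end{itemize}
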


If $a_e=0$, then the label $e$ contributes neither to $\pi$ nor to
$R(A)$, and its hyperedge is absent from $\zeta$ almost surely because
$1-e^{-2|a_e|}=0$. We therefore remove all such labels. For every remaining
$e\in\mathcal J$, define $w_e$ and the function
$h_e:\prod_{i\in V_e}\Omega_i\to\{0,1\}$ by
\begin{equation}
 w_e=e^{|a_e|}-1,\qquad
 h_e\left(\sigma_{V_e}\right)=
 \begin{cases}
  Y_e\left(\sigma_{V_e}\right),&a_e>0,\\
  1-Y_e\left(\sigma_{V_e}\right),&a_e<0,
 \end{cases}
 \qquad \sigma\in\Omega.
 \label{eq:core-positive-auxiliary-definitions}
\end{equation}
Then we have the following lemma.

\begin{lemma}\label{lem-pisigma-expansion}
For all $\sigma\in\Omega$, we have
$\pi(\sigma)\propto\left(\prod_{i=1}^s\ell_i\left(\sigma_i\right)\right)
          \prod_{e\in\mathcal J}\bigl(1+w_eh_e\left(\sigma_{V_e}\right)\bigr)$.
\end{lemma}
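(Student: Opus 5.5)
The identity is verified factor by factor. Since both $Y_e$ and $h_e$ take values in $\{0,1\}$, each exponential factor $\exp(a_eY_e(\sigma_{V_e}))$ will be rewritten as a constant (independent of $\sigma$) times $1+w_eh_e(\sigma_{V_e})$. The constants are then absorbed into the normalizing constant $Z_\pi$. Recall that labels with $a_e=0$ have already been discarded, so every remaining $e$ has $a_e>0$ or $a_e<0$.

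\emph{Case $a_e>0$.} Here $h_e=Y_e$ and $w_e=e^{a_e}-1$. If $Y_e=1$, then $1+w_eh_e=e^{a_e}=\exp(a_eY_e)$. If $Y_e=0$, both sides equal $1$. Hence $\exp(a_eY_e(\sigma_{V_e}))=1+w_eh_e(\sigma_{V_e})$ exactly.

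\emph{Case $a_e<0$.} Here $h_e=1-Y_e$ and $w_e=e^{-a_e}-1$. If $Y_e=1$, then $\exp(a_eY_e)=e^{a_e}$ and $1+w_eh_e=1$. If $Y_e=0$, then $\exp(a_eY_e)=1$ and $1+w_eh_e=e^{-a_e}$. In both sub-cases
\[
 \exp\bigl(a_eY_e(\sigma_{V_e})\bigr)=e^{a_e}\bigl(1+w_eh_e(\sigma_{V_e})\bigr),
\]
and the factor $e^{a_e}$ does not depend on $\sigma$.

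Multiplying over the finitely many $e\in\mathcal J$ gives
\[
 \exp\Bigl(\sum_{e\in\mathcal J}a_eY_e(\sigma_{V_e})\Bigr)=c\prod_{e\in\mathcal J}\bigl(1+w_eh_e(\sigma_{V_e})\bigr),
 \qquad c=\prod_{e:\,a_e<0}e^{a_e}>0.
\]
Substituting this into \eqref{eq:core-swap-measure} and absorbing $c$ into $Z_\pi$ yields the stated proportionality. Every factor is positive, because $\ell_i>0$, $w_e>0$ and $h_e\ge0$, so $\pi$ remains strictly positive and well defined. There is no real obstacle in this argument; the only point needing care is that the sign convention in $h_e$ is what makes every $w_e$ nonnegative, which is presumably the purpose of the reparametrization for the later coupling with $\zeta$.
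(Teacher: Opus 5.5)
Your proof is correct and is essentially the paper's own argument: the paper likewise writes $e^{a_eY_e(\sigma_{V_e})}=c_e\bigl(1+w_eh_e(\sigma_{V_e})\bigr)$ with $c_e=1$ when $a_e>0$ and $c_e=e^{a_e}$ when $a_e<0$, verifies both cases using $Y_e\in\{0,1\}$, and absorbs the $\sigma$-independent positive constant $\prod_{e}c_e$ into the normalization. A minor correction to your closing remark: $w_e=e^{|a_e|}-1\ge0$ holds by definition, and the sign convention in $h_e$ is what makes the factorization valid with this nonnegative $w_e$.
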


Let $\eta=\left(\eta_e\right)_{e\in\mathcal J}\in\{0,1\}^{\mathcal J}$ and define
\begin{equation}
 \widehat\pi(\sigma,\eta)\propto
 \prod_{i=1}^s\ell_i\left(\sigma_i\right)
 \prod_{e:\eta_e=1}w_eh_e\left(\sigma_{V_e}\right).
 \label{eq:core-positive-lift}
\end{equation}
Then we have the following result.

\begin{lemma}\label{lem-pisigmatau-expansion}
For each $\sigma\in\Omega$,
\begin{equation}
 \sum_{\eta\in\{0,1\}^{\mathcal J}}
 \widehat\pi(\sigma,\eta)=\pi(\sigma).
 \label{eq:core-positive-lift-exact-marginal}
\end{equation}  
Moreover, for each $\sigma\in\Omega$ and $\eta\in\{0,1\}^{\mathcal J}$,
\begin{equation}
\begin{aligned}
 \widehat\pi\left(\eta\mid X=\sigma\right)
 =\prod_{e:\eta_e=1}
   \frac{w_eh_e\left(\sigma_{V_e}\right)}
        {1+w_eh_e\left(\sigma_{V_e}\right)}
  \prod_{e:\eta_e=0}
   \frac{1}{1+w_eh_e\left(\sigma_{V_e}\right)}.
\end{aligned}
 \label{eq:core-positive-lift-conditional-factorization}
\end{equation}
\end{lemma}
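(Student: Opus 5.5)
The plan is to expand $\prod_{e}(1+w_eh_e)$ over subsets $\{e:\eta_e=1\}$ and compare term by term with the definition \eqref{eq:core-positive-lift}. The one point that needs care is that $\widehat\pi$ and $\pi$ are defined only up to normalization, so I must show their normalizing constants agree.

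\textbf{Marginal identity.} For fixed $\sigma$, the distributive law gives
\[
 \prod_{e\in\mathcal J}\bigl(1+w_eh_e(\sigma_{V_e})\bigr)
 =\sum_{\eta\in\{0,1\}^{\mathcal J}}\prod_{e:\eta_e=1}w_eh_e(\sigma_{V_e}),
\]
where the empty product is $1$. Multiplying by $\prod_i\ell_i(\sigma_i)$, the right-hand side is exactly $\sum_\eta$ of the unnormalized weight defining $\widehat\pi(\sigma,\eta)$. By \Cref{lem-pisigma-expansion}, the left-hand side is proportional to $\pi(\sigma)$. Summing over $\sigma\in\Omega$, the unnormalized total mass of $\widehat\pi$ equals that of this representation of $\pi$. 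The normalizing constants therefore coincide, which gives \eqref{eq:core-positive-lift-exact-marginal}.

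Positivity is also needed. Each $\ell_i>0$, $w_e>0$ (after removing labels with $a_e=0$), and $h_e\in\{0,1\}$, so $1+w_eh_e\ge1$. Hence the total mass is positive and $\pi(\sigma)>0$, so the conditional law below is well defined.

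\textbf{Conditional factorization.} Divide $\widehat\pi(\sigma,\eta)$ by $\pi(\sigma)$, using the two unnormalized forms with the common constant:
\[
 \widehat\pi(\eta\mid X=\sigma)
 =\frac{\prod_{e:\eta_e=1}w_eh_e}{\prod_{e}(1+w_eh_e)}.
\]
Splitting the denominator over $\eta_e=1$ and $\eta_e=0$ yields \eqref{eq:core-positive-lift-conditional-factorization}. Because each factor depends only on $\eta_e$, the conditional law is a product of independent Bernoulli variables with parameters $w_eh_e/(1+w_eh_e)$.

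I do not expect a real obstacle. The only step to check carefully is the normalization bookkeeping, that is, confirming that the proportionality constant in \Cref{lem-pisigma-expansion} is the same one used for $\widehat\pi$. This follows because both are the total sum over $\Omega$ of the same expression.
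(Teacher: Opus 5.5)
Your proposal is correct and follows essentially the same route as the paper: you expand $\prod_{e}(1+w_eh_e)$ by the distributive law and invoke \Cref{lem-pisigma-expansion}, using that the proportionality constant is independent of $\sigma$. You then read off the conditional law as the ratio of the unnormalized weights, exactly as the paper does. Your explicit positivity check, showing that $\pi(\sigma)>0$ so the conditional law is well defined, is a harmless addition that the paper leaves implicit.
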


Draw independent copies $(X,\eta),\left(X',\xi\right)\sim\widehat\pi$. Let $g$ be
the labelled hypergraph on $[s]$ whose edge $e$ is present exactly when
$\eta_e\vee\xi_e=1$.
For $e\in\mathcal J$, we call the hyperedge labelled $e$ \emph{open}
in $g$ if $\eta_e\vee\xi_e=1$, and \emph{closed} if $\eta_e=\xi_e=0$.
For a hypergraph $H$ on $[s]$, let $\operatorname{Comp}(H)$ denote the
collection of vertex sets of its connected components, including
singleton sets for isolated vertices. These sets form a partition of
$[s]$. 
Then we have the following lemma.

\begin{lemma}
\label{lem:core-random-component-flip}
For every
$C\in\operatorname{Comp}(g)$, let $\varepsilon_C$ be an independent
fair bit. Let $X^\varepsilon$ be the configuration obtained by swapping
exactly those components $C$ for which $\varepsilon_C=1$, and let
$\varepsilon^C$ denote the vector obtained by flipping only the bit
indexed by $C$.
Then, for every $f:\Omega\to\mathbb R$,
\begin{equation}
 \operatorname{Var}_\pi(f)
 \le\frac12\mathbb E\!\left[
   \sum_{C\in\operatorname{Comp}(g)}\mathbb E_\varepsilon\!\left[
     \left(f\left(X^\varepsilon\right)
           -f\left(X^{\varepsilon^C}\right)\right)^2\right]\right].
 \label{eq:core-component-forms}
\end{equation}
\end{lemma}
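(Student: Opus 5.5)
The plan is to realize the variance as a two-copy quantity, show that swapping any subset of components of $g$ between the two copies preserves the joint law, and then apply Efron--Stein to the fair bits indexing the swaps.

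\textbf{Step 1 (two-copy form).} By Lemma~\ref{lem-pisigmatau-expansion}, the $\sigma$-marginal of $\widehat\pi$ is $\pi$, so $X$ and $X'$ are independent with law $\pi$. Hence
\[
 \operatorname{Var}_\pi(f)=\tfrac12\,\mathbb E\bigl[(f(X)-f(X'))^2\bigr].
\]

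\textbf{Step 2 (swap invariance).} For $C\in\operatorname{Comp}(g)$, consider the map that exchanges $X_C$ with $X'_C$ and simultaneously exchanges $\eta_e$ with $\xi_e$ for every hyperedge $e$ with $V_e\subseteq C$. I claim this map preserves the product measure $\widehat\pi\otimes\widehat\pi$ and leaves $g$ unchanged.

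\begin{itemize}
\item The map preserves $g$ because the open indicator $\eta_e\vee\xi_e$ is symmetric in $\eta_e,\xi_e$.
\item The joint weight factorizes as
\[
 \prod_i \ell_i(X_i)\,\ell_i(X'_i)\;\prod_{e:\eta_e=1} w_e h_e(X_{V_e})\;\prod_{e:\xi_e=1} w_e h_e(X'_{V_e}).
\]
\item Every open edge $e$ has $V_e$ contained in a single component of $g$. If $V_e\subseteq C$, its factor moves to the other copy together with the swapped configuration, so its value is unchanged. If $V_e$ lies in another component, the edge does not see the swapped coordinates.
\item Closed edges contribute the factor $1$ in both copies.
\item The local factors $\ell_i(X_i)\ell_i(X'_i)$ are symmetric under exchanging $X_i$ and $X'_i$.
\end{itemize}

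These swaps for distinct components commute. Together they give an action of $\{0,1\}^{\operatorname{Comp}(g)}$ that preserves both the measure and $g$. Applying an independent uniform $\varepsilon$ from this group therefore does not change the law. Consequently
\[
 \bigl(X,X'\bigr)\overset{d}{=}\bigl(X^\varepsilon,X^{\bar\varepsilon}\bigr),
\]
where $\bar\varepsilon$ is the all-ones complement of $\varepsilon$ and $X^{\bar\varepsilon}$ is obtained from $X$ by swapping exactly the components not swapped in $X^\varepsilon$; indeed $(X^\varepsilon)'=X^{\bar\varepsilon}$. This is the step requiring the most care: the edge variables must be swapped along with the configurations, and one must check that no open edge straddles two components of $g$, which holds by the definition of $\operatorname{Comp}(g)$.

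\textbf{Step 3 (Efron--Stein on the swap bits).} Condition on $(g,X,X')$ and set $h(\varepsilon)=f(X^\varepsilon)$, a function of the independent fair bits $(\varepsilon_C)_{C\in\operatorname{Comp}(g)}$. Since $\bar\varepsilon$ is also uniform,
\[
 \mathbb E_\varepsilon\bigl(h(\varepsilon)-h(\bar\varepsilon)\bigr)^2\le 2\,\mathbb E_\varepsilon\bigl(h(\varepsilon)-\mathbb E h\bigr)^2+2\,\mathbb E_\varepsilon\bigl(h(\bar\varepsilon)-\mathbb E h\bigr)^2=4\operatorname{Var}_\varepsilon(h).
\]
Lemma~\ref{lem:efron-stein} then gives
\[
 \operatorname{Var}_\varepsilon(h)\le\sum_C \mathbb E_\varepsilon\bigl[\operatorname{Var}(h\mid\varepsilon_{-C})\bigr]=\tfrac14\sum_C\mathbb E_\varepsilon\bigl(h(\varepsilon)-h(\varepsilon^C)\bigr)^2,
\]
where the last equality uses that the variance of a function of a single fair bit is one quarter of the squared difference of its two values.

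\textbf{Conclusion.} Combining the three steps,
\[
 \operatorname{Var}_\pi(f)=\tfrac12\,\mathbb E\Bigl[\mathbb E_\varepsilon\bigl(h(\varepsilon)-h(\bar\varepsilon)\bigr)^2\Bigr]\le\tfrac12\,\mathbb E\Bigl[\sum_{C}\mathbb E_\varepsilon\bigl(f(X^\varepsilon)-f(X^{\varepsilon^C})\bigr)^2\Bigr],
\]
which is \eqref{eq:core-component-forms}, since $\tfrac12\cdot4\cdot\tfrac14=\tfrac12$.
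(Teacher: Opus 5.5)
Your proposal is correct and follows essentially the same route as the paper: the two-copy identity $\operatorname{Var}_\pi(f)=\frac12\mathbb E[(f(X)-f(X'))^2]$, invariance of $\widehat\pi\otimes\widehat\pi$ and of $g$ under swapping whole components of $g$ together with the edge bits inside them, the antithetic bound using $\bar\varepsilon$, and Efron--Stein on the swap bits, with the same constants. The only cosmetic difference is that you compose single-component swaps, whereas the paper defines the swap $T_D$ directly for a union $D$ of components; the two definitions agree because any edge straddling two components of $g$ is closed, so whether its bits are exchanged is immaterial.
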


The proofs of Lemmas \ref{lem-pisigma-expansion}--\ref{lem:core-random-component-flip} are provided in the appendix. Now we can prove the main result of this subsection.
\begin{proof}[Proof of Lemma~\ref{lem:core-swap-comparison}]
For each nonempty $C\subseteq[s]$, define
\begin{equation}
 \theta_C=\sup_{\sigma\in\Omega}
   \mathbb P\left(C\in\operatorname{Comp}(g)\mid X=\sigma\right),
 \qquad
 \widetilde f_C(\sigma)
 =f(\sigma)-\mathbb E_\pi\!\left[f(X)\mid X_{-C}=\sigma_{-C}\right].
 \label{eq:core-centered-block-function}
\end{equation}
The probability defining $\theta_C$ conditions only on the first
configuration $X=\sigma$, not on $\eta$ or the second sample
$\left(X',\xi\right)$. The remaining randomness comes from $\eta$
with its conditional law given $X=\sigma$ and the independent second
sample $\left(X',\xi\right)$ with law $\widehat\pi$. 
For $C\in\operatorname{Comp}(g)$, passing from $\varepsilon$ to
$\varepsilon^C$ flips only the swap bit for $C$ and leaves the bits
for all other components unchanged. Since the components are disjoint,
every coordinate outside $C$ still takes its value from the same
configuration, $X$ or $X'$. Thus, pointwise,
\[
 \left(X^{\varepsilon^C}\right)_{-C}
 =\left(X^\varepsilon\right)_{-C}.
\]
Consequently, 
\[
 \begin{aligned}
 &\mathbb E_\pi\!\left[
   f(X)\mid X_{-C}=\left(X^\varepsilon\right)_{-C}\right]=\mathbb E_\pi\!\left[
   f(X)\mid X_{-C}=\left(X^{\varepsilon^C}\right)_{-C}\right].
 \end{aligned}
\]
Substituting into the definition of $\widetilde f_C$ in
\eqref{eq:core-centered-block-function} yields 
\[
 \widetilde f_C\left(X^\varepsilon\right)
 -\widetilde f_C\left(X^{\varepsilon^C}\right)
 =f\left(X^\varepsilon\right)-f\left(X^{\varepsilon^C}\right).
\]
For a fixed sample pair with $C\in\operatorname{Comp}(g)$, the uniform
law of $\varepsilon$ is invariant under
$\varepsilon\mapsto\varepsilon^C$. Thus
\begin{equation}
 \begin{aligned}
 &\quad\quad\mathbb E_\varepsilon\!\left[
   \left(f\left(X^\varepsilon\right)
         -f\left(X^{\varepsilon^C}\right)\right)^2\right]=\mathbb E_\varepsilon\!\left[
   \left(\widetilde f_C\left(X^\varepsilon\right)
         -\widetilde f_C\left(X^{\varepsilon^C}\right)\right)^2\right]\\
 &\quad\le2\mathbb E_\varepsilon\!\left[
   \left(\widetilde f_C\left(X^\varepsilon\right)\right)^2
   +\left(\widetilde f_C\left(X^{\varepsilon^C}\right)\right)^2\right]
 =4\mathbb E_\varepsilon\!\left[
   \left(\widetilde f_C\left(X^\varepsilon\right)\right)^2\right].
 \end{aligned}
 \label{eq:core-centered-swap-bound}
\end{equation}
Hence, \eqref{eq:core-component-forms} gives
\begin{equation}
 \operatorname{Var}_\pi(f)
 \le2\mathbb E\!\left[
   \sum_{C\in\operatorname{Comp}(g)}\mathbb E_\varepsilon\!\left[
     \left(\widetilde f_C\left(X^\varepsilon\right)\right)^2\right]\right].
 \label{eq:core-centered-component-form}
\end{equation}

Conditional on $g$, the collection $\operatorname{Comp}(g)$ is fixed.
Fix $g$ and $\varepsilon$.
For each $\sigma\in\Omega$, by \eqref{eq:core-conditional-swapped-pair-law},
\[
 \begin{aligned}
 \mathbb P\left(X^\varepsilon=\sigma\mid g,\varepsilon\right)
 &=\sum_{\omega\in\Omega}\mathbb P\left(
   X^\varepsilon=\sigma,\left(X'\right)^\varepsilon=\omega
   \mid g,\varepsilon\right)=\sum_{\omega\in\Omega}\mathbb P\left(
   X=\sigma,X'=\omega\mid g\right)\\
 &=\mathbb P\left(X=\sigma\mid g\right).
 \end{aligned}
\]
By the law of total probability conditional on $g$,
\begin{equation}
 \forall\sigma\in\Omega,\qquad
 \mathbb P\left(X^\varepsilon=\sigma\mid g\right)
 =\mathbb P\left(X=\sigma\mid g\right).
 \label{eq:core-conditional-swapped-marginal-law}
\end{equation}

Condition on $g$, so that $\operatorname{Comp}(g)$ is fixed.
By \eqref{eq:core-centered-component-form} and \eqref{eq:core-conditional-swapped-marginal-law},
\begin{equation}
 \operatorname{Var}_\pi(f)
 \le2\mathbb E\!\left[
   \sum_{C\in\operatorname{Comp}(g)}
     \left(\widetilde f_C\left(X\right)\right)^2\right].
 \label{eq:core-unswapped-component-form}
\end{equation}
Since the marginal law of $X$ is $\pi$, the law of total expectation gives
\begin{equation}
 \mathbb E\!\left[
   \sum_{C\in\operatorname{Comp}(g)}
     \left(\widetilde f_C\left(X\right)\right)^2\right]
 =\sum_{\emptyset\ne C\subseteq[s]}\mathbb E_\pi\!\left[
   \mathbb P\left(C\in\operatorname{Comp}(g)\mid X\right)
   \left(\widetilde f_C\left(X\right)\right)^2\right].
 \label{eq:core-component-conditioning}
\end{equation}
Substituting \eqref{eq:core-component-conditioning} into
\eqref{eq:core-unswapped-component-form} and using the definition of
$\theta_C$ in \eqref{eq:core-centered-block-function} yields
\begin{equation}
 \operatorname{Var}_\pi(f)
 \le2\sum_{\emptyset\ne C\subseteq[s]}
   \theta_C\mathbb E_\pi\!\left[
     \left(\widetilde f_C\left(X\right)\right)^2\right].
 \label{eq:core-weighted-centered-bound}
\end{equation}
By the definition of $\widetilde f_C$ in
\eqref{eq:core-centered-block-function} and the definition of
conditional variance, for every nonempty $C\subseteq[s]$,
\begin{equation}
 \begin{aligned}
 \mathbb E_\pi\!\left[
   \left(\widetilde f_C\left(X\right)\right)^2\mid X_{-C}\right]
 &=\mathbb E_\pi\!\left[
   \left(f(X)-\mathbb E_\pi\!\left[f(X)\mid X_{-C}\right]\right)^2
   \mid X_{-C}\right]=\operatorname{Var}_\pi\left(f(X)\mid X_{-C}\right).
 \end{aligned}
 \label{eq:core-centered-block-second-moment}
\end{equation}
The law of total expectation, followed by
\eqref{eq:core-centered-block-second-moment}, gives
\begin{equation}
 \begin{aligned}
 \mathbb E_\pi\!\left[
   \left(\widetilde f_C\left(X\right)\right)^2\right]
 &=\mathbb E_\pi\!\left[
   \mathbb E_\pi\!\left[
     \left(\widetilde f_C\left(X\right)\right)^2
     \mid X_{-C}\right]\right]=\mathbb E_\pi\!\left[
   \operatorname{Var}_\pi\left(f(X)\mid X_{-C}\right)\right].
 \end{aligned}
 \label{eq:core-centered-block-averaged-variance}
\end{equation}
Substituting \eqref{eq:core-centered-block-averaged-variance} into
\eqref{eq:core-weighted-centered-bound} gives
\begin{equation}
 \operatorname{Var}_\pi(f)
 \le2\sum_{\emptyset\ne C\subseteq[s]}\theta_C\mathbb E_\pi\!\left[
   \operatorname{Var}_\pi\left(f(X)\mid X_{-C}\right)\right].
 \label{eq:core-block-form}
\end{equation}

Fix the outside configuration $X_{-C}=y$, write $\pi_C^y$ for its conditional law on $C$, and set $f_y(x_C)=f(x_C,y)$. Let $\nu_C(x_C)\propto\prod_{i\in C}\ell_i(x_i)$. In the density $q_y=\pi_C^y/\nu_C$, all interaction factors disjoint from $C$ are constant. Each remaining indicator can change the log density by at most $|a_e|$. Thus, with $U_y=\max q_y$ and $L_y=\min q_y>0$, we have $U_y/L_y\le e^{R(C)}$.

For any law $\rho_C$ on $C$, the averaged conditional variance in coordinate $j\in C$ satisfies
\[
 V_j^{\rho_C}(f_y)=\inf_\psi
     \mathbb E_{\rho_C}\!\left[(f_y(X_C)-\psi(X_{C\setminus\{j\}}))^2\right].
\]
The analogous infimum over constants gives $\operatorname{Var}_{\rho_C}(f_y)$. Since $L_y\nu_C\le\pi_C^y\le U_y\nu_C$, these formulas and the Efron--Stein inequality for the product measure $\nu_C$ give
\[
 \operatorname{Var}_{\pi_C^y}(f_y)
 \le U_y\operatorname{Var}_{\nu_C}(f_y)
 \le U_y\sum_{j\in C}V_j^{\nu_C}(f_y)
 \le e^{R(C)}\sum_{j\in C}V_j^{\pi_C^y}(f_y).
\]
Averaging over $y$ gives
\begin{equation}\label{eq:core-local-variance-comparison}
 \mathbb E_\pi\!\left[\operatorname{Var}_\pi(f(X)\mid X_{-C})\right]
 \le e^{R(C)}\sum_{j\in C}V_j^\pi(f),
\end{equation}
because conditioning on $X_{-C}$ and $X_{C\setminus\{j\}}$ fixes exactly $X_{-j}$.

To bound $\theta_C$ defined in
\eqref{eq:core-centered-block-function}, put $p_e=1-e^{-|a_e|}$.
By \eqref{eq:core-positive-lift-conditional-factorization}, for every
$e\in\mathcal J$,
\begin{align}
 \mathbb P\left(\eta_e=1\mid X=\sigma\right)
 &\quad=\sum_{\eta_{-e}\in
       \{0,1\}^{\mathcal J\setminus\{e\}}}
   \widehat\pi\left(\eta_e=1,\eta_{-e}\mid X=\sigma\right)\notag\\
 &\quad=\frac{w_eh_e\left(\sigma_{V_e}\right)}
              {1+w_eh_e\left(\sigma_{V_e}\right)}
   \prod_{d\in\mathcal J\setminus\{e\}}
   \left(
    \frac{1}{1+w_dh_d\left(\sigma_{V_d}\right)}
    +\frac{w_dh_d\left(\sigma_{V_d}\right)}
           {1+w_dh_d\left(\sigma_{V_d}\right)}
   \right)\notag\\
 &\quad=\frac{w_eh_e\left(\sigma_{V_e}\right)}
              {1+w_eh_e\left(\sigma_{V_e}\right)}.
 \label{eq:core-lift-single-edge-probability}
\end{align}
Since
$h_e\left(\sigma_{V_e}\right)\in\{0,1\}$, we have
\begin{equation}
 \mathbb P\left(\eta_e=1\mid X=\sigma\right) = \frac{w_eh_e\left(\sigma_{V_e}\right)}
      {1+w_eh_e\left(\sigma_{V_e}\right)}
 =\begin{cases}
   0,
      &h_e\left(\sigma_{V_e}\right)=0,\\[2mm]
   \displaystyle\frac{w_e}{1+w_e},
      &h_e\left(\sigma_{V_e}\right)=1.
  \end{cases}
 \label{eq:core-lift-single-edge-cases}
\end{equation}
Using $w_e=e^{|a_e|}-1$ from
\eqref{eq:core-positive-auxiliary-definitions} and
$p_e=1-e^{-|a_e|}$, we have
\begin{equation}
 \frac{w_e}{1+w_e}
 =\frac{e^{|a_e|}-1}{e^{|a_e|}}
 =1-e^{-|a_e|}=p_e.
 \label{eq:core-p-w-relation}
\end{equation}
Substitution into \eqref{eq:core-lift-single-edge-cases} therefore yields
\begin{equation}
 \mathbb P\left(\eta_e=1\mid X=\sigma\right)
 =p_eh_e\left(\sigma_{V_e}\right).
 \label{eq:core-lift-bernoulli-parameter}
\end{equation}
Likewise, for every $\omega\in\Omega$,
\begin{equation}
 \mathbb P\left(\xi_e=1\mid X'=\omega\right)
 =p_eh_e\left(\omega_{V_e}\right).
 \label{eq:core-second-lift-bernoulli-parameter}
\end{equation}

To compare $g$ with the random hypergraph $\zeta$ defined in
\eqref{eq:core-zeta-law}, we place all relevant random objects on a
common probability space. In this construction, $(X,\eta)$ and
$(X',\xi)$ are independent, each has distribution $\widehat\pi$, and
the auxiliary random hypergraph has the distribution of $\zeta$.

First draw $X,X'$ independently from $\pi$. For every label
$e\in\mathcal J$, introduce two auxiliary random variables $U_e,U'_e$,
each uniformly distributed on $[0,1]$. All $2|\mathcal J|$ variables
are mutually independent and are independent of $(X,X')$.
Define 
\begin{equation}
\forall e\in\mathcal J,\qquad \eta_e=\id{U_e\le p_eh_e\left(X_{V_e}\right)},\qquad
 \xi_e=\id{U'_e\le p_eh_e\left(X'_{V_e}\right)}.
 \label{eq:core-lift-uniform-realization}
\end{equation}
Indeed, $0\le p_eh_e\left(\sigma_{V_e}\right)\le p_e<1$, so independence
of $U_e$ from $X$ and uniformity of $U_e$ give
\[
 \mathbb P\left(\eta_e=1\mid X=\sigma\right)
 =\mathbb P\left(U_e\le p_eh_e\left(\sigma_{V_e}\right)\right)
 =p_eh_e\left(\sigma_{V_e}\right).
\]
Thus, \eqref{eq:core-lift-bernoulli-parameter} is satisfied. Moreover,
conditional on $X=\sigma$, each quantity
$p_eh_e\left(\sigma_{V_e}\right)$ in the comparison
$U_e\le p_eh_e\left(\sigma_{V_e}\right)$ is deterministic. Since the
variables $(U_e)_{e\in\mathcal J}$ are independent, the bits
$(\eta_e)_{e\in\mathcal J}$ are therefore conditionally independent.
Consequently, for every
$\varepsilon\in\{0,1\}^{\mathcal J}$,
\begin{align}
 \mathbb P\left(\eta=\varepsilon\mid X=\sigma\right)
 &=\prod_{e:\varepsilon_e=1}
      p_eh_e\left(\sigma_{V_e}\right)
   \prod_{e:\varepsilon_e=0}
      \left(1-p_eh_e\left(\sigma_{V_e}\right)\right).
 \label{eq:core-uniform-lift-product-law}
\end{align}
Since $h_e\left(\sigma_{V_e}\right)\in\{0,1\}$, using
\eqref{eq:core-p-w-relation}, the two Bernoulli probabilities can be
rewritten as
\[
 p_eh_e\left(\sigma_{V_e}\right)
 =\frac{w_eh_e\left(\sigma_{V_e}\right)}
        {1+w_eh_e\left(\sigma_{V_e}\right)},
 \qquad
 1-p_eh_e\left(\sigma_{V_e}\right)
 =\frac{1}{1+w_eh_e\left(\sigma_{V_e}\right)}.
\]
Substituting these identities into
\eqref{eq:core-uniform-lift-product-law} gives
\[
 \mathbb P\left(\eta=\varepsilon\mid X=\sigma\right)
 =\prod_{e:\varepsilon_e=1}
   \frac{w_eh_e\left(\sigma_{V_e}\right)}
        {1+w_eh_e\left(\sigma_{V_e}\right)}
  \prod_{e:\varepsilon_e=0}
   \frac{1}{1+w_eh_e\left(\sigma_{V_e}\right)},
\]
which is exactly the conditional distribution in
\eqref{eq:core-positive-lift-conditional-factorization}.

By \eqref{eq:core-positive-lift-exact-marginal}, the $X$-marginal of
$\widehat\pi$ is $\pi$. Therefore, the conditional probability formula
gives, for every $\sigma\in\Omega$ and
$\varepsilon\in\{0,1\}^{\mathcal J}$,
\begin{equation}
 \widehat\pi(\sigma,\varepsilon)
 =\pi(\sigma)\widehat\pi\left(\varepsilon\mid X=\sigma\right).
 \label{eq:core-lift-joint-factorization}
\end{equation}
In our construction, $X\sim\pi$, and the conditional law of $\eta$ given
$X=\sigma$ is the same as the conditional law under $\widehat\pi$.
Consequently,
\begin{equation}
 \mathbb P\left(X=\sigma,\eta=\varepsilon\right)
 =\pi(\sigma)\widehat\pi\left(\varepsilon\mid X=\sigma\right)
 =\widehat\pi(\sigma,\varepsilon),
 \label{eq:core-coupled-lift-law}
\end{equation}
where the last equality is \eqref{eq:core-lift-joint-factorization}.
The same argument shows that $(X',\xi)\sim\widehat\pi$.
Moreover, $(X,(U_e)_{e\in\mathcal J})$ and
$(X',(U'_e)_{e\in\mathcal J})$ are independent, so $(X,\eta)$ and
$(X',\xi)$ are independent random pairs, each with joint distribution
$\widehat\pi$. Thus they have exactly the distribution assumed when
$g$ was defined. Define $g$ from these bits as before, with label $e$
open precisely when $\eta_e\vee\xi_e=1$.

Next, define
\begin{equation}
 \zeta_e=\id{U_e\le p_e\ \text{or}\ U'_e\le p_e},
 \qquad e\in\mathcal J,
 \label{eq:core-dominating-edge-indicator}
\end{equation}
and let $\zeta$ be the labelled hypergraph that includes $e$ exactly
when $\zeta_e=1$. By independence and uniformity of $U_e,U'_e$,
\begin{equation}
 \begin{aligned}
 \mathbb P(\zeta_e=1)
 &=1-\mathbb P\left(U_e>p_e,\ U'_e>p_e\right)=1-(1-p_e)^2
 =1-e^{-2|a_e|}.
 \end{aligned}
 \label{eq:core-dominating-edge-probability}
\end{equation}
Different indicators $\zeta_e$ depend on disjoint pairs $(U_e,U'_e)$,
so they are mutually independent. They are also independent of
$(X,X')$, since their definitions use only the auxiliary variables.
Consequently, this $\zeta$ has exactly the independent-hypergraph law
specified in the lemma.

Recall $0\le p_eh_e\left(\sigma_{V_e}\right)\le p_e<1$. By \eqref{eq:core-lift-uniform-realization} and \eqref{eq:core-dominating-edge-indicator},
\[
 \begin{aligned}
 \eta_e=1
 &\ \Longrightarrow\ U_e\le p_eh_e\left(X_{V_e}\right)\le p_e
 \ \Longrightarrow\ \zeta_e=1,\\
 \xi_e=1
 &\ \Longrightarrow\ U'_e\le p_eh_e\left(X'_{V_e}\right)\le p_e
 \ \Longrightarrow\ \zeta_e=1.
 \end{aligned}
\]
It follows that the set of
labelled hyperedges of $g$ is a subset of the set of labelled hyperedges
of $\zeta$. 
Recall that $\mathcal A_j$ is the vertex set of the connected component
of $\zeta$ containing $j$.
Fix a nonempty deterministic set $C\subseteq[s]$ and a vertex $j\in C$.
On the event $\{C\in\operatorname{Comp}(g)\}$, the set $C$ is the vertex
set of a connected component of $g$. Hence every vertex of $C$ is
connected to $j$ in $g$, and therefore also in $\zeta$. 
In other words, 
$C\in\operatorname{Comp}(g)$ only if $C\subseteq\mathcal A_j$.
Consequently, for every $\sigma\in\Omega$,
\[
 \begin{aligned}
 \mathbb P\left(C\in\operatorname{Comp}(g)\mid X=\sigma\right)
 \le\mathbb P\left(C\subseteq\mathcal A_j\mid X=\sigma\right)=\mathbb P_\zeta\left(C\subseteq\mathcal A_j\right),
 \end{aligned}
\]
where the equality uses independence of $\zeta$ from $X$.
Using the definition of $\theta_C$ in
\eqref{eq:core-centered-block-function}, for every $j\in C$,
\begin{align}
 \theta_C
 &=\sup_{\sigma\in\Omega}
   \mathbb P\left(C\in\operatorname{Comp}(g)\mid X=\sigma\right)
 \le\sup_{\sigma\in\Omega}
   \mathbb P_\zeta\left(C\subseteq\mathcal A_j\right)
 =\mathbb P_\zeta\left(C\subseteq\mathcal A_j\right).
 \label{eq:core-theta-component-bound}
\end{align}

Finally, substituting \eqref{eq:core-local-variance-comparison} into
\eqref{eq:core-block-form} gives
\begin{align}
 \operatorname{Var}_\pi(f)
 &\le 2\sum_{\emptyset\ne C\subseteq[s]}
   \theta_C\mathbb E_\pi\!\left[
   \operatorname{Var}_\pi\left(f(X)\mid X_{-C}\right)\right]\le 2\sum_{\emptyset\ne C\subseteq[s]}
   \theta_Ce^{R(C)}\sum_{j\in C}V_j^\pi(f)
 \notag\\
 &=2\sum_{j=1}^s V_j^\pi(f)
   \sum_{\substack{\emptyset\ne C\subseteq[s]\\j\in C}}
   \theta_Ce^{R(C)}=\sum_{j=1}^s
   \left(
   2\sum_{\substack{\emptyset\ne C\subseteq[s]\\j\in C}}
   \theta_Ce^{R(C)}\right)V_j^\pi(f).
 \label{eq:core-collected-variance-bound}
\end{align}
Fix $j\in[s]$. By \eqref{eq:core-theta-component-bound}, we have
\begin{align}
 &\quad 2\sum_{\substack{\emptyset\ne C\subseteq[s]\\j\in C}}
   \theta_Ce^{R(C)}
\le 2\sum_{\substack{\emptyset\ne C\subseteq[s]\\j\in C}}
   \mathbb P_\zeta\left(C\subseteq\mathcal A_j\right)e^{R(C)}
 =2\mathbb E_\zeta\!\left[
   \sum_{\substack{\emptyset\ne C\subseteq[s]\\j\in C}}
   \id{C\subseteq\mathcal A_j}e^{R(C)}\right]
 \notag\\
 &=2\mathbb E_\zeta\!\left[
   \sum_{\substack{C\subseteq\mathcal A_j\\j\in C}}e^{R(C)}\right]
   \le2\mathbb E_\zeta\!\left[
   2^{|\mathcal A_j|-1}e^{R\left(\mathcal A_j\right)}\right]=\mathbb E_\zeta\!\left[
   2^{|\mathcal A_j|}e^{R\left(\mathcal A_j\right)}\right]
 \le\max_{i\in[s]}\mathbb E_\zeta\!\left[
   2^{|\mathcal A_i|}e^{R\left(\mathcal A_i\right)}\right].
 \label{eq:core-coefficient-bound}
\end{align}
Indeed, for each realization of $\mathcal A_j$, exactly
$2^{|\mathcal A_j|-1}$ subsets of $\mathcal A_j$ contain $j$, and the
definition of $R$ in \eqref{eq:core-swap-oscillation} implies
$R(C)\le R(\mathcal A_j)$ whenever
$C\subseteq\mathcal A_j$.
Substitution into \eqref{eq:core-collected-variance-bound} proves
\eqref{eq:core-swap-tensorization}.
\end{proof}

\subsection{The exploration bound and the spectral gap}
\label{subsec:core-exploration}
Recall that $\Gamma=(\gamma_1,\ldots,\gamma_\ell)$ is the ordered polymer
tuple whose associated data $U_\Gamma,K_\Gamma,a_\Gamma,Y_\Gamma$ are
defined in \eqref{eq:core-cluster-data}.
For each tuple $\Gamma$, let $\vbl\left(U_\Gamma\right)$ denote the set
of all variables appearing in the original clauses $C_a$ with
$a\in U_\Gamma$, including both good and bad variables.
Define its \emph{geometric scope} to be the set of core-component
indices touched by these clauses:
\begin{equation}
 J_\Gamma=\{i\in[s]:B_i\cap\vbl\left(U_\Gamma\right)\ne\emptyset\}.
 \label{eq:core-geometric-scope}
\end{equation}
Here $\sigma_{J_\Gamma}=\left(\sigma_i\right)_{i\in J_\Gamma}$ denotes the tuple
of assignments to the core components indexed by $J_\Gamma$.
Equivalently, it is the restriction of $\sigma$ to
$\bigcup_{i\in J_\Gamma}B_i$.
The value of $Y_\Gamma(\sigma)$ defined in \eqref{eq:core-cluster-data}
is determined by this restricted assignment.
Indeed, each factor $\chi_a(\sigma)$ in
$Y_\Gamma(\sigma)=\prod_{a\in U_\Gamma}\chi_a(\sigma)$ depends only on
the bad variables in $C_a$. By \eqref{eq:core-geometric-scope}, all
these variables belong to $\bigcup_{i\in J_\Gamma}B_i$.
We write
$Y_\Gamma\left(\sigma_{J_\Gamma}\right)$ for that value.

Every original clause meets at most one core component: all its bad
variables are mutually adjacent in $\G_\Phi$. Therefore
\begin{equation}
 |J_\Gamma|\le |U_\Gamma|\le K_\Gamma.
 \label{eq:core-scope-size}
\end{equation}
In particular, an interaction involving at least two components has
$K_\Gamma\ge2$. Retaining this lower bound in
\eqref{eq:core-cluster-root} gives, for every clause label $a$,
\begin{equation}
 \sum_{\substack{\Gamma:a\in U_\Gamma\\|J_\Gamma|\ge2}}
       |a_\Gamma|
 \le e^{-2\tau}
       \sum_{\Gamma:a\in U_\Gamma}|a_\Gamma|e^{\tau K_\Gamma}
 \le\rho_0.
 \label{eq:core-multiblock-root}
\end{equation}
The factor $e^{-2\tau}$ is essential for the improved running-time
exponent; bounding this sum only by $\rho$ would lose that precision.

Fix an arbitrary finite truncation set
$\Lambda\subseteq\{\Gamma:|J_\Gamma|\ge2\}$ with
$|\Lambda|<\infty$.
The finite set $\Lambda$ indexes the candidate labelled hyperedges on
vertex set $[s]$: the candidate with label $\Gamma\in\Lambda$ has vertex
set $J_\Gamma$. The random hypergraph $\zeta$ from
Lemma~\ref{lem:core-swap-comparison} is obtained by retaining these
candidates independently. Equivalently, we identify $\zeta$ with the
random subset
$\{\Gamma\in\Lambda:\zeta_\Gamma=1\}$,
where $\zeta_\Gamma$ is the indicator that the candidate labelled
$\Gamma$ is retained, with
$\mathbb P(\zeta_\Gamma=1)=1-e^{-2|a_\Gamma|}$ by
\eqref{eq:core-zeta-law}.
Define
\[
 R(A)=\sum_{\substack{\Gamma\in\Lambda\\
                       J_\Gamma\cap A\ne\emptyset}}|a_\Gamma|.
\]
For any $j\in[s]$, recall that $\mathcal A_j$ is the vertex set of the
connected component of $\zeta$ containing $j$.
The main result of this subsection is the following lemma.

\begin{lemma}\label{lem-sizeajraj}
$\max_j\mathbb E_\zeta
       \left[2^{|\mathcal A_j|}e^{R\left(\mathcal A_j\right)}\right]\le 2n^{\gamma_{\mathrm{mix}}}$.
\end{lemma}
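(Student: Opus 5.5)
We fix $j\in[s]$ and explore the component $\mathcal A_j$ of $\zeta$ by a breadth-first revealing process.

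\textbf{Exploration.} Start from the active set $\{j\}$. When a vertex $i$ is processed, reveal every not-yet-examined candidate $\Gamma\in\Lambda$ whose scope satisfies $i\in J_\Gamma$. Each such $\Gamma$ is retained independently with probability $1-e^{-2|a_\Gamma|}\le 2|a_\Gamma|$. If retained, all of $J_\Gamma$ joins the component.

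\textbf{Size control through cluster sizes.} Let $T$ be the total $\sum K_\Gamma$ over retained (open) candidates examined during the exploration. Every retained $\Gamma$ adds at most $|J_\Gamma|-1\le K_\Gamma$ new blocks by \eqref{eq:core-scope-size}, so $|\mathcal A_j|\le 1+T$. For $R(\mathcal A_j)$, every $\Gamma$ with $J_\Gamma\cap\mathcal A_j\neq\emptyset$ contains a clause $a\in U_\Gamma$ whose variables meet some $B_i$ with $i\in\mathcal A_j$. Hence \eqref{eq:core-multiblock-root} gives
\[
R(\mathcal A_j)\le \rho_0\, I\Bigl(\textstyle\bigcup_{i\in\mathcal A_j}B_i\Bigr).
\]
The union $W=\bigcup_{i\in\mathcal A_j}B_i$ need not be connected in $\G_\Phi$. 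So I would apply \eqref{eq:core-incidence-input} to each connected block, or better to each $B_i$, and use the bound $I(B_i)\le 4\ln n/c_k$ from \Cref{lem:core-feasible-assignment}. This yields
\[
R(\mathcal A_j)\le \frac{4\rho_0\ln n}{c_k}\,|\mathcal A_j|.
\]
The product $2^{|\mathcal A_j|}e^{R(\mathcal A_j)}$ is therefore at most $\bigl(2n^{4\rho_0/c_k}\bigr)^{1+T}$, and it suffices to bound $\mathbb E[\theta^{T}]$ for $\theta=2n^{4\rho_0/c_k}$.

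\textbf{Exponential moment of $T$.} This is the main obstacle. The expected added weight when processing block $i$ is
\[
\sum_{\Gamma\ni i}2|a_\Gamma|\,\theta^{K_\Gamma}.
\]
Since $\theta^{K}\le e^{\tau K}$ only when $\theta\le e^{\tau}$, which holds for moderate $n$ but fails in general, I would split the factor. Write $2^{K}\le e^{\tau K/2}$, using $\tau\ge1$ appropriately. For the $n$-dependent part, use that a cluster touching block $i$ must contain a clause incident to $B_i$. Summing \eqref{eq:core-multiblock-root} over the at most $4\ln n/c_k$ such clauses gives a per-block expected mass of order $\rho_0\ln n/c_k$.

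A branching-process comparison with offspring generating function bounded this way then shows
\[
\mathbb E\bigl[e^{tT}\bigr]\le \exp\Bigl(O\bigl(\rho_0 (\ln n)/c_k\bigr)\Bigr)=n^{O(\rho_0/c_k)}
\]
for a fixed $t$ of order one. The supercritical danger is avoided because each block's offspring mass is $O(\rho_0\ln n)$, which is smaller than one since $\rho_0$ is tiny relative to $1/\ln n$? That is false for large $n$. So the honest route is instead to use \eqref{eq:core-component-growth-input}: bound the number of blocks by the high-degree variables they contain, and exploit that the clauses incident to explored blocks number at least $\Delta|W\cap\V^{\high}|/k$.

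With this, each newly added block costs a factor $e^{-\Omega(\tau)}$ per retained cluster but gains only $O(\rho_0\ln n)$ total weight. Tracking constants, I aim for
\[
\mathbb E\bigl[2^{|\mathcal A_j|}e^{R(\mathcal A_j)}\bigr]\le 2\exp\bigl(64\rho_0\ln n/c_k\bigr)=2n^{\gamma_{\mathrm{mix}}}.
\]
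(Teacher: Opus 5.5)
Your proposal has a genuine gap at exactly the step you flag as the main obstacle, and the final display is asserted rather than derived. The trouble starts with the per-block estimate $R(\mathcal A_j)\le \frac{4\rho_0\ln n}{c_k}|\mathcal A_j|$. It pays the additive $4\ln n$ of \eqref{eq:core-incidence-input} once for every block, which forces you to bound $\mathbb E[\theta^{T}]$ with $\theta=2n^{4\rho_0/c_k}$, a tilt that grows with $n$. The only decay available for clusters is \eqref{eq:core-cluster-root}, which controls $\sum|a_\Gamma|e^{\tau K_\Gamma}$. Once $\theta>e^{\tau}$ (this happens for every fixed $k,\lambda$ as $n\to\infty$), the sum $\sum 2|a_\Gamma|\theta^{K_\Gamma}$ cannot be bounded from the available estimates, and no branching comparison rescues this. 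Your fallback, using \eqref{eq:core-component-growth-input} to get $I(W)\ge\frac{2\Delta}{k^2}|W|$, is the right ingredient but is useless on its own. It must be paired with an \emph{upper} bound on $I(W)$ in which $\ln n$ appears only once, and you have none because, as you observe, $W$ is not connected in $\G_\Phi$.

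The missing step is \Cref{lem-winlq-wplust}. Each open cluster has $\H_\Psi[U_\Gamma]$ connected (\Cref{lem-core-cluster-connectivity}). Picking one shared good variable per edge of a spanning tree of $U_\Gamma$ therefore joins all blocks of $J_\Gamma$ through at most $|U_\Gamma|-1\le K_\Gamma$ extra variables. Adjoining these along the exploration produces a connected $L\supseteq W$ with $|L|\le|W|+T$. A single application of \eqref{eq:core-incidence-input} to $L$, combined with $|W|\le\frac{k^2}{2\Delta}I(W)$, gives $I(W)\le (C_{\INC}T+4\ln n)/c_k$. Hence $e^{R(\mathcal A_j)}\le n^{4\rho_0/c_k}e^{(\rho_0C_{\INC}/c_k)T}$, where the coefficient of $T$ is an absolute tiny constant.

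Consequently the tilt on $T$ only needs to be $e^{T}$, and no subcriticality is needed. Explore by good clauses, as the paper does, so that each step reveals only labels containing one fixed clause. Using $K_\Gamma\ge2$, $\tau\ge1$, $e^{K}-1\le e^{2}e^{-2\tau}e^{\tau K}$ and \eqref{eq:core-cluster-root}, each step then has conditional moment generating function at most $e^{2e^2\rho_0}$. Thus $e^{T-15\rho_0Q}$ is a supermartingale (\Cref{lem:core-exploration-mgf}), and the compensator is charged back via $Q\le I(W)$ to the same bound on $I(W)$.

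Your block-by-block exploration would also work with compensator $15\rho_0 I(B_i)$ per processed block. This is because each clause meets at most one block, so $\sum_{i\in\mathcal A_j}I(B_i)=I(W)$.

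Altogether $\ln\bigl(2^{|\mathcal A_j|}e^{R(\mathcal A_j)}\bigr)+15\rho_0Q\le\ln2+T+\gamma_{\mathrm{mix}}\ln n$, which yields the lemma after exponentiating and taking expectations.
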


We identify
$\mathcal A_j$ by an ordinary component search, using good clauses to
locate the hyperedges that touch the part already found.
Start with the single block index $j$. Choose a good clause $a$ that
contains a variable from one of the blocks found so far and has not been
used before. For every label $\Gamma$ not checked earlier
with $a\in U_\Gamma$, check whether $\zeta_\Gamma=1$. If it is, the
hyperedge $J_\Gamma$ is present in $\zeta$, so add all indices in
$J_\Gamma$ to the set found so far; if $\zeta_\Gamma=0$, add nothing.
Repeat this until every good clause incident to the blocks found so far
has been used. Thus the procedure checks only hyperedges that touch the
growing component, although in the worst case this may include all
labels in $\Lambda$.

We use a fixed deterministic order whenever several clauses are
available. Let $a_t$ be the clause chosen at the $t$-th step,
$\mathcal R_t$ the block indices found before that step, and
$\mathcal E_t$ the interaction labels checked before that step. Put
\[
 \mathcal R_0=\{j\},\qquad
 \mathcal E_0=\emptyset.
\]
The labels checked for the first time at step $t$ are
\begin{equation}
 \mathcal D_t=
 \left\{\Gamma\in\Lambda\setminus\mathcal E_t:
   a_t\in U_\Gamma\right\}.
 \label{eq:core-exploration-new-labels}
\end{equation}
After checking their indicators, set
\[
 \begin{aligned}
  \mathcal E_{t+1}&=\mathcal E_t\cup\mathcal D_t,\qquad \qquad
  \mathcal R_{t+1}&=\mathcal R_t\cup 
    \bigcup_{\substack{\Gamma\in\mathcal D_t\\
                        \zeta_\Gamma=1}}J_\Gamma.
 \end{aligned}
\]
Let $Q$ be the number of steps before the procedure stops, and define
\begin{align}\label{eq-defpq}
  \mathcal P_Q \triangleq \{a_0,\ldots,a_{Q-1}\}.
\end{align}
No clause is chosen twice, so
$Q=|\mathcal P_Q|\le|\C^{\good}|\le m$.
Actually, the final set $\mathcal R_Q$ is exactly $\mathcal A_j$.
\begin{lemma}\label{lem-rqequalaj}
$\mathcal R_Q=\mathcal A_j$.
\end{lemma}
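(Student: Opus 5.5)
We prove the two inclusions $\mathcal R_Q\subseteq\mathcal A_j$ and $\mathcal A_j\subseteq\mathcal R_Q$ separately. Throughout, a clause $a$ is \emph{incident} to a set $\mathcal R$ of block indices if $\vbl(a)$ meets some $B_i$ with $i\in\mathcal R$.

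\emph{Soundness, $\mathcal R_Q\subseteq\mathcal A_j$.} We use induction on $t$ to show that $\mathcal R_t\subseteq\mathcal A_j$ and that every index in $\mathcal R_t$ is joined to $j$ by open hyperedges of $\zeta$. The base case is $\mathcal R_0=\{j\}$. For the step, $\mathcal R_{t+1}$ adds $J_\Gamma$ only when $\Gamma\in\mathcal D_t$ has $\zeta_\Gamma=1$. The clause $a_t$ is incident to a block $i\in\mathcal R_t$, and $a_t\in U_\Gamma$, so $i\in J_\Gamma$ by \eqref{eq:core-geometric-scope}. Thus the open hyperedge $J_\Gamma$ meets the already-connected set $\mathcal R_t$. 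Its whole vertex set therefore lies in the component of $j$, i.e.\ in $\mathcal A_j$.

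\emph{Completeness, $\mathcal A_j\subseteq\mathcal R_Q$.} It suffices to show that $\mathcal R_Q$ is closed in $\zeta$: every open hyperedge $\Gamma\in\Lambda$ with $J_\Gamma\cap\mathcal R_Q\ne\emptyset$ satisfies $J_\Gamma\subseteq\mathcal R_Q$. Since $j\in\mathcal R_Q$ and components are the minimal closed sets containing $j$, this yields $\mathcal A_j\subseteq\mathcal R_Q$.

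So take such a $\Gamma$ and some $i\in J_\Gamma\cap\mathcal R_Q$. By \eqref{eq:core-geometric-scope} there is a clause $a\in U_\Gamma\subseteq\C^{\good}$ with $\vbl(a)\cap B_i\ne\emptyset$, so $a$ is a good clause incident to $\mathcal R_Q$. The stopping rule says the procedure halts only once every good clause incident to the blocks found so far has been used, so $a=a_t$ for some $t<Q$. Then $\Gamma$ was checked at step $t$ or earlier: either $\Gamma\in\mathcal E_t$ already, or $\Gamma\in\mathcal D_t$. In either case, since $\zeta_\Gamma=1$ and $\mathcal R$ is monotone, $J_\Gamma\subseteq\mathcal R_{t'+1}\subseteq\mathcal R_Q$, where $t'$ is the step at which $\Gamma$ was first checked.

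The main subtlety is the termination bookkeeping. Because $\mathcal R$ grows, a clause not incident at an earlier step may become incident later. The stopping condition must therefore be read with respect to the final set $\mathcal R_Q$, which is exactly what the procedure's description gives. Termination itself holds because no clause is chosen twice and $\C^{\good}$ is finite, so $Q\le m$. Finally, a label is checked once and never rechecked, so the indicators $\zeta_\Gamma$ used are consistent. Together the two inclusions give $\mathcal R_Q=\mathcal A_j$.
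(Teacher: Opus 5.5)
Your proof is correct and follows the paper's argument essentially step for step. Soundness is an induction using that $a_t$ meets a block of $\mathcal R_t$ and that $a_t\in U_\Gamma$, which forces $J_\Gamma\cap\mathcal R_t\neq\emptyset$; completeness uses the stopping rule to show that every retained hyperedge meeting $\mathcal R_Q$ lies inside $\mathcal R_Q$. Your remarks that the stopping condition is read relative to the final set $\mathcal R_Q$, and that each label is checked only once, are accurate and make explicit what the paper uses implicitly.
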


Recall the definition of $K_\Gamma$ in \eqref{eq:core-cluster-data}. 
For $0\le t\le Q$, define
\[
 T_t=\sum_{\Gamma\in\mathcal E_t}K_\Gamma\zeta_\Gamma,
 \qquad T=T_Q.
\]
We have the following bound on $T$.
\begin{lemma} \label{lem:core-exploration-mgf}
$\mathbb E e^{T-15\rho_0 Q}\le1$.
\end{lemma}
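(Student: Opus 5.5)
The plan is to show that the process $M_t \triangleq e^{T_t-15\rho_0 t}$ is a supermartingale with respect to the exploration filtration. Here $\mathcal F_t$ is generated by the indicators $\zeta_\Gamma$ for $\Gamma\in\mathcal E_t$. We then apply optional stopping at the bounded stopping time $Q\le m$. The deterministic rule for choosing $a_t$ makes $a_t$, and hence $\mathcal D_t$, $\mathcal F_t$-measurable. Every label in $\mathcal D_t$ lies outside $\mathcal E_t$ and has not yet been revealed. Since the $\zeta_\Gamma$ are independent, these labels are independent of $\mathcal F_t$. To handle the random stopping time cleanly, we extend the process past $Q$ by setting $\mathcal D_t=\emptyset$, so that $M_{t\wedge Q}$ is defined for every $t\le m$.

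The key step is the one-step bound. Conditional on $\mathcal F_t$ with $t<Q$,
\[
\mathbb E\bigl[e^{T_{t+1}-T_t}\mid\mathcal F_t\bigr]
=\prod_{\Gamma\in\mathcal D_t}\Bigl(1+\bigl(1-e^{-2|a_\Gamma|}\bigr)\bigl(e^{K_\Gamma}-1\bigr)\Bigr)
\le\exp\Bigl(\sum_{\Gamma\in\mathcal D_t}2|a_\Gamma|\bigl(e^{K_\Gamma}-1\bigr)\Bigr).
\]
Every $\Gamma\in\mathcal D_t$ satisfies $a_t\in U_\Gamma$ and $|J_\Gamma|\ge2$, hence $K_\Gamma\ge2$. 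We bound $e^{K_\Gamma}-1\le e^{K_\Gamma}$ and use $\tau\ge1$, which gives $e^{K_\Gamma}\le e^{\tau K_\Gamma}e^{-(\tau-1)K_\Gamma}\le e^{\tau K_\Gamma}e^{-2(\tau-1)}$. By \eqref{eq:core-cluster-root}, the exponent is then at most $2e^{2}e^{-2\tau}\rho=2e^2\rho_0\le 15\rho_0$, since $2e^2\approx14.78$. This is the step where the constant $15$ comes from. It also explains why we retain the $K_\Gamma\ge2$ gain, namely to convert the factor $\rho$ into $\rho_0$.

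The inequality above yields $\mathbb E[M_{(t+1)\wedge Q}\mid\mathcal F_t]\le M_{t\wedge Q}$. Iterating $m$ times, or invoking optional stopping, gives $\mathbb E M_Q\le M_0=1$, because $T_0=0$. The main obstacle is measurability. We must confirm that the event $\{t<Q\}$ and the next clause $a_t$ are determined by the revealed labels. This holds because $\mathcal R_t$ is a function of $\mathcal F_t$, and the set of unused good clauses incident to $\mathcal R_t$ is determined by the deterministic formula together with $\mathcal R_t$.
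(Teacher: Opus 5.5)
Your proposal is correct and follows essentially the same route as the paper. Both arguments use a stopped exponential supermartingale over the exploration filtration. The one-step bound comes from $p_\Gamma\le 2|a_\Gamma|$, $1+u\le e^u$, the estimate $e^{K_\Gamma}-1\le e^{2}e^{-2\tau}e^{\tau K_\Gamma}$ for $K_\Gamma\ge2$ and $\tau\ge1$, and the cluster-root bound \eqref{eq:core-cluster-root}, which together give $2e^2\rho_0<15\rho_0$; the argument then concludes via the deterministic bound $Q\le m$.
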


For $A\subseteq[s]$, define the union of the corresponding core blocks by
\begin{equation}
 W_A=\bigcup_{i\in A}B_i.
 \label{eq:core-block-union}
\end{equation}
For simplicity, let $W=W_{\mathcal A_j}$.
Recall the definitions of $\Delta$ and $I(W)$ in \eqref{eq:core-parameters} and \eqref{eq:incident-clause-count}, respectively. 
Then we have the following lemmas.

\begin{lemma} \label{lem:core-oscillation-incidence} 
For $A\subseteq[s]$, we have $R(A)\le\rho_0 I(W_A)$.
\end{lemma}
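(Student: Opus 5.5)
Fix $A\subseteq[s]$ and write $W_A=\bigcup_{i\in A}B_i$. The plan is to charge each label $\Gamma\in\Lambda$ with $J_\Gamma\cap A\ne\emptyset$ to a good clause that lies in $U_\Gamma$ and meets $W_A$. Then we sum over these clauses using the multiblock root bound \eqref{eq:core-multiblock-root}.

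\textbf{Step 1 (charging).} Take $\Gamma\in\Lambda$ and $i\in J_\Gamma\cap A$. By the definition \eqref{eq:core-geometric-scope} of the geometric scope, $B_i\cap\vbl(U_\Gamma)\ne\emptyset$. Hence some clause $a\in U_\Gamma\subseteq\C^{\good}$ has $\vbl(a)\cap B_i\ne\emptyset$. In particular $\vbl(a)\cap W_A\ne\emptyset$, so $a$ is counted in $I(W_A)$ as defined in \eqref{eq:incident-clause-count}. This gives the pointwise bound
\[
 \id{J_\Gamma\cap A\ne\emptyset}\le\sum_{a\in\C^{\good}:\,\vbl(a)\cap W_A\ne\emptyset}\id{a\in U_\Gamma}.
\]

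\textbf{Step 2 (summation).} Multiply by $|a_\Gamma|$ and sum over $\Gamma\in\Lambda$. All terms are nonnegative, so we may interchange the sums:
\[
 R(A)\le\sum_{a\in\C^{\good}:\,\vbl(a)\cap W_A\ne\emptyset}\ \sum_{\substack{\Gamma\in\Lambda\\ a\in U_\Gamma}}|a_\Gamma|.
\]
Every $\Gamma\in\Lambda$ satisfies $|J_\Gamma|\ge2$, so each inner sum is at most the left side of \eqref{eq:core-multiblock-root}, which is at most $\rho_0$. The number of outer indices is at most $I(W_A)$, and therefore $R(A)\le\rho_0 I(W_A)$.

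\textbf{Remarks.} When $A=\emptyset$ both sides vanish and there is nothing to prove. The only point needing care is that $J_\Gamma$ counts components touched by the \emph{original} clauses $C_a$, including their bad variables, while $I(\cdot)$ also counts syntactic occurrences in original clauses. The two notions therefore match, and the charged clause $a$ is a good clause, so \eqref{eq:core-multiblock-root} applies to it. Finiteness of $\Lambda$ is not needed here; nonnegativity alone justifies the rearrangement.
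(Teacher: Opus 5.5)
Your proposal is correct and follows essentially the same route as the paper: each $\Gamma$ with $J_\Gamma\cap A\ne\emptyset$ is charged to a clause $a\in U_\Gamma$ meeting $W_A$. The sums are then interchanged, each inner sum is bounded by $\rho_0$ via \eqref{eq:core-multiblock-root} (using $\Lambda\subseteq\{\Gamma:|J_\Gamma|\ge2\}$), and the number of such clauses is bounded by $I(W_A)$.
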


\begin{lemma}\label{lem-wleksquareovertwodeltaiw}
$|W|\le \frac{k^2}{2\Delta}I(W)$.
\end{lemma}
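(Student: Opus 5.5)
The statement to prove is $|W|\le \frac{k^2}{2\Delta}I(W)$, where $W=W_{\mathcal A_j}$ is a union of whole core blocks. The plan is to reuse the degree-counting argument from the proofs of \Cref{lem:component-growth-size} and \Cref{lem:core-feasible-assignment}, now applied to an arbitrary union of components rather than a single one. Since $\mathcal A_j$ is nonempty (it contains $j$), $W=W_J$ for the nonempty index set $J=\mathcal A_j\subseteq[s]$. Hence the first deterministic input \eqref{eq:core-component-growth-input} applies directly and gives
\[
 |W|\le\frac k2\,\bigl|W\cap\V^{\high}\bigr|.
\]

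The next step bounds $|W\cap\V^{\high}|$ by incident clauses. Every $v\in W\cap\V^{\high}$ has $\deg(v)\ge\Delta$ by \eqref{eq:core-component-growth-input}. We double count pairs $(v,a)$ with $v\in W\cap\V^{\high}$ and $a$ a clause label containing $v$. Each such $a$ satisfies $\vbl(a)\cap W\neq\emptyset$, so it is counted in $I(W)$ from \eqref{eq:incident-clause-count}. Moreover, each such $a$ contains at most $k$ distinct variables, so it contributes at most $k$ pairs. Therefore
\[
 \Delta\,\bigl|W\cap\V^{\high}\bigr|
 \le\sum_{v\in W\cap\V^{\high}}\deg(v)
 \le k\,I(W).
\]
Combining the two displays gives $|W|\le\frac k2\cdot\frac{k}{\Delta}I(W)=\frac{k^2}{2\Delta}I(W)$.

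The only point that needs care is that $\deg$ and $I(\cdot)$ both use syntactic occurrences of clause labels. A variable repeated inside one clause therefore still counts that label once in $\deg(v)$. This is exactly what makes the bound of $k$ pairs per clause valid. I expect no real obstacle: the argument does not need connectivity of $W$, because the incidence bound \eqref{eq:core-incidence-input} is not used here.
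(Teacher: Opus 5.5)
Your proof is correct and matches the paper's argument. You apply \eqref{eq:core-component-growth-input} to get $|W|\le\frac k2|W\cap\V^{\high}|$, and double count degrees to get $\Delta|W\cap\V^{\high}|\le kI(W)$, which is exactly how the paper proceeds.
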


\begin{lemma}\label{lem-winlq-wplust}
There is a connected variable set in $\G_\Phi$ containing $W$ and having
at most $|W|+T$ variables.
\end{lemma}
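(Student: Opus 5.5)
Every open interaction found during the exploration comes with a connected set of original clauses. We will build the connected variable set by starting from $W$ and adding variables from these clauses.

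First, each core component $B_i$ is connected in $\G_\Phi$ by definition. So $W=W_{\mathcal A_j}$ is a union of connected blocks, and the task is to join them. By Lemma~\ref{lem-rqequalaj}, $\mathcal A_j=\mathcal R_Q$. Every index $i\ne j$ in $\mathcal R_Q$ was added at some step $t$ through a label $\Gamma\in\mathcal D_t$ with $\zeta_\Gamma=1$ and $i\in J_\Gamma$. We will show by induction on $t$ that there is a connected variable set $M_t\supseteq W_{\mathcal R_t}$ with
\[
 |M_t|\le |W_{\mathcal R_t}|+T_t .
\]
Taking $t=Q$ then gives the claim, since $T=T_Q$.

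For the base case $t=0$, take $M_0=B_j$ and use $T_0=0$.

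For the inductive step, fix an open label $\Gamma\in\mathcal D_t$. It satisfies $a_t\in U_\Gamma$, and $a_t$ contains a variable of some block already in $\mathcal R_t$, hence of $M_t$. By Lemma~\ref{lem-core-cluster-connectivity}, $\H_\Psi[U_\Gamma]$ is connected. Consecutive clauses along its spanning tree share a good variable, so $\vbl(U_\Gamma)$ is connected in $\G_\Phi$. Every block $B_i$ with $i\in J_\Gamma$ meets $\vbl(U_\Gamma)$ by the definition of $J_\Gamma$.

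Rather than adding all of $\vbl(U_\Gamma)$, which could cost up to $k|U_\Gamma|$ variables, we add only a few connectors. Order the clauses of $U_\Gamma$ along a spanning tree of $\H_\Psi[U_\Gamma]$ rooted at $a_t$. For each clause we add at most one new good variable, namely a variable it shares with its parent. Every bad variable of a clause lies in a single core block, and that block is added wholesale as part of $W_{\mathcal R_{t+1}}$, so it is paid for by the $|W|$ term. The root $a_t$ needs a connector inside it joining its variable in $M_t$ to the other variables we use. Within one clause all variables are pairwise adjacent in $\G_\Phi$, so a single added variable per clause suffices to attach both the clause's bad block and its child connectors.

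This charges at most $|U_\Gamma|\le K_\Gamma$ new non-block variables for the open label $\Gamma$, and by definition $T_{t+1}-T_t=\sum_{\Gamma\in\mathcal D_t,\,\zeta_\Gamma=1}K_\Gamma$. Closed labels add nothing to $T$ and nothing to $\mathcal R$. So setting
\[
 M_{t+1}=M_t\cup\bigcup_{\Gamma\text{ open}}\bigl(\text{connectors of }\Gamma\bigr)\cup W_{\mathcal R_{t+1}}
\]
keeps the set connected and gives $|M_{t+1}|\le|W_{\mathcal R_{t+1}}|+T_{t+1}$.

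The main obstacle is this connector bookkeeping. We have to check two things. First, each clause of $U_\Gamma$ genuinely needs only one extra good variable. This holds because clauses form cliques in $\G_\Phi$, so a clause joined to its parent through a shared good variable $v$ makes all its own variables adjacent to $v$. Second, a bad-variable endpoint always lies in a block that has been added. This holds because a bad variable of a clause in $U_\Gamma$ belongs to $B_i$ with $i\in J_\Gamma\subseteq\mathcal R_{t+1}$.

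With $a_t$ as the root, the root's own attachment to $M_t$ is free: it costs either nothing or one variable, and that variable is covered by $|U_\Gamma|$. Hence the final set $M_Q$ is connected, contains $W$, and has at most $|W|+T$ variables.
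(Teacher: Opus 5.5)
Your proof is correct and follows essentially the same route as the paper. Both take a spanning tree of $\H_\Psi[U_\Gamma]$ with one shared good connector per tree edge, absorb bad variables into the blocks $W_{J_\Gamma}\subseteq W$, use that each clause is a clique in $\G_\Phi$, and induct over exploration steps using that $a_t$ meets $W_{\mathcal R_t}$. Rooting the tree at $a_t$ also shows the root needs no extra variable, which recovers the paper's slightly sharper count $|U_\Gamma|-1\le K_\Gamma$ per open label.
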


The proofs of Lemmas \ref{lem-rqequalaj}--\ref{lem-winlq-wplust} are provided in the appendix.
Now we can prove \Cref{lem-sizeajraj}.

\begin{proof}[Proof of \Cref{lem-sizeajraj}]
Every queried clause $a_t$ meets $W_{\mathcal R_t}\subseteq W$, so
\begin{equation}
 Q=|\mathcal P_Q|\le I(W).
 \label{eq:core-Q-IW}
\end{equation}
By \Cref{lem-winlq-wplust}, there is a connected variable set $L$
containing $W$ with $|L|\le |W|+T$. Applying the connected-incidence
bound \eqref{eq:core-incidence-input} to $L$ and then
\Cref{lem-wleksquareovertwodeltaiw} gives
\[
 I(W)\le I(L)\le C_{\INC}(|W|+T)+4\ln n
 \le\frac{C_{\INC}k^2}{2\Delta}I(W)+C_{\INC}T+4\ln n.
\]
For $k\ge30$, the parameter definitions give
\begin{equation}
 \frac{C_{\INC}k^2}{2\Delta}=\frac{25}{3k^3}\le\frac1{100},
 \qquad
 C_{\INC}\rho=\frac{k^2\lambda}{25\overline D}
 \le\frac1{150k^6}\le\frac1{100}.
 \label{eq:core-numerical-slack}
\end{equation}
Thus the first coefficient is $1-c_k$, and
\begin{equation}
 I(W)\le\frac{C_{\INC}T+4\ln n}{c_k}.
 \label{eq:core-IW-bound}
\end{equation}

The exploration starts from one block. Each checked open label adds at
most $|J_\Gamma|\le K_\Gamma$ new blocks, by
\eqref{eq:core-scope-size}. Consequently,
\[
 |\mathcal A_j|\le1+T.
\]
This counts discovered components directly, without charging the size
of the starting block. Using \Cref{lem:core-oscillation-incidence},
\eqref{eq:core-Q-IW}, and \eqref{eq:core-IW-bound}, we obtain
\begin{align}
 \ln\!\left(2^{|\mathcal A_j|}e^{R(\mathcal A_j)}\right)
       +15\rho_0Q
 &\le (\ln2)(1+T)+16\rho_0I(W)\notag\\
 &\le\ln2+
   \left(\ln2+\frac{16\rho_0C_{\INC}}{c_k}\right)T
   +\frac{64\rho_0}{c_k}\ln n\notag\\
 &\le\ln2+T+\gamma_{\mathrm{mix}}\ln n.
 \label{eq:core-direct-charge}
\end{align}
For the last inequality, $\rho_0\le\rho$ and
\eqref{eq:core-numerical-slack} imply
$\ln2+16\rho_0C_{\INC}/c_k\le\ln2+16/99<1$.
Exponentiating \eqref{eq:core-direct-charge} and applying
\Cref{lem:core-exploration-mgf} yields
\[
 \mathbb E_\zeta\!\left[
   2^{|\mathcal A_j|}e^{R(\mathcal A_j)}\right]
 \le2n^{\gamma_{\mathrm{mix}}}\mathbb E_\zeta e^{T-15\rho_0Q}
 \le2n^{\gamma_{\mathrm{mix}}}.
\]
The bound is uniform in $j$ and in the finite truncation $\Lambda$.
\end{proof}

\subsection{Proof of Theorem~\ref{thm:core-spectral-gap}}

Now we can prove the main theorem of this section.
\begin{proof}[Proof of Theorem~\ref{thm:core-spectral-gap}]
Assume $s\ge1$. 
Recall that a polymer $\gamma$ is a nonempty set of clause labels
$\gamma\subseteq\C^{\good}$ whose induced subgraph $\H_\Psi[\gamma]$
is connected. A cluster label $\Gamma$ is an ordered tuple
$\Gamma=(\gamma_1,\ldots,\gamma_\ell)$ of polymers, with repetitions
allowed and finite length $\ell\ge1$.
Let $\mathfrak L$ be the set of all cluster labels, and define
\[
\mathfrak L_{\ge2}=\{\Gamma\in\mathfrak L:|J_\Gamma|\ge2\}.
\]
The clause-label set
$\C^{\good}$ is finite, so it has only finitely many polymers (each
polymer is a nonempty subset of $\C^{\good}$). Consequently, for each
fixed length $\ell$ there are only finitely many possible tuples
$(\gamma_1,\ldots,\gamma_\ell)$. Taking the union over the positive
integer lengths shows that the set of all cluster labels, and hence its
subfamily $\mathfrak L_{\ge2}$, is countable. Choose an increasing
sequence of finite sets $\Lambda_N\subseteq\mathfrak L_{\ge2}$ such that
$\bigcup_N\Lambda_N=\mathfrak L_{\ge2}$.

We partition the terms of the expansion of $\ln Z(\sigma)$ in
\eqref{eq:core-cluster-root}, indexed by $\Gamma\in\mathfrak L$, according
to whether $J_\Gamma$ is empty, a singleton, or contains at least two
indices. The last class is indexed by $\mathfrak L_{\ge2}$.
Absolute convergence in Lemma~\ref{lem:core-cluster-expansion}
justifies this regrouping.
If $J_\Gamma=\emptyset$, none of the original clauses indexed by
$U_\Gamma$ contains a bad variable. Thus $\chi_a(\sigma)=1$ for every
$a\in U_\Gamma$, so $Y_\Gamma(\sigma)=1$ and
$a_\Gamma Y_\Gamma(\sigma)=a_\Gamma$.
If $J_\Gamma=\{i\}$, we abbreviate $Y_\Gamma\left(\sigma_{J_\Gamma}\right)$
as $Y_\Gamma\left(\sigma_i\right)$ and define the local weight
\[
 \ell_i\left(\sigma_i\right)=
 \exp\left(\sum_{\Gamma\in\mathfrak L:J_\Gamma=\{i\}}
                      a_\Gamma Y_\Gamma\left(\sigma_i\right)\right).
\]
Absolute summability in Lemma~\ref{lem:core-cluster-expansion} ensures
that this exponent is finite, so $\ell_i\left(\sigma_i\right)>0$.
Exponentiating the regrouped expansion gives
\[
 \begin{aligned}
 Z(\sigma)
 &=\exp\left(\sum_{\Gamma\in\mathfrak L:J_\Gamma=\emptyset}a_\Gamma\right)
   \left(\prod_{i=1}^s\ell_i\left(\sigma_i\right)\right)
   \exp\left(\sum_{\Gamma\in\mathfrak L_{\ge2}}
                    a_\Gamma Y_\Gamma\left(\sigma_{J_\Gamma}\right)\right).
 \end{aligned}
\]
The first exponential factor is independent of $\sigma$ and cancels
in the ratio $Z(\sigma)/\sum_{\omega\in\Omega}Z(\omega)$ in
\eqref{eq:core-extension-weight}. Hence
\begin{equation}
 \mu^{\bad}(\sigma)\propto
 \left(\prod_{i=1}^s\ell_i\left(\sigma_i\right)\right)
 \exp\left(\sum_{\Gamma\in\mathfrak L_{\ge2}}
                    a_\Gamma Y_\Gamma\left(\sigma_{J_\Gamma}\right)\right).
 \label{eq:core-interaction-representation}
\end{equation}
Define
\begin{equation}
 \pi_N(\sigma)=\frac1{Z_N}
 \left(\prod_{i=1}^s\ell_i(\sigma_i)\right)
 \exp\left(H_N(\sigma)\right),\qquad
 H_N(\sigma)=\sum_{\Gamma\in\Lambda_N}
      a_\Gamma Y_\Gamma(\sigma_{J_\Gamma}),
 \label{eq:core-finite-interaction-measure}
\end{equation}
where $Z_N$ is the sum of the numerator over $\Omega$. Thus, for each
$N$, $\pi_N$ is a strictly positive probability measure on the same finite
state space $\Omega$.

For each $N$, apply the component-swap comparison
\Cref{lem:core-swap-comparison} to $\pi_N$ using the identification
\[
 \mathcal J=\Lambda_N,\qquad
 e=\Gamma,\qquad V_e=J_\Gamma,\qquad
 a_e=a_\Gamma,\qquad Y_e=Y_\Gamma.
\]
Under this identification, the oscillation in
\eqref{eq:core-swap-oscillation} is exactly the quantity $R(A)$ used in
\Cref{lem-sizeajraj}, restricted to labels in $\Lambda_N$. The edge
indicators have the same independent law
$\mathbb P(\zeta_\Gamma=1)=1-e^{-2|a_\Gamma|}$. Applying \Cref{lem-sizeajraj} to the finite label set $\Lambda_N$ gives
\begin{equation}
 \max_{j\in[s]}\mathbb E_\zeta\!\left[
   2^{|\mathcal A_j|}e^{R\left(\mathcal A_j\right)}\right]\le 2n^{\gamma_{\mathrm{mix}}}.
\end{equation}
Applying \Cref{lem:core-swap-comparison} with this bound yields, for every $f:\Omega\to\mathbb R$,
\begin{equation}
 \operatorname{Var}_{\pi_N}(f)
 \le 2n^{\gamma_{\mathrm{mix}}}\sum_{j=1}^s V_j^{\pi_N}(f).
 \label{eq:core-finite-poincare}
\end{equation}

We pass \eqref{eq:core-finite-poincare} to the limit. Write $H$ for the sum of all interaction terms indexed by $\mathfrak L_{\ge2}$. By \Cref{lem:core-cluster-expansion},
\[
 \sup_{\sigma\in\Omega}|H_N(\sigma)-H(\sigma)|
 \le\sum_{\Gamma\in\mathfrak L_{\ge2}\setminus\Lambda_N}|a_\Gamma|
 \longrightarrow 0.
\]
The local weights are fixed and positive. Since $\Omega$ is finite, the unnormalized weights in \eqref{eq:core-finite-interaction-measure} and their sum converge to those in \eqref{eq:core-interaction-representation}. Thus $\pi_N(\sigma)\to\mu^{\bad}(\sigma)$ for every $\sigma$.
For each $j$ the limiting outside marginal $\mu^{\bad}(X_{-j}=y)$ is positive for every feasible $y$, by \eqref{eq:core-feasibility-extension}. Hence the conditional laws given $X_{-j}=y$ also converge pointwise. Both the global variance and the averaged conditional variances are finite sums, so taking the limit in \eqref{eq:core-finite-poincare} gives
\begin{equation}
 \operatorname{Var}_{\mu^{\bad}}(f)
 =\lim_{N\to\infty}\operatorname{Var}_{\pi_N}(f)
 \le 2n^{\gamma_{\mathrm{mix}}}\lim_{N\to\infty}\sum_{j=1}^s V_j^{\pi_N}(f)
 =2n^{\gamma_{\mathrm{mix}}}\sum_{j=1}^s V_j^{\mu^{\bad}}(f).
 \label{eq:core-limit-poincare}
\end{equation}

By the heat-bath Dirichlet-form identity \eqref{eq:core-dirichlet},
\[
 2n^{\gamma_{\mathrm{mix}}}\sum_{j=1}^sV_j^{\mu^{\bad}}(f)
 =2s n^{\gamma_{\mathrm{mix}}}\,\mathcal E_{\K}(f,f).
\]
Thus \eqref{eq:core-limit-poincare} gives
$\operatorname{Var}_{\mu^{\bad}}(f)\le
2s n^{\gamma_{\mathrm{mix}}}\,\mathcal E_{\K}(f,f)$. If $|\Omega|>1$, the Rayleigh--Ritz
characterization in \Cref{sec:preliminary} then yields
\[
 \gap(\K)=\inf_{\operatorname{Var}_{\mu^{\bad}}(f)>0}
       \frac{\mathcal E_{\K}(f,f)}
            {\operatorname{Var}_{\mu^{\bad}}(f)}
       \ge\frac1{2s n^{\gamma_{\mathrm{mix}}}}.
\]
Each of the $s$ blocks is a nonempty, pairwise disjoint subset of the
$n$ variables, so $s\le n$ and consequently $1/(2s n^{\gamma_{\mathrm{mix}}})\ge1/(2n^{1+\gamma_{\mathrm{mix}}})$. If
$|\Omega|=1$, all functions on $\Omega$ are constant, so both sides of
the variance inequality \eqref{eq:core-limit-poincare} vanish. We use the
convention $\gap(\K)=1$.

It remains to prove the total-variation estimate \eqref{eq:core-tv} when $|\Omega|>1$.
For each $j$, let $P_j$ be the conditional-resampling operator, with
transition kernel
\[
 P_j(\sigma,\omega)=\one_{\{\omega_{-j}=\sigma_{-j}\}}
 \mu^{\bad}(X_j=\omega_j\mid X_{-j}=\sigma_{-j}),
 \qquad \sigma,\omega\in\Omega.
\]
View each $P_j$ as an operator on $L^2(\mu^{\bad})$, equipped with the
inner product $\langle f,g\rangle_{\mu^{\bad}}=\mathbb E_{\mu^{\bad}}[fg]$.
For any $g:\Omega\to\mathbb R$, its action is
\[
 (P_jg)(\sigma)=\sum_{\omega\in\Omega}P_j(\sigma,\omega)g(\omega)
 =\mathbb E_{\mu^{\bad}}[g(X)\mid X_{-j}=\sigma_{-j}].
\]
Thus $P_j$ is the orthogonal projection onto the subspace of functions
depending only on $X_{-j}$. In particular, $P_j=P_j^*=P_j^2$, where
$P_j^*$ is the adjoint with respect to this inner product; hence $P_j$
is positive semidefinite. The Markov operator corresponding to $\K$ is
$P=s^{-1}\sum_{j=1}^sP_j$, so it too is self-adjoint and positive
semidefinite. Since $\mu^{\bad}$ has full support on the product space
$\Omega$, each block can be resampled to any of its values with positive
probability. Thus $\K$ is irreducible, and the eigenvalue $1$ is simple.
The gap bound in \eqref{eq:core-gap} then implies that every eigenvalue
of $P$ on the mean-zero subspace lies in $[0,1-1/(2s n^{\gamma_{\mathrm{mix}}})]$. Consequently,
for every mean-zero $g$,
\begin{equation}\label{eq:core-l2-contraction}
 \|P^tg\|_{L^2(\mu^{\bad})}
 \le\left(1-\frac1{2s n^{\gamma_{\mathrm{mix}}}}\right)^t
       \|g\|_{L^2(\mu^{\bad})}.
\end{equation}

Fix $\sigma_0\in\Omega$ and define
\begin{equation}\label{eq:core-initial-density}
 g_0(\sigma)\triangleq\frac{\one_{\{\sigma_0\}}(\sigma)}{\mu^{\bad}(\sigma_0)}-1.
\end{equation}
This is the initial density relative to $\mu^{\bad}$, minus its constant part.
In particular, \eqref{eq:core-initial-density} gives
$\mathbb E_{\mu^{\bad}}[g_0]=1-1=0$, so the contraction
\eqref{eq:core-l2-contraction} applies to $g_0$.
For $t\ge0$, write
$h_t(\omega)=\K^t(\sigma_0,\omega)/\mu^{\bad}(\omega)$ for the density
at time $t$. Since $\K^0$ is the identity kernel,
\[
\begin{aligned}
 h_0(\omega) - 1
 &=\frac{\K^0(\sigma_0,\omega)}{\mu^{\bad}(\omega)} - 1
  =\frac{\one_{\{\omega=\sigma_0\}}}{\mu^{\bad}(\omega)} - 1
  =\frac{\one_{\{\omega=\sigma_0\}}}{\mu^{\bad}(\sigma_0)} - 1 = g_0(\omega).
\end{aligned}
\]
The last equality holds because the indicator vanishes
unless $\omega=\sigma_0$. Detailed balance gives
\[
 h_{t+1}(\omega)
 =\frac{1}{\mu^{\bad}(\omega)}
   \sum_{\sigma\in\Omega}\mu^{\bad}(\sigma)h_t(\sigma)
        \K(\sigma,\omega)
 =\sum_{\sigma\in\Omega}\K(\omega,\sigma)h_t(\sigma)
 =(Ph_t)(\omega).
\]
Since $P$ is a Markov operator, it preserves the constant function $1$,
so $P1=1$. Subtracting this identity from the recurrence above gives
\[
 h_{t+1}-1=Ph_t-P1=P(h_t-1).
\]
Starting from $h_0-1=g_0$ and iterating this recurrence, we obtain
\begin{equation}\label{eq:core-density-evolution}
 h_t-1=P^t(h_0-1)=P^tg_0.
\end{equation}
By \eqref{eq:core-initial-density}, $g_0(\sigma_0)=1/\mu^{\bad}(\sigma_0)-1$, whereas
$g_0(\sigma)=-1$ for every $\sigma\ne\sigma_0$. Therefore,
\begin{equation}\label{eq:core-initial-density-norm}
\begin{aligned}
 \|g_0\|_2^2
 &=\sum_{\sigma\in\Omega}\mu^{\bad}(\sigma)g_0(\sigma)^2=\mu^{\bad}(\sigma_0)
   \left(\frac1{\mu^{\bad}(\sigma_0)}-1\right)^2
   +\sum_{\sigma\ne\sigma_0}\mu^{\bad}(\sigma)\\
 &=\mu^{\bad}(\sigma_0)
   \left(\frac1{\mu^{\bad}(\sigma_0)}-1\right)^2
   +1-\mu^{\bad}(\sigma_0)=\frac1{\mu^{\bad}(\sigma_0)}-1.
\end{aligned}
\end{equation}
By the total-variation identity \eqref{eq:def-TV-distance},
\[
 \TV{\K^t(\sigma_0,\cdot)}{\mu^{\bad}}
 =\frac12\mathbb E_{\mu^{\bad}}[|h_t-1|].
\]
Using \eqref{eq:core-density-evolution}, this becomes
\[
 \TV{\K^t(\sigma_0,\cdot)}{\mu^{\bad}}
 =\frac12\mathbb E_{\mu^{\bad}}[|P^tg_0|].
\]
Since $\mu^{\bad}$ is a probability measure, Cauchy--Schwarz gives
\[
 \frac12\mathbb E_{\mu^{\bad}}[|P^tg_0|]
 \le\frac12\|P^tg_0\|_2.
\]
Applying the contraction \eqref{eq:core-l2-contraction} (which applies to
$g_0$ because it is mean-zero by \eqref{eq:core-initial-density}), we get
\[
 \TV{\K^t(\sigma_0,\cdot)}{\mu^{\bad}}
 \le\frac12\|P^tg_0\|_2
 \le\frac12\left(1-\frac1{2s n^{\gamma_{\mathrm{mix}}}}\right)^t\|g_0\|_2.
\]
Finally, substituting the norm calculation \eqref{eq:core-initial-density-norm}
yields
\begin{equation}\label{eq:core-tv-intermediate-bound}
 \TV{\K^t(\sigma_0,\cdot)}{\mu^{\bad}}
 \le\frac12\sqrt{\frac1{\mu^{\bad}(\sigma_0)}-1}
       \left(1-\frac1{2s n^{\gamma_{\mathrm{mix}}}}\right)^t.
\end{equation}
By \eqref{eq:core-feasibility-extension}, every $\sigma_0$ has at least
one satisfying full extension. Since $\mu^{\bad}$ is the induced marginal
described by \eqref{eq:core-extension-weight} and there are at most $2^n$
full assignments in total, it follows that
$\mu^{\bad}(\sigma_0)\ge2^{-n}$.
Consequently, the prefactor is bounded as follows:
\begin{equation}\label{eq:core-initial-density-prefactor}
 \sqrt{\frac1{\mu^{\bad}(\sigma_0)}-1}\le2^{n/2}.
\end{equation}
Moreover, we also have
\begin{equation}\label{eq:core-exponential-decay}
 \left(1-\frac1{2s n^{\gamma_{\mathrm{mix}}}}\right)^t
 \le\exp\left(-\frac{t}{2s n^{\gamma_{\mathrm{mix}}}}\right).
\end{equation}
Combining \eqref{eq:core-initial-density-prefactor} with
\eqref{eq:core-exponential-decay} and substituting both bounds into
\eqref{eq:core-tv-intermediate-bound} proves
\eqref{eq:core-tv}. If $|\Omega|=1$, the total-variation distance is
identically zero.
The theorem is proved.
\end{proof}

\section{Rejection-repair Sampler for the Good Variables} \label{secd:rejection-repair-sampler}

In this section, we provide the sampling algorithm for the good formula $\Psi$.
The algorithm adapts the near-linear time exact sampler in \cite{LWYZZ26} to the random $k$-SAT setting.
We replace the two-tree framework of \cite{LWYZZ26} with a similar one on ordinary connected sets of constraints, with a proper truncation on their size to implement the final fast sampler for the ordinary formula $\Phi$ in \Cref{thm:fast-sampler}.

Suppose that $\Phi$ is a nice formula satisfying \Cref{cond:nice-formula}.
For simplicity, we use $\F = \C^{\good}$ to denote the family of good clauses.
For a clause label $a \in \C^{\good}$, we use it to denote $C_a^{\good}$ and $B_a$ to denote the bad event violating its residual good clause.
Let $\H$ be a graph on $\F$ in which a pair of constraints intersecting with each other are adjacent.
For $F \subseteq \F$, let $\Good(F)$ be the event that all bad events in $F$ do not occur and write
\begin{align*}
    Z_F \triangleq \nu^{\good}\left(\Good(F)\right), \quad \mu_F(\cdot) = \nu^{\good}\left(\cdot \;|\; \Good(F)\right).
\end{align*}
To guarantee the existence of measures, assume that $Z_{\F} > 0$.

For a non-empty subset $F \subseteq \F$ and $a \in F$, redefine $\CS_F(a)$ to consist of all connected subsets of $\H[F]$ containing $a$.
For every $S \in \CS_F(a)$, define
\begin{equation} \label{eq:rejection-repair-notations}
\begin{gathered}
    \V_S \defeq \bigcup_{c \in S} \vbl^{\good}(c), \qquad E_S \defeq \bigcap_{c \in S} B_c, \qquad q_S \defeq \nu^{\good}(E_S), \\
    F^- \defeq F \setminus \set{a}, \qquad F_S \defeq F \setminus N_F[S], \qquad \ell(S) \defeq \abs{N_F[S] \setminus \set{a}} \le \overline{D} \cdot \abs{S}
\end{gathered}
\end{equation}
where $N_F[S]$ is the closed neighborhood of $S$ in $\H[F]$.
We order $N_F[S] \setminus \set{a} = \set{b_{S, 1}, \ldots, b_{S, \ell(S)}}$ and put
\begin{align*}
    \forall i = 0, \ldots, \ell(S), \quad F_{S, i} = F_S \cup \set{b_{S, 1}, \ldots, b_{S, i}}
\end{align*}
and thus $F_{S, 0} = F_S$, $F_{S, \ell(S)} = F^-$.
Furthermore, define
\begin{align*}
    w_S \defeq q_S \frac{Z_{F_S}}{Z_{F^-}}.
\end{align*}
For $q_S > 0$, let $u_S$ be the unique local assignment specified by $E_S$.
Define the auxiliary distribution $\eta_S$ as the conditional distribution of $\mu_{F_S}$ on $u_S$ fixed, \IE,
\begin{align*}
    \eta_S = \nu^{\good}\ab(\cdot \mid E_S \cap \Good(F_S)).
\end{align*}

\subsection{Inclusion-exclusion identity}
The following inclusion-exclusion identity for connected subsets, in a similar form as \cite[Lemma 6.1]{LWYZZ26}, plays a crucial role in the design and analysis.
\begin{lemma} \label{lem:connected-subset-inclusion-exclusion}
    For every $F \subseteq \F$ and $a \in F$,
    \begin{align*}
        \mu_{F^-} \one_{B_a} = \sum_{S \in \CS_F(a)} (-1)^{\abs{S} - 1} w_S \eta_S.
    \end{align*}
    Moreover, define
    \begin{align*}
        \OO = \OO_{F, a} \defeq \sum_{S \in \CS_F(a)\;:\; \abs{S} \text{ odd}} w_S, \qquad
        \EE = \EE_{F, a} \defeq \sum_{S \in \CS_F(a)\;:\; \abs{S} \text{ even}} w_S.
    \end{align*}
    Then $\OO - \EE = \mu_{F^-}(B_a)$.
\end{lemma}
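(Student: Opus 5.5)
The plan is to prove a pointwise identity between functions of the good assignment, and then integrate it against $\nu^{\good}$. Write $x_c\defeq\one_{B_c}$ for $c\in F$. The core claim is the following function identity on $\{0,1\}^{\V^{\good}}$:
\begin{equation*}
    x_a\prod_{b\in F^-}(1-x_b)\;=\;\sum_{S\in\CS_F(a)}(-1)^{\abs{S}-1}\Big(\prod_{c\in S}x_c\Big)\prod_{c\in F_S}(1-x_c).
\end{equation*}
I would prove it as an identity of multilinear polynomials in formal variables $(x_c)_{c\in F}$. A polynomial identity remains true after substituting the $\{0,1\}$-valued indicators, and idempotence is never needed.

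To do this, I compare the coefficients of every monomial $x_a x_T$ with $T\subseteq F^-$. The left-hand side has coefficient $(-1)^{\abs{T}}$, and it has no monomials without $x_a$. On the right-hand side, expanding $\prod_{c\in F_S}(1-x_c)$ shows that the summand for $S$ produces $x_a x_T$ exactly when $S\subseteq\{a\}\cup T$ and $T\setminus S\subseteq F_S$. Since $F_S=F\setminus N_F[S]$, the second condition says that no element of $T\setminus S$ is adjacent in $\H$ to $S$.

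The key combinatorial step is that these two conditions force $S=K$, where $K$ is the connected component of $a$ in $\H[\{a\}\cup T]$. Indeed, $S$ is connected, contains $a$, and lies in $\{a\}\cup T$, so $S\subseteq K$. If $S\subsetneq K$, the connectivity of $K$ gives an element of $K\setminus S\subseteq T\setminus S$ adjacent to $S$, which is excluded. Conversely, $S=K$ satisfies both conditions: $K\in\CS_F(a)$, and no vertex of $T\setminus K$ is adjacent to $K$ by maximality of the component. Hence exactly one summand contributes. Its coefficient is $(-1)^{\abs{K}-1}(-1)^{\abs{T\setminus K}}=(-1)^{\abs{T}}$, because $\abs{T}=\abs{K}-1+\abs{T\setminus K}$. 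This matches the left-hand side.

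Next I would multiply the identity pointwise by $\nu^{\good}(\sigma)/Z_{F^-}$. On the left this gives $\mu_{F^-}(\sigma)\one_{B_a}(\sigma)$. On the right, the $S$-term becomes $(-1)^{\abs{S}-1}\nu^{\good}(\sigma)\one_{E_S\cap\Good(F_S)}(\sigma)/Z_{F^-}$. If $q_S=0$, this term vanishes, and I use the convention $w_S=0$. If $q_S>0$, the clauses of $F_S$ avoid $N_F[S]$, so they share no good variable with $\V_S$. Independence under the product measure $\nu^{\good}$ then gives $\nu^{\good}(E_S\cap\Good(F_S))=q_S Z_{F_S}$. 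Therefore the term equals $(-1)^{\abs{S}-1}w_S\,\eta_S(\sigma)$, which is the first claim.

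Finally, summing both sides over $\sigma$ gives total mass $\mu_{F^-}(B_a)$ on the left. On the right it gives $\sum_S(-1)^{\abs{S}-1}w_S=\OO-\EE$, since each $\eta_S$ has mass one. The only genuinely nontrivial step is the uniqueness $S=K$ in the coefficient comparison. The rest is bookkeeping, plus checking that $Z_{F^-}>0$, which follows from the standing assumption $Z_{\F}>0$ since $\Good(\F)\subseteq\Good(F^-)$.
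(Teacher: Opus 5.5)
Your proof is correct and is essentially the paper's argument: a pointwise inclusion--exclusion in which each index set $\{a\}\cup T$ is attributed to the connected component $S$ of $a$ in the induced dependency graph on $\{a\}\cup T$, followed by independence of $E_S$ and the constraints in $F_S$ under $\nu^{\good}$ to identify each term as $w_S\eta_S$. Your coefficient comparison just makes explicit the bijection $J\leftrightarrow(S,\,J\setminus S)$ with $J\setminus S\subseteq F_S$, which the paper asserts in one line.
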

\begin{proof}
    Fix an assignment $\sigma$.
    Expand the product over $F^-$, grouping each subset $J$ containing $a$ by the component $S$ of $a$ in $H[J]$. For fixed $S$, the possible sets $J$ are exactly $S\cup T$ for $T\subseteq F_S$. With $\one_{E_\emptyset}=1$ this gives
    \begin{align*}
        \one_{B_a}\one_{\Good(F^-)}
        &=\sum_{J\subseteq F:\,a\in J}(-1)^{|J|-1}\one_{E_J}\\
        &=\sum_{S\in\CS_F(a)}(-1)^{|S|-1}\one_{E_S}
                    \sum_{T\subseteq F_S}(-1)^{|T|}\one_{E_T}\\
        &=\sum_{S\in\CS_F(a)}(-1)^{|S|-1}\one_{E_S}\one_{\Good(F_S)}.
    \end{align*}
    It remains to identify the measure associated with each summand.
    Note that every clause in $F_S$ uses variables only outside $\V_S$.
    Therefore $E_S$ and $\Good(F_S)$ depend on disjoint sets of coordinates and are independent under the product measure $\nu^{\good}$.

    If $q_S = 0$, the corresponding term is measured zero and can be omitted.
    Suppose $q_S > 0$.
    Recall the definition of $u_S$, the unique local assignment on $\V_S$ specified by $E_S$.
    Thus
    \begin{align*}
        E_S = \set{\sigma \midc \sigma|_{\V_S} = u_S}, \qquad q_S = 2^{-\abs{\V_S}}.
    \end{align*}
    Independence tells
    \begin{align*}
        \eta_S = \delta_{u_S} \otimes \nu_{\V^{\good} \setminus \V_S}^{\good}\ab(\cdot \mid \Good(F_S)).
    \end{align*}
    Then for every event $A$,
    \begin{align*}
        \nu^{\good}(A \cap E_S \cap \Good(F_S)) = q_S Z_{F_S} \eta_S(A).
    \end{align*}
    Integrate over $A$ against $\nu^{\good}$, divide by $Z_{F^-}$, apply the inclusion-exclusion identity for $\one_{B_a} \one_{\Good(F^-)}$ and we obtain that
    \begin{align*}
        \mu_{F^-}(A \cap B_a) &= \frac{\nu^{\good}(A \cap B_a \cap \Good(F^-))}{Z_{F^-}} \\
        &= \sum_{S \in \CS_F(a)} (-1)^{\abs{S} - 1} \frac{q_S Z_{F_S}}{Z_{F^-}} \eta_S(A) \\
        &= \sum_{S \in \CS_F(a)} (-1)^{\abs{S} - 1} w_S \eta_S(A).
    \end{align*}
    This proves the first argument.

    For the second argument, taking $A$ to be the entire assignment space and using $\eta_S(A) = 1$ provides that
    \begin{align*}
        \mu_{F^-}(B_a) = \sum_{S \in \CS_F(a)} (-1)^{\abs{S} - 1} w_S = \OO - \EE.
    \end{align*}
    Thus we conclude the lemma.
\end{proof}

\subsection{Parity pairing}
The second preparation is the \emph{parity-pairing} lemma.
The following canonical exploration is frequently employed in our design and analysis.
Given a fixed total order on $F$, the exploration of $S \in \CS_F(a)$ starts with the accepted set $\set{a}$ and unlabeled other vertices.
Repeatedly explore and label the first unlabeled neighbor of the accepted set, accepting it when it is in $S$, otherwise rejecting it.
When there is no candidate, stop the exploration.
By connectivity of $S$, the final accepting set is exactly $S$ and the labeled set is $N_F[S] \setminus \set{a}$.
Moreover, for an assignment $\sigma$, define $Q(\sigma) \triangleq \set{c \in F \mid \sigma \in B_c}$ as the constraints $\sigma$ violates.
\begin{lemma}[Parity pairing] \label{lem:parity-pairing}
    Suppose that $\sigma \in E_S \cap \Good\left(F_S\right)$.
    When $Q(\sigma) \neq \set{a}$, let $d_\sigma(S)$ be the last member of $Q(\sigma) \setminus \set{a}$ explored by $S$, and define $\iota_\sigma(S) \triangleq S \triangle \set{d_\sigma(S)}$.
    Then $\iota_\sigma(S) \in \CS_F(a)$, $\sigma \in E_{\iota_\sigma(S)} \cap \Good\left(F_{\iota_\sigma(S)}\right)$ and $\iota_\sigma\left(\iota_\sigma(S)\right) = S$.
    Trivially $S$ and $\iota_\sigma(S)$ have opposite parity.
    When $Q(\sigma)=\set{a}$, the only admissible ordinary label is $\set{a}$. Extend the pairing to the exceptional label $\star$ by setting $\iota_\sigma(\set{a})=\star$ and $\iota_\sigma(\star)=\set{a}$.
    Moreover, these operations give a measure-preserving bijection between the two sides of
    \begin{align} \label{eq:measure-parity-pairing}
        \mu_{F^-} \one_{B_a} + \sum_{S \in \CS_F(a) \;:\; \abs{S} \text{ even}} w_S \eta_S \longleftrightarrow \sum_{S \in \CS_F(a) \;:\; \abs{S} \text{ odd}} w_S \eta_S.
    \end{align}
    In particular, for every event $A$,
    \begin{align} \label{eq:event-parity-pairing}
        \mu_{F^-}(A \cap B_a) + \sum_{S \in \CS_F(a) \;:\; \abs{S} \text{ even}} w_S \eta_S(A) = \sum_{S \in \CS_F(a) \;:\; \abs{S} \text{ odd}} w_S \eta_S(A).
    \end{align}
\end{lemma}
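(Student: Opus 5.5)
The plan is to prove everything pointwise in $\sigma$: first build the involution $\iota_\sigma$ on the labels admissible for a fixed $\sigma$, then observe that every weighted measure in \eqref{eq:measure-parity-pairing} has the same density at $\sigma$.

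\textbf{Step 1: locating $Q(\sigma)$.} Call a label $S\in\CS_F(a)$ admissible for $\sigma$ if $\sigma\in E_S\cap\Good(F_S)$. For such $S$ we have $S\subseteq Q(\sigma)$ and $Q(\sigma)\cap F_S=\emptyset$, so
\[
S\subseteq Q(\sigma)\subseteq N_F[S].
\]
Since the canonical exploration of $S$ labels exactly $N_F[S]\setminus\{a\}$, every member of $Q(\sigma)\setminus\{a\}$ is labelled during that exploration. Hence, when $Q(\sigma)\ne\{a\}$, the vertex $d=d_\sigma(S)$ is well defined, and every vertex labelled after $d$ lies outside $Q(\sigma)$. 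In particular it lies outside $S$, so it is rejected.

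\textbf{Step 2: the case $d\in S$.} Every vertex of $S$ other than $a$ is in $Q(\sigma)$, so $d$ is the last vertex accepted. Each accepted vertex is adjacent to the set accepted before it, so $S'=S\setminus\{d\}\ni a$ is connected. The exploration of $S'$ agrees with that of $S$ until $d$ is examined; $d$ is now rejected. By Step 1, all of $Q(\sigma)\setminus\{a\}$ has been labelled by that time. Therefore $Q(\sigma)\subseteq N_F[S']$, which gives $\sigma\in\Good(F_{S'})$, and $\sigma\in E_{S'}$ is immediate. Since $d$ is also the last violated vertex labelled in the exploration of $S'$, we get $d_\sigma(S')=d$ and $\iota_\sigma(S')=S$.

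\textbf{Step 3: the case $d\notin S$.} When $d$ is labelled it is adjacent to the accepted set, so $S'=S\cup\{d\}$ is connected and contains $a$. Since $d\in Q(\sigma)$ we have $\sigma\in E_{S'}$. Since $F_{S'}\subseteq F_S$ we have $\sigma\in\Good(F_{S'})$. The exploration of $S'$ agrees with that of $S$ up to $d$, which is now accepted. All of $Q(\sigma)\setminus\{a\}$ has been labelled by then. Any vertex labelled later, in whatever new order, lies outside $Q(\sigma)$. So again $d_\sigma(S')=d$ and the map is an involution.

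In both cases $|S'|=|S|\pm1$, so the parity flips. If $Q(\sigma)=\{a\}$, then $S\subseteq Q(\sigma)$ forces $S=\{a\}$, and the extension by $\star$ is also an involution.

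\textbf{Step 4: measure preservation.} By the independence computation in the proof of \Cref{lem:connected-subset-inclusion-exclusion},
\[
w_S\eta_S(\sigma)=\frac{\nu^{\good}(\sigma)\,\one[\sigma\in E_S\cap\Good(F_S)]}{Z_{F^-}}.
\]
When $q_S=0$, both sides vanish. Likewise,
\[
\mu_{F^-}\one_{B_a}(\sigma)=\frac{\nu^{\good}(\sigma)\,\one[Q(\sigma)=\{a\}]}{Z_{F^-}},
\]
because $\sigma\in B_a\cap\Good(F^-)$ means exactly $Q(\sigma)=\{a\}$. Give the label $\star$ this density.

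For each $\sigma$, every admissible label (and $\star$, when present) then carries the same mass $\nu^{\good}(\sigma)/Z_{F^-}$. The involution $\iota_\sigma$ matches each odd label with a label from the left side of \eqref{eq:measure-parity-pairing}: either an even ordinary label, or $\star$ in the case of $\{a\}$. Summing over $\sigma\in A$ gives \eqref{eq:event-parity-pairing}.

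\textbf{Main obstacle.} The delicate point is Step 3: after $d$ is added, the exploration order of the remaining vertices can change. The argument must rely only on the fact that all violated constraints are already labelled by the time $d$ is examined, and not on the order of later steps.
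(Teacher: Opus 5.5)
Your proposal is correct and follows essentially the same route as the paper: the sandwich $S\subseteq Q(\sigma)\subseteq N_F[S]$, the observation that every vertex explored after $d_\sigma(S)$ lies outside $Q(\sigma)$ and is rejected, and the common point mass $\nu^{\good}(\sigma)/Z_{F^-}$ carried by every admissible label and by $\star$. The only difference is cosmetic: you split into the cases $d_\sigma(S)\in S$ and $d_\sigma(S)\notin S$, whereas the paper treats both at once by writing $\iota_\sigma(S)=A$ or $\iota_\sigma(S)=A\cup\{d_\sigma(S)\}$, where $A$ is the set accepted just before $d_\sigma(S)$ is explored.
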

\begin{proof}
    The statement $\sigma \in E_S \cap \Good\left(F_S\right)$ is equivalent to
    \begin{align*}
        S \subseteq Q(\sigma) \subseteq N_F[S].
    \end{align*}
    Therefore all constraints in $Q(\sigma)$ are labeled in the exploration by $S$.

    When $Q(\sigma) = \set{a}$, there must hold $S \subseteq \set{a} \subseteq N_F[S]$, meaning that $S$ must be $\set{a}$.
    Otherwise, write $S' = S \triangle \set{d_\sigma(S)}$.
    We verify the following properties of $S'$.
    \begin{enumerate}
        \item $S' \in \CS_F(a)$.
        Let $A$ be the accepted set immediately before $d_\sigma(S)$ is explored.
        Note that there is no accepted vertex after $d_\sigma(S)$ is explored (since $S \subseteq Q(\sigma)$).
        Therefore, it must be $S' = A$ or $S' = A \cup \set{d_\sigma(S)}$ and $d_\sigma(S)$ must be adjacent to $A$.
        It is clear that $a \in A$ by the definition of $d_\sigma(S)$.
        Hence we conclude that $a \in S'$ and $S'$ is connected, leading to the property $S' \in \CS_F(a)$.

        \item $\sigma \in E_{S'} \cap \Good\left(F_{S'}\right)$.
        It is equivalent to show that
        \begin{align*}
            S' \subseteq Q(\sigma) \subseteq N_F[S'].
        \end{align*}
        Since $A \subseteq S \subseteq Q(\sigma)$ and $d_\sigma(S) \in Q(\sigma)$, we directly obtain that $S' \subseteq Q(\sigma)$.
        To prove $Q(\sigma) \subseteq N_F[S']$, recall the exploration of $S$ and the definition that $d_\sigma(S)$ is the last member of $Q(\sigma) \setminus \set{a}$ in the exploration.
        Then all clauses in $Q(\sigma) \setminus \set{a}$ were explored and thus labeled by the time $d_\sigma(S)$ was explored.
        Hence $Q(\sigma) \subseteq N_F[A] \subseteq N_F[S']$ by $A \subseteq S'$.

        \item $\iota_\sigma\left(\iota_\sigma(S)\right) = S$.
        Compare the explorations of $S$ and $S'$.
        Their membership decisions differ only at $d_\sigma(S)$.
        Then until $d_\sigma(S)$ is explored, they share the same accepted set and label set.
        Moreover, under the fixed total order on $F$, their exploration sequences remain identical through the selection of $d_\sigma(S)$.

        At $d_\sigma(S)$, the accept/reject decision is reversed.
        Since either $S' = A$ or $S' = A \cup \set{d_\sigma(S)}$ holds, any subsequent candidate is rejected.
        Therefore, we obtain that $d_{\sigma}\left(S'\right) = d_\sigma(S)$.
        Hence we conclude the property that $\iota_\sigma\left(\iota_\sigma(S)\right) = S$.
    \end{enumerate}

    Now we verify \eqref{eq:measure-parity-pairing}.
    Observe that
    \begin{align*}
        w_S \eta_S(\sigma) = \frac{\nu^{\good}(\sigma)}{Z_{F^-}} \id{S \subseteq Q(\sigma) \subseteq N_F[S]}.
    \end{align*}
    Consider all admissible labels of $\sigma$:
    \begin{align*}
        \Label_{F, a}(\sigma) \defeq \set{S \in \CS_F(a) \mid S \subseteq Q(\sigma) \subseteq N_F[S]}
    \end{align*}
    adding one exceptional state $\star$ when $Q(\sigma) = \set{a}$.
    Then all pairs of $(S, \sigma)$ with $S \in \Label_{F, a}(\sigma)$ share the same mass $\nu^{\good}(\sigma)/Z_{F^-}$.
    When $Q(\sigma) \neq \set{a}$, the mapping $(S, \sigma) \mapsto (\iota_\sigma(S), \sigma)$ is clearly a bijection from odd admissible labels to even admissible labels and preserves the point mass at $\sigma$.

    When $Q(\sigma) = \set{a}$, the only admissible label is $S = \set{a}$ and we pair $(\set{a}, \sigma)$ with $(\star, \sigma)$.
    Direct calculation provides that for the label $\set{a}$,
    \begin{align*}
        w_{\set{a}} \eta_{\set{a}}(\sigma) = \frac{\nu^{\good}(\sigma)}{Z_{F^-}}.
    \end{align*}
    For the exceptional state $\star$ with $\sigma$, since $\sigma$ satisfies all constraints in $F^-$ but violates $a$,
    \begin{align*}
        \ab(\mu_{F^-} \one_{B_a})(\sigma) = \mu_{F^-}(\sigma) = \frac{\nu^{\good}(\sigma)}{Z_{F^-}} = w_{\set{a}} \eta_{\set{a}}(\sigma).
    \end{align*}
    Then this equality shows that the mapping preserves the point mass at $\sigma$.
    
    Consequently, the above mapping indicates a measure-preserving bijection between the two sides of \eqref{eq:measure-parity-pairing} and thus the lemma is concluded.
\end{proof}

\subsection{Ideal and truncated source}
Set
\begin{align*}
    h = 25\overline{D}, \qquad g(S) = \begin{cases}
        \ceil{\ell(S)/h}, & \ell(S) > 0 \\
        0 & \ell(S) = 0
    \end{cases}, \qquad A_S = 2^{g(S)}, \qquad a_0 = \frac{\ln 2}{25}.
\end{align*}
When comparing different requests, we write $A_S(H,c)$ for the weight $A_S$ computed with
$\ell(S)=\abs{N_H[S]\setminus\{c\}}$ in the request $(H,c)$. The violation probability $q_S$ depends only on the clauses in $S$.

\begin{proposition} \label{prop:local-weight-bounds}
    For every $F \subseteq \F$ and $a \in F$,
    \begin{align*}
        \sum_{S \in \CS_F(a)} A_S q_S \le 2\rho, \qquad \sum_{S \in \CS_F(a)} A_S q_S \ell(S) \le \frac{2}{C_*}, \qquad \sum_{S \in \CS_F(a)} A_S q_S \abs{S}^j \le 2j! \cdot \rho = O_j(\rho).
    \end{align*}
\end{proposition}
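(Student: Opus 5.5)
The plan is to compare each weighted sum termwise with the polymer-root estimate \eqref{eq:core-polymer-root}, which holds for $\CS_\Psi(a)$ in $\H_\Psi$. First note that $\H[F]$ is an induced subgraph of $\H=\H_\Psi$. Hence $\CS_F(a)\subseteq\CS_\Psi(a)$, and every term is nonnegative. It therefore suffices to bound each factor $A_S$, $A_S\ell(S)$ and $A_S|S|^j$ by a multiple of $e^{(1+\tau)|S|}$, and then sum against $q_S$ using \eqref{eq:core-polymer-root}.

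\textbf{First sum.} We have $\ell(S)\le\overline D|S|$, so $g(S)\le\ell(S)/h+1\le |S|/25+1$ when $\ell(S)>0$. When $\ell(S)=0$, simply $g(S)=0$. Thus
\[
A_S\le 2\cdot 2^{|S|/25}=2e^{a_0|S|}.
\]
Since $a_0=\ln 2/25<1+\tau$, this gives $\sum_S A_Sq_S\le 2\sum_{S\in\CS_\Psi(a)}q_Se^{(1+\tau)|S|}\le 2\rho$.

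\textbf{Third sum.} We use $|S|^j\le j!\,e^{|S|}$, which follows from $x^j/j!\le e^x$. This gives
\[
A_S|S|^j\le 2j!\,e^{(1+a_0)|S|}.
\]
Because $\tau\ge1>a_0$, we have $1+a_0\le 1+\tau$, and the bound $2j!\,\rho$ follows exactly as before.

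\textbf{Second sum.} The target $2/C_*$ equals $\rho\overline D/2$, so here the factor $\overline D$ must be paid for. Write $\ell(S)\le\overline D|S|$, which gives
\[
\sum_S A_Sq_S\ell(S)\le 2\overline D\sum_S q_S|S|e^{a_0|S|}.
\]
We then need $|S|e^{a_0|S|}\le\tfrac14 e^{(1+\tau)|S|}$ for every $|S|\ge1$. This holds because $\tau\ge1$ and in fact $\tau\ge\ln(2^{40}/C_*)>17$: we get $|S|e^{a_0|S|}\le e^{(1+a_0)|S|}$, while $e^{(1+\tau)|S|}\ge e^{\tau}\cdot e^{(1+a_0)|S|}$ once $\tau-a_0\ge\tau/|S|$ is arranged, since $(\tau-a_0)|S|\ge\tau-a_0$ and $e^{\tau-a_0}\ge4$. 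The sum is then at most $2\overline D\cdot\tfrac14\rho=2/C_*$.

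\textbf{Main obstacle.} The only delicate point is this constant bookkeeping in the second inequality: a factor of at least $4$ must be extracted from the slack $e^{\tau|S|}$ to cancel $\overline D\rho/(2/C_*)=2$ together with the factor $2$ from the rounding in $g(S)$. If the paper's nice condition only guarantees $\tau\ge1$, the margin becomes $e^{1-a_0}\approx2.6$, which is less than $4$. In that case I would refine the rounding step, using $A_S\le 2^{\ell(S)/h+1}$ only when $\ell(S)>0$ and noting that $|S|=1$ with $\ell(S)=0$ contributes nothing. Failing that, I would invoke the explicit bound $\tau\ge\ln(2^{40}/C_*)$ that is valid in the regime of \Cref{thm:fast-sampler}.
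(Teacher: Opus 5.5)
Your proof is correct and is essentially the paper's argument: bound $A_S\le 2e^{a_0|S|}$, use $\CS_F(a)\subseteq\CS_\Psi(a)$, and compare each summand with $q_Se^{(1+\tau)|S|}$ from the polymer-root estimate. The only real difference is in the second sum, where the paper bounds $\sup_{t\ge1}te^{-\delta t}\le\sup_{t\ge1}t4^{-t}=\tfrac14$ with $\delta=1+\tau-a_0>\ln4$, so $\tau\ge1$ already suffices and the obstacle you flag comes only from your lossy step $|S|\le e^{|S|}$. Your fallback $\tau\ge\ln(2^{40}/C_*)$ is still valid for nice formulas, but your displayed claim $e^{(1+\tau)|S|}\ge e^{\tau}e^{(1+a_0)|S|}$ fails at $|S|=1$; it should simply read $e^{(1+\tau)|S|}\ge e^{\tau-a_0}e^{(1+a_0)|S|}\ge 4e^{(1+a_0)|S|}$.
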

\begin{proof}
    Observe that for every $S \in \CS_F(a)$,
    \begin{align*}
        g(S) \le 1 + \frac{\ell(S)}{25 \overline{D}} \le 1 + \frac{\abs{S}}{25}.
    \end{align*}
    Therefore, $A_S \le 2\e^{a_0\abs{S}}$.
    Set $\delta \defeq 1 + \tau - a_0$ and by $\tau \ge 1$, $\delta \ge 2 - \ln{2}/25 > \ln{4}$.
    By \Cref{prop:root-event}, for every non-negative function $f$, it holds that
    \begin{align*}
        \sum_{S \in \CS_F(a)} A_S q_S f(\abs{S}) &\le 2\sum_{S \in \CS_F(a)} q_S \exp\ab((1 + \tau)\abs{S}) f(\abs{S}) \e^{-\delta \abs{S}} \\
        &\le 2\rho \cdot \sup_{t \ge 1}\; f(t) \e^{-\delta t}.
    \end{align*}

    Taking $f(t) = 1$, we conclude $\sum_{S \in \CS_F(a)} A_S q_S \le 2\rho$.
    Taking $f(t) = t$ together with $\ell(S) \le \overline{D} \abs{S}$, we obtain that
    \begin{align*}
        \sum_{S \in \CS_F(a)} A_S q_S \ell(S) &\le \sum_{S \in \CS_F(a)} A_S q_S \cdot \overline{D} \abs{S} \\
        &\le 2 \overline D \rho \sup_{t \ge 1} t \e^{-\delta t}.
    \end{align*}
    Using $\delta > \ln{4}$, for $t \ge 1$, $t \e^{-\delta t} \le t 4^{-t} \le 1/4$.
    Consequently,
    \begin{align*}
        \sum_{S \in \CS_F(a)} A_S q_S \ell(S) \le 2\overline D\rho\cdot\tfrac14 = \frac{2}{C_*}.
    \end{align*}
    Finally, fix an integer $j \ge 0$.
    The exponential series provides that $\e^{\delta t} \ge (\delta t)^j/j!$.
    Therefore taking $f(t) = t^j$, we have
    \begin{align*}
        \sum_{S \in \CS_F(a)} A_S q_S \abs{S}^j &\le 2\rho \cdot \sup_{t \ge 1} \; t^j \e^{-\delta t} \le 2\rho \cdot \sup_{t \ge 1}\; t^j \frac{j!}{(\delta t)^j} \le 2j! \cdot \rho = O_j(\rho).
    \end{align*}
    Thus we conclude the proposition.
\end{proof}

The ideal source $\Source_{\infty}(F, a)$ outputs $S$ with mass $A_S q_S$ and outputs the empty label with remaining mass.
\Cref{prop:local-weight-bounds} guarantees the existence of $\Source_{\infty}(F, a)$.
However, it might be hard to implement the ideal source.
Instead, we use the following executable source.
\begin{definition}[Executable source] \label{def:executable-source}
    For $1 \le K \le r$, the executable source $\ExecSource_K(F, a)$ processes as follows.
    \begin{enumerate}
        \item Pass a $\Ber(4p_*)$ gate.
        If it fails, return the empty label.

        \item Run the canonical exploration from the root $a$, accepting each newly queried non-root label independently with probability $\vartheta \defeq 1/(4\overline{D})$.
        Return the empty label as soon as the number of accepted labels exceeds $K$.

        \item Assume that the set of all accepted labels is $S$.
        If $S$ has inconsistent violation requirements, return the empty label.
        Otherwise return $S$ with probability
        \begin{align} \label{eq:acceptance-probability}
            \frac{A_S q_S}{4 p_* \vartheta^{\abs{S} - 1} (1 - \vartheta)^{\abs{N_F[S] \setminus S}}}
        \end{align}
        and return the empty label on rejection.
    \end{enumerate}
\end{definition}

For a subroutine $\AA$, we use $\Cost(\AA)$ to denote its running cost, \IE, the number of its used bit operations.
We provide the accuracy and cost guarantee of $\ExecSource_{\cdot}(\cdot, \cdot)$ in the following proposition.
\begin{proposition} \label{prop:executable-source-accuracy-cost}
    The acceptance probability in \eqref{eq:acceptance-probability} is in $[0, 1]$ and
    \begin{align*}
        \forall S \in \CS_F(a), \quad \Pr{\ExecSource_K(F, a) = S} = A_S q_S \id{\abs{S} \le K}.
    \end{align*}
    Therefore the total variation between the executable source and the ideal source is
    \begin{align*}
        t_K(F,a) \defeq \TV{\ExecSource_K(F, a)}{\Source_{\infty}(F, a)} \le d_K
    \end{align*}
    where $d_K = 2\rho R^{-K}$ with $R = 2/(C_* p_* \overline{D}) \ge 4$.
    Moreover, the expected cost of each call is
    \begin{align*}
        \E{\Cost\ab(\ExecSource_K(F, a))} = \widetilde{O}\ab(\poly{k, \overline{D}}).
    \end{align*}
\end{proposition}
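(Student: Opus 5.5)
The proof has four parts: the law of the canonical exploration, the bound on the acceptance ratio, the truncation error, and the cost.

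\textbf{Exploration law.} Fix $S\in\CS_F(a)$ with $|S|\le K$. The canonical exploration from $a$ returns exactly the accepted set $S$ if and only if every non-root member of $S$ is accepted when queried and every label of $N_F[S]\setminus S$ is rejected. This uses the remark after the exploration description: the accepted set is $S$ and the labeled set is $N_F[S]\setminus\{a\}$. The process never exceeds $K$ accepted labels along this path. By independence of the coins, the probability of reaching the final set $S$ after passing the $\Ber(4p_*)$ gate is therefore $4p_*\vartheta^{|S|-1}(1-\vartheta)^{|N_F[S]\setminus S|}$. Multiplying by the acceptance probability \eqref{eq:acceptance-probability} gives exactly $A_Sq_S$. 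Sets with $|S|>K$ are never output, and inconsistent sets have $q_S=0$. This proves the displayed identity, provided \eqref{eq:acceptance-probability} is at most $1$.

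\textbf{Acceptance ratio.} I would use $q_S\le p_*^{|S|}$, which holds for $|S|\le K\le r$ by \Cref{prop:good-property}. I would also use $|N_F[S]\setminus S|\le \ell(S)\le \overline D|S|$ and $A_S\le 2^{1+\ell(S)/h}$ with $h=25\overline D$. With these, the ratio is at most
\[
\frac{2\cdot 2^{\ell(S)/(25\overline D)}\,p_*^{|S|}}{4p_*\,(4\overline D)^{-(|S|-1)}\,(1-\tfrac1{4\overline D})^{\ell(S)}}
\le \tfrac12\,(4\overline Dp_*)^{|S|-1}\,\bigl(2^{1/(25\overline D)}e^{1/(2\overline D)}\bigr)^{\overline D|S|}.
\]
Here I use $(1-x)^{-1}\le e^{2x}$ for $x\le1/2$. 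The last factor is at most $e^{0.53|S|}$. Since $4\overline D p_*\le 2^{-38}$, the whole expression is at most $\tfrac12 e^{0.53}(2^{-38}e^{0.53})^{|S|-1}\le1$. Nonnegativity is trivial.

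\textbf{Truncation error.} Both sources agree on sets of size at most $K$. The only differences are the missing mass on sets with $|S|>K$ and the matching excess mass on the empty label. Hence
\[
t_K(F,a)=\sum_{S\in\CS_F(a),\,|S|>K}A_Sq_S .
\]
To bound this tail I would reuse the argument of \Cref{prop:local-weight-bounds}, weighting by $e^{(1+\tau)|S|}$ against the root estimate, which gives $\le 2\rho\sup_{t>K}e^{-\delta t}$. I expect this step to be the main obstacle: $e^{-\delta}$ with $\delta=1+\tau-a_0$ must be compared with $R^{-1}=C_*p_*\overline D/2=e^{-\tau}/2$. I have $e^{-\delta}=e^{-1-\tau+a_0}\le e^{-\tau}/2$, because $e^{1-a_0}\ge2$. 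Hence $\sup_{t>K}e^{-\delta t}\le R^{-(K+1)}\le R^{-K}$, so $t_K\le 2\rho R^{-K}=d_K$. Finally, $R\ge4$ follows from $p_*\overline D\le 2^{-40}$.

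\textbf{Cost.} With probability $1-4p_*$ the call costs $O(1)$ bit operations for the gate. Otherwise the exploration queries labels in $N_F[S']$, where $S'$ is the current accepted set. It stops once more than $K$ labels are accepted, so the number of queried labels is at most $\overline D(K+1)$. A sharper bound comes from noting that the number of accepted labels is stochastically dominated by a subcritical branching process: each accepted label exposes at most $\overline D$ new labels, each accepted with probability $1/(4\overline D)$. This gives $O(1)$ expected accepted labels and $O(\overline D)$ expected queries. Each query costs $\poly{k}$ to check adjacency and consistency of literals. Computing $q_S$, $\ell(S)$, $A_S$ and sampling the Bernoulli with the rational-type acceptance probability \eqref{eq:acceptance-probability} by lazy bit comparison costs $\widetilde O(\poly{k,\overline D}\,|N_F[S]|)$ in expectation, so the total expected cost is $\widetilde O(\poly{k,\overline D})$.
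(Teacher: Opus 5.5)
Your proposal is correct and follows the paper's proof essentially step for step. You use the same exploration-probability computation, the same bound $\tfrac12 e^{(a_0+1/2)|S|}(4p_*\overline D)^{|S|-1}<1$ on the acceptance ratio, the same tail estimate through the proof of \Cref{prop:local-weight-bounds}, and the same Galton--Watson domination for the cost. The differences are minor: you make explicit the comparison $e^{-(1+\tau-a_0)}\le R^{-1}$, which the paper leaves implicit. Your linear-cost acceptance coin needs only the first moment of the dominating population, whereas the paper's exact rational arithmetic is quadratic in $|S|$ and uses the second-moment bound $64/27$.
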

\begin{proof}
    We first verify the acceptance probability in \eqref{eq:acceptance-probability} is in $[0, 1]$.
    For every set $S$ that reaches the acceptance test, we have $\abs S \le K \le r$. Hence, by \Cref{prop:good-property}, $q_S \le 2^{-(k - 12)\abs{S}} = p_*^{\abs{S}}$.
    Together with $\abs{N_F[S] \setminus S} \le \overline{D} \abs{S}$ and $(1 - 1/(4\overline{D}))^{\overline{D} \abs{S}} \ge \e^{-\abs{S}/2}$, we obtain that the acceptance probability is at most
    \begin{align*}
        \frac{1}{2} \e^{(a_0 + 1/2) \abs{S}} \ab(4p_*\overline{D})^{\abs{S} - 1} < 0.85\ab(7p_*\overline{D})^{\abs{S} - 1} \le 1.
    \end{align*}
    By \Cref{def:executable-source}, $\ExecSource_K(F, a)$ only outputs the empty label or labels of size at most $K$.
    For a fixed non-empty $S \in \CS_F(a)$ with $\abs{S} \le K$, the probability that the canonical exploration generates it is $\vartheta^{\abs{S} - 1}(1 - \vartheta)^{\abs{N_F[S] \setminus S}}$.
    Therefore, by \Cref{def:executable-source},
    \begin{align*}
        \Pr{\ExecSource_K(F, a) = S} = 4p_* \cdot \vartheta^{\abs{S} - 1} (1 - \vartheta)^{\abs{N_F[S] \setminus S}} \frac{A_S q_S}{4 p_* \vartheta^{\abs{S} - 1} (1 - \vartheta)^{\abs{N_F[S] \setminus S}}} = A_S q_S.
    \end{align*}
    Hence by the proof of \Cref{prop:local-weight-bounds}, the total variation distance between $\ExecSource_K(F, a)$ and $\Source_\infty(F, a)$ is
    \begin{align*}
        t_K(F,a) = \sum_{S \in \CS_F(a) \;:\; \abs{S} > K} A_S(F,a) q_S \le 2 \rho \e^{-(\tau + 1 - a_0)K} \le d_K.
    \end{align*}

    Now we bound the expected cost of $\ExecSource_K(F, a)$.
    To generate a Bernoulli gate $\Ber(4p_*)$, it takes $O(k)$ bit operations.
    Conditional on passing the Bernoulli gate, let $T_K$ be the number of accepted labels when the exploration stops including the root.
    When a cutoff aborts the exploration, it immediately holds that $T_K = K + 1$.

    By \Cref{def:executable-source}, every accepted label exposes at most $\overline{D}$ possible new labels, each of which is accepted with probability $\vartheta = 1/(4\overline{D})$.
    We can couple the exploration with a Galton--Watson tree with offspring distribution $\xi \sim \Bin(\overline{D}, 1/(4\overline{D}))$.
    Let $T$ denote its total population including its root.
    Then $T_K \le T$ under this coupling.
    We mark here that the cutoff can only decrease the work.
    
    For the Galton--Watson tree, it holds that
    \begin{align*}
        \E{\xi} = \frac{1}{4}, \qquad \E{\xi(\xi - 1)} = \frac{\overline{D}\ab(\overline{D} - 1)}{16\overline{D}^2} \le \frac{1}{16}.
    \end{align*}
    Also the branching identity holds:
    \begin{align*}
        T = 1 + \sum_{i = 1}^{\xi} T^{(i)}
    \end{align*}
    where $T^{(1)}, \ldots, T^{(\xi)}$ are independent copies of $T$.
    We calculate $\E{T}$ and $\E{T^2}$ by the branching identity.
    Conditional on $\xi$,
    \begin{align*}
        \E{T \mid \xi} = 1 + \xi \E{T}
    \end{align*}
    and thus $\E{T} = 1 + \E{\xi} \E{T} = 1 + \E{T}/4$.
    This provides $\E{T} = 4/3$.
    For $\E{T^2}$, conditional on $\xi$,
    \begin{align*}
        \E{\ab(\sum_{i = 1}^\xi T^{(i)})^2 \;\middle|\; \xi} = \xi \E{T^2} + \xi(\xi - 1) \E{T}^2.
    \end{align*}
    Taking expectation on both sides and with the branching identity, we have
    \begin{align*}
        \E{T^2} = 1 + 2\E{\xi} \E{T} + \E{\xi} \E{T^2} + \E{\xi(\xi - 1)} \E{T}^2 \le 1 + \frac{2}{3} + \frac{1}{4} \E{T^2} + \frac{1}{9}.
    \end{align*}
    which gives $\E{T^2} \le 64/27$.
    By monotone convergence, $\E{T_K} \le 4/3$ and $\E{T_K^2} \le 64/27$.

    Now we turn to the exploration cost.
    Assume that we explore $t$ accepted labels.
    There are at most $\overline{D} t$ examined neighbors and thus the graph operations cost $\widetilde{O}(\overline{D} t)$ using local dictionaries for processed labels and an ordered frontier.
    Each acceptance coin has rational parameter $1/(4\overline{D})$, which can be generated in expected $O\ab(\log_2{(\overline{D} + 2)})$ bit operations.
    Charging this cost whenever a query is reached gives expected cost
    \begin{align*}
        \widetilde{O}\ab(\overline{D} \cdot \E{T_K}) = \widetilde{O}\ab(\overline{D}).
    \end{align*}

    With a completed set $S$ of size $t$, to examine its consistency, it contains at most $kt$ literal occurrences and each occurrence specifies the value its variable must take to violate the clause.
    Then it takes $\widetilde{O}(k t)$ bit operations to detect whether there is a conflict.
    When its consistency is verified, it holds that
    \begin{align*}
        q_S = 2^{-\abs{\V_S}}, \qquad \abs{\V_S} \le kt.
    \end{align*}
    It remains to bound the cost of the exact acceptance coin.
    For simplicity, put
    \begin{align*}
        b \defeq \abs{N_F[S] \setminus S}, \qquad \ell \defeq \ell(S) = t - 1 + b, \qquad g \defeq g(S).
    \end{align*}
    Its success probability is
    \begin{align*}
        r_S \defeq \frac{2^{g - \abs{\V_S}}}{4p_* \vartheta^{t - 1}(1 - \vartheta)^{b}} = 2^{g + k - 14 - \abs{\V_S}} \frac{\ab(4\overline{D})^{\ell}}{\ab(4\overline{D} - 1)^{b}}.
    \end{align*}
    It is an exact computable rational number $N/M$, with following bounds for parameters:
    \begin{align*}
        \ell, b \le \overline{D} t, \qquad g \le 1 + \frac{t}{25}, \qquad \abs{\V_S} \le kt.
    \end{align*}
    Its numerator and denominator have bit length at most $O\ab(\ab(k + \overline{D} \log_2(\overline{D} + 2)) t)$.
    Thus arithmetic calculations cost $O\ab(\ab(k + \overline{D} \log_2(\overline{D} + 2))^2 t^2)$ bit operations.
    To generate the $\Ber(r_S)$ coin exactly, sample a uniform integer $U \in \set{0, \ldots, M - 1}$ and return one when $U < N$.
    To sample from $\set{0, \ldots, M - 1}$, we use the rejection sampling for $O\ab(\ab(k + \overline{D} \log_2(\overline{D} + 2)) t)$ bits with expected at most two trials.
    Thus the expected arithmetic and coin cost is at most
    \begin{align*}
        \widetilde{O}\ab(\ab(k + \overline{D} \log_2(\overline{D} + 2))^2 t^2).
    \end{align*}
    When the cutoff occurs, it skips this computation and does not increase the cost.
    
    Write $B = k + \overline{D} \log_2(\overline{D} + 2)$.
    It holds that
    \begin{align*}
        \E{\Cost\ab(\ExecSource_K(F, a))} &\le O(k) + 4p_* \cdot \widetilde{O}\ab(\ab(k + \overline{D}) \E{T_K} + B^2 \E{T_K^2}) \\
        &= \widetilde{O}\ab(k + 4p_* B^2) = \widetilde{O}\ab(\poly{k, \overline{D}})
    \end{align*}
    which concludes the proposition.
\end{proof}

\subsection{Bernoulli factory and connected-subset proposals}
The weighted connected-subset proposals stated here serve as the replacement of the weighted two-tree proposals in \cite{LWYZZ26}.
We first introduce the following \emph{Bernoulli factory}, which is just a reimplementation of the \emph{reciprocal acceptance test} in \cite[Lemma 6.5]{LWYZZ26}.
\begin{definition}[Reciprocal test \cite{LWYZZ26}] \label{def:reciprocal-test}
    For a random oracle $\O$ which returns independent bits, we first implement an auxiliary subroutine $\Coin(\O)$ as follows.
    Repeatedly call $\O$.
    If the output is $1$, return $1$; otherwise flip a fair coin; return $0$ on heads and continue on tails.

    The main procedure $\Reciprocal(\O)$ is:
    \begin{enumerate}
        \item Flip a fair coin and return one on heads.
        
        \item Set $j = 1 + \Coin(\O)$ and $i = 2j - 1$.
        If $j = 2$, flip another fair coin and return one on heads.
        
        \item While $i > 0$, call $\Coin(\O)$.
        If it returns zero, set $i \gets i - 1$.
        Otherwise call $\Coin(\O)$ again.
        If the second call returns one, set $i \gets i + 2, j \gets j + 1$ and return one with probability $1/j$.

        \item Return zero when $i$ reaches $0$.
    \end{enumerate}
\end{definition}
\begin{lemma} \label{lem:reciprocal-test}
    Suppose that an oracle $\O$ returns independent bits with $\Pr{\O = 0} = b \ge 9/10$.
    The procedure $\Reciprocal(\O)$ returns a bit with success probability $1/(2b)$, terminates almost surely and uses at most $4$ expected calls to $\O$ and $O(1)$ additional operations.
\end{lemma}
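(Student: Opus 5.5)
The plan is to reduce everything to the single-call law of $\Coin(\O)$ and then analyse $\Reciprocal(\O)$ as a random walk on the counter $i$. Write $p=1-b\le 1/10$. One round of $\Coin(\O)$ returns $1$ with probability $p$, returns $0$ with probability $b/2$, and otherwise repeats. So $\Coin(\O)$ terminates almost surely, returns $1$ with probability
\[
c\defeq \frac{p}{p+b/2}=\frac{2p}{1+p}\le \frac{2}{11},
\]
and uses in expectation $1/(p+b/2)=2/(1+p)\le 2$ calls to $\O$. Successive calls to $\Coin$ use fresh oracle bits, so they are i.i.d.\ $\Ber(c)$. Solving for $p$ gives $p=c/(2-c)$ and $b=2(1-c)/(2-c)$. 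The target then reads
\[
\frac1{2b}=\frac12+\frac12\cdot\frac pb,\qquad \frac pb=\frac{c}{2(1-c)} .
\]
Since step~1 already returns one with probability $1/2$, it suffices to show that steps 2--4 return one with probability exactly $c/(2(1-c))=\tfrac12\sum_{n\ge1}c^n$.

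For the main identity I would condition on the path of the counter. After step~2 the state is $(i,j)=(1,1)$ with probability $1-c$, and $(3,2)$ with probability $c$; in the second case a fair coin has already contributed $c/2$. Each iteration of step~3 then moves $i\mapsto i-1$ with probability $1-c$, leaves the state unchanged with probability $c(1-c)$, or moves $(i,j)\mapsto(i+2,j+1)$ with probability $c^2$. An up-move to level $j$ stops the procedure with output one with probability $1/j$. Let $g_j(i)$ be the probability of eventually returning one from state $(i,j)$. I would guess the closed form of $g_j(i)$ from the target, then verify it satisfies the one-step recursion. Together with the boundary condition $g_j(0)=0$ and almost-sure termination, this gives uniqueness, via boundedness plus optional stopping. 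Finally I would check
\[
(1-c)\,g_1(1)+c\Bigl(\tfrac12+\tfrac12\,g_2(3)\Bigr)=\frac{c}{2(1-c)} .
\]
A natural route is a generating-function / Lagrange-inversion computation. The number of lattice paths of the $\{-1,+2\}$ walk from height $i$ to first hitting $0$ is a Fuss--Catalan count. The factors $1/j$ combine with the products over survival probabilities $(1-1/j)$ to telescope into the factors $\binom{\cdot}{\cdot}^{-1}$ that make the sum collapse to $\tfrac12\sum_{n\ge1} c^n$.

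Termination and cost I would handle by a drift argument. Ignoring the lazy steps, the drift of $i$ per productive iteration is $2c^2-(1-c)<0$ for $c\le 2/11$. So $i$ hits $0$ almost surely, and by Wald's identity the expected number of iterations from $i\le 3$ is $O(1)$. Each iteration uses at most two $\Coin$ calls, hence at most four expected oracle calls per iteration. To reach the stated bound of four calls in total, I would bound the expected total number of $\Coin$ calls more carefully, weighting by the halting probability $1/2$ of step~1. I would also account for the $1/j$ exits, which only shorten the run, and then multiply by $2/(1+p)\le 2$ per $\Coin$. The bookkeeping operations ($i$, $j$, and the $\Ber(1/j)$ coins, whose expected cost is $O(1)$ amortised because $j$ rarely grows) are $O(1)$ in expectation.

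The main obstacle is the exact probability identity in the second step: finding and verifying the closed form of $g_j(i)$ for this two-parameter walk. If the direct guess fails, I would instead prove $\Pr[\text{output }1]=\tfrac12\sum_{n\ge1}c^n$ by a bijection. Under it, the event of exactly $n$ successes among the relevant $\Coin$ calls before absorption corresponds to the term $c^n/2$, mirroring the construction in \cite[Lemma 6.5]{LWYZZ26}.
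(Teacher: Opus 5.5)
Your reduction is correct. A call of $\Coin$ succeeds with probability $c=2p/(1+p)\le 2/11$ and uses $2/(1+p)$ oracle calls on average, and it suffices that steps 2--4 output one with probability $c/(2(1-c))$. The cost part also closes along your lines. By Wald, step 3 started at $I_0$, with $\mathbb{E}[I_0]=1+2c$, runs at most $(1+2c)/((1-2c)(1+c))$ iterations in expectation, each making $1+c$ calls of $\Coin$. So the expected number of oracle calls is at most $\frac12\cdot\frac{2}{1+p}\cdot\frac{2}{1-2c}\le\frac{22}{7}$.

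The genuine gap is the probability identity itself, which is the real content of the lemma. You list three strategies but carry out none, and as described none closes the argument.
\begin{itemize}
\item \emph{Guess and verify.} There is no evident closed form to guess. On the reachable region $i\le 2j-1$, the values $g_j(i)$ are increasingly complicated rational functions of $v=c^2/(1-c+c^2)$. For instance $g_1(1)=v/(2(1-v))$ and $g_2(3)=v/(1-v)$, but $g_2(2)=v(4-3v)/(6(1-v)^2)$, and the one-step recursion couples the whole family.
\item \emph{Lagrange inversion.} The survival products telescope to $1/N$, with $N$ the final value of $j$, not to inverse binomials. What remains, e.g.\ $\sum_{n\ge0}\frac{1}{(n+1)(2n+1)}\binom{3n}{n}v^n(1-v)^{2n+1}$, does not collapse on its own.
\item \emph{Bijection.} The naive correspondence fails already at $n=2$. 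The probability that steps 2--4 return one at a moment when exactly two calls of $\Coin$ have succeeded is $(1-c)c^2/2$, not $c^2/2$.
\end{itemize}

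The missing idea is to represent $1/N$ as an integral.
\begin{itemize}
\item Suppress the return-one tests and realize them by independent coins; the fair coin of step 2 is just the $1/j$ test at $j=2$. Every test then fails with probability $\prod_{j=2}^{N}(1-1/j)=1/N$. So steps 2--4 return zero with probability $\mathbb{E}[1/N]=\int_0^1\mathbb{E}[z^{N-1}]\,dz$.
\item Lazy iterations do not change the embedded walk. Its number of up-moves from height one has generating function $G$ with $G=1-v+vzG^3$, and $\mathbb{E}[z^{N-1}]=(1-c)G+czG^3$.
\item Since $v<1/3$, we have $G(0)=1-v$ and $G(1)=1$. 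The substitution $z=(G-1+v)/(vG^3)$ and an integration by parts give $\int_0^1G\,dz=1-\frac{v}{2(1-v)}$. The functional equation then gives $\int_0^1zG^3\,dz=\frac{1-2v}{2(1-v)}$.
\item Hence $\mathbb{E}[1/N]=\frac{2-3c}{2(1-c)}$, so steps 2--4 return one with probability $1-\mathbb{E}[1/N]=c/(2(1-c))$, exactly your target.
\end{itemize}
In your notation this is the representation $g_j(i)=1-j\int_0^1z^{j-1}G(z)^i\,dz$. It explains why only $g_1(1)$ and $g_2(3)$ are needed, and why those two are tractable.

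For the $O(1)$ bookkeeping claim, "$j$ rarely grows" is not quite enough. You need a moment bound such as $\mathbb{E}[M\log M]<\infty$ on the number $M$ of counter changes. The exponential moment of the subcritical branching structure supplies this.
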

The proof of \Cref{lem:reciprocal-test} is analogous to that of \cite[Lemma 6.5]{LWYZZ26}.
For completeness, we include it in \Cref{sec:reciprocal-test}.

Given a general source $\Source[\omega]$ with mass $\omega_S A_S$ at $S$ where $0 \le \omega_S \le q_S$, we construct the following proposals.
For a non-empty label $S$ from $\Source[\omega]$, recall that we order
\begin{align*}
    N_F[S] \setminus \set{a} = \set{b_1, \ldots, b_\ell}
\end{align*}
and define
\begin{align*}
    \forall j = 0, \ldots, \ell, \quad F_{S, j} = F_S \cup \set{b_1, \ldots, b_j}
\end{align*}
and thus $F_{S, 0} = F_S, F_{S, \ell} = F^-$.
\begin{definition}[Connected-subset proposal] \label{def:connected-subset-proposal}
    The subroutine $\Propose[\omega](F, a)$ runs as follows.
    \begin{enumerate}
        \item Draw $S$ from the source and return the empty label if it is empty.
        
        \item For every $j = 1, \ldots, \ell = \ell(S)$, child $j$ is a recursive proposal request
        \begin{align*}
            \Propose[\omega](F_{S, j}, b_j).
        \end{align*}
        Here $b_j \in F_{S,j}$ and $F_{S,j} \setminus \set{b_j} = F_{S,j-1}$, as required by the proposal interface.
        In index order, put every $h$ consecutive children in one group, with a possibly shorter last group. Since $h=25\overline D$ is an integer, this gives exactly $g(S)$ groups, including zero groups when $\ell(S)=0$.

        \item Design a two-proposal check as follows.
        Draw one child proposal.
        If it is empty, pass.
        Otherwise, draw an independent second proposal and pass precisely when exactly one of the two is non-empty and even.

        \item For each group $J$, define an oracle $\O_J$ that returns zero precisely when all its two-proposal checks pass.
        Run $\Reciprocal(\O_J)$.
        If the factory returns zero, return the empty label.

        \item For every child $j$, draw one fresh independent proposal and require it to be empty or odd.
        If any proposal is non-empty and even, return the empty label.
        If all checks pass, return the original set $S$.
    \end{enumerate}
\end{definition}

For simplicity, the version using $\ExecSource_K(F, a)$ is denoted by $\Propose_K(F, a)$.

\begin{lemma} \label{lem:connected-subset-proposal}
    Let $\widehat{\OO}_{H, c}$ and $\widehat{\EE}_{H, c}$ be the odd and even output masses of a request $\Propose[\omega](H, c)$ respectively and write $\widehat{r}_{H, c} \defeq \widehat{\OO}_{H, c} - \widehat{\EE}_{H, c}$.
    Then
    \begin{align*}
        \Pr{\Propose[\omega](F, a) = S} = \omega_S \prod_{j = 1}^{\ell(S)} \ab(1 - \widehat{r}_{F_{S, j}, b_j})^{-1}, \qquad \widehat{\OO}_{F, a} + \widehat{\EE}_{F, a} \le \rho.
    \end{align*}
    The expected number of source calls, including all descendants, is at most $2$. Excluding the work inside the sources, the expected cost is $\widetilde O(\poly{k,\overline D})$. For $\Propose_K$ the same bound includes source implementation.
    Specifically, when $\omega = q$ corresponds to the ideal source, the output mass at $S$ is exactly $w_S$.
\end{lemma}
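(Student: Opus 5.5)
The plan is to prove the mass formula, the mass bound, and the cost bounds together by induction on the size of the request $F$; every child request $(F_{S,j},b_j)$ has $F_{S,j}\subsetneq F$. The first task is the acceptance probability of a fixed nonempty $S$ emitted by the source, which the source does with mass $\omega_S A_S$. Fix a group $J$. One two-proposal check for a child $j\in J$ fails exactly when the first draw is nonempty and the second draw disagrees with it in the prescribed way.

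\begin{itemize}
\item Write $\widehat{\OO}_j,\widehat{\EE}_j$ for the odd and even output masses of child $j$, and $\widehat{r}_j=\widehat{\OO}_j-\widehat{\EE}_j$.
\item A direct computation of the failure probability should give $\Pr{\O_J=0}=\prod_{j\in J}(1-c_j)=:b_J$ with $c_j$ a quadratic expression in $\widehat{\OO}_j,\widehat{\EE}_j$.
\item We need to choose it so that $(1-c_j)$ combines with the final odd-or-empty check, which has pass probability $1-\widehat{\EE}_j$, into $\tfrac12$-compatible factors.
\end{itemize}

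The target identity is
\[
\frac{1-\widehat{\EE}_j}{1-c_j}=\frac{1}{1-\widehat{r}_j},
\]
that is, $1-c_j=(1-\widehat{\EE}_j)(1-\widehat{r}_j)$. I expect this to hold for the stated check by expanding the product. Lemma~\ref{lem:reciprocal-test} then gives pass probability $1/(2b_J)$ per group, provided $b_J\ge 9/10$. Over the $g(S)$ groups this yields
\[
2^{-g(S)}\prod_{j}\bigl[(1-\widehat{\EE}_j)(1-\widehat{r}_j)\bigr]^{-1}.
\]
Multiplying by the final check $\prod_j(1-\widehat{\EE}_j)$ and by $\omega_SA_S$, and using $A_S=2^{g(S)}$, gives the claimed mass $\omega_S\prod_j(1-\widehat{r}_{F_{S,j},b_j})^{-1}$.

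The hypothesis $b_J\ge 9/10$ comes from the inductive bound. Each child has total mass $\le\rho$, so $c_j\le 2\rho+\rho^2\le 3\rho$. A group has at most $h=25\overline D$ children, so
\[
1-b_J\le 75\overline D\rho=\frac{300}{C_*}<\frac{1}{10}.
\]

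For the total mass, use $|\widehat{r}_j|\le\rho$ and $\ell(S)\le\overline D|S|$. These give $\prod_j(1-\widehat{r}_j)^{-1}\le e^{2\rho\overline D|S|}$, which is at most $e^{|S|}$. Then
\[
\widehat{\OO}+\widehat{\EE}\le\sum_{S\in\CS_F(a)}q_S e^{|S|}\le\rho
\]
by \eqref{eq:core-polymer-root}, since $\tau\ge0$ makes $e^{|S|}\le e^{(1+\tau)|S|}$ term by term. This bound also applies to $\CS_F(a)\subseteq\CS_\Psi(a)$.

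For the ideal source $\omega=q$, I would show by induction that $\widehat{r}_{H,c}=\mu_{H^-}(B_c)$. This follows from the first displayed mass formula together with Lemma~\ref{lem:connected-subset-inclusion-exclusion}, once $w_S=q_S\prod_j(1-\mu_{F_{S,j-1}}(B_{b_j}))$. That product is exactly the telescoping identity
\[
\frac{Z_{F_S}}{Z_{F^-}}=\prod_j\frac{Z_{F_{S,j-1}}}{Z_{F_{S,j}}},
\]
with each ratio equal to $(1-\mu_{F_{S,j-1}}(B_{b_j}))^{-1}$. This closes the induction for the mass.

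The main obstacle is the cost bound, which says the expected number of source calls is at most $2$. Let $N$ be the supremum over requests of the expected number of source calls. A request makes one source call. When the source outputs $S$, which happens with probability $\le A_Sq_S$, the request spawns the following child proposals:
\begin{itemize}
\item the two-proposal checks, with at most $4$ expected oracle calls per factory by Lemma~\ref{lem:reciprocal-test}, each costing at most $2$ proposals;
\item plus $\ell(S)$ final proposals.
\end{itemize}
This gives at most $O(\ell(S)+g(S))$ child proposals, roughly $9\ell(S)+O(g(S))$. Hence
\[
N\le 1+N\sum_S A_Sq_S\,O(\ell(S)+g(S))\le 1+N\cdot O(1/C_*)
\]
by Proposition~\ref{prop:local-weight-bounds}. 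This gives $N\le2$ given $C_*=10^4$.

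This recursion needs care to be made rigorous. Since the recursion could be infinite, I would first work with depth-truncated versions and pass to the limit by monotone convergence. The same recursion, with the non-source per-node overhead $\widetilde O(\poly{k,\overline D})$ and the $\ExecSource_K$ cost from Proposition~\ref{prop:executable-source-accuracy-cost}, gives the remaining cost claims.
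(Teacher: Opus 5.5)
Your proposal is correct and follows the paper's proof essentially step for step. The shared steps are:
\begin{itemize}
\item induction on $|F|$;
\item the per-check pass probability $(1-\widehat r_j)(1-\widehat\EE_j)$, which does hold, since the failure probability is $\widehat\EE_j^2+\widehat\OO_j(1-\widehat\EE_j)$;
\item the reciprocal factory giving $1/(2b_J)$, with $A_S=2^{g(S)}$ cancelling;
\item the $e^{2\rho\ell(S)}$ mass bound via the root estimate;
\item the source-count recursion $M\le 1+O(1/C_*)\,M$;
\item telescoping plus \Cref{lem:connected-subset-inclusion-exclusion} in the ideal case.
\end{itemize}

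The remaining points are cosmetic. Your displayed formula for $w_S$ drops the inverse on the product, though your telescoping sentence has it right. The depth truncation is unnecessary, because child families strictly shrink, so the paper's induction on $|F|$ already closes the recursion. Finally, the local-overhead claim should cite the moment bounds of \Cref{prop:local-weight-bounds}, since the per-node work grows polynomially in $|S|$.
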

\begin{proof}
    We employ induction on $\abs{F}$.
    When $\abs{F} = 1$, there are no child requests, $A_S = 1$ and the procedure simply returns its source output.
    Then by $\omega_S \le q_S$ and \Cref{prop:root-event}, the assertion holds.

    For the induction step, assume that every smaller request terminates and has non-empty output mass at most $\rho$.
    Consider the two-proposal check for child $j$.
    It passes in three disjoint cases:
    \begin{itemize}
        \item the first proposal is empty;
        \item the first proposal is non-empty and even, and the second is empty or odd;
        \item the first proposal is odd, and the second is non-empty and even.
    \end{itemize}
    Then its passing probability $p_j$ is
    \begin{align*}
        p_j &= \ab(1 - \widehat{\OO}_{F_{S, j}, b_j} - \widehat{\EE}_{F_{S, j}, b_j}) + \widehat{\EE}_{F_{S, j}, b_j}\ab(1 - \widehat{\EE}_{F_{S, j}, b_j}) + \widehat{\OO}_{F_{S, j}, b_j} \widehat{\EE}_{F_{S, j}, b_j} \\
        &= \ab(1 - \widehat{r}_{F_{S, j}, b_j})\ab(1 - \widehat{\EE}_{F_{S, j}, b_j}).
    \end{align*}
    For a group $J$, we use fresh randomness and make these checks independent.
    The group oracle returns zero with probability
    \begin{align*}
        p_J &= \prod_{j \in J} p_j = \prod_{j \in J} \ab(1 - \widehat{r}_{F_{S, j}, b_j}) \ab(1 - \widehat{\EE}_{F_{S, j}, b_j}).
    \end{align*}
    By induction and $\abs{J} \le h = 25\overline{D}$,
    \begin{align*}
        p_J \ge (1 - \rho)^{2\abs{J}} \ge (1 - \rho)^{50\overline{D}}.
    \end{align*}
    Using $\ln(1 - \rho) \ge -2\rho$ on $[0, 1/2]$, we obtain
    \begin{align*}
        p_J \ge \e^{-100\rho \overline{D}} = \e^{-0.04} \ge \frac{9}{10}.
    \end{align*}
    By \Cref{lem:reciprocal-test}, the reciprocal factory terminates almost surely and returns one with probability $1/(2p_J)$.
    Together with the termination of child requests, this proves the termination of $\Propose[\omega](F, a)$.

    Conditional on $\Source[\omega](F, a) = S$, the probability that all factories and checks pass is
    \begin{align*}
        \prod_J \frac{1}{2p_J} \prod_{j = 1}^{\ell(S)} \ab(1 - \widehat{\EE}_{F_{S, j}, b_j}) &= 2^{-g(S)} \prod_{j = 1}^{\ell(S)} \ab(1 - \widehat{r}_{F_{S, j}, b_j})^{-1}.
    \end{align*}
    Therefore, by law of expectation,
    \begin{align*}
        \Pr{\Propose[\omega](F, a) = S} &= A_S \omega_S \cdot 2^{-g(S)} \prod_{j = 1}^{\ell(S)} \ab(1 - \widehat{r}_{F_{S, j}, b_j})^{-1} \\
        &= \omega_S \prod_{j = 1}^{\ell(S)} \ab(1 - \widehat{r}_{F_{S, j}, b_j})^{-1}.
    \end{align*}

    To bound the mass, note that $\widehat{r}_{F_{S, j}, b_j} \le \rho$ for every $j = 1, \ldots, \ell(S)$.
    Hence we have
    \begin{align*}
        \Pr{\Propose[\omega](F, a) = S} \le \omega_S (1 - \rho)^{-\ell(S)} \le q_S \cdot \e^{2\rho\ell(S)}.
    \end{align*}
    For the mass of outputs, it holds that
    \begin{align*}
        \widehat{\OO}_{F, a} + \widehat{\EE}_{F, a} &= \sum_{S \in \CS_F(a)} \Pr{\Propose[\omega](F, a) = S} \\
        &\le \sum_{S \in \CS_F(a)} q_S \e^{2\rho \ell(S)} \\
        &\le \sum_{S \in \CS_F(a)} q_S \e^{2 \overline{D} \rho \abs{S}} \\
        &\le \sum_{S \in \CS_F(a)} q_S \e^{(1 + \tau) \abs{S}} \\
        &\le \rho
    \end{align*}
    by \Cref{prop:root-event} and $2\overline{D} \rho = 8/C_* < 1 + \tau$.

    To bound the number of source calls, let $N$ be the number of direct child proposal calls made by one invocation.
    Conditional on the source output $S$, each group factory uses at most $4$ oracle invocations in expectation.
    Each oracle invocation uses at most two proposals per child index, and the final checks use at most one additional proposal per index.
    Therefore
    \begin{align*}
        \E{N \mid S} \le 9\ell(S).
    \end{align*}
    We mark here that an empty label generates no children.
    Therefore,
    \begin{align*}
        \E{N} &\le 9\sum_{S \in \CS_F(a)} A_S \omega_S \ell(S) \\
        &\le 9 \sum_{S \in \CS_F(a)} A_S q_S \ell(S) \\
        &\le \frac{18}{C_*}.
    \end{align*}
    Set $M_s$ as the supremum of the expected total number of source calls for requests with at most $s$ clauses.
    Note that each invocation makes one call itself, and requests its direct children.
    Each child is specified by at most $s - 1$ clauses.
    Therefore,
    \begin{align*}
        \forall s \ge 1, \quad M_s \le 1 + \frac{18}{C_*} M_{s - 1}
    \end{align*}
    with boundary condition $M_0 = 0$.
    Solving the recursion, we conclude that
    \begin{align*}
        M_s \le \sum_{j = 0}^{s - 1} \ab(\frac{18}{C_*})^{j} \le \frac{1}{1 - 18/C_*} \le 2.
    \end{align*}
    Excluding child calls and source implementations, the expected work conditional on a nonempty source label $S$ is polynomial in $k,\overline D,|S|$: its neighborhood has at most $\overline D|S|$ indices and each group uses a bounded expected number of reciprocal-test iterations by \Cref{lem:reciprocal-test}. The empty label requires constant work. Averaging over $S$ using $\omega_S\le q_S$ and the moment bounds in \Cref{prop:local-weight-bounds} bounds the expected local work by $\widetilde O(\poly{k,\overline D})$ uniformly in the request. Charging this bound whenever a request is reached and using $M_s\le2$ gives the claimed cost outside the sources. For $\Propose_K$, \Cref{prop:executable-source-accuracy-cost} gives the same uniform bound for each executable source call, so it also bounds the full expected cost up to a constant factor.

    For the last argument, when $\omega = q$, we prove the ideal output law by induction on $\abs{F}$.
    The boundary condition $\abs{F} = 1$ is direct.
    Assume the inductive hypothesis.
    Then the child $j$ outputs with weight according to $w$ defined by $(F_{S, j}, b_j)$. 
    By \Cref{lem:connected-subset-inclusion-exclusion}, for every $j = 1, \ldots, \ell(S)$,
    \begin{align*}
        \widehat{r}_{F_{S, j}, b_j} = \mu_{F_{S, j - 1}}(B_{b_j}) = 1 - \frac{Z_{F_{S, j}}}{Z_{F_{S, j - 1}}}.
    \end{align*}
    Therefore,
    \begin{align*}
        \Pr{\Propose[\omega](F, a) = S} = q_S \prod_{j = 1}^{\ell(S)} \frac{Z_{F_{S, j - 1}}}{Z_{F_{S, j}}} = q_S \frac{Z_{F_S}}{Z_{F^-}} = w_S.
    \end{align*}
    Thus we conclude the lemma.
\end{proof}

For a proposal $\Propose[\omega](F, a)$, write
\begin{align*}
    \widehat O &\defeq \Pr{\Propose[\omega](F, a) \text{ is of odd size}}, \\
    \widehat E &\defeq \Pr{\Propose[\omega](F, a) \text{ is not empty and of even size}}
\end{align*}
and set $\widehat \Lambda \defeq \widehat O + \widehat E$ and $\widehat r \defeq \widehat O - \widehat E$.
We prepare the following parity bound for $\Propose_K(F, a)$.
\begin{lemma} \label{lem:parity-bound}
    For $\Propose_K(F, a)$ with $1 \le K \le r$, if $q_a > 0$, then $\widehat \Lambda > 0$ and $\widehat E/\widehat \Lambda < 1/64$.
    If $q_a = 0$, the proposal is empty almost surely.
    For $\widehat r$, it holds that $0 \le \widehat r \le \rho$.
\end{lemma}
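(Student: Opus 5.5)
The plan is to run everything through the output-mass formula of \Cref{lem:connected-subset-proposal}, specialized to the truncated source. By \Cref{prop:executable-source-accuracy-cost}, $\ExecSource_K(F,a)$ is the source $\Source[\omega]$ with $\omega_S = q_S\,\id{\abs{S}\le K}$, and $0\le\omega_S\le q_S$. Hence \Cref{lem:connected-subset-proposal} applies and gives
\[
 \Pr{\Propose_K(F,a)=S}=\omega_S\prod_{j=1}^{\ell(S)}\bigl(1-\widehat r_{F_{S,j},b_j}\bigr)^{-1},\qquad \widehat\Lambda\le\rho .
\]
I will prove the three claims together by induction on $\abs{F}$. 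The inductive hypothesis for the children is $0\le \widehat r_{F_{S,j},b_j}\le\rho$. With it, every output mass is sandwiched as
\[
 \omega_S\le \Pr{\Propose_K(F,a)=S}\le \omega_S(1-\rho)^{-\ell(S)}\le q_S e^{2\rho\ell(S)} .
\]
The base case $\abs{F}=1$ is immediate, since there the output is just the source output.

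\emph{The case $q_a=0$.} Every $S\in\CS_F(a)$ contains $a$, so $E_S\subseteq B_a$ and $q_S\le q_a=0$. Thus $\omega_S=0$ for every $S$, and the proposal is empty almost surely.

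\emph{Positivity when $q_a>0$.} The singleton $\{a\}$ has $\abs{\{a\}}=1\le K$ and is odd. Therefore
\[
 \widehat\Lambda\ge\widehat O\ge\Pr{\Propose_K(F,a)=\{a\}}\ge q_a>0 .
\]

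\emph{The even-mass ratio.} Every even output $S$ satisfies $2\le\abs{S}\le K\le r$. So \Cref{lem:expansion} gives $\abs{\vbl(S)}\ge(k-4)\abs{S}$, and through \Cref{prop:good-property} this becomes $\abs{\V_S}\ge(k-12)\abs{S}$. If $E_S$ is consistent then $q_S=2^{-\abs{\V_S}}$, and $q_a=2^{-\abs{\V_a}}$ with $\abs{\V_a}\le k$. Hence
\[
 \frac{q_S}{q_a}\le 2^{\,k-(k-12)\abs{S}}=2^{12}p_*^{\abs{S}-1}.
\]
Combining this with $\ell(S)\le\overline D\abs{S}$, $2\rho\overline D=8/C_*$, and \Cref{lem:connected-set-count} (at most $(\e\overline D)^{t-1}$ connected sets of size $t$ containing $a$), I get
\[
 \widehat E\le q_a\sum_{t\ge2}2^{12}\e^{8t/C_*}\bigl(\e\overline D p_*\bigr)^{t-1}.
\]
Since $p_*\overline D\le 2^{-40}$, the geometric series is of order $2^{12}\e\,p_*\overline D\le 2^{-27}$. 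Therefore $\widehat E\le 2^{-26}q_a\le 2^{-26}\widehat\Lambda<\widehat\Lambda/64$. I expect this step to be the main obstacle: the bound has to be taken relative to $q_a$ rather than absolutely, and it relies on the truncation $\abs{S}\le K\le r$ so that expansion is available.

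\emph{Range of $\widehat r$.} From the previous step, $\widehat E<\widehat\Lambda/64$ forces $\widehat E<\widehat O$, so $\widehat r=\widehat O-\widehat E\ge0$. In the case $q_a=0$ we have $\widehat r=0$ trivially. Also $\widehat r\le\widehat\Lambda\le\rho$ by \Cref{lem:connected-subset-proposal}. This closes the induction and proves the lemma.
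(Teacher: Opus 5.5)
Your proof is correct and follows essentially the paper's route: the output-mass formula of \Cref{lem:connected-subset-proposal}, the singleton lower bound, and the even-mass estimate via expansion, $q_a\ge 2^{-k}$ and \Cref{lem:connected-set-count}. The only difference is that you induct on $|F|$ to get nonnegative child discrepancies, and hence mass at least $q_a$ on $\{a\}$, whereas the paper skips the induction by taking $|\widehat r|\le\rho$ for the children directly from \Cref{lem:connected-subset-proposal} and using the weaker factor $(1+\rho)^{-\overline D}$. There is one minor numerical slip: $2^{12}\e\,p_*\overline D\le \e\cdot 2^{-28}$ is not below $2^{-27}$, but it is below $2^{-26}$, so your bound $\widehat E\le 2^{-26}q_a$ and the conclusion are unaffected.
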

\begin{proof}
    Suppose that $q_a = 0$.
    For every $S \in \CS_F(a)$, it also holds that $q_S = 0$ since $E_S \subseteq B_a$.
    Therefore the truncated source returns the empty label almost surely and so does $\Propose_K(F, a)$.
    Consequently $\widehat O = \widehat E = \widehat \Lambda = \widehat r = 0$.

    Now we assume that $q_a > 0$.
    By \Cref{lem:connected-subset-proposal}, every child request has non-empty output mass at most $\rho$.
    Denote by $\widehat r_{S, j}$ the signed mass difference for child $j$ when the initial source returns $S$, then $\abs{\widehat r_{S, j}} \le \rho$.
    By \Cref{lem:connected-subset-proposal},
    \begin{align*}
        \widehat{\omega}(S) \defeq \Pr{\Propose_K(F, a) = S} = q_S \id{\abs{S} \le K} \prod_{j = 1}^{\ell(S)} \ab(1 - \widehat r_{S, j})^{-1}.
    \end{align*}
    In particular,
    \begin{align*}
        (1 + \rho)^{-\ell(S)} \le \prod_{j = 1}^{\ell(S)} \ab(1 - \widehat r_{S, j})^{-1} \le (1 - \rho)^{-\ell(S)}.
    \end{align*}
    The singleton $S = \set{a}$ is a legal source output since $K \ge 1$.
    Hence
    \begin{align*}
        \widehat O \ge \widehat\omega(\set{a}) \ge q_a(1 + \rho)^{-\overline{D}} > 0.
    \end{align*}
    This already implies that $\widehat \Lambda > 0$.
    To bound the even mass, set $c_\rho = -\overline{D} \ln{(1 - \rho)}$.
    Since $\rho < 1/2$, $c_\rho \le 2\overline{D} \rho = 8/C_*$.
    Using $\ell(S) \le \overline{D} \abs{S}$, the upper bound yields that
    \begin{align*}
        \forall \abs{S} \le K, \quad \widehat\omega(S) \le q_S \e^{c_\rho \abs{S}}.
    \end{align*}
    Every returned set has size at most $K \le r$. Hence, by \Cref{prop:good-property}, $q_S \le p_*^{\abs{S}}$.
    Together with \Cref{lem:connected-set-count}, we estimate the even mass as
    \begin{align*}
        \widehat E \le \sum_{\ell = 2}^{K} \ab(\e \overline{D})^{\ell - 1} \ab(p_* \e^{c_\rho})^{\ell} \le p_* \e^{c_\rho} \frac{z}{1 - z}
    \end{align*}
    where $z = \e \overline{D} p_* \e^{c_\rho} \le 8 p_*\overline{D} \le 2^{-37}$.
    Note that $q_a = 2^{-\abs{\V_{\set{a}}}} \ge 2^{-k}$.
    Therefore, $p_*/q_a \le 2^{12}$.
    We compare $\widehat E$ and the mass at $\set{a}$ as follows:
    \begin{align*}
        \frac{\widehat E}{\widehat \Lambda} &\le \frac{p_* \e^{c_\rho}}{q_a (1 + \rho)^{-\overline{D}}} \frac{z}{1 - z} \\
        &\le 2^{12} \e^{c_\rho + \overline{D} \ln(1 + \rho)} \frac{z}{1 - z}.
    \end{align*}
    Since
    \begin{align*}
        c_\rho + \overline{D} \ln(1 + \rho) \le 2\overline{D} \rho + \overline{D} \rho = \frac{12}{C_*},
    \end{align*}
    the ratio is bounded by $\widehat E/\widehat \Lambda < 1/64$.
    Finally,
    \begin{align*}
        \widehat r = \widehat \Lambda - 2\widehat E > \frac{31}{32} \widehat \Lambda > 0.
    \end{align*}
    The other direction holds directly from $\widehat r \le \widehat \Lambda \le \rho$ by \Cref{lem:connected-subset-proposal}.
    Hence we conclude the complete lemma.
\end{proof}

Finally for every $F \subseteq \C^{\good}$, we provide the following accuracy bound.
\begin{proposition} \label{prop:proposal-accuracy-bound}
    For every $F \subseteq \C^{\good}$, $a \in F$ and $1\le K \le r$, consider $\Propose_K(F, a)$.
    It holds that
    \begin{align*}
        \abs{\widehat r - \mu_{F^-}(B_a)} \le 4d_K.
    \end{align*}
\end{proposition}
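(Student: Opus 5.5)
We compare the truncated proposal with the ideal one, $\omega=q$. By \Cref{lem:connected-subset-proposal} and \Cref{lem:connected-subset-inclusion-exclusion}, the ideal proposal outputs $S$ with mass exactly $w_S$. Its signed difference is therefore $r^\ast_{F,a}\defeq\OO_{F,a}-\EE_{F,a}=\mu_{F^-}(B_a)$. So it suffices to bound $\abs{\widehat r_{F,a}-r^\ast_{F,a}}$. We argue by induction on $\abs{F}$ and prove the stronger statement $\abs{\widehat r_{H,c}-r^\ast_{H,c}}\le 4d_K$ for every request $(H,c)$.

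First we write both output laws in the common product form of \Cref{lem:connected-subset-proposal}:
\begin{align*}
\widehat\omega(S)&=q_S\id{\abs{S}\le K}\prod_{j=1}^{\ell(S)}\bigl(1-\widehat r_{S,j}\bigr)^{-1},\\
w_S&=q_S\prod_{j=1}^{\ell(S)}\bigl(1-r^\ast_{S,j}\bigr)^{-1},
\end{align*}
where $\widehat r_{S,j}$ and $r^\ast_{S,j}$ are the truncated and ideal signed masses of the child request $(F_{S,j},b_j)$. Then
\[
\abs{\widehat r-r^\ast}\le\sum_{S}\abs{\widehat\omega(S)-w_S}.
\]
We split this sum into the truncation part $\abs{S}>K$ and the perturbation part $\abs{S}\le K$.

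\textbf{Truncation part.} For $\abs{S}>K$ the contribution is $w_S\le q_S(1-\rho)^{-\ell(S)}\le q_S e^{2\rho\overline D\abs{S}}$. This is dominated by $A_Sq_S$ up to a harmless factor. The tail estimate in the proof of \Cref{prop:executable-source-accuracy-cost} then bounds the total by roughly $2\rho e^{-(1+\tau-a_0)K}\le d_K$. A slightly cruder constant is absorbed into the factor $4$.

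\textbf{Perturbation part.} For $\abs{S}\le K$, the children are strictly smaller requests, so the induction hypothesis gives $\abs{\widehat r_{S,j}-r^\ast_{S,j}}\le 4d_K$. Since $\widehat r_{S,j},r^\ast_{S,j}\in[0,\rho]$ by \Cref{lem:parity-bound}, each factor $(1-x)^{-1}$ is $(1-\rho)^{-2}$-Lipschitz on $[0,\rho]$. A telescoping estimate over the $\ell(S)$ factors gives
\[
\abs{\widehat\omega(S)-w_S}\le q_S e^{O(\rho\ell(S))}\,\ell(S)\,(1-\rho)^{-2}\,4d_K.
\]
Summing with \Cref{prop:local-weight-bounds} ($\sum_S A_Sq_S\ell(S)\le 2/C_*$, using $A_S\ge1$) bounds the perturbation part by about $(2/C_*)(1+O(\rho))\cdot 4d_K\le d_K$.

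Adding the two parts gives at most $2d_K\le 4d_K$, which closes the induction. The base case $\abs{F}=1$ has no children, so only the truncation term appears and it is trivially within the bound.

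\textbf{Main obstacle.} The main difficulty is making the induction close, namely that the perturbation constant $(2/C_*)\cdot 4(1+o(1))$ stays below $1$ with room for the truncation term. This relies on the per-request factor $\sum_S q_S\ell(S)e^{O(\rho\ell(S))}$ being tiny, about $10^{-4}$. We must also check that the ideal child masses are well defined. Their $r^\ast$ values are $\mu_{F_{S,j-1}}(B_{b_j})\le e p_*\le\rho$ by \Cref{lem:core-assignment-good-variable-extension}, and the positivity $Z_{F}>0$ is guaranteed by that same lemma.
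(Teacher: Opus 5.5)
Your proof is correct, but it takes a different route from the paper's.

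\textbf{The paper's route.} The paper never compares the closed-form output masses. It couples the ideal and truncated executions at every source invocation, so each call has discrepancy probability at most $t_K(H,c)\le d_K$. It charges this against the expected number $\mathbb E N_\infty\le 2$ of ideal source calls from \Cref{lem:connected-subset-proposal}. Since the signed parity statistic changes by at most $2$, this gives $2\cdot 2d_K$.

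\textbf{Your route.} You induct on $|F|$ through the product formula of \Cref{lem:connected-subset-proposal} for both versions. The error splits into a fresh truncation term at the current request and a propagated child error. The propagated error is contracted by the small factor $\sum_S A_Sq_S\ell(S)\le 2/C_*$, which plays the role the call-count bound plays in the paper.

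\textbf{One step to tighten.} The factor $e^{O(\rho\ell(S))}$ is not $1+O(\rho)$ uniformly, because $\ell(S)$ can be as large as $\overline D|S|$. It is harmless for a cleaner reason: since $8/C_*<\ln 2/25$, you have $2\rho\ell(S)\le \ell(S)\ln 2/h$, so $(1-\rho)^{-\ell(S)}\le e^{2\rho\ell(S)}\le A_S$. This gives two consequences.
\begin{itemize}
\item $w_S\le A_Sq_S$, so the truncation term is at most $t_K(F,a)\le d_K$ with no extra factor.
\item The telescoped perturbation term is at most $4d_K(1-\rho)^{-1}\cdot 2/C_*$.
\end{itemize}
Hence the induction closes. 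In fact it closes with constant $1+O(1/C_*)$ in place of $4$.

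\textbf{Comparison.} Your argument gives a sharper constant and is purely analytic. However, it needs the a priori bounds $0\le\widehat r\le\rho$ from \Cref{lem:parity-bound} and $\mu_{F_{S,j-1}}(B_{b_j})\le x\le\rho$ from \Cref{lem:core-assignment-good-variable-extension}, plus a Lipschitz/telescoping estimate. The paper's coupling needs only the per-call total variation bound and the expected call count. It is also the same template the paper reuses in \Cref{prop:history-domination}, where the objects being compared are whole execution histories rather than explicit mass formulas.
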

\begin{proof}
    We couple the ideal and truncated proposals by maximally coupling every new source request.
    Precisely, for any source request $(H, c)$, let $S_\infty$ have the ideal source law.
    Define
    \begin{align*}
        S_K = \begin{cases}
            S_\infty & \abs{S_\infty} \le K \\
            \emptyset & \abs{S_\infty} > K
        \end{cases}\;.
    \end{align*}
    Then by \Cref{prop:executable-source-accuracy-cost}, $S_K$ has the law $\ExecSource_K(H, c)$ and
    \begin{align*}
        \Pr{S_\infty \neq S_K} = t_K(H,c)
        = \sum_{S \in \CS_H(c) \;:\; \abs{S} > K} A_S(H,c) q_S \le d_K.
    \end{align*}
    Now we regard each source invocation as an atomic kernel.
    Run the ideal and truncated proposals together and share all other randomness.
    While their histories agree, use the coupling above at the common request $(H,c)$.
    Conditional on the history before that request, its discrepancy probability is $t_K(H,c)\le d_K$.
    After the first discrepancy, continue each execution with its own law.
    
    Let $N_\infty$ be the total number of source calls in the ideal execution.
    By \Cref{lem:connected-subset-proposal}, $\E{N_\infty} \le 2$.
    For $j \ge 1$, let $A_j$ be the event that the two histories still agree before the $j$-th ideal source request and this request exists.
    Then by the observation $A_j \subseteq \set{N_\infty \ge j}$,
    \begin{align*}
        \Pr{\text{some source discrepancy occurs}} &\le d_K \sum_{j = 1}^{\infty} \Pr{A_j} \\
        &\le d_K \sum_{j = 1}^{\infty} \Pr{N_\infty \ge j} \\
        &= d_K \E{N_\infty} \\
        &\le 2d_K.
    \end{align*}
    Define the signed parity function
    \begin{align*}
        f(S) = \begin{cases}
            0 & S = \emptyset \\
            1 & \abs{S} \text{ is odd} \\
            -1 & S \neq \emptyset \text{ and } \abs{S} \text{ is even}
        \end{cases}\;.
    \end{align*}
    For the truncated proposal,
    \begin{align*}
        \E[Y_K \sim \Propose_K(F, a)]{f(Y_K)} = \widehat O - \widehat E = \widehat r.
    \end{align*}
    For the ideal proposal, \Cref{lem:connected-subset-inclusion-exclusion} provides that
    \begin{align*}
        \E[Y_\infty \sim \Propose[q](F, a)]{f(Y_\infty)} = \sum_{S \in \CS_F(a)} (-1)^{\abs{S} - 1} w_S = \mu_{F^-}(B_a).
    \end{align*}
    Under our coupling, the two parity values agree unless a source discrepancy occurs, and their difference is always at most two.
    Then
    \begin{align*}
        \abs{\widehat r - \mu_{F^-}(B_a)} &\le \E{\abs{f(Y_K) - f(Y_\infty)}} \\
        &\le 2 \Pr{\text{some source discrepancy occurs}} \\
        &\le 4d_K.
    \end{align*}
    Therefore we conclude this accuracy bound.
\end{proof}

\subsection{Local marginal coin}
In this part, we aim to construct a local marginal coin estimating the probability $\mu_{F^-}(B_a)$.
The first preparation is the following doubling factory for a Bernoulli coin.
\begin{proposition}[Doubling factory] \label{prop:doubling-factory}
    Given independent Bernoulli bits $\Ber(p_0)$ with success probability $p_0 \in [0, 1/64]$, one can generate a $\Ber(2p_0)$ bit using at most $5$ input bits in expectation.
\end{proposition}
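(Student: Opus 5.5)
The construction is a single-level Bernoulli factory. It is a power-series rewrite of $2p_0$ in terms of events on a few input bits, combined with a retry loop.

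\emph{Candidate.} Draw two bits $X_1,X_2\sim\Ber(p_0)$.
\begin{itemize}
\item If exactly one of them is $1$, output $1$. This happens with probability $2p_0(1-p_0)$.
\item If both are $0$, output $0$.
\item If both are $1$ (probability $p_0^2$), we must output $1$ with conditional probability $c$, where $2p_0(1-p_0)+p_0^2c=2p_0$. This forces $c=2/p_0$, which exceeds $1$, so the candidate fails as stated.
\end{itemize}

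\emph{Correction via a recursive identity.} Let $\theta$ denote the target probability. Draw $X_1$.
\begin{itemize}
\item If $X_1=1$, output $1$.
\item Otherwise draw $X_2$. If $X_2=1$, we need to output $1$ with some conditional probability $\beta$.
\item If $X_2=0$, output $0$.
\end{itemize}
This gives $p_0+(1-p_0)p_0\beta=2p_0$, so $\beta=1/(1-p_0)$, again exceeding $1$.

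\emph{Reduction to a reciprocal.} Since $2p_0=1-(1-2p_0)$, it suffices to produce a $\Ber(1-2p_0)$ coin. Write
$$1-2p_0=(1-p_0)^2\cdot\frac{1-2p_0}{(1-p_0)^2}=(1-p_0)^2\Bigl(1-\frac{p_0^2}{(1-p_0)^2}\Bigr).$$
So the procedure is:
\begin{enumerate}
\item Draw $X_1,X_2$. If either is $1$, output $1$ (i.e. $\Ber(2p_0)$ succeeds).
\item If both are $0$, the remaining target is to output $1$ with probability $\frac{p_0^2}{(1-p_0)^2}=\bigl(\frac{p_0}{1-p_0}\bigr)^2$.
\end{enumerate}
A coin of bias $p_0/(1-p_0)$ is obtained by a geometric loop: draw bits until the first $0$ appears, counting the number $G$ of $1$'s before it. The idea is to express $p_0/(1-p_0)=\sum_{j\ge1}p_0^j$ as a series of events on these draws. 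Two independent such coins, ANDed, give bias $\bigl(p_0/(1-p_0)\bigr)^2$. Each coin can in fact be realized more simply: draw $Y_1,Y_2$ and output $1$ iff $Y_1=1$, retrying on an appropriate event. A cleaner route is to reuse \Cref{lem:reciprocal-test}-style rejection: output $1$ with conditional probability proportional to the target and restart otherwise.

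\emph{Cost.} Since $p_0\le 1/64$, each retry loop terminates after a geometric number of rounds with mean $1+O(p_0)$. The two initial draws cost $2$ bits, and the correction is invoked only with probability $(1-p_0)^2$ but consumes $O(1)$ bits with tiny constants. Summing gives an expected bit count of $2+O(p_0)\cdot O(1)\le5$.

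\emph{Main obstacle.} The key difficulty is arranging the correction so that all conditional probabilities stay in $[0,1]$ while the expected bit count stays below $5$. I would verify this by solving the one-step fixed-point equation explicitly and bounding the expected usage using $p_0\le1/64$.
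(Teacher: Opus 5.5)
Your reduction is correct: output $1$ when one of $X_1,X_2$ is $1$, and on $X_1=X_2=0$ output $1$ with probability $\bigl(p_0/(1-p_0)\bigr)^2$. This gives $2p_0-p_0^2+p_0^2=2p_0$. (In your first candidate the required value is $c=2$, not $2/p_0$, but the conclusion is the same.)

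\textbf{The gap.} The reduction only moves the whole difficulty into a $\Ber\bigl((p_0/(1-p_0))^2\bigr)$ coin, which you never construct. The mechanisms you sketch provably cannot produce it, because the rule may not depend on the unknown $p_0$.
\begin{itemize}
\item \emph{Geometric loop.} A randomized event determined by the geometric count $G$ has probability $\sum_{j\ge0}c_jp_0^j(1-p_0)=c_0+\sum_{j\ge1}(c_j-c_{j-1})p_0^j$ with fixed $c_j\in[0,1]$. Matching $\sum_{j\ge1}p_0^j$ forces $c_j=j$, which is not a probability.
\item \emph{Bounded rounds with restart.} Any scheme that draws a bounded number of bits per round and then outputs or restarts returns $1$ with probability $a(p)/(a(p)+z(p))$. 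Here $a,z$ are polynomials that are nonnegative on all of $[0,1]$. Equating this with $p/(1-p)$ on $[0,1/64]$ gives the polynomial identity $a(p)(1-2p)=p\,z(p)$; for the square, the identity is $a(p)(1-2p)=p^2z(p)$. Either identity forces $a=z=0$ on $(1/2,1)$, hence identically, a contradiction.
\end{itemize}
The missing idea is a genuine division by a power of $1-p_0$. This requires an unbounded factory whose correctness uses that $p_0$ is small, which is exactly \Cref{lem:reciprocal-test}. Feeding it the oracle ``$t$ fresh input bits are all zero'' gives $b=(1-p_0)^t\ge 9/10$ and hence a $\Ber\bigl(1/(2(1-p_0)^t)\bigr)$ bit. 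You mention reciprocal-test-style rejection, but you never say which oracle to use or which event to combine it with.

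\textbf{The cost count is also wrong.} The correction branch is entered with probability $(1-p_0)^2\ge(63/64)^2$, not $O(p_0)$. So the total is $2+(1-p_0)^2$ times the cost of the correction, and you would need to show that the correction costs at most about $3$ bits.

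\textbf{How the paper does it.} The paper avoids the intermediate coin entirely. Draw $4$ bits and return $0$ unless exactly one is $1$, which happens with probability $4p_0(1-p_0)^3$. Otherwise return the output of the reciprocal test on the $3$-bit all-zero oracle. This outputs $1$ with probability $4p_0(1-p_0)^3/(2(1-p_0)^3)=2p_0$, at expected cost at most $4+4\cdot3\cdot4p_0(1-p_0)^3\le 4.75$.

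\textbf{Repairing your decomposition.} The same tool fixes your route. On $X_1=X_2=0$, draw $Y_1$, and only if $Y_1=1$ draw $Y_2,Y_3$. If exactly one of $Y_2,Y_3$ is $1$ (total probability $2p_0^2(1-p_0)$), return the reciprocal test on the $3$-bit all-zero oracle; otherwise return $0$. This outputs $1$ with probability $2p_0^2(1-p_0)/(2(1-p_0)^3)=\bigl(p_0/(1-p_0)\bigr)^2$ at expected cost $1+O(p_0)$, for a total of roughly $3$ bits.
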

\begin{proof}
    Draw $4$ input bits.
    Return zero unless exactly one is $1$.
    Apply the reciprocal test to an oracle that returns zero when $3$ fresh input bits are all zero.
    Its zero probability is $b = (1 - p_0)^3 \ge (63/64)^3 > 9/10$ and thus by \Cref{lem:reciprocal-test}, the resulting success probability is $4p_0(1 - p_0)^3/(2(1 - p_0)^3) = 2p_0$.
    The factory uses at most $4$ oracle calls consuming $3$ input bits each in expectation.
    Therefore the total cost is bounded by $4 + 48p_0(1 - p_0)^3 \le 5$.
\end{proof}

\begin{definition}[Local marginal coin] \label{def:local-marginal-coin}
    For $a \in F \subseteq \C^{\good}$, the procedure $\LocalMarginalCoin_K(F, a)$ estimates $\mu_{F\setminus\set{a}}(B_a)$ and is implemented as follows.
    \begin{enumerate}
        \item Draw one proposal from $\Propose_K(F, a)$.
        
        \item If the proposal is empty, return zero.
        
        \item Define a fresh parity oracle: repeatedly draw independent new proposals until one is non-empty;
        return one for an even subset and zero for an odd set.
        
        \item Use the doubling factory in \Cref{prop:doubling-factory} for the parity oracle and return the complement of its output.
    \end{enumerate}
\end{definition}

\begin{proposition} \label{prop:executable-local-marginal-coin}
    The procedure $\LocalMarginalCoin_K(F, a)$ returns a $\Ber\ab(\widehat{r})$ bit using at most $6$ top-level proposals and $12$ source calls in expectation.
    Moreover, whenever $4d_K \le \zeta$ and $K \le r$, its expected cost is $T_{\coin}(\zeta) = \widetilde{O}\ab(\poly{k, \overline{D}})$ for every sub-deletion family $F \subseteq \C^{\good}$.
\end{proposition}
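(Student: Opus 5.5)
We first establish the output law of the coin and then bound its cost. If $q_a=0$, \Cref{lem:parity-bound} shows that every proposal is empty almost surely. The procedure then returns zero at step 2, which is a $\Ber(\widehat r)$ bit because $\widehat r=0$. The parity oracle is never invoked in this case, so it never loops.

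Assume now $q_a>0$, so $\widehat\Lambda>0$ by \Cref{lem:parity-bound}. The parity oracle repeatedly draws proposals until one is non-empty. It therefore outputs i.i.d.\ bits with success probability $p_0=\widehat E/\widehat\Lambda<1/64$, and \Cref{prop:doubling-factory} applies. Its complement is a $\Ber(1-2\widehat E/\widehat\Lambda)$ bit, produced independently of the first proposal. The first proposal is non-empty with probability $\widehat\Lambda$. Hence the output equals one with probability
\begin{align*}
    \widehat\Lambda\ab(1-\frac{2\widehat E}{\widehat\Lambda})=\widehat O-\widehat E=\widehat r .
\end{align*}

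For the proposal count, the top level uses one proposal, plus the oracle calls made when that proposal is non-empty. The doubling factory uses at most $5$ oracle bits in expectation, and each oracle bit needs a geometric number of proposals with mean $1/\widehat\Lambda$. By Wald's identity, the expected number of proposals is at most
\begin{align*}
    1+\widehat\Lambda\cdot 5\cdot\frac{1}{\widehat\Lambda}=6 .
\end{align*}
The cancellation of $\widehat\Lambda$ is the key point: a tiny non-empty mass does not blow up the cost, because the oracle is only run after an initial non-empty draw. Each top-level proposal makes at most $2$ source calls in expectation, including all descendants, by \Cref{lem:connected-subset-proposal}. The stopping rules are adapted to independent fresh randomness, so Wald's identity again gives at most $12$ source calls.

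For the cost, \Cref{lem:connected-subset-proposal} bounds the expected cost of each $\Propose_K$ call, including its executable sources, by $\widetilde O(\poly{k,\overline D})$ uniformly over all requests $(F,a)$. This uniformity holds because the root estimate of \Cref{cond:nice-formula} applies to every deletion family, and \Cref{prop:executable-source-accuracy-cost} bounds each source uniformly. The doubling factory and reciprocal test add $O(1)$ bookkeeping per oracle call. Multiplying by the expected number $6$ of proposals, again via Wald, gives $T_{\coin}(\zeta)=\widetilde O(\poly{k,\overline D})$. The hypotheses $4d_K\le\zeta$ and $K\le r$ are not used in the cost bound itself. They serve only to make $K\le r$ legitimate for \Cref{prop:executable-source-accuracy-cost} and \Cref{lem:parity-bound}, and to ensure, via \Cref{prop:proposal-accuracy-bound}, that the coin is $\zeta$-accurate for $\mu_{F\setminus\{a\}}(B_a)$.

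The main point requiring care is the independence and Wald bookkeeping for the oracle. The randomness consumed by the geometric loops must be fresh, so that the oracle bits are genuinely i.i.d.\ with the stated parameter. The bound $p_0<1/64$ from \Cref{lem:parity-bound} must hold uniformly in $F$, so that the doubling factory's precondition always applies.
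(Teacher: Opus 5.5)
Your proposal is correct and follows essentially the same argument as the paper. It uses the same split according to whether the first proposal can be non-empty, and the same identity $\widehat\Lambda(1-2\widehat E/\widehat\Lambda)=\widehat r$ from \Cref{lem:parity-bound} and \Cref{prop:doubling-factory}. It also uses the same Wald-type count $1+\widehat\Lambda\cdot 5\cdot\widehat\Lambda^{-1}=6$, the factor $2$ for source calls from \Cref{lem:connected-subset-proposal}, and a uniform $\widetilde O(\poly{k,\overline D})$ charge per proposal. You also point out that the $\Ber(\widehat r)$ and counting claims already need $K\le r$ through \Cref{lem:parity-bound}; this is a fair clarification that the paper leaves implicit.
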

\begin{proof}
    For $\widehat \Lambda > 0$, the parity oracle has success probability $p_0 = \widehat E/\widehat \Lambda$.
    By \Cref{lem:parity-bound}, $p_0 \le 1/64$ and \Cref{prop:doubling-factory} applies.
    Consequently,
    \begin{align*}
        \Pr{\LocalMarginalCoin_K(F, a) = 1} = \widehat \Lambda \ab(1 - 2p_0) = \widehat r.
    \end{align*}
    When $\widehat \Lambda = 0$, the initial proposal returns the empty label almost surely and thus the procedure returns zero.
    The above argument concludes that $\LocalMarginalCoin_K(F, a)$ returns a $\Ber\ab(\widehat{r})$ bit.

    To count the top-level proposals, when $\widehat \Lambda = 0$, it is trivial to see that the procedure uses exactly one call.
    Otherwise suppose that $\widehat \Lambda > 0$.
    Each call for the fresh parity oracle requires a geometric number of proposals with mean $1/\widehat\Lambda$.
    Let $M$ be the number of parity-oracle calls requested by the doubling factory and $W_i$ be the number of proposals used by the $i$-th parity call.
    Then by \Cref{prop:doubling-factory}, $\E{M} \le 5$ and
    \begin{align*}
        \E{\sum_{i = 1}^M W_i} &= \sum_{i \ge 1} \E{\id{M \ge i} \cdot W_i} \\
        &= \frac{1}{\widehat\Lambda} \sum_{i = 1}^{\infty} \Pr{M \ge i} \\
        &\le \frac{5}{\widehat\Lambda}.
    \end{align*}
    Back to the whole procedure, there is always one initial proposal and the subsequent work is activated with probability exactly $\widehat\Lambda$.
    Thus
    \begin{align*}
        \E{\text{the number of proposal calls}} \le 1 + 5 = 6.
    \end{align*}
    Similarly, together with \Cref{lem:connected-subset-proposal}, the expected number of source calls is bounded by $12$.

    For the cost, by \Cref{prop:executable-source-accuracy-cost,lem:connected-subset-proposal}, a complete executable proposal has expected cost
    \begin{align*}
        B = \widetilde O\ab(\poly{k, \overline D}).
    \end{align*}
    Charging that conditional expected cost whenever a proposal is requested provides that the proposal work costs at most $6B$ in expectation.
    Together with \Cref{prop:doubling-factory}, we conclude that
    \begin{align*}
        \E{\Cost\ab(\LocalMarginalCoin_K(F, a))} = \widetilde O\ab(\poly{k, \overline D}).
    \end{align*}
    The cost bound uniformly holds for the cutoff.
    When $4d_K \le \zeta$, the cutoff $K$ will be picked as
    \begin{align*}
        K = \max\;\set{1, \left\lceil\log_R \frac{8\rho}{\zeta}\right\rceil}.
    \end{align*}
    Provided $K \le r$, all arguments hold and so does the cost bound.
\end{proof}

\subsection{A residual sampler with global abort and restart}
Now we provide the sampler for the residual formula $\Psi$.
For a better understanding, we first provide an ideal design just for analysis and then substitute it with an executable algorithm.

\subsubsection{Ideal insertion and deletion transports}
The ideal design uses the ideal source and thus the ideal proposing procedure $\Propose[q](\cdot, \cdot)$.
For a label $S$, its \emph{preparation} recursively deletes $b_{\ell(S)}, \ldots, b_1$ and then sets the variables in $\vbl^{\good}(S)$ to $u_S$.
This sends $\mu_{F^-}$ to $\eta_S$.
Its restoration redraws $\vbl^{\good}(S)$ as independent fair bits and recursively inserts $b_1, \ldots, b_{\ell(S)}$, sending $\eta_S$ to $\mu_{F^-}$.
Both preparation and restoration procedures employ the following update routine.
\begin{definition}[The update routine] \label{def:update}
    Insertion starts from an assignment satisfying $F^-$, and deletion starts from an assignment satisfying $F$. In either direction the individual pass below acts on assignments satisfying $F^-$, including those reached after a deletion has continued. Repeat the pass until it returns.
    \begin{enumerate}
        \item Independently draw $S$ from $\Propose[q](F, a)$.
        For deletion, return $X$ immediately if $S$ is non-empty and even.
        
        \item Flip a fair coin.
        On heads, start a new pass for deletion. For insertion on heads, return $X$ if $X\notin B_a$ and otherwise set $T\gets\Route(F,a,\star;X)$ and go to step~4.

        \item On tails, insertion starts a new pass if $S$ is empty or odd. Deletion returns $X$ if $S$ is empty. In the remaining tails cases (even $S$ for insertion, odd $S$ for deletion), prepare $S$ and set $T\gets\Route(F,a,S;X)$.

        \item For deletion, if the routed result is exceptional, start a new pass directly.
        Otherwise, restore $X$ from the routed label and start a new pass both for deletion and insertion.
    \end{enumerate}
    Here $\Route(F,a,S;X)=\iota_X(S)$ is the pairing of \Cref{lem:parity-pairing}. On exceptional input it returns $\set{a}$. On ordinary input the canonical exploration examines $N_F[S]$ to identify the partner.
    Preparation is bypassed on the exceptional insertion branch because $X\models F^-$ and $X\in B_a$ imply $Q(X)=\set{a}$. Thus $X$ already belongs to $E_{\set{a}}\cap\Good(F_{\set{a}})$ and can be routed directly from $\star$ to $\set{a}$ before restoration.
\end{definition}

\begin{lemma} \label{lem:ideal-update-accuracy-and-cost}
    The ideal routines terminate almost surely and satisfy
    \begin{align*}
        \mu_{F^-} \Update(F, a, +) = \mu_F, \qquad \mu_F \Update(F, a, -) = \mu_{F^-}.
    \end{align*}
    Under the respective input laws displayed above, each routine uses at most $16$ source requests in expectation, including all proposal and update descendants.
    Excluding implementation of the source, the expected cost is $\widetilde O\ab(\poly{k, \overline D})$.
\end{lemma}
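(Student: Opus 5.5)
We argue by strong induction on $\abs{F}$, proving correctness and the cost bound together. The recursion in $\Update(F,a,\pm)$ calls $\Propose[q](F,a)$, whose children are requests $(F_{S,j},b_j)$ with $\abs{F_{S,j}}<\abs{F}$. Preparation and restoration call $\Update(F_{S,j},b_j,\mp)$, again on strictly smaller families. So we may assume that every smaller update routine terminates almost surely, maps its input law correctly, and uses at most $16$ source requests in expectation.

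\textbf{Correctness of preparation and restoration.} Preparation deletes $b_{\ell},\ldots,b_1$ in turn. This sends $\mu_{F_{S,\ell}}=\mu_{F^-}$ to $\mu_{F_{S,\ell-1}}$, and so on down to $\mu_{F_S}$. Setting $\vbl^{\good}(S)\gets u_S$ then gives $\delta_{u_S}\otimes\mu_{F_S}$, which equals $\eta_S$ because $E_S$ and $\Good(F_S)$ live on disjoint coordinates (as in the proof of \Cref{lem:connected-subset-inclusion-exclusion}). Restoration reverses this: fair bits on $\V_S$ give $\mu_{F_S}$, and the insertions give back $\mu_{F^-}$.

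\textbf{Correctness of one pass.} Write the state as a measure on pairs (label, assignment) and use \Cref{lem:connected-subset-proposal}: the proposal outputs $S$ with mass $w_S$, independently of $X$.

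For insertion from $\mu_{F^-}$, a single pass does the following.
\begin{itemize}
\item With probability $1/2$ it returns $\mu_{F^-}\one_{\overline{B_a}}$ unnormalized, or routes $\mu_{F^-}\one_{B_a}$ through $\star\mapsto\set{a}$.
\item With probability $1/2$ it moves even-labelled mass $\sum_{\abs S\text{ even}}w_S\eta_S$ to odd labels via $\iota$.
\end{itemize}
By the measure-preserving bijection \eqref{eq:measure-parity-pairing}, the mass $\mu_{F^-}\one_{B_a}+\sum_{\text{even}}w_S\eta_S$ is routed exactly onto $\sum_{\text{odd}}w_S\eta_S$, and restoration maps each $\eta_{S'}$ back to $\mu_{F^-}$. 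Hence the non-returned part of the pass is again proportional to $\mu_{F^-}$, with total mass $\tfrac12\mu_{F^-}(B_a)+\tfrac12\EE$. The returned part is $\tfrac12\mu_{F^-}\one_{\overline{B_a}}\propto\mu_F$. Since the continuing mass is a fixed scalar multiple of the input law, the passes form a geometric loop, and the output law is $\mu_F$.

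Deletion is the time reversal. Starting from $\mu_F$:
\begin{itemize}
\item even proposals, with mass $\EE$, return immediately;
\item on tails, empty proposals return, with mass $\tfrac12(1-\OO-\EE)$;
\item odd labels are prepared and routed to even labels or to $\star$;
\item the routed $\star$-mass corresponds to $\mu_{F^-}\one_{B_a}$, and the pass restarts.
\end{itemize}
We will verify that the returned mass is proportional to $\mu_{F^-}$. The cleanest route is to check stationarity directly: treat the pass as a kernel on pairs and show it is the adjoint of the insertion kernel with respect to the joint measure. The bookkeeping here, in particular matching the $\tfrac12$ factors and using $\OO-\EE=\mu_{F^-}(B_a)$ from \Cref{lem:connected-subset-inclusion-exclusion}, is the step I expect to be most delicate.

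\textbf{Termination and cost.} Each pass returns with probability bounded below by a constant, roughly $\tfrac12(1-\rho)$ for insertion and at least $\tfrac12(1-2\rho)$ for deletion, uniformly in the current state. So the number of passes is geometric with mean at most about $2+O(\rho)$, and termination follows.

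Per pass, one top-level proposal uses at most $2$ expected source calls by \Cref{lem:connected-subset-proposal}. Preparation and restoration occur only on the nonempty-label branch, which has probability at most $\rho$. They invoke $\ell(S)$ child updates, each costing at most $16$ by induction. Their expected contribution is therefore at most $2\cdot16\sum_S A_Sq_S\ell(S)\le 64/C_*$ by \Cref{prop:local-weight-bounds}. Routing costs no source calls.

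Summing, the expected number of source requests is at most about $(2+O(\rho))(2+64/C_*)\le16$, which closes the induction. The non-source cost follows in the same way: routing explores $N_F[S]$ in $\poly{k,\overline D}\abs S$ time, and we charge it using the moment bounds of \Cref{prop:local-weight-bounds}.
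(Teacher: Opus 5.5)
Your insertion computation is essentially the paper's, except for one slip. The empty/odd-tails branch also continues, so the continuation mass is $\tfrac12(1-\EE)+\tfrac12\OO=1-\beta$, with $\beta=(1-\mu_{F^-}(B_a))/2$, not $\tfrac12\mu_{F^-}(B_a)+\tfrac12\EE$. Only proportionality to $\mu_{F^-}$ matters, and that survives.

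The genuine gap is the deletion direction, which you defer, and the difficulty is not the bookkeeping of the $\tfrac12$ factors. Starting from $\mu_F$, the odd-tails branch prepares $S$ from a multiple of $\mu_F$. The first deletion in that preparation is $\Update(F_{S,\ell(S)},b_{S,\ell(S)},-)$ with $F_{S,\ell(S)}=F^-$, and the induction hypothesis describes its output only from $\mu_{F^-}$. Consequently:
\begin{itemize}
\item the prepared measure is not $\tfrac12 w_S\eta_S$, so the pairing of \Cref{lem:parity-pairing} cannot be invoked;
\item your identification of the $\star$-mass with $\mu_{F^-}\mathbf 1_{B_a}$ is the computation for input $\mu_{F^-}$, not $\mu_F$;
\item after one pass the state may lie in $B_a$, so there is no geometric loop at law $\mu_F$.
\end{itemize}

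The missing idea is to evaluate the deletion pass on the reference law $\mu_{F^-}$, where every child receives its designated law. This gives $\mu_{F^-}\mathcal K^-=\mu_{F^-}-\beta\mu_F$ for the continuation kernel $\mathcal K^-$. A deletion pass returns its current state unchanged with probability exactly $\beta$ at every state. Hence $\mu_F=\beta^{-1}(\mu_{F^-}-\mu_{F^-}\mathcal K^-)$ telescopes to $\sum_{t\ge0}\mu_F(\mathcal K^-)^t=\beta^{-1}\mu_{F^-}$, since the remainder $\mu_{F^-}(\mathcal K^-)^N$ has mass $(1-\beta)^N$. The returned law is then $\beta\cdot\beta^{-1}\mu_{F^-}=\mu_{F^-}$.

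Your adjoint idea could in principle be made to work. It would require strengthening the induction hypothesis to exact adjointness of the full insertion and deletion kernels for all smaller families. You would also have to check that the different return rules ($\tfrac12\mathbf 1_{\neg B_a}$ versus the constant $\beta$) still give adjoint full kernels. None of this is carried out, and it would not by itself control costs.

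The cost argument has the same defect. Your claim that each pass returns with probability bounded below uniformly in the state is false for insertion: a pass at $X\in B_a$ never returns. The number of insertion passes is geometric only because the conditional law at the start of every pass is $\mu_{F^-}$.

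More seriously, charging $16$ source requests per child update uses the induction hypothesis, which holds only under that child's designated input law. For deletion started at $\mu_F$ this fails pass by pass. Even for restoration, the correct law $\eta_S$ appears only after summing over all routing sources.

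The paper instead uses the cumulative measure $\beta^{-1}\mu_{F^-}$ entering the passes, valid in both directions. Summed over all passes and routing sources, each child deletion receives exactly $c_S\mu_{F_{S,j}}$ and each child insertion receives $c_S\mu_{F_{S,j-1}}$, where $c_S=w_S/(2\beta)$. This yields $M_s\le 2\beta^{-1}+(16/C_*)M_{s-1}$ and hence $M_s\le 16$.
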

\begin{proof}
    The proof is similar to \cite[Lemma 6.2]{LWYZZ26}.
    For simplicity, write $\mu^- = \mu_{F^-}$, $\mu^+ = \mu_F$, $r_a = \mu^-(B_a)$ and $\beta = (1 - r_a)/2$.
    We induct on $|F|$. All recursive transports act on smaller families, and the case $|F|=1$ has no child transports. Because the prescribed uniform laws are positive on their supports, almost-sure termination under a smaller-family input law also holds at each assignment in its support.
    Let $\OO, \EE$ be the odd and even ideal proposal masses.
    By \Cref{lem:connected-subset-inclusion-exclusion}, $\OO - \EE = r_a$.

    Evaluate one pass in either direction on the reference input law $\mu^-$. For insertion, the successful heads branch returns $\frac12\mu^-\one_{\neg B_a}=\beta\mu^+$.
    Empty or odd tails continue with measure $\frac{1}{2} (1 - \EE) \mu^-$.
    After preparation, routing and restoration, failed heads and even tails together continue with measure $\frac{1}{2} \OO \mu^-$.
    Therefore the total continuation measure is $(1 - \beta) \mu^-$.

    For deletion, the even-proposal and empty-tails branches return
    \begin{align*}
        \ab(\EE + \frac{1}{2}\ab(1 - \OO - \EE)) \mu^- = \beta \mu^-.
    \end{align*}
    The heads branch continues with measure $\frac{1}{2}(1 - \EE) \mu^-$.
    Routing odd tails in the reverse direction, restoring the even terms and continuing the exceptional term directly, contributes $\frac{1}{2}\EE \mu^- + \frac{1}{2}\mu^- \one_{B_a}$.

    Therefore, we obtain the one-pass identities
    \begin{equation} \label{eq:update-ledger}
        \begin{array}{c|cc}
             &\text{returned measure} &\text{continuation measure}\\
            \hline
            + & \beta \mu^+ & (1 - \beta) \mu^- \\
            - & \beta \mu^- & \mu^- - \beta \mu^+.
        \end{array}
    \end{equation}
    For insertion from $\mu^-$, summing the geometric continuation measures, we obtain the total entering measure $\beta^{-1} \mu^-$.
    Thus the returned law of the insertion is $\mu^+$.
    For deletion, let $\K^-$ be the continuation kernel on assignments satisfying $F^-$. From the deletion row of \eqref{eq:update-ledger}, $\mu^+=\beta^{-1}(\mu^--\mu^-\K^-)$. Thus for every $N\ge1$,
    \[
        \sum_{t=0}^{N-1}\mu^+(\K^-)^t
          =\beta^{-1}\bigl(\mu^--\mu^-(\K^-)^N\bigr).
    \]
    At every input the even-proposal and empty-tails branches return with total probability $\EE+(1-\OO-\EE)/2=\beta$, independently of the assignment. Hence $\mu^-(\K^-)^N$ has mass $(1-\beta)^N$. Taking the limit, the cumulative measure entering the deletion passes from the actual input law $\mu^+$ is
    \begin{equation}\label{eq:update-deletion-occupation}
        \sum_{t\ge0}\mu^+(\K^-)^t=\beta^{-1}\mu^-.
    \end{equation}
    Applying the one-pass return kernel to this measure gives $\beta^{-1}(\beta\mu^-)=\mu^-$. In both directions the returned measure has mass one, which proves almost-sure termination.

    For the cost, observe that the expected total number of passes is $\beta^{-1} \le 4$.
    Each proposal uses at most $2$ sources in expectation by \Cref{lem:connected-subset-proposal}.
    By \Cref{lem:core-assignment-good-variable-extension},
    \begin{align*}
        w_S \le q_S (1 - x)^{-\ell(S)} \le q_S \e^{b_x \abs{S}}
    \end{align*}
    where $b_x = -\overline{D} \ln(1 - x) \le 2\overline D x < 1/2$.
    Since $\tau \ge 1$, by \Cref{prop:root-event},
    \begin{align*}
        \sum_{S \in \CS_F(a)} w_S \ell(S) \le \overline D \sum_{S \in \CS_F(a)} \abs{S} q_S \e^{b_x\abs{S}} \le \overline D \rho.
    \end{align*}
    Likewise, for every fixed integer $j \ge 0$,
    \begin{align} \label{eq:moment-bound}
        \sum_{S \in \CS_F(a)} \abs{S}^j w_S \le O_j(\rho).
    \end{align}
    We now compute the input measures of the child updates. In either direction the cumulative measure entering the parent passes is $\beta^{-1}\mu^-$, by the insertion row of \eqref{eq:update-ledger} and by \eqref{eq:update-deletion-occupation}. Fix an ordinary label $S$ with $w_S>0$ and put $c_S=w_S/(2\beta)=w_S/(1-r_a)$. For insertion, preparation is used for even $S$ and restoration for odd $S$. For deletion these parities are reversed.

    Whenever the direction and parity call for preparation of $S$, the independent proposal and tails coin select it with probability $w_S/2$ per pass. Its cumulative input measure is therefore $c_S\mu^-$. By induction on the smaller families, the deletion of $b_{S,j}$ in that preparation receives measure $c_S\mu_{F_{S,j}}$: starting at $F_{S,\ell(S)}=F^-$, each deletion sends this measure to $c_S\mu_{F_{S,j-1}}$. Fixing the variables in $\V_S$ after the last deletion gives $c_S\eta_S$.

    For restoration the measure is determined after routing. By the measure-preserving bijection on assignment--label pairs in \Cref{lem:parity-pairing}, routing the even labels together with the exceptional branch gives mass $c_S\eta_S$ at each odd destination $S$. Indeed, the even inputs carry $c_T\eta_T$ at each label $T$, and the exceptional input carries $(2\beta)^{-1}\mu^-\one_{B_a}$. The reverse routing for deletion gives $c_S\eta_S$ at each even destination. Redrawing $\V_S$ sends this measure to $c_S\mu_{F_S}$ because no clause in $F_S$ uses those variables. Thus the insertion of $b_{S,j}$ in restoration receives $c_S\mu_{F_{S,j-1}}$ and sends it to $c_S\mu_{F_{S,j}}$.

    Each fixed child consequently receives a scalar multiple of its specified uniform input law when summed over all parent passes. Bounding preparation and restoration separately, the expected number of direct child updates is at most
    \[
        2\sum_{S\in\CS_F(a)}c_S\ell(S)
        \le 4\overline D\rho=\frac{16}{C_*},
    \]
    since $r_a\le x<1/2$. Let $M_s$ bound the expected total source count in either direction on families of size at most $s$ under the respective input law. Each of the above child measures contributes at most its mass times $M_{s-1}$ to that count. Together with the independent proposal at the start of each pass, this yields
    \begin{align*}
        M_s \le 2\beta^{-1} + \frac{16}{C_*} M_{s - 1}
    \end{align*}
    with boundary condition $M_0 = 0$.
    Thus we conclude that $M_s \le 8/(1 - 16/C_*) \le 16$.
    To bound the bit cost, processing an ordinary label requires polynomial local work in $k, \overline{D}$ and $\abs{S}$ and with \eqref{eq:moment-bound}, it takes $\poly{k, \overline{D}}$ bit operations in expectation.
    Then the additional work costs $\poly{k, \overline{D}}$ in expectation by a similar recursive bound as $M_s$.
\end{proof}

\subsubsection{Stopped sources and normalization after restart}
To implement an executable residual sampler, we replace the ideal source with a truncated one.
Direct replacement controls accuracy, but does not necessarily control the running cost after an assignment transport first deviates from its ideal law.
We instead use point-wise domination of successful histories.
\begin{definition} \label{def:stopped-source}
    The subroutine $\StoppedSource_K(F, a)$ runs as follows.
    \begin{enumerate}
        \item Draw a label $S$ from $\ExecSource_K(F, a)$.
        \item If $S$ is non-empty, directly return $S$.
        \item Flip a Bernoulli coin with rate
        \begin{align*}
            a_K \defeq \frac{d_K}{1 - 2\rho} = \frac{2\rho}{1 - 2\rho}R^{-K}.
        \end{align*}
        \item If the coin is on heads, abort the entire residual construction attempt; otherwise return the empty label.
    \end{enumerate}
\end{definition}

The implementation of $\Ber(a_K)$ coins can be done as: firstly flip a $\Ber(2\rho/(1 - 2\rho))$ coin; then flip at most $K$ independent $\Ber\ab(R^{-1})$ coins, stopping at the first failure.
Return one if and only if all coins are on heads.

\begin{proposition}[Domination of complete histories] \label{prop:history-domination}
    Start from a fresh fair good-variable assignment and insert every label in $F$ sequentially.
    At each source request $(H,c)$ use $\StoppedSource_K(H,c)$, and propagate every abort through the entire call stack.
    Define $\Lambda_F$ to be the successful-output sub-measure of one attempt, \IE,
    \begin{align*}
        \Lambda_F(A) = \Pr{\text{one attempt succeeds and returns an assignment in }A}.
    \end{align*}
    Let $p_F = \Lambda_F(\set{0, 1}^{\V^\good})$ be the total mass of $\Lambda_F$.
    Then
    \begin{align*}
        \Lambda_F \le \mu_F, \qquad 1 - p_F \le 32\abs{F}d_K.
    \end{align*}
    One attempt has expected cost $\widetilde O\ab(n + \abs{F}\poly{k, \overline D})$ in addition to a one-time precision setup cost.
\end{proposition}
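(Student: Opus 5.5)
The plan is to compare one attempt, history by history, with the ideal execution that uses $\Source_\infty$ at every request. Run the $|F|$ insertions $\Update(\cdot,\cdot,+)$ from a fresh fair assignment. Treat each source request as an atomic random choice, and regard an execution as a finite sequence of atomic outcomes: source labels, coins, factory bits and the initial bits. Everything other than the source outputs is generated by identical rules in the two executions. The output assignment and the next request are deterministic functions of the history so far.

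\textbf{Step 1: one request.} The first step is the single-request domination inequality. At a request $(H,c)$, write $\Sigma_{\le K}=\sum_{S\in\CS_H(c),|S|\le K}A_Sq_S$ and $t=t_K(H,c)=\sum_{|S|>K}A_Sq_S\le d_K$. By \Cref{prop:executable-source-accuracy-cost}, $\StoppedSource_K(H,c)$ returns each nonempty $S$ with $|S|\le K$ with exactly its ideal mass $A_Sq_S$. It never returns a larger $S$, it returns the empty label with mass $(1-\Sigma_{\le K})(1-a_K)$, and it aborts otherwise. By \Cref{prop:local-weight-bounds}, $1-\Sigma_{\le K}\ge1-2\rho$, so $a_K(1-\Sigma_{\le K})\ge d_K\ge t$. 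Hence
\[
(1-\Sigma_{\le K})(1-a_K)\le 1-\Sigma_{\le K}-t,
\]
and the right-hand side is the ideal empty mass. So every non-abort outcome of a stopped request has probability at most its ideal probability. The per-request deficit is
\[
t+a_K(1-\Sigma_{\le K})\le d_K+a_K\le 2d_K/(1-2\rho).
\]

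\textbf{Step 2: complete histories.} Multiply these inequalities along a successful complete history. Every factor is either identical in the two executions or dominated by Step~1, so the stopped probability of each successful history is at most its ideal probability. Summing over the histories leading to an output in $A$ gives $\Lambda_F(A)\le\Pr[\text{ideal output}\in A]$. The ideal output law is $\mu_F$: apply \Cref{lem:ideal-update-accuracy-and-cost} $|F|$ times, starting from $\nu^{\good}=\mu_\emptyset$ and inserting one label at a time. This proves $\Lambda_F\le\mu_F$.

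\textbf{Step 3: abort probability.} For the mass loss, telescope along the ideal execution: $1-p_F$ equals the total ideal mass of histories that the stopped execution fails to reproduce. Equivalently, couple the two executions so that they agree until the first request at which the stopped source deviates, with conditional deviation probability at most the per-request deficit. By the same argument as in \Cref{prop:proposal-accuracy-bound}, $1-p_F$ is at most the deficit times the expected number of ideal source requests. By \Cref{lem:ideal-update-accuracy-and-cost}, each insertion under its correct input law uses at most $16$ requests, and in the ideal chain the input law of the $i$th insertion is exactly $\mu_{F_{i-1}}$. This gives at most $16|F|$ requests, hence $1-p_F\le 16|F|\cdot 2d_K/(1-2\rho)$. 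Absorbing the factor $1/(1-2\rho)$ by a slightly sharper accounting (for instance $d_K$ already carries slack, since $t_K\le 2\rho e^{-(1+\tau-a_0)K}$), this yields $32|F|d_K$.

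\textbf{Step 4: cost, the main obstacle.} Deviated histories escape the ideal accounting, which is exactly why aborts are used. A stopped attempt never continues past a deviating request except by an ordinary empty label that is itself dominated. So the expected number of requests actually made is at most the ideal expected number, at most $16|F|$. I would prove this by charging each request to a history prefix whose stopped probability is at most its ideal probability. Each request then costs $\widetilde O(\poly{k,\overline D})$ in expectation by \Cref{prop:executable-source-accuracy-cost}. The $\Ber(a_K)$ coin uses at most $K$ coins of bias $R^{-1}$ in the form described, and the other local work is bounded as in \Cref{lem:connected-subset-proposal,lem:ideal-update-accuracy-and-cost}. Adding $O(n)$ for the initial fair assignment gives the stated cost. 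The one-time setup is the precomputation of the rationals $2\rho/(1-2\rho)$, $R^{-1}$ and $K$.

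The delicate point is making ``charge to prefixes'' rigorous with nonnegative per-request costs. I would handle it by writing the expected cost as a sum over prefixes of $\Pr_{\text{stopped}}(\text{prefix})$ times the conditional cost at the next request, and then bounding $\Pr_{\text{stopped}}(\text{prefix})\le\Pr_{\text{ideal}}(\text{prefix})$ exactly as in Step~2.
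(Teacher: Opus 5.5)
Your argument is essentially the paper's. You prove per-request domination $Q_{H,c}\le P_{H,c}$ on every label and lift it to successful histories; the paper does the same thing by drawing from $P_{H,c}$ and retaining with probability $Q_{H,c}/P_{H,c}$, which is your product-of-factors argument in coupling form. You then bound the abort probability by the per-request abort probability times $\mathbb{E}[N_{\mathrm{ideal}}]\le 16|F|$, and bound the cost using the fact that the stopped attempt only follows ideal prefixes.

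There is one slip, in Step 1. You correctly identified the ideal empty mass as $1-\Sigma_{\le K}-t$, so the total per-request deficit is
$$t+(1-\Sigma_{\le K}-t)-(1-\Sigma_{\le K})(1-a_K)=a_K(1-\Sigma_{\le K}),$$
not $t+a_K(1-\Sigma_{\le K})$; you counted $t$ twice. This corrected quantity is exactly the abort probability of the request. It is at most $a_K=d_K/(1-2\rho)\le 2d_K$, since $\rho<1/4$. Step 3 therefore gives $1-p_F\le 32|F|d_K$ directly, and the appeal to slack in $d_K$ is unnecessary.
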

\begin{proof}
    We couple the ideal construction and the stopped attempt.
    First, for the ideal construction, by \Cref{lem:ideal-update-accuracy-and-cost}, it terminates almost surely and its final output $X_\ideal$ satisfies $\Law(X_\ideal) = \mu_F$.
    Let $N_\ideal$ be the total number of source requests including all recursive descendants.
    Then
    \begin{align*}
        \E{N_\ideal} \le 16\abs F.
    \end{align*}
    Its expected work outside the sources is bounded by
    \begin{align*}
        \widetilde O\ab(n + \abs F\poly{k, \overline D}).
    \end{align*}

    Now we compare the stopped source with its ideal counterpart.
    Fix a request $(H,c)$ and write $B_K(H,c)=\Pr{\ExecSource_K(H,c)=\emptyset}$.
    By \Cref{prop:local-weight-bounds,prop:executable-source-accuracy-cost}, $B_K(H,c)\ge1-2\rho$.
    Truncation moves exactly $t_K(H,c)$ of the ideal source mass to the empty label, so its ideal mass is $B_K(H,c)-t_K(H,c)$. By the choice of $a_K$, its stopped mass satisfies
    \begin{align*}
        (1-a_K) B_K(H,c) \le B_K(H,c)-d_K \le B_K(H,c)-t_K(H,c).
    \end{align*}
    For every non-empty label of size at most $K$, its stopped probability equals its ideal probability.
    For large labels, the stopped probability is zero.
    Let $P_{H, c}$ be the ideal source law and
    \begin{align*}
        Q_{H, c}(S) = \Pr{\StoppedSource_K(H, c) = S}
    \end{align*}
    for every $S\in\CS_H(c)\cup\{\emptyset\}$. Thus, for all these labels,
    \begin{align*}
        Q_{H, c}(S) \le P_{H, c}(S).
    \end{align*}
    The conditional abort probability is
    \begin{align*}
        \beta_{H, c} = a_K B_K(H,c) \le \frac{d_K}{1 - 2\rho} \le 2d_K.
    \end{align*}
    For the coupling, draw a label $S$ from $P_{H,c}$ and retain it with probability
    \begin{align*}
        \theta_{H, c}(S) = \frac{Q_{H, c}(S)}{P_{H, c}(S)}
    \end{align*}
    for each $S$ with $P_{H,c}(S)>0$. Abort if the retention test fails.
    The retained mass at $S$ is $Q_{H,c}(S)$, so this gives the law of $\StoppedSource_K(H,c)$.

    Apply this coupling at every source request including recursive proposals, updates, and reciprocal tests.
    Share the initial fair good-variable assignment and all randomness outside source implementations.
    Until an abort, the two executions have identical histories and therefore request identical source inputs.
    When an abort occurs, we terminate the entire stopped attempt while allowing the ideal execution to continue.

    Let $N^\diamond$ be the number of source requests made by the stopped attempt including a request that aborts.
    Under this coupling, $N^\diamond \le N_\ideal$ almost surely.
    Therefore, the stopped attempt terminates almost surely.
    On success, the stopped output is exactly the ideal output.
    Thus for every $A$,
    \begin{align*}
        \Lambda_F(A) = \Pr{\text{stopped attempt succeeds and }X_\ideal \in A} \le \Pr{X_\ideal \in A} = \mu_F(A).
    \end{align*}
    This proves $\Lambda_F \le \mu_F$.

    Now we bound the abort probability.
    The preceding history determines each source request. Conditional on that history, its abort probability is $\beta_{H,c}\le2d_K$.
    Observe that the first abort stops the entire attempt.
    Thus by the domination $N^\diamond \le N_\ideal$ almost surely,
    \begin{align*}
        \Pr{\text{the attempt aborts}} &\le 2d_K \sum_{j \ge 1} \Pr{N^\diamond \ge j} \\
        &\le 2d_K \sum_{j \ge 1} \Pr{N_\ideal \ge j} \\
        &\le 2d_K \E{N_\ideal} \\
        &\le 32d_K \abs{F}.
    \end{align*}
    Therefore $1 - p_F \le 32 \abs{F} d_K$.

    To conclude the complete proposition, we provide the cost guarantee.
    The abort coin is implemented by a $\Ber(2\rho/(1 - 2\rho))$ coin and at most $K$ Bernoulli coins with rate $R^{-1}$.
    The expected number of these elementary coin calls is at most $1 + (2\rho)/(1 - 2\rho)\cdot 4/3 \le 7/3$.
    The rational coin parameters have local-parameter bit length and the counter costs are absorbed by the polylogarithmic factors.
    By \Cref{prop:executable-source-accuracy-cost}, for each actual stopped-source request, the expected cost is at most $\widetilde O\ab(\poly{k, \overline D})$.
    Therefore, the total expected cost for all source implementations is bounded as $\widetilde O\ab(\abs{F} \poly{k, \overline D})$.
    Moreover, all other work follows an ideal execution prefix under the coupling.
    Abort propagation adds at most constant work per active frame.
    Therefore, the expected cost of an entire stopped attempt is bounded by
    \begin{align*}
        \widetilde O\ab(n + \abs{F} \poly{k, \overline D}),
    \end{align*}
    which concludes the whole proposition.
\end{proof}

Now we are ready to provide the executable good-variable sampler.
\begin{definition}[Good-variable sampler] \label{def:good-variable-sampler}
    Let $F\subseteq\C^\good$ and let the rational tolerance $\delta\in(0,1)$ be given by a binary numerator and denominator of total bit length $L_\delta$.
    If $F=\emptyset$, return independent fair bits on $\V^\good$. Otherwise set
    \begin{align} \label{eq:truncated-length-choice}
        K_\delta \defeq \max\set{1, \left\lceil \log_R \frac{128 \rho (m + 1)}{\delta}\right\rceil}.
    \end{align}
    Fix an order on $F = \set{a_1, \ldots, a_{\abs F}}$ and set
    \begin{align*}
        F_0 = \emptyset, \qquad F_j = \set{a_1, \ldots, a_j}.
    \end{align*}
    The $\GoodSampler(F, \delta)$ runs as follows.
    \begin{enumerate}
        \item Draw a fresh assignment $X \sim \nu^\good$.
        
        \item For $j = 1, \ldots, \abs{F}$, run
        \begin{align*}
            X \gets \Update_K^\diamond(F_j, a_j, +; X)
        \end{align*}
        where $\Update_K^\diamond$ is the procedure obtained from $\Update$ by replacing all ideal sources with $\StoppedSource_K$.

        \item If any abort occurs, discard the current assignment and restart from the first step using independent randomness.
        
        \item If all $\abs{F}$ insertions finish, return $X$.
    \end{enumerate}
\end{definition}

\begin{lemma}[Executable good-variable sampler] \label{lem:good-variable-sampler}
    Let $0<\delta<1/2$ be rational. If $F=\emptyset$ or the cutoff $K_\delta$ in \eqref{eq:truncated-length-choice} is at most $r$, then $\GoodSampler(F,\delta)$ terminates almost surely and returns a satisfying assignment $X$ with $\TV{\Law(X)}{\mu_F}\le\delta$.
    For fixed $C>0$ and $\delta\ge n^{-C}$, its expected bit cost, including precision setup, is
    \[
        \widetilde O_C\ab(n+m\poly{k,\overline D}+L_\delta).
    \]
\end{lemma}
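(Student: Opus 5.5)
The case $F=\emptyset$ is immediate: the sampler returns $\nu^\good=\mu_\emptyset$ exactly, at cost $O(n)$. For $F\neq\emptyset$ we reduce everything to \Cref{prop:history-domination}, applied with $K=K_\delta\le r$. One attempt of $\GoodSampler(F,\delta)$ is exactly the stopped construction described there: sequential insertions $\Update_K^\diamond(F_j,a_j,+)$ from a fresh fair assignment, with aborts propagated. So the successful-output sub-measure $\Lambda_F$ satisfies $\Lambda_F\le\mu_F$ and has total mass $p_F\ge 1-32|F|d_K$.

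For the choice of $K_\delta$, note $d_K=2\rho R^{-K}$ and $R^{K_\delta}\ge 128\rho(m+1)/\delta$. Hence $32|F|d_K\le 64\rho m R^{-K}\le \delta/2<1/4$, so $p_F\ge 3/4>0$.

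\textbf{Termination and law.} Attempts are independent, and the restart rule returns the output of the first successful attempt. By a geometric-trials argument, the output law is therefore exactly $\Lambda_F/p_F$, and termination is almost sure since $p_F>0$. Since $\Lambda_F\le\mu_F$, we get pointwise $\Lambda_F/p_F-\mu_F\ge \mu_F(1/p_F-1)-(\mu_F-\Lambda_F)/p_F$. More directly,
\[
\TV{\Lambda_F/p_F}{\mu_F}=\sum_x\bigl(\mu_F(x)-\Lambda_F(x)/p_F\bigr)^+\le \sum_x\bigl(\mu_F(x)-\Lambda_F(x)\bigr)=1-p_F\le\delta/2 .
\]
The inequality holds because $\Lambda_F/p_F\ge\Lambda_F$. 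The output satisfies $\Psi[F]$, since $\Lambda_F$ is supported where $\mu_F$ is, i.e.\ on $\Good(F)$. The proof plan should double-check that every aborted attempt is discarded rather than returning a partial assignment.

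\textbf{Cost.} By Wald's identity over attempts, the expected total cost is at most $p_F^{-1}\le 4/3$ times the expected cost of one attempt, plus a one-time setup. Per \Cref{prop:history-domination}, one attempt costs $\widetilde O(n+|F|\poly{k,\overline D})$, with $|F|\le m$. The remaining issue is the hidden dependence on $K$, which I expect to be the main bookkeeping obstacle. Coin implementations use $\Ber(R^{-1})$ coins and exact rationals whose bit lengths are polynomial in local parameters. Every explored label has size at most $K$, and the abort coin uses at most $K$ sub-coins with a geometric stopping time. I would verify that these contribute at most $\polylog$ factors in $K$.

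Since $\delta\ge n^{-C}$ and $R\ge4$, we have $K_\delta=O_C(\log n+\log(m+1))$, so these factors are absorbed into $\widetilde O_C$. The one-time precision setup consists of computing $K_\delta$ from the binary encoding of $\delta$ (comparing $R^K$ against $128\rho(m+1)/\delta$ by repeated multiplication) and preparing the rational coin parameters. It costs $\widetilde O(L_\delta+\poly{k,\overline D}\log n)$. Summing these gives $\widetilde O_C(n+m\poly{k,\overline D}+L_\delta)$.
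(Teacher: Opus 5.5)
Your proposal is correct and follows the paper's proof essentially step for step. You reduce a single attempt to \Cref{prop:history-domination} with $K=K_\delta$, use the choice of $K_\delta$ to get $1-p_F\le\delta/2$, identify the output law as $\Lambda_F/p_F$ via the geometric restart sum, and bound the cost by Wald/Tonelli plus a one-time setup. The only differences are cosmetic: the paper bounds total variation through the mixture decomposition $\mu_F=p_F\pi_F+(1-p_F)\eta_F$ instead of your positive-part computation, and it finds $K_\delta$ by doubling plus binary search with exact rational comparisons. That search needs only $O(\log K)$ full-precision comparisons against the $L_\delta$-bit target, which cleanly delivers the setup bound you state.
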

\begin{proof}
    For $F=\emptyset$ the returned law is exactly $\mu_\emptyset=\nu^\good$ and drawing the assignment costs $O(n)$. Assume henceforth that $F\ne\emptyset$ and put $K=K_\delta\le r$.
    By \Cref{prop:history-domination}, $\Lambda_F \le \mu_F, 1 - p_F \le 32 \abs F d_K$ and one attempt terminates almost surely and has expected cost $\widetilde O\ab(n + \abs F \poly{k, \overline D})$ apart from a one-time precision setup.

    To bound the failure probability, the choice of $K$ provides that
    \begin{align*}
        R^K \ge \frac{128\rho(m + 1)}{\delta}
    \end{align*}
    and therefore $d_K = 2\rho R^{-K} \le \delta/(64(m + 1))$.
    Then $1 - p_F \le \delta/2$, meaning that the successful probability for a stopped attempt is at least $3/4$.
    Therefore, the expected number of attempts is at most $4/3$, implying that $\GoodSampler(F, \delta)$ terminates almost surely.

    Let $X$ be its returned assignment.
    For every event $A$,
    \begin{align*}
        \Pr{X \in A} &= \sum_{j \ge 1} \Pr{\text{the first }j - 1\text{ attempts fail, and the attempt }j\text{ returns in }A} \\
        &= \sum_{j \ge 1} (1 - p_F)^{j - 1} \Lambda_F(A) \\
        &= \frac{\Lambda_F(A)}{p_F}.
    \end{align*}
    Therefore the output law is exactly $\pi_F \defeq \Law(X) = \Lambda_F/p_F$.
    The domination $\Lambda_F \le \mu_F$ immediately gives that $X$ is a satisfying assignment of $F$.

    To control the effect of the normalization $p_F$, when $p_F = 1$, $\Lambda_F = \pi_F = \mu_F$, therefore the total variation distance is zero.
    When $p_F < 1$, the sub-measure $\mu_F - \Lambda_F$ is non-negative and has mass $1 - p_F$.
    Then we have the following decomposition
    \begin{align*}
        \mu_F = p_F \pi_F + (1 - p_F) \eta_F
    \end{align*}
    where $\eta_F = (\mu_F - \Lambda_F)/(1 - p_F)$.
    Then it holds that
    \begin{align*}
        \TV{\pi_F}{\mu_F} &= (1 - p_F) \TV{\pi_F}{\eta_F} \le 1 - p_F \le \delta/2.
    \end{align*}

    To justify the cost, let $C_j$ be the cost of attempt $j$ excluding the shared precision setup.
    For analysis, imagine an infinite sequence of independent attempts, even though the algorithm stops after $N$ attempts.
    The total attempt cost is
    \begin{align*}
        \sum_{j \ge 1} C_j \id{N \ge j}.
    \end{align*}
    The event $\id{N \ge j}$ depends only on the preceding attempts and therefore by Tonelli's theorem,
    \begin{align*}
        \E{\sum_{j \ge 1} C_j \id{N \ge j}} = \sum_{j \ge 1} \E{C_j} \Pr{N \ge j} = \frac{\E{C_1}}{p_F}.
    \end{align*}
    Then the total expected cost of all attempts is therefore
    \begin{align*}
        \widetilde O\ab(n + m \poly{k, \overline D}).
    \end{align*}
    For precision setup, $K$ is the least integer $j\ge1$ such that $\delta R^j\ge128\rho(m+1)$. Doubling the search bound and then using binary search finds $K$ by exact rational comparisons. The numerator and denominator of $R$ have $O(k+\log_2(\overline D+2))$ bits, so repeated squaring and fast integer multiplication give a setup cost of
    \begin{align*}
        \widetilde O\ab((k + \log_2(\overline D + 2)) K + L_\delta + \log_2(m + 1)).
    \end{align*}
    Since $R \ge 4$ and $\rho < 1$,
    \begin{align*}
        K = O\ab(1 + \log_2(m + 1) + \log(1/\delta)).
    \end{align*}
    For $\delta\ge n^{-C}$ this gives $K=O_C(1+\log(m+1)+\log n)$. Since $F\ne\emptyset$ implies $m\ge1$, adding setup to the cost of the attempts gives the stated bound. In particular, for $\delta=2^{-b}$ in this range the numerator and denominator have $L_\delta=O(b+1)=O_C(1+\log n)$ bits.
\end{proof}

\section{Putting All Together: Design of the Complete Sampler} \label{sec:full-sampler-design}

Now we are ready to show \Cref{thm:fast-sampler}.
The main obstacle is that, the analysis in \Cref{sec:core-fast-sampling} for $\gap(\K)$ defines the exact marginal updates while we can only `approximately' simulate the updates of $\K$ by the local marginal coin $\LocalMarginalCoin$ and sample a solution by the executable sampler $\GoodSampler$.
Here we show how to appropriately construct the sampler, provide the accuracy guarantee and the total running cost, and finally conclude a proof of \Cref{thm:fast-sampler}.

Before proceeding, we assume that $\Phi$ is a nice formula satisfying \Cref{cond:nice-formula} and therefore all deterministic properties hold as in \Cref{sec:core-fast-sampling}.
Write $\pi = \mu^{\bad}$ for the target core marginal.

\subsection{The simulated heat-bath update}
Now we design the simulated heat-bath update.
Fix an index $i \in [s]$ and a feasible assignment $\sigma$ on $\V^{\bad}$.
Recall that $\sigma_{-i}$ is the projection of $\sigma$ on $\V^\bad \setminus B_i$.
Let $\vbl(C_a)$ denote the variables occurring in the original clause $C_a$ of $\Phi$. Define
\[
    \J_i \defeq \{a\in\C^\good:\vbl(C_a)\cap B_i\ne\emptyset\},
    \qquad H_{\sigma,i}\defeq D_\sigma\setminus\J_i.
\]
For $a\in\C^\good\setminus\J_i$, the core literals of $C_a$ all lie outside $B_i$, so its membership in $D_\sigma$ depends only on $\sigma_{-i}$. In particular, $H_{\sigma,i}$ contains every purely good clause.
For $z\in\Omega_i$, put $L_{i,\sigma}(z)=D_{(\sigma_{-i},z)}\cap\J_i$. Then
\[
    D_{(\sigma_{-i},z)}=H_{\sigma,i}\,\dot\cup\,L_{i,\sigma}(z).
\]
\begin{definition}[Core heat-bath update] \label{def:core-heat-bath-update}
    The $\CoreHeatBath(\sigma, i; \zeta)$ is the subroutine updating one selected core component $B_i$ as follows.
    \begin{enumerate}
        \item Set the clause family $H = H_{\sigma, i}$ and the cutoff
        \begin{align*}
            K = \max\;\set{1, \left\lceil \log_R \frac{8\rho}{\zeta}\right\rceil}.
        \end{align*}
        \item Draw a uniformly random assignment $z \in \Omega_i$ using the rejection sampler of \Cref{lem:core-feasible-assignment}.
        \item Start from $F = H$.
        Test each clause $a \in L_{i, \sigma}(z)$ by calling $\LocalMarginalCoin_K(F \cup \set{a}, a)$ with fresh randomness and:
        \begin{itemize}
        \item if it returns $1$, discard the trial family $F$ and restart from the second step with the same $H$;
            \item otherwise, add $a$ to $F$ and continue to test the next clause.
        \end{itemize}
        \item When all tests pass, replace $\sigma_i$ by $z$, commit $F$ and return.
    \end{enumerate}
\end{definition}

\begin{proposition} \label{prop:core-heat-bath-update-accuracy-cost}
    For $\CoreHeatBath(\sigma, i; \zeta)$, let $r_i$ be the number of mixed clauses incident to $B_i$, $\underline a_i = (1 - x)^{r_i}$, $U_i$ be the expected cost of generating a uniform feasible candidate and $T_\coin(\zeta)$ be the uniform expected cost of one local coin.
    Assume that the cutoff $K \le r$ and $r_i \zeta \le \underline a_i/2$.
    Then the following guarantees hold.
    \begin{itemize}
        \item \textbf{Accuracy:} the procedure terminates almost surely and returns a feasible assignment $\sigma'$ such that
        \begin{align*}
            \TV{\Law(\sigma'_i \mid \sigma)}{\pi(\cdot \mid \sigma_{-i})} \le \frac{r_i \zeta}{\underline{a}_i}.
        \end{align*}

        \item \textbf{Expected cost:} after shared preprocessing, including every rejected trial,
        \begin{align*}
            \E{\Cost(\CoreHeatBath(\sigma, i; \zeta))} \le \widetilde O\ab(\frac{U_i + (r_i + 1) T_\coin(\zeta)}{(1 - x)^{r_i}}).
        \end{align*}
    \end{itemize}
\end{proposition}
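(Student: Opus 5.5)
The plan is to read $\CoreHeatBath$ as a rejection sampler: a uniform proposal on $\Omega_i$, followed by a product of independent Bernoulli acceptance tests. In the exact version each test would pass with probability exactly $1-\mu_{F}(B_a)$. By \Cref{lem:core-assignment-good-variable-extension}, for $\sigma'=(\sigma_{-i},z)$ we have $\pi(\sigma')\propto Z(\sigma')=Z_{D_{\sigma'}}=Z_{H\,\dot\cup\,L_{i,\sigma}(z)}$. Order $L_{i,\sigma}(z)=\{a_1,\dots,a_t\}$ as the algorithm does and put $F_j=H\cup\{a_1,\dots,a_j\}$. Telescoping then gives
\[
\frac{Z_{F_t}}{Z_H}=\prod_{j=1}^{t}\bigl(1-\mu_{F_{j-1}}(B_{a_j})\bigr),
\]
and $Z_H$ does not depend on $z$.

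In the executed version, \Cref{prop:executable-local-marginal-coin} shows that the $j$-th coin is $\Ber(\widehat r_j)$. \Cref{prop:proposal-accuracy-bound} bounds $|\widehat r_j-\mu_{F_{j-1}}(B_{a_j})|\le4d_K$, and the choice of $K$ ensures $4d_K\le\zeta$. Hence one trial accepts $z$ with probability
\[
\widehat A(z)=\prod_j(1-\widehat r_j),
\]
the returned law is $\widehat A(z)/\sum_{z'}\widehat A(z')$, and the target is $A(z)/\sum_{z'}A(z')$ with $A$ the exact product.

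For accuracy, compare the two products termwise. Since $t\le r_i$, each factor of $A$ is at least $1-x$ by \eqref{eq:violating-upper-bound}. So $A(z)\ge\underline a_i$, and $|\widehat A(z)-A(z)|\le t\zeta\le r_i\zeta$ by the standard telescoping bound for products of numbers in $[0,1]$. Normalizing two nonnegative weight functions whose pointwise difference is at most $\epsilon$ relative to the uniform proposal, with $A\ge\underline a_i$ everywhere, gives total variation at most $\epsilon/\underline a_i$, up to a constant. A direct computation,
\[
\tfrac12\sum_z\Bigl|\tfrac{\widehat A}{\widehat S}-\tfrac{A}{S}\Bigr|\le\frac{\max_z|\widehat A-A|}{\min_z A},
\]
should yield $r_i\zeta/\underline a_i$.

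The main obstacle is getting exactly this constant without a stray factor of $2$. I would try splitting via $\widehat A/\widehat S-A/S$, using $|\widehat S-S|\le r_i\zeta\,|\Omega_i|$. The hypothesis $r_i\zeta\le\underline a_i/2$ guarantees $\widehat A\ge\underline a_i/2>0$, so acceptance has positive probability and termination is almost sure. Feasibility holds because $z\in\Omega_i$.

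For cost, each trial costs $U_i$ for the candidate plus at most $r_i$ coin calls, each of expected cost $T_\coin(\zeta)$ by \Cref{prop:executable-local-marginal-coin}. Coins stop early on rejection, so this bound is an overestimate. Computing $H$ and $L_{i,\sigma}(z)$ is shared preprocessing plus $\widetilde O(r_i)$ per trial. The number of trials is geometric with success probability $\mathbb E_z\widehat A(z)\ge\underline a_i-r_i\zeta\ge\underline a_i/2$. The Wald-type argument used in \Cref{lem:good-variable-sampler} then gives expected total cost $\widetilde O((U_i+(r_i+1)T_\coin(\zeta))/(1-x)^{r_i})$.
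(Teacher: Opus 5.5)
Your proposal is correct and follows the paper's proof essentially step for step. The shared steps are the telescoping identity $A_z=Z_{D_{(\sigma_{-i},z)}}/Z_H$, the termwise bound $|\widehat A_z-A_z|\le r_i\zeta$ from $4d_K\le\zeta$, geometric trials with success probability at least $\underline a_i/2$, and a Wald/Tonelli cost bound. The factor-of-two worry goes away with exactly the split you suggest:
$\frac12\sum_z|\widehat A_z/\widehat S-A_z/S|\le\frac12\bigl(\sum_z|\widehat A_z-A_z|+|\widehat S-S|\bigr)/S\le r_i\zeta|\Omega_i|/S\le r_i\zeta/\underline a_i$,
which is the paper's own computation.
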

\begin{proof}
    We list $L_{i, \sigma}(z)$ as
    \begin{align*}
        L_{i, \sigma}(z) = \set{a_1, \ldots, a_\ell} \quad \text{where } \ell \le r_i,
    \end{align*}
    and set
    \begin{align*}
        F_0 = H, \qquad F_j = F_{j - 1} \cup \set{a_j}, \quad \forall j = 1, \ldots, \ell.
    \end{align*}
    Set $\nu_j=\mu_{F_{j-1}}(B_{a_j})$. By \Cref{lem:core-assignment-good-variable-extension}, $\nu_j\le x$.
    Each test uses the current family $F_j$ with the newly added clause $a_j$. Since $F_j\setminus\{a_j\}=F_{j-1}$, its ideal violation probability is $\nu_j$. Thus the ideal acceptance probability of $z$ is
    \begin{align*}
        A_z = \prod_{j=1}^\ell (1-\nu_j) \ge (1-x)^{r_i}=\underline a_i.
    \end{align*}
    Each factor $1-\nu_j$ equals $Z_{F_j}/Z_{F_{j-1}}$, and $F_\ell=H\cup L_{i,\sigma}(z)=D_{(\sigma_{-i},z)}$. Hence $A_z=Z_{D_{(\sigma_{-i},z)}}/Z_H$. The common denominator $Z_H$ cancels when normalizing the core extension weights, so
    \begin{align*}
        p(z) \defeq \pi(z \mid \sigma_{-i}) = \frac{A_z}{\sum_{u \in \Omega_i} A_u}.
    \end{align*}
    Now let $\widehat \nu_j$ be the probability that the implemented local coin returns $1$.
    By \Cref{prop:proposal-accuracy-bound,prop:executable-local-marginal-coin} and the cutoff choice $4d_K\le\zeta$,
    \begin{align*}
        \abs{\widehat \nu_j - \nu_j} \le \zeta.
    \end{align*}
    Then the implemented accepting probability of $z$ is
    \begin{align*}
        \widehat A_z = \prod_{j = 1}^\ell \ab(1 - \widehat \nu_j).
    \end{align*}
    We then calculate
    \begin{align*}
        \abs{\widehat A_z - A_z} \le \sum_{j = 1}^\ell \abs{\widehat \nu_j - \nu_j} \le r_i \zeta.
    \end{align*}
    Define the average acceptance probability as
    \begin{align*}
        \mathscr A_i \defeq \frac{1}{\abs{\Omega_i}} \sum_{z \in \Omega_i} A_z, \qquad \widehat{\mathscr A}_i \defeq \frac{1}{\abs{\Omega_i}} \sum_{z \in \Omega_i} \widehat A_z.
    \end{align*}
    It holds that $\widehat{\mathscr A}_i \ge \mathscr A_i - r_i \zeta \ge \underline a_i/2 > 0$.
    Thus the number of trials follows a geometric distribution of mean $1/\widehat{\mathscr A}_i \le 2/\underline a_i$.
    Observe that its first acceptance candidate has law
    \begin{align*}
        \widehat p(z) = \sum_{t \ge 1} \ab(1 - \widehat{\mathscr A}_i)^{t - 1} \frac{\widehat A_z}{\abs{\Omega_i}} = \frac{\widehat A_z}{\abs{\Omega_i} \widehat{\mathscr A}_i}.
    \end{align*}
    Therefore the procedure terminates almost surely and the output is a feasible core assignment. For the accuracy bound, add and subtract $\widehat A_z/(\abs{\Omega_i}\mathscr A_i)$ inside each absolute value to obtain
    \begin{align*}
        \TV{p}{\widehat p} &= \frac{1}{2} \sum_{z \in \Omega_i} \abs{\frac{A_z}{\abs{\Omega_i}\mathscr A_i} - \frac{\widehat A_z}{\abs{\Omega_i} \widehat{\mathscr A}_i}} \\
        &\le \frac{1}{2\abs{\Omega_i}\mathscr A_i} \sum_{z \in \Omega_i} \abs{A_z - \widehat A_z}
        + \frac{1}{2\abs{\Omega_i}} \sum_{z \in \Omega_i} \widehat A_z
          \abs{\frac{1}{\mathscr A_i} - \frac{1}{\widehat{\mathscr A}_i}} \\
        &= \frac{1}{2\abs{\Omega_i}\mathscr A_i} \sum_{z \in \Omega_i} \abs{A_z - \widehat A_z}
        + \frac{\abs{\widehat{\mathscr A}_i-\mathscr A_i}}{2\mathscr A_i} \\
        &\le \frac{r_i\zeta}{\mathscr A_i}
        \le \frac{r_i\zeta}{\underline a_i}.
    \end{align*}
    Here $\abs{\widehat{\mathscr A}_i-\mathscr A_i}\le r_i\zeta$, and both normalizing constants are positive by the preceding acceptance bound.

    For the expected cost, assume that the shared preprocessing has been completed.
    One trial generates one feasible candidate, makes at most $r_i$ local-marginal-coin calls, and performs local list and rollback operations.
    Therefore a uniform expected cost bound is
    \begin{align*}
        C_i = \widetilde O\ab(U_i + (r_i + 1) T_{\coin}(\zeta)).
    \end{align*}
    Let $J$ be the number of trials and $Y_t$ be the cost at the $t$-th trial.
    Fresh randomness guarantees that
    \begin{align*}
        \E{Y_t \mid J \ge t} \le C_i
    \end{align*}
    and the event $\set{J \ge t}$ only depends on its preceding history.
    Tonelli's theorem immediately provides that
    \begin{align*}
        \E{\Cost(\CoreHeatBath(\sigma, i; \zeta))} &= \sum_{t \ge 1} \Pr{J \ge t} \E{Y_t \mid J \ge t} \\
        &\le C_i \sum_{t \ge 1} \ab(1 - \widehat{\mathscr A}_i)^{t - 1} \le \frac{2C_i}{\underline a_i}.
    \end{align*}
    Hence we conclude the cost bound.
\end{proof}

\subsection{Design and analysis of the full sampler}\label{subsec:full-sampler-input}
We take the requested tolerance $\eps\in(0,1/2]$ in binary fractional notation, using the terminating expansion for dyadic inputs. Read its digits through the first $1$, at position $b$ after the binary point, and pass $\eps'=2^{-b}$ to the sampler below. This reads $b$ digits and gives $\eps/2<\eps'\le\eps$. If instead the input is a binary numerator and denominator, the same $b$ can be found from their bit lengths and one shifted integer comparison, with an additional $\widetilde O(L_\eps)$ conversion cost for an encoding of length $L_\eps$.
\begin{definition}[The main sampler] \label{def:main-sampler}
    Given a $k$-SAT formula $\Phi$ with $n$ variables and $m$ clauses, and a dyadic error tolerance $\eps=2^{-b}$ for an integer $b\ge1$, define $\FastSampler(\Phi, \eps)$ as follows. For implementation, evaluate the parameters in \eqref{eq:parameter-choice} at the rational input density $m/n$; in particular, $\lambda=1+m/n$.
    \begin{enumerate}
        \item If $m = 0$, output $n$ independent fair bits.
        Otherwise, compute $\lambda$ and $\Delta$ as \eqref{eq:parameter-choice} and generate $\V^{\high}$, $\V^{\bad}$, $\V^{\good}$, bad components $B_1, \ldots, B_s$ and the good formula $\Psi$.
        If $s = 0$, call $\GoodSampler(\C^{\good}, \eps/2)$ and output the result.

        \item Check $\abs{B_i} \le \log_2 n$ and whether $\Omega_i \neq \emptyset$ for every $i = 1, \ldots, s$.
        If the check fails, output $\FAIL$.

        \item For every $i = 1, \ldots, s$, find $\sigma_i$ by rejection sampling on $\set{0, 1}^{B_i}$ and initialize $D_\sigma$.
        
        \item Set $r_{\max} = \max_{i \in [s]}\; r_i$ and
        \begin{align*}
            K^\good = \max\;\set{1, \left\lceil \log_R \frac{256\rho(m+1)}{\eps}\right\rceil}.
        \end{align*}
        If $r_{\max} = 0$, skip the chain and go directly to the final good-variable call.
        Otherwise, compute the integer mixing factor and chain length
        \begin{align*}
            \gamma_0 &= p_* = 2^{-k+12}, \qquad
            M_n = 2^{1+\ceil{\gamma_0\ceil{\log_2 n}}}, \\
            T^\bad &= M_n s \ab(n + 1 + \left\lceil \log_2\frac{4}{\eps}\right\rceil).
        \end{align*}
        The exponent $\gamma_{\mathrm{mix}}$ in \Cref{thm:core-spectral-gap} satisfies $\gamma_{\mathrm{mix}}\le\gamma_0$ in our parameter range. In particular,
        \begin{align}\label{eq:full-sampler-mixing-factor}
            2n^{\gamma_{\mathrm{mix}}}\le 2n^{\gamma_0}\le M_n\le 8n^{\gamma_0}.
        \end{align}
        The nested ceilings make $M_n$ computable by integer arithmetic, without evaluating an irrational power of $n$.
        Moreover, compute
        \begin{align*}
            \overline x = 3p_*, \qquad \underline b = (1-\overline x)^{r_{\max}}, \qquad
            \zeta_0 = \frac{\eps \underline b}{4 T^\bad r_{\max}}.
        \end{align*}
        These quantities are rational, and $x=\e p_*<\overline x$ implies $\underline b\le\underline a\defeq(1-x)^{r_{\max}}$.
        Choose $\zeta$ to be the largest power of two not exceeding $\zeta_0$, so that $\zeta_0/2<\zeta\le\zeta_0$, and pick the cutoff
        \begin{align*}
            K_\coin \defeq \max\;\set{1, \left\lceil \log_R \frac{8\rho}{\zeta}\right\rceil}.
        \end{align*}

        \item Repeat the following runs for exactly $T^\bad$ times: choose $i$ uniformly from $\set{1, \ldots, s}$ and execute $\CoreHeatBath(\sigma, i; \zeta)$ using cutoff $K_\coin$.
        
        \item Draw $\tau \gets \GoodSampler(D_\sigma, \eps/2)$ with fresh randomness and cutoff $K^{\good}$.
        Return $(\sigma, \tau)$ as the final output.
    \end{enumerate}
\end{definition}

On the structural event and the polymer-root estimate, we provide the following accuracy guarantee for $\FastSampler(\Phi, \eps)$.
\begin{proposition} \label{prop:fast-sampler-accuracy}
    For every nice formula $\Phi$ and every dyadic tolerance $\eps \in (0, 1/2]$ with all cutoffs used by the algorithm at most $r$, $\FastSampler(\Phi, \eps)$ terminates almost surely and outputs a satisfying assignment of $\Phi$ from $\mu_\Phi$ within total variation error at most $\eps$.
\end{proposition}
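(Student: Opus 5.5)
The plan is to split the total variation error into three pieces: the mixing error of the exact core chain, the accumulated error from simulating each heat-bath step, and the error of the final good-variable call. Each piece will be at most $\eps/4$ or $\eps/2$. We first dispose of the degenerate branches. If $m=0$, the output is exactly uniform. On a nice formula, the check in step~2 never fails, by \Cref{lem:component-growth-size,lem:core-feasible-assignment}. If $s=0$, then $\V^{\bad}=\emptyset$, $\Psi=\Phi$ and $\mu_\Phi=\mu_{\C^{\good}}$, so \Cref{lem:good-variable-sampler} with $\delta=\eps/2$ gives the claim. The same reasoning covers $r_{\max}=0$: there the core marginal is uniform on $\Omega$, since $Z(\sigma)$ does not depend on $\sigma$, and the initial rejection samples are exact. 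Termination of each step follows from \Cref{lem:core-feasible-assignment}, \Cref{prop:core-heat-bath-update-accuracy-cost} and \Cref{lem:good-variable-sampler}, because there are finitely many steps, each terminating almost surely.

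For the core stage, couple the simulated chain $\widehat X_t$ with the exact heat-bath chain $X_t$ of kernel $\K$, started from the same feasible $\sigma_0$ and using the same block choices. At each step we maximally couple the exact conditional law with the simulated one. \Cref{prop:core-heat-bath-update-accuracy-cost} bounds the one-step discrepancy by $r_i\zeta/\underline a_i\le r_{\max}\zeta/\underline b$, using $\underline b\le\underline a_i$. For this we must check its hypotheses. The cutoff $K_\coin\le r$ holds by assumption. The condition $4d_{K_\coin}\le\zeta$ follows from the choice of $K_\coin$ (via $R^{K}\ge 8\rho/\zeta$). We also need $r_i\zeta\le\underline a_i/2$, which follows from $\zeta\le\zeta_0=\eps\underline b/(4T^{\bad}r_{\max})$. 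A union bound over $T^{\bad}$ steps gives $\TV{\Law(\widehat X_{T^\bad})}{\K^{T^\bad}(\sigma_0,\cdot)}\le T^{\bad}r_{\max}\zeta/\underline b\le\eps/4$. Since $\sigma_0$ is random but feasible, we condition on it.

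Next, \Cref{thm:core-spectral-gap} applies on a nice formula; its inputs are listed in \Cref{subsec:core-polymer}, and \eqref{eq:core-feasibility-extension} comes from \Cref{lem:core-feasible-assignment,lem:core-assignment-good-variable-extension}. By \eqref{eq:core-tv} and \eqref{eq:full-sampler-mixing-factor}, we get $\TV{\K^{T^\bad}(\sigma_0,\cdot)}{\mu^{\bad}}\le\frac12 2^{n/2}\exp(-T^{\bad}/(2sn^{\gamma_{\mathrm{mix}}}))$. Since $T^{\bad}/(2sn^{\gamma_{\mathrm{mix}}})\ge n+1+\log_2(4/\eps)$, this is at most $\frac12 2^{n/2}e^{-n}\cdot\eps/4\le\eps/8$. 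So the core output $\sigma$ is within $3\eps/8$ of $\mu^{\bad}$. We also record that $\sigma$ is always feasible and that $D_\sigma$ is maintained correctly, since step~4 of \Cref{def:core-heat-bath-update} commits exactly $H_{\sigma,i}\cup L_{i,\sigma}(z)=D_{(\sigma_{-i},z)}$.

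Finally, \Cref{lem:core-assignment-good-variable-extension} identifies $\mu_\Phi(\cdot\mid\V^{\bad}\gets\sigma)$ with $\mu_{\Psi[D_\sigma]}$, and $\GoodSampler(D_\sigma,\eps/2)$ is within $\eps/2$ of this law by \Cref{lem:good-variable-sampler}. We need its cutoff at most $r$. The algorithm uses $K^{\good}$ built with $256\rho(m+1)/\eps$, which matches \eqref{eq:truncated-length-choice} at $\delta=\eps/2$. It is at most $r$ by hypothesis, and every output satisfies $D_\sigma$. Together with the feasibility of $\sigma$, this shows that $(\sigma,\tau)$ satisfies $\Phi$. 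Since the final call uses fresh randomness, the joint TV is at most the core TV plus the supremum over $\sigma$ of the conditional TV, that is, $3\eps/8+\eps/2\le\eps$ (tighter bookkeeping is possible).

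The main obstacle is the per-step coupling. \Cref{prop:core-heat-bath-update-accuracy-cost} compares only the law of the new block $\sigma'_i$, so we must make sure the committed family and the remaining coordinates coincide with the exact chain on the coupled event. We must also make sure the error bound is uniform over all feasible $\sigma_{-i}$, including states reached adaptively. This uniformity should follow because every residual family $H\cup\{a\}$ is a deletion subformula of $\Psi$, and the estimates of \Cref{lem:parity-bound,prop:proposal-accuracy-bound} hold for all such subformulas simultaneously.
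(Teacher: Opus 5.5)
Your proposal is correct and follows essentially the same route as the paper. You split the error into the same three parts: the per-step simulation error, accumulated over $T^{\bad}$ steps via \Cref{prop:core-heat-bath-update-accuracy-cost} and $\zeta\le\zeta_0$; the exact-chain mixing error via \eqref{eq:core-tv} and \eqref{eq:full-sampler-mixing-factor}; and the final $\eps/2$ residual-sampler error, with the degenerate branches handled the same way. Your extra checks are points the paper leaves implicit: that the step-2 test cannot fail on a nice formula, that $D_\sigma$ is committed correctly, and that $(\sigma,\tau)$ satisfies $\Phi$. Your tighter $\eps/8$ bound on the mixing term is harmless.
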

\begin{proof}
    If $m=0$, the output is exactly uniform. If $s=0$, the assertion follows from \Cref{lem:good-variable-sampler}.
    If $s\ge1$ but $r_{\max}=0$, no clause contains both a core variable and a good variable. The formula factors into the core components and the good formula, so the independent uniform initializations of the $\Omega_i$ give the exact core marginal; the final good-variable call contributes at most $\eps/2$ error.
    We henceforth consider $s\ge1$ and $r_{\max}>0$.
    The precision choice gives $r_i\zeta\le\eps\underline b/(4T^\bad)\le\underline a_i/2$ for every $i$.
    All feasible proposals and local coins terminate almost surely.
    By \Cref{prop:core-heat-bath-update-accuracy-cost}, each selected block has trial acceptance probability at least $\underline a/2$.
    Therefore every completed update terminates almost surely.
    There are finitely many core heat-bath updates, and the final good-variable sampler terminates almost surely by \Cref{lem:good-variable-sampler}.
    Hence $\FastSampler(\Phi, \eps)$ terminates almost surely.

    To bound the accuracy in the remaining branch, define the actual executable kernel as
    \begin{align*}
        \widehat \K = \frac{1}{s} \sum_{i = 1}^s \widehat P_i
    \end{align*}
    where $\widehat P_i(\sigma, \cdot)$ is implemented using $\CoreHeatBath(\sigma, i; \zeta)$ with cutoff $K_\coin$.
    By \Cref{prop:core-heat-bath-update-accuracy-cost},
    \begin{align*}
        \TV{\widehat P_i(\sigma, \cdot)}{P_i(\sigma, \cdot)} \le \frac{r_i \zeta}{\underline a}.
    \end{align*}
    Starting from a common initial assignment $\sigma_0$, let $\sigma_t$ be the assignment after $t$ actual executable updates.
    Then it holds that
    \begin{align*}
        \TV{\K^t(\sigma_0, \cdot)}{\Law(\sigma_t \mid \sigma_0)} \le t \sup_{\sigma}\; \frac{1}{s} \sum_{i = 1}^s \TV{\widehat P_i(\sigma, \cdot)}{P_i(\sigma, \cdot)} \le \frac{t r_{\max} \zeta}{\underline a}.
    \end{align*}
    At $t=T^\bad$, the accumulated simulation error is at most $\eps\underline b/(4\underline a)\le\eps/4$. By \Cref{thm:core-spectral-gap} and \eqref{eq:full-sampler-mixing-factor}, the exact chain satisfies
    \begin{align*}
        \TV{\K^{T^\bad}(\sigma_0,\cdot)}{\pi}
        &\le \frac12\exp\ab(\frac{n\ln2}{2}-\frac{T^\bad}{2s n^{\gamma_{\mathrm{mix}}}}) \\
        &\le \frac12\exp\ab(\frac{n\ln2}{2}-n-1-\ceil{\log_2(4/\eps)})
        \le \frac{\eps}{4}.
    \end{align*}
    Combining the two errors gives
    \begin{align*}
        \TV{\pi}{\Law(\sigma_{T^\bad} \mid \sigma_0)} \le \TV{\K^{T^\bad}(\sigma_0, \cdot)}{\Law(\sigma_{T^\bad} \mid \sigma_0)} + \TV{\K^{T^\bad}(\sigma_0, \cdot)}{\pi} \le \frac{\eps}{2}.
    \end{align*}
    For the final good-variable sampler, by \Cref{lem:good-variable-sampler},
    \begin{align*}
        \TV{\Law(\tau \mid \sigma)}{\mu_\Phi(\cdot \mid \sigma)} \le \frac{\eps}{2}.
    \end{align*}
    Putting everything together, we conclude the accuracy bound.
\end{proof}

The expected cost of $\FastSampler$ is given by the following proposition.
\begin{proposition} \label{prop:fast-sampler-cost}
    In the branch of $\FastSampler$ with $s\ge1$ and $r_{\max}>0$, assume that
    \begin{itemize}
        \item $\Cost_\setup$ bounds all one-time preprocessing and precision setup.
        
        \item $U_i$ bounds the expected cost of one uniform feasible proposal for block $B_i$.
        
        \item $\Cost_i(\zeta)$ bounds the expected work of one trial, including local rollback and initial update bookkeeping; hence by \Cref{prop:core-heat-bath-update-accuracy-cost},
        \begin{align*}
            \Cost_i(\zeta) = \widetilde O\ab(U_i + (r_i + 1) T_\coin(\zeta)).
        \end{align*}

        \item $\Cost_\scan$ bounds the expected block-selection and outer-loop overhead per completed update.
        
        \item $\Cost^\good(\delta)$ bounds the expected cost of one good-formula attempt including cleanup after an abort.
    \end{itemize}

    Then it holds that
    \begin{align*}
        \E{\Cost(\FastSampler(\Phi, \eps))} \le \Cost_{\setup} + \sum_{i = 1}^s U_i &+ T^\bad \cdot \Cost_\scan \\
        &+ \frac{2T^\bad}{s} \sum_{i = 1}^s \frac{\Cost_i(\zeta)}{\underline a_i} + \frac{\Cost^\good(\eps/2)}{1 - \eps/4}.
    \end{align*}
\end{proposition}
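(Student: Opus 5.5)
The bound will come from splitting one run of $\FastSampler(\Phi,\eps)$ into its steps and charging each step's expected cost separately; linearity of expectation then lets us add the charges. Steps~1 and~2 are deterministic preprocessing: building $\V^{\high}$, $\V^{\bad}$, the components, $\Psi$, and checking $\abs{B_i}\le\log_2 n$. The parameter and precision computations of Step~4, namely $M_n$, $T^\bad$, $\underline b$, $\zeta$, $K_\coin$ and $K^\good$, are also one-time work. All of this is charged to $\Cost_\setup$. The nonemptiness check of $\Omega_i$ is deterministic by \Cref{lem:core-feasible-assignment} and is likewise absorbed into setup. Step~3 draws one feasible initial assignment per block by rejection sampling, so it costs exactly one uniform feasible proposal per block, which gives $\sum_i U_i$ by the definition of $U_i$.

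The chain of Step~5 runs exactly $T^\bad$ iterations. Each iteration pays the block-selection and loop overhead $\Cost_\scan$, which contributes $T^\bad\Cost_\scan$, plus the cost of one call $\CoreHeatBath(\sigma,i;\zeta)$ with $i$ uniform. We condition on the history $\mathcal F_{t-1}$ before iteration $t$. Given $\mathcal F_{t-1}$, the call uses fresh randomness, and its number of trials $J$ is geometric with success probability $\widehat{\mathscr A}_i\ge\underline a_i/2$. This holds because the precision choice gives $r_i\zeta\le\underline a_i/2$, exactly as in \Cref{prop:core-heat-bath-update-accuracy-cost}. The expected per-trial cost is at most $\Cost_i(\zeta)$ uniformly in the current state $\sigma$. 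The Tonelli argument from that proposition then bounds the conditional expected cost by $2\Cost_i(\zeta)/\underline a_i$. Averaging over the uniform choice of $i$ gives $\frac{2}{s}\sum_i\Cost_i(\zeta)/\underline a_i$ per iteration, independently of $\mathcal F_{t-1}$. Summing over the $T^\bad$ iterations by the tower property gives the fourth term. No mixing analysis is needed here, since the bound is uniform over states.

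Step~6 is one call $\GoodSampler(D_\sigma,\eps/2)$ with cutoff $K^\good$. Its definition of $K^\good$ uses $256\rho(m+1)/\eps=128\rho(m+1)/(\eps/2)$, which matches \eqref{eq:truncated-length-choice} with $\delta=\eps/2$. The proof of \Cref{lem:good-variable-sampler} shows the per-attempt success probability $p_F\ge1-\delta/2=1-\eps/4$. It also shows that the total cost equals $\E{C_1}/p_F$ when attempts are restarted independently. Each attempt, including abort cleanup, costs at most $\Cost^\good(\eps/2)$ in expectation, uniformly in the family $D_\sigma$. So conditioning on the final $\sigma$ bounds this step by $\Cost^\good(\eps/2)/(1-\eps/4)$.

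The main care point is the Step~5 accounting, where the state at each iteration is random and correlated with earlier work. The plan handles it by using only state-uniform conditional bounds and the tower property. A second point is to make sure the trial-count bound $2/\underline a_i$ really holds for every $\sigma$. It does, because $\underline a_i=(1-x)^{r_i}$ is a state-independent lower bound on $A_z$, coming from \Cref{lem:core-assignment-good-variable-extension}.
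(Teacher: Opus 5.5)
Your proposal is correct and follows essentially the same accounting as the paper. You charge the one-time work to $\Cost_\setup$ and pay $\sum_i U_i$ for the initial block samples. For each update you condition on the history and bound its cost by $2\Cost_i(\zeta)/\underline a_i$, using the geometric trial count and $r_i\zeta\le\underline a_i/2$, then sum over the $T^\bad$ iterations. For the final residual call you use a restart bound with abort probability at most $\eps/4$; your $\mathbb E[C_1]/p_F$ form is the same as the paper's $\sum_{j\ge1}(\eps/4)^{j-1}$ series.

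One small omission: the initialization of $D_\sigma$ in Step~3 is not covered by $\sum_i U_i$. Like the paper, you should treat it as one-time preprocessing absorbed into $\Cost_\setup$.
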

\begin{proof}
    Fix a formula $\Phi$.
    In the first step, we run a one-time preprocessing and precision setup.
    It costs at most $U_i$ to generate an initial feasible assignment on $B_i$ for every $i = 1, \ldots, s$.
    Let $\mathscr F_{t - 1}$ contain the complete history before the $t$-th core update and let $I_t$ be its uniformly picked block.
    The precision choice ensures that $r_i \zeta \le \underline a_i/2$ and by \Cref{prop:core-heat-bath-update-accuracy-cost},
    \begin{align*}
        \E{\text{the }t\text{-th cost} \mid \mathscr F_{t - 1}, I_t = i} \le \frac{2\Cost_i(\zeta)}{\underline a_i}
    \end{align*}
    and it takes at most $\Cost_\scan$ to pick $I_t$.
    Summing over $t = 1, \ldots, T^\bad$ and taking expectation, we obtain that
    \begin{align*}
        \E{\text{core-update cost}} \le T^\bad \cdot \Cost_\scan + \frac{2T^\bad}{s} \sum_{i = 1}^s \frac{\Cost_i(\zeta)}{\underline a_i}.
    \end{align*}
    Finally, by \Cref{prop:history-domination} and the cutoff in \Cref{lem:good-variable-sampler}, a final good-variable attempt aborts with probability at most $\eps/4$. With independent restarts, its total expected cost is at most
    \begin{align*}
        \Cost^{\good}(\eps/2) \sum_{j \ge 1} (\eps/4)^{j - 1} = \frac{\Cost^{\good}(\eps/2)}{1 - \eps/4}.
    \end{align*}
    Putting everything together, we conclude the total expected cost bound.
\end{proof}

\begin{lemma}[Implementation cost and a uniform exponent] \label{lem:full-sampler-implementation-cost}
    Fix $B>0$ and a nice formula with $k\ge2^{20}$, and let $n^{-B}\le\eps\le1/2$ have the form $\eps=2^{-b}$ for an integer $b\ge1$. Set
    \begin{align*}
        c_k &= 1-\frac{25}{3k^3}, \qquad
        \xi = \frac{4\overline x}{c_k(1-\overline x)}, \qquad \overline x=3p_*, \\
        \kappa_0 &= \frac{2^{-k+4}}{1-2^{-k+4}}
            \frac{2k^2}{6k^7c_k(k-4)}, \\
        \gamma_0 &= 2^{-k+12}, \qquad \eta=\gamma_0+\kappa_0+\xi.
    \end{align*}
    Then $\eta$ depends only on $k$ and satisfies $0<\eta\le2^{16-k}$.
    For sufficiently large $n$, all cutoffs used by $\FastSampler$ are at most $r$, and there is a polynomial $P(k,\lambda)$ of universal degree such that
    \begin{align*}
        \E{\Cost(\FastSampler(\Phi,\eps))}
        \le \widetilde O_B\ab(P(k,\lambda)n^{2+\eta}).
    \end{align*}
\end{lemma}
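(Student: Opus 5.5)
The proof plugs explicit bounds into \Cref{prop:fast-sampler-cost}, term by term, after two preliminary checks.

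\textbf{The exponent.} The first check is that $\eta\le 2^{16-k}$, which is arithmetic. We have $\gamma_0=2^{-k+12}$. Since $\overline x=3\cdot2^{-k+12}$ and $c_k\ge 99/100$, we get $\xi\le 4\overline x\cdot 1.02\le 13\cdot 2^{-k+12}$. The term $\kappa_0$ is $O(2^{-k}k^{-6})$ and is negligible. The sum is therefore at most $15\cdot 2^{-k+12}<2^{16-k}$, and positivity is clear.

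\textbf{The cutoffs.} All cutoffs have the form $\lceil\log_R(C\rho(m+1)/\delta)\rceil$, where $R\ge4$ and $\delta\ge\zeta$. We have $\zeta\ge\eps\underline b/(8T^\bad r_{\max})$, and $T^\bad=O(n^{1+\gamma_0}\cdot n\log n)$. Moreover $r_{\max}\le I(B_i)=O(\ln n)$ by \Cref{lem:core-feasible-assignment}, so $\underline b\ge n^{-o(1)}$, and $\eps\ge n^{-B}$. Hence every cutoff is $O_B(\log n)$, which is below $r=\Theta(n)$ for large $n$.

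\textbf{The cost terms.}
\begin{itemize}
\item \emph{Setup.} $\Cost_\setup$ (computing degrees, the core, components, $\Psi$, and rational precision constants) is $\widetilde O(\poly{k,\lambda} n)$.
\item \emph{Initial feasible assignments.} $\sum_i U_i\le s\cdot\widetilde O(\poly{k,\lambda}n^{\kappa})$ by \Cref{lem:core-feasible-assignment}. Note that $\kappa\le\kappa_0$, since $\Delta\ge6k^7$.
\item \emph{Final good-variable call.} By \Cref{lem:good-variable-sampler}, $\Cost^\good(\eps/2)=\widetilde O_B(n+m\poly{k,\overline D})$.
\item \emph{Core updates (dominant term).} Here $\Cost_i(\zeta)=\widetilde O(U_i+(r_i+1)T_\coin(\zeta))$ with $T_\coin=\widetilde O(\poly{k,\overline D})$ by \Cref{prop:executable-local-marginal-coin}, and $U_i\le\widetilde O(\poly{k,\lambda}n^{\kappa_0})$. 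The key estimate is $1/\underline a_i=(1-x)^{-r_i}$. Using $r_i\le I(B_i)\le 4\ln n/c_k$ and $-\ln(1-x)\le x/(1-x)\le\overline x/(1-\overline x)$, this gives $1/\underline a_i\le n^{4\overline x/(c_k(1-\overline x))}=n^{\xi}$. Multiplying by $T^\bad/s\cdot s$ and using $T^\bad\le 8n^{\gamma_0}s(n+O_B(\log n))$ with $s\le n$, the core-update term is $\widetilde O_B(\poly{k,\lambda}\,n^{2+\gamma_0+\kappa_0+\xi})$.
\item \emph{Scan overhead.} $T^\bad\Cost_\scan$ is also absorbed, since $\Cost_\scan$ is polylogarithmic.
\end{itemize}

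Finally, $\overline D=O(k^8\lambda)$, so every $\poly{k,\overline D}$ is a polynomial in $(k,\lambda)$ of universal degree.

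\textbf{Main obstacle.} The delicate step is the uniform bound on $1/\underline a_i$ and $U_i$ as powers of $n$ with exponents depending only on $k$. This rests on converting the logarithmic incidence bound $I(B_i)\le 4\ln n/c_k$ into exponents; otherwise the factor $(1-x)^{-r_i}$ would not be polynomial. One must also confirm that $r_i\le I(B_i)$, since mixed clauses incident to $B_i$ are incident clauses.
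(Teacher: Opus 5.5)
Your proposal is correct and is essentially the paper's proof. It uses the same exponent arithmetic, the same $O_B(\log n)$ cutoffs against $r=\lfloor q_0 n\rfloor$, the rejection factor $\underline a_i^{-1}\le n^{\xi}$ from $r_i\le I(B_i)\le 4\ln n/c_k$, the bound $U_i\le\widetilde O(\poly{k,\lambda}n^{\kappa_0})$, and $T^{\bad}/s=O_B(n^{1+\gamma_0})$, all fed into \Cref{prop:fast-sampler-cost}.

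There are a few slips, none of which affects the conclusion:
\begin{itemize}
\item Your linear setup bound omits the step-2 test $\Omega_i\neq\emptyset$. The paper performs this test by enumeration at cost $\widetilde O\bigl(\sum_i 2^{|B_i|}(|B_i|+kc_i+1)\bigr)=\widetilde O(n(2n+km))$. That is quadratic rather than linear, but it is still absorbed by $n^{2+\eta}$.
\item You write $\underline b\ge n^{-o(1)}$; the correct bound is $\underline b\ge n^{-\xi}$, which is still enough for the cutoffs.
\item You take the per-trial bound $\Cost_i(\zeta)$ for granted. The paper explicitly justifies, via persistent membership dictionaries, that forming $H_{\sigma,i}$ and rolling back a trial cost only $\widetilde O(r_i)$ rather than $O(m)$.
\end{itemize}
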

\begin{proof}
    We first bound the rejection factors. Every clause containing bad variables meets exactly one core component, since those variables form a clique in the original primal graph. Consequently,
    \begin{align} \label{eq:full-sampler-block-sums}
        s\le n, \qquad \sum_{i=1}^s r_i\le m, \qquad
        \sum_{i=1}^s c_i\le m, \qquad \sum_{i=1}^s\abs{B_i}\le n.
    \end{align}
    By \Cref{lem:core-feasible-assignment},
    \begin{align*}
        r_i\le I(B_i)\le\frac{4\ln n}{c_k}.
    \end{align*}
    Thus, in the branch with $s\ge1$ and $r_{\max}>0$,
    \begin{align} \label{eq:full-sampler-rejection-factor}
        \underline a_i^{-1}\le\underline a^{-1}\le\underline b^{-1}
        =\exp\ab(r_{\max}[-\ln(1-\overline x)])
        \le n^{4\overline x/[c_k(1-\overline x)]}=n^\xi.
    \end{align}
    Here we used $-\ln(1-u)\le u/(1-u)$ for $0\le u<1$.
    The exponent $\kappa$ in \Cref{lem:core-feasible-assignment} satisfies $\kappa\le\kappa_0$, since $\lambda\ge1$.
    For $k\ge2^{20}$, we have $c_k\ge99/100$, $\overline x\le1/100$, and $2k^2/[6k^7c_k(k-4)]\le1$. Hence
    \begin{align*}
        \xi\le\frac{12p_*}{(99/100)^2}<13p_*, \qquad
        \kappa_0\le2^{5-k}, \qquad
        0<\eta<(14\cdot2^{12}+2^5)2^{-k}<2^{16-k}.
    \end{align*}

    Next consider precision setup in the branch $s\ge1$, $r_{\max}>0$.
    Since $\eps\ge n^{-B}$, \eqref{eq:full-sampler-mixing-factor} gives $T^\bad=O_B(sn^{1+\gamma_0})\le O_B(n^{2+\gamma_0})$, and $r_{\max}=O(\ln n)$. By \eqref{eq:full-sampler-rejection-factor} and the dyadic rounding rule,
    \begin{align*}
        \zeta\ge\frac{\eps\underline b}{8T^\bad r_{\max}}
        \ge\frac{n^{-(B+2+\gamma_0+\xi)}}{C_B\ln n}
    \end{align*}
    for a constant $C_B>0$. Since $\gamma_0+\xi<1$, this gives an inverse-polynomial lower bound on $\zeta$ with an exponent bounded independently of $k$ and $\lambda$.
    Since $R\ge4$ and $\rho<1$, the cutoffs satisfy
    \begin{align*}
        K_\coin=O_B(\ln n), \qquad K^\good=O_B(\ln n+\ln(m+1)).
    \end{align*}
    For fixed width and density, these are $O_{B,k,\lambda}(\ln n)$, whereas $r=\floor{q_0n}$ grows linearly in $n$; hence both cutoffs are admissible for all sufficiently large $n$.
    The same conclusion for the good-variable cutoff applies when the core chain is skipped. The final residual tolerance $\delta=\eps/2=2^{-(b+1)}$ has $O_B(1+\log n)$ bits and satisfies $\delta\ge n^{-(B+1)}$ for $n\ge2$. Thus the $L_\delta$ term of \Cref{lem:good-variable-sampler} is absorbed in the setup bound.
    The rational number $\underline b$ has $O(kr_{\max})=O(k\ln n)$ bits. To compute $M_n$, first obtain $\ceil{\log_2 n}$ from the binary representation of $n$, then compute the ceiling of its product with $\gamma_0=2^{-(k-12)}$ by integer division. The resulting $M_n$ has $O(1+\gamma_0\ln n)$ bits. All other precision parameters have bit length polynomial in $k$, $B$, $\ln n$, and $\ln(m+1)$.
    The dyadic rounding and cutoff comparisons can therefore be performed by rational arithmetic in $\widetilde O_B(\poly{k,\lambda})$ bit operations. This accounts for precision setup only; the formula preprocessing is counted next.

    Read the $km$ literal occurrences and build variable incidence lists. A queue of clauses whose number of distinct bad variables exceeds eight constructs the closure $\V^{\bad}$, processing each incidence only a bounded number of times. Joining bad variables that share a clause identifies the components $B_i$ and the lists of their core and mixed clauses.
    Constructing the residual dependency lists costs $\widetilde O(n+km+m\overline D)$: every good variable has degree below $\Delta$, so enumerating the pairs of clauses sharing good variables costs at most $km\Delta\le m\overline D$.
    To implement the nonemptiness test in \Cref{def:main-sampler}, first check the component sizes, then enumerate assignments within each component until a feasible one is found or all have been checked. Its total cost is at most
    \begin{align*}
        \widetilde O\ab(\sum_{i=1}^s2^{\abs{B_i}}(\abs{B_i}+kc_i+1))
        \le\widetilde O\ab(n(2n+km)).
    \end{align*}
    This is a one-time test; every subsequent uniform feasible proposal uses the rejection sampler of \Cref{lem:core-feasible-assignment}.
    Initializing $D_\sigma$ costs $\widetilde O(km)$.
    Together with precision setup and $m\le\lambda n$, $\overline D\le6k^8\lambda+2$, these bounds give
    \begin{align} \label{eq:full-sampler-setup-cost}
        \Cost_\setup\le\widetilde O_B\ab(P(k,\lambda)n^2)
    \end{align}
    for a polynomial $P$ of universal degree, enlarged below when necessary.

    For the local operations, store clause families as persistent membership dictionaries over the shared dependency graph. Membership tests and single-clause insertions or deletions take polylogarithmic time, and recursive calls share unchanged data.
    In particular, $H_{\sigma,i}=D_\sigma\setminus\J_i$ can be formed by $r_i$ deletions. A rejected trial restores the saved version, and an accepted trial commits its final version. Constructing $L_{i,\sigma}(z)$ scans only the $r_i$ incident mixed clauses, in $\widetilde O(kr_i)$ time.
    Recursive proposals likewise modify only the local neighborhoods they expose; their dictionary and cleanup work is absorbed into the local cost bounds of \Cref{lem:connected-subset-proposal,prop:executable-local-marginal-coin}.
    Thus no update requires copying or scanning the entire residual formula.
    Uniform block selection by rejection from $\ceil{\log_2 s}$ fair bits and the outer-loop counter cost $\widetilde O_B(1)$ per update.

    The explicit candidate-cost bound in \Cref{lem:core-feasible-assignment}, summed using \eqref{eq:full-sampler-block-sums}, and the local-coin and one-attempt bounds in \Cref{prop:executable-local-marginal-coin,prop:history-domination}, now give
    \begin{align*}
        \sum_{i=1}^s U_i &\le\widetilde O\ab((2n+km)n^{\kappa_0})
            \le\widetilde O\ab(P(k,\lambda)n^{1+\kappa_0}), \\
        T_\coin(\zeta)&\le\widetilde O_B\ab(P(k,\lambda)), \qquad
        \Cost_\scan\le\widetilde O_B(1), \\
        \Cost^\good(\eps/2)&\le\widetilde O_B\ab(n+m\poly{k,\overline D})
            \le\widetilde O_B\ab(P(k,\lambda)n).
    \end{align*}
    The last bound is for one attempt, including cleanup, as required in \Cref{prop:fast-sampler-cost}; the restart factor is accounted for separately there.
    Also, $\sum_i(r_i+1)\le m+n$, so
    \begin{align*}
        \sum_{i=1}^s\Cost_i(\zeta)
        &\le\widetilde O_B\ab(\sum_i U_i+(m+n)T_\coin(\zeta))
        \le\widetilde O_B\ab(P(k,\lambda)n^{1+\kappa_0}).
    \end{align*}
    Since $T^\bad/s=O_B(n^{1+\gamma_0})$, \eqref{eq:full-sampler-rejection-factor} yields
    \begin{align*}
        \frac{2T^\bad}{s}\sum_{i=1}^s\frac{\Cost_i(\zeta)}{\underline a_i}
        \le\widetilde O_B\ab(P(k,\lambda)n^{2+\gamma_0+\kappa_0+\xi}).
    \end{align*}
    Substituting into \Cref{prop:fast-sampler-cost}, the other terms cost at most
    $\widetilde O_B(P(k,\lambda)(n^2+n^{1+\kappa_0}+n^{2+\gamma_0}+n))$,
    because $(1-\eps/4)^{-1}\le8/7$. This proves the claimed bound with $\eta=\gamma_0+\kappa_0+\xi$.
    If $m=0$, the cost is $O(n)$. If $s=0$ or $r_{\max}=0$, the chain is omitted, and the same preprocessing, initialization, and good-variable bounds suffice.
\end{proof}

\begin{proof}[Proof of \Cref{thm:fast-sampler}]
    Fix $k$, $\alpha$, and $B$ as in the theorem, and put $m=\floor{\alpha n}$.
    Apply the structural estimates with the input density $m/n\le\alpha$ used by \Cref{def:main-sampler}.
    Their explicit failure bounds, combined as in \Cref{lem:random-k-SAT-nice}, still tend to zero: $n_0$ is bounded above by its value at $\alpha$, $q_0$ is bounded below by its value at $\alpha$, and the root-event cutoff is $O_{k,\alpha}(\ln n)$.
    Thus one event of probability $1-o_n(1)$ makes $\Phi$ nice and supports all subsequent calls.

    For any requested accuracy $n^{-B}\le\eps\le1/2$, the binary input convention above gives $b=\ceil{\log_2(1/\eps)}=O_B(1+\log n)$ and $\eps'=2^{-b}$ with $\eps/2<\eps'\le\eps$ after $O(b)$ bit operations. Run $\FastSampler(\Phi,\eps')$.
    This tolerance satisfies $\eps'\ge n^{-(B+1)}$ for $n\ge2$.
    By \Cref{lem:full-sampler-implementation-cost} applied with $B+1$, all cutoffs are admissible for sufficiently large $n$, uniformly over these accuracies.
    Proposition~\ref{prop:fast-sampler-accuracy} then guarantees almost-sure termination and an output satisfying $\Phi$ with total variation error at most $\eps'\le\eps$.
    The implementation-cost lemma, together with $\lambda=1+m/n\le1+\alpha$, gives
    \begin{align*}
        \E{\Cost(\FastSampler(\Phi,\eps'))}
        \le\widetilde O_B\ab(\poly{k,1+\alpha}n^{2+\eta}),
        \qquad 0<\eta\le2^{16-k}.
    \end{align*}
    The polynomial degree is universal, as asserted.
\end{proof}

\newpage
\bibliographystyle{alpha}
\bibliography{refs}

\newpage
\appendix
\section{Supplementary Proofs}
\label{sec:appendix}

\subsection{
  \texorpdfstring
    {Missing proofs in \Cref{sec:core-fast-sampling}}
    {Missing proofs in the bad-core mixing section}
}

\begin{proof}[Proof of \Cref{lem-pisigma-expansion}]
 For every $\sigma\in\Omega$, we have
\begin{equation}
 e^{a_eY_e\left(\sigma_{V_e}\right)}
 =c_e\bigl(1+w_eh_e\left(\sigma_{V_e}\right)\bigr),
 \qquad
 c_e=\begin{cases}1,&a_e>0,\\ e^{a_e},&a_e<0.
       \end{cases}
 \label{eq:core-positive-factorization}
\end{equation}
To verify \eqref{eq:core-positive-factorization}, fix $e\in\mathcal J$
and $\sigma\in\Omega$. If $a_e>0$, then $c_e=1$,
$w_e=e^{a_e}-1$, and $h_e\left(\sigma_{V_e}\right)=Y_e\left(\sigma_{V_e}\right)$, so
\[
 c_e\bigl(1+w_eh_e\left(\sigma_{V_e}\right)\bigr)
 =1+\left(e^{a_e}-1\right)Y_e\left(\sigma_{V_e}\right)
 =\begin{cases}
   1,&Y_e\left(\sigma_{V_e}\right)=0,\\
   e^{a_e},&Y_e\left(\sigma_{V_e}\right)=1.
  \end{cases}
\]
If $a_e<0$, then $c_e=e^{a_e}$, $w_e=e^{-a_e}-1$, and
$h_e\left(\sigma_{V_e}\right)=1-Y_e\left(\sigma_{V_e}\right)$, so
\[
\begin{aligned}
 c_e\bigl(1+w_eh_e\left(\sigma_{V_e}\right)\bigr)
 &=e^{a_e}\bigl(1+\left(e^{-a_e}-1\right)\left(1-Y_e\left(\sigma_{V_e}\right)\right)\bigr)\\
 &=1+\left(e^{a_e}-1\right)Y_e\left(\sigma_{V_e}\right)
 =\begin{cases}
   1,&Y_e\left(\sigma_{V_e}\right)=0,\\
   e^{a_e},&Y_e\left(\sigma_{V_e}\right)=1.
  \end{cases}
\end{aligned}
\]
Since $Y_e\left(\sigma_{V_e}\right)\in\{0,1\}$, the displayed values equal
$e^{a_eY_e\left(\sigma_{V_e}\right)}$ in every case, proving
\eqref{eq:core-positive-factorization}.

By \eqref{eq:core-positive-factorization},
\begin{equation}
\begin{aligned}
\forall \sigma\in\Omega,\qquad \pi(\sigma)
 &\propto\left(\prod_{i=1}^s\ell_i\left(\sigma_i\right)\right)
          \left(\prod_{e\in\mathcal J}c_e\right)
          \prod_{e\in\mathcal J}\bigl(1+w_eh_e\left(\sigma_{V_e}\right)\bigr)\\
 &\propto\left(\prod_{i=1}^s\ell_i\left(\sigma_i\right)\right)
          \prod_{e\in\mathcal J}\bigl(1+w_eh_e\left(\sigma_{V_e}\right)\bigr).
\end{aligned}
\end{equation}
The second proportionality uses that $\prod_{e\in\mathcal J}c_e$ is
positive and independent of $\sigma$.
The lemma is proved.
\end{proof}

\begin{proof}[Proof of \Cref{lem-pisigmatau-expansion}]
Summing the full joint weight in \eqref{eq:core-positive-lift} over
$\eta$ gives, for every $\sigma\in\Omega$,
\begin{align}
 \sum_{\eta\in\{0,1\}^{\mathcal J}}\widehat\pi(\sigma,\eta)
 &\propto
 \left(\prod_{i=1}^s\ell_i\left(\sigma_i\right)\right)
 \sum_{\eta\in\{0,1\}^{\mathcal J}}
 \prod_{e:\eta_e=1}w_eh_e\left(\sigma_{V_e}\right)\notag\\
 &=\left(\prod_{i=1}^s\ell_i\left(\sigma_i\right)\right)
 \prod_{e\in\mathcal J}
 \left(1+w_eh_e\left(\sigma_{V_e}\right)\right)\propto\pi(\sigma),
 \label{eq:core-positive-lift-marginal}
\end{align}
where the last proportionality follows from
\Cref{lem-pisigma-expansion}. The proportionality constants are
independent of $\sigma$; normalizing over $\sigma$ therefore shows that
the $\sigma$-marginal of $\widehat\pi$ is exactly $\pi$, namely,
\begin{equation}
 \sum_{\eta\in\{0,1\}^{\mathcal J}}
 \widehat\pi(\sigma,\eta)=\pi(\sigma),
 \qquad \sigma\in\Omega.
\end{equation}
Moreover, for fixed $\sigma$,
\begin{equation}
\begin{aligned}
 \widehat\pi\left(\eta\mid X=\sigma\right)
 &=\frac{
       \displaystyle
       \left(\prod_{i=1}^s\ell_i\left(\sigma_i\right)\right)
       \prod_{e:\eta_e=1}w_eh_e\left(\sigma_{V_e}\right)}{
       \displaystyle\sum_{\eta'\in\{0,1\}^{\mathcal J}}
       \left(\prod_{i=1}^s\ell_i\left(\sigma_i\right)\right)
       \prod_{e:\eta'_e=1}w_eh_e\left(\sigma_{V_e}\right)}=\frac{
       \displaystyle\prod_{e:\eta_e=1}
       w_eh_e\left(\sigma_{V_e}\right)}{
       \displaystyle\sum_{\eta'\in\{0,1\}^{\mathcal J}}
       \prod_{e:\eta'_e=1}w_eh_e\left(\sigma_{V_e}\right)}\\
 &=\prod_{e:\eta_e=1}
   \frac{w_eh_e\left(\sigma_{V_e}\right)}
        {1+w_eh_e\left(\sigma_{V_e}\right)}
  \prod_{e:\eta_e=0}
   \frac{1}{1+w_eh_e\left(\sigma_{V_e}\right)}.
\end{aligned}
\end{equation}
The lemma is proved.
\end{proof}

\begin{proof}[Proof of \Cref{lem:core-random-component-flip}]
For any fixed set $D\subseteq[s]$, define the involution $T_D$ on
$\left(\Omega\times\{0,1\}^{\mathcal J}\right)^2$ by
\begin{equation}
 T_D\bigl((\sigma,\eta),(\omega,\xi)\bigr)
 =\bigl(\left(\widetilde\sigma,\widetilde\eta\right),
        \left(\widetilde\omega,\widetilde\xi\right)\bigr),
 \label{eq:core-swap-map}
\end{equation}
where, for every $i\in[s]$ and $e\in\mathcal J$,
\[
 \left(\widetilde\sigma_i,\widetilde\omega_i\right)=
 \begin{cases}
  \left(\omega_i,\sigma_i\right),&i\in D,\\
  \left(\sigma_i,\omega_i\right),&i\notin D,
 \end{cases}
 \qquad
 \left(\widetilde\eta_e,\widetilde\xi_e\right)=
 \begin{cases}
  \left(\xi_e,\eta_e\right),&V_e\subseteq D,\\
  \left(\eta_e,\xi_e\right),&V_e\not\subseteq D.
 \end{cases}
\]
Whenever we apply $T_D$ below to an input pair
$\bigl((X,\eta),(X',\xi)\bigr)$, we assume that $D$ is a union of
connected components of its associated hypergraph $g$, where $e$ is
present exactly when $\eta_e\vee\xi_e=1$. 
Formally,
\begin{equation}
 D=\bigcup_{C\in\mathcal D}C\subseteq[s] \qquad \text{ for some }
\mathcal D\subseteq\operatorname{Comp}(g).
 \label{eq:core-swap-region}
\end{equation}
The subcollection $\mathcal D$ may be empty or equal to
$\operatorname{Comp}(g)$.

Fix an input pair $\bigl((X,\eta),(X',\xi)\bigr)$, and let $g$ be its associated hypergraph and $D$ be a union of components of $g$.
Fix $e\in\mathcal J$ with $\eta_e\vee\xi_e=1$. The hyperedge labelled
$e$ belongs to $g$, so any two vertices of $V_e$ are connected by this
hyperedge. Hence $V_e\subseteq C_e$ for a unique
$C_e\in\operatorname{Comp}(g)$. If $C_e\in\mathcal D$, then
\eqref{eq:core-swap-region} gives $V_e\subseteq C_e\subseteq D$.
If $C_e\notin\mathcal D$, then $C_e\cap C=\emptyset$ for every
$C\in\mathcal D$, because distinct connected components have disjoint
vertex sets. Thus \eqref{eq:core-swap-region} gives
\[
 C_e\cap D
 =C_e\cap\bigcup_{C\in\mathcal D}C
 =\bigcup_{C\in\mathcal D}\left(C_e\cap C\right)
 =\emptyset,
\]
and therefore $V_e\cap D=\emptyset$. We have proved
\begin{equation}
 \forall e\in\mathcal J,\qquad
 \eta_e\vee\xi_e=1
 \ \Longrightarrow\
 V_e\subseteq D\ \text{or}\ V_e\cap D=\emptyset.
 \label{eq:core-open-edge-region}
\end{equation}
Write
\[
 \left(\left(\widetilde X,\widetilde\eta\right),
       \left(\widetilde X',\widetilde\xi\right)\right)
 =T_D\bigl((X,\eta),\left(X',\xi\right)\bigr).
\]
For each $e\in\mathcal J$, define
\[
 F_e(\sigma,\eta,\omega,\xi)
 =\bigl(w_eh_e\left(\sigma_{V_e}\right)\bigr)^{\eta_e}
  \bigl(w_eh_e\left(\omega_{V_e}\right)\bigr)^{\xi_e},
 \qquad
 \sigma,\omega\in\Omega,\quad\eta,\xi\in\{0,1\}^{\mathcal J},
\]
where a factor with exponent zero equals one, including $0^0=1$.
We verify the invariance of these factors in three cases.
\begin{enumerate}
 \item If $V_e\subseteq D$, the definition in
 \eqref{eq:core-swap-map} gives
 \[
 \left(\widetilde X_{V_e},\widetilde X'_{V_e},
  \widetilde\eta_e,\widetilde\xi_e\right)
 =\left(X'_{V_e},X_{V_e},\xi_e,\eta_e\right).
 \]
 Consequently,
 \[
 \begin{aligned}
 F_e\left(\widetilde X,\widetilde\eta,\widetilde X',\widetilde\xi\right)
 &=\left(w_eh_e\left(\widetilde X_{V_e}\right)\right)^{\widetilde\eta_e}
   \left(w_eh_e\left(\widetilde X'_{V_e}\right)\right)^{\widetilde\xi_e}\\
 &=\bigl(w_eh_e\left(X'_{V_e}\right)\bigr)^{\xi_e}
   \bigl(w_eh_e\left(X_{V_e}\right)\bigr)^{\eta_e}\\
 &=\bigl(w_eh_e\left(X_{V_e}\right)\bigr)^{\eta_e}
   \bigl(w_eh_e\left(X'_{V_e}\right)\bigr)^{\xi_e}
 =F_e\left(X,\eta,X',\xi\right).
 \end{aligned}
 \]
 \item If $V_e\cap D=\emptyset$, then $V_e\not\subseteq D$ because
 $V_e\ne\emptyset$. Hence \eqref{eq:core-swap-map} gives
 \[
 \left(\widetilde X_{V_e},\widetilde X'_{V_e},
  \widetilde\eta_e,\widetilde\xi_e\right)
 =\left(X_{V_e},X'_{V_e},\eta_e,\xi_e\right).
 \]
 Then $F_e\left(\widetilde X,\widetilde\eta,\widetilde X',\widetilde\xi\right)
 =F_e\left(X,\eta,X',\xi\right)$.
 \item If $V_e\cap D\ne\emptyset$ and $V_e\not\subseteq D$, then
 \eqref{eq:core-open-edge-region} implies $\eta_e\vee\xi_e=0$.
 Since $\eta_e,\xi_e\in\{0,1\}$, we have $\eta_e = \xi_e = 0$.
 Then \eqref{eq:core-swap-map} yields
 \[
 \widetilde\eta_e=\eta_e=0,\qquad
 \widetilde\xi_e=\xi_e=0.
 \]
 Therefore the zero-exponent convention gives
 \[
 F_e\left(\widetilde X,\widetilde\eta,\widetilde X',\widetilde\xi\right)
 =1=F_e\left(X,\eta,X',\xi\right).
 \]
\end{enumerate}
These cases exhaust all possibilities, so
\begin{equation}
 \forall e\in\mathcal J,\qquad
 F_e\left(\widetilde X,\widetilde\eta,\widetilde X',\widetilde\xi\right)
 =F_e\left(X,\eta,X',\xi\right).
 \label{eq:core-swap-edge-factor}
\end{equation}
For every $i\in[s]$, \eqref{eq:core-swap-map} also gives
\[
 \begin{aligned}
 \ell_i\left(\widetilde X_i\right)\ell_i\left(\widetilde X'_i\right)
 &=\begin{cases}
   \ell_i\left(X'_i\right)\ell_i\left(X_i\right),&i\in D,\\
   \ell_i\left(X_i\right)\ell_i\left(X'_i\right),&i\notin D,
  \end{cases}\\
 &=\ell_i\left(X_i\right)\ell_i\left(X'_i\right).
 \end{aligned}
\]
Combined with \eqref{eq:core-swap-edge-factor} and \eqref{eq:core-positive-lift}, we obtain
\begin{equation}
 \widehat\pi\left(\widetilde X,\widetilde\eta\right)
 \widehat\pi\left(\widetilde X',\widetilde\xi\right)
 =\widehat\pi(X,\eta)\widehat\pi\left(X',\xi\right).
 \label{eq:core-lifted-pair-swap-invariance}
\end{equation}
Moreover, for every $e\in\mathcal J$,
\[
 \begin{aligned}
 \widetilde\eta_e\vee\widetilde\xi_e
 &=\begin{cases}
   \xi_e\vee\eta_e,&V_e\subseteq D,\\
   \eta_e\vee\xi_e,&V_e\not\subseteq D,
  \end{cases}\\
 &=\eta_e\vee\xi_e.
 \end{aligned}
\]
Thus,
\[
 \{e\in\mathcal J:\widetilde\eta_e\vee\widetilde\xi_e=1\}
 =\{e\in\mathcal J:\eta_e\vee\xi_e=1\}.
\]
Finally, for every $\sigma,\omega\in\Omega$ and
$\eta,\xi\in\{0,1\}^{\mathcal J}$, applying
\eqref{eq:core-swap-map} twice gives
\[
 T_D\Bigl(T_D\bigl((\sigma,\eta),(\omega,\xi)\bigr)\Bigr)
 =\bigl((\sigma,\eta),(\omega,\xi)\bigr).
\]
Indeed, each selected coordinate and auxiliary bit is swapped twice,
returning to its original value, while all others remain unchanged.
Moreover, the edge indicators $\eta_e\vee\xi_e$ are unchanged by
$T_D$ for every input pair and every fixed $D$. Thus, for each fixed
$g$ and each $D$ that is a union of components of $g$, the restriction
of $T_D$ to the fiber of pairs whose hypergraph is $g$ is a bijection
from that fiber to itself. By
\eqref{eq:core-lifted-pair-swap-invariance}, this restricted bijection
preserves the pair probabilities.

For each $C\in\operatorname{Comp}(g)$, independently draw a fresh bit
$\varepsilon_C\in\{0,1\}$ with
\[
 \mathbb P_\varepsilon\left(\varepsilon_C=0\right)
 =\mathbb P_\varepsilon\left(\varepsilon_C=1\right)=\frac12.
\]
Collect these bits into the vector
$\varepsilon=\left(\varepsilon_C\right)_{C\in\operatorname{Comp}(g)}
\in\{0,1\}^{\operatorname{Comp}(g)}$ and define its complementary bit
vector by
\[
 \bar\varepsilon=\left(1-\varepsilon_C\right)_{C\in\operatorname{Comp}(g)}
 \in\{0,1\}^{\operatorname{Comp}(g)}.
\]
Define
\[
 D(\varepsilon)=
 \bigcup_{\substack{C\in\operatorname{Comp}(g)\\\varepsilon_C=1}}C
\]
and apply $T_{D(\varepsilon)}$ to this sample pair.
By definition, $D(\varepsilon)$ is a union of connected components of the
associated hypergraph $g$, so it satisfies the condition in \eqref{eq:core-swap-region}.
Thus a component $C$ is swapped in its entirety if $\varepsilon_C=1$
and is left unchanged if $\varepsilon_C=0$.
Let $X^\varepsilon,\left(X'\right)^\varepsilon$ be the two
resulting configurations.
Since every component is swapped in exactly one of $\varepsilon$ and
$\bar\varepsilon$, we have
$X^{\bar\varepsilon}=\left(X'\right)^\varepsilon$.
The notation $\mathbb E_\varepsilon$ and
$\operatorname{Var}_\varepsilon$ denotes expectation and variance over
$\varepsilon$ alone, with the sample pair
$\bigl((X,\eta),\left(X',\xi\right)\bigr)$ held fixed.
The map $\varepsilon\mapsto\bar\varepsilon$ is a bijection of
$\{0,1\}^{\operatorname{Comp}(g)}$, on which $\varepsilon$ is uniform.
Thus $\varepsilon$ and $\bar\varepsilon$ have the same distribution, as do
$f\left(X^\varepsilon\right)$ and $f\left(X^{\bar\varepsilon}\right)$; in particular, they
have the same mean and variance.
Since $\mathbb E_\varepsilon[f\left(X^\varepsilon\right)]
=\mathbb E_\varepsilon[f\left(X^{\bar\varepsilon}\right)]$, we have
\[
 \begin{aligned}
 f\left(X^\varepsilon\right)-f\left(X^{\bar\varepsilon}\right)
 &=\bigl(f\left(X^\varepsilon\right)
          -\mathbb E_\varepsilon[f\left(X^\varepsilon\right)]\bigr) -\bigl(f\left(X^{\bar\varepsilon}\right)
          -\mathbb E_\varepsilon[f\left(X^{\bar\varepsilon}\right)]\bigr).
 \end{aligned}
\]
Applying $(u-v)^2\le2u^2+2v^2$ to these two mean-subtracted terms
and taking expectation over $\varepsilon$ gives
\begin{equation}
 \begin{aligned}
 \mathbb E_\varepsilon
   \left(f\left(X^\varepsilon\right)-f\left(X^{\bar\varepsilon}\right)\right)^2
 &\le2\mathbb E_\varepsilon\!\left[
   \bigl(f\left(X^\varepsilon\right)-\mathbb E_\varepsilon[f\left(X^\varepsilon\right)]\bigr)^2
   \right] +2\mathbb E_\varepsilon\!\left[
   \bigl(f\left(X^{\bar\varepsilon}\right)
          -\mathbb E_\varepsilon[f\left(X^{\bar\varepsilon}\right)]\bigr)^2
   \right] \\
 &=2\operatorname{Var}_\varepsilon\left(f\left(X^\varepsilon\right)\right)
   +2\operatorname{Var}_\varepsilon\left(f\left(X^{\bar\varepsilon}\right)\right) =4\operatorname{Var}_\varepsilon\left(f\left(X^\varepsilon\right)\right).
 \end{aligned}
 \label{eq:core-swap-antithetic}
\end{equation}
For each $C\in\operatorname{Comp}(g)$, define the bit vector
$\varepsilon^C\in\{0,1\}^{\operatorname{Comp}(g)}$ by
\[
\forall C'\in\operatorname{Comp}(g), \qquad
 \left(\varepsilon^C\right)_{C'}=
 \begin{cases}
  1-\varepsilon_C,&C'=C,\\
  \varepsilon_{C'},&C'\ne C.
 \end{cases}
\]
Thus $\varepsilon^C$ differs from $\varepsilon$ only at the bit indexed
by $C$. This changes whether the entire component $C$ is swapped,
while leaving the choices for all other components unchanged.
Write $\varepsilon_{-C}=\left(\varepsilon_{C'}\right)_{C'\in
\operatorname{Comp}(g)\setminus\{C\}}$ for all the other bits.
Since the bits are independent, applying
Lemma~\ref{lem:efron-stein} in Section~\ref{subsec:efron-stein} to the
function $\varepsilon\mapsto f\left(X^\varepsilon\right)$ gives
\begin{equation}
 \operatorname{Var}_\varepsilon\left(f\left(X^\varepsilon\right)\right)
 \le\sum_{C\in\operatorname{Comp}(g)}
     \mathbb E_\varepsilon\!\left[
       \operatorname{Var}_\varepsilon\left(f\left(X^\varepsilon\right)
                                     \mid\varepsilon_{-C}\right)\right].
 \label{eq:core-swap-conditional-variance}
\end{equation}
For each fixed value of $\varepsilon_{-C}$, let $a$ and $b$ denote the
values of $f\left(X^\varepsilon\right)$ when $\varepsilon_C=0$ and
$\varepsilon_C=1$, respectively. Since each choice of $\varepsilon_C$
has probability $1/2$, the conditional mean of $f\left(X^\varepsilon\right)$ is
$(a+b)/2$, and its conditional variance is
\[
 \begin{aligned}
 \operatorname{Var}_\varepsilon\left(f\left(X^\varepsilon\right)\mid\varepsilon_{-C}\right)
 &=\frac12\left(a-\frac{a+b}{2}\right)^2
   +\frac12\left(b-\frac{a+b}{2}\right)^2 =\frac{(a-b)^2}{4}.
 \end{aligned}
\]
Flipping $\varepsilon_C$ changes the function value from $a$ to $b$
if $\varepsilon_C=0$, and from $b$ to $a$ if $\varepsilon_C=1$.
Thus $\left(f\left(X^\varepsilon\right)-f\left(X^{\varepsilon^C}\right)\right)^2=(a-b)^2$ for either
choice of $\varepsilon_C$. It follows that
\[
 \operatorname{Var}_\varepsilon\left(f\left(X^\varepsilon\right)\mid\varepsilon_{-C}\right)
 =\frac14\mathbb E_\varepsilon\!\left[
   \left(f\left(X^\varepsilon\right)-f\left(X^{\varepsilon^C}\right)\right)^2
   \mid\varepsilon_{-C}\right].
\]
Taking expectations gives
\begin{equation}
 \begin{aligned}
 &\mathbb E_\varepsilon\!\left[
   \operatorname{Var}_\varepsilon\left(
     f\left(X^\varepsilon\right)\mid\varepsilon_{-C}\right)\right] \\
 &\quad=\mathbb E_{\varepsilon_{-C}}\!\left[
   \operatorname{Var}_\varepsilon\left(
     f\left(X^\varepsilon\right)\mid\varepsilon_{-C}\right)\right] \\
 &\quad=\frac14\mathbb E_{\varepsilon_{-C}}\!\left[
   \mathbb E_\varepsilon\!\left[
     \left(f\left(X^\varepsilon\right)
           -f\left(X^{\varepsilon^C}\right)\right)^2
     \mid\varepsilon_{-C}\right]\right] \\
 &\quad=\frac14\mathbb E_\varepsilon\!\left[
     \left(f\left(X^\varepsilon\right)
           -f\left(X^{\varepsilon^C}\right)\right)^2\right].
 \end{aligned}
 \label{eq:core-swap-averaged-bit-variance}
\end{equation}

By \eqref{eq:core-swap-averaged-bit-variance} and \eqref{eq:core-swap-conditional-variance},
\begin{equation}
 \begin{aligned}
 4\operatorname{Var}_\varepsilon\left(f\left(X^\varepsilon\right)\right)
 &\le4\sum_{C\in\operatorname{Comp}(g)}
     \mathbb E_\varepsilon\!\left[
       \operatorname{Var}_\varepsilon\left(
         f\left(X^\varepsilon\right)\mid\varepsilon_{-C}\right)\right]\\
 &=\sum_{C\in\operatorname{Comp}(g)}
     \mathbb E_\varepsilon\!\left[
       \left(f\left(X^\varepsilon\right)
             -f\left(X^{\varepsilon^C}\right)\right)^2\right].
 \end{aligned}
 \label{eq:core-swap-efron-stein}
\end{equation} 

Let $Y=\left((X,\eta),(X',\xi)\right)$, and let $G$ be its associated
labelled hypergraph. For every $g$ with $\mathbb P(G=g)>0$ and every fixed
$\varepsilon\in\{0,1\}^{\operatorname{Comp}(g)}$, the set $D(\varepsilon)$
is a union of components of $g$. Thus $T_{D(\varepsilon)}$ is a bijection
on the fiber of sample pairs whose hypergraph is $g$, and
\eqref{eq:core-lifted-pair-swap-invariance} shows that this map leaves
the product probabilities
$\widehat\pi(X,\eta)\widehat\pi\left(X',\xi\right)$ unchanged. Since
conditioning on $G=g$ amounts to normalizing these probabilities on that
fiber, for every sample-pair value $y$ in the fiber we have
\[
 \mathbb P\left(T_{D(\varepsilon)}(Y)=y\mid G=g\right)
 =\mathbb P\left(Y=y\mid G=g\right).
\]
Conditional on $G=g$, the bit vector $\varepsilon$ is sampled
uniformly from $\{0,1\}^{\operatorname{Comp}(g)}$, independently of
$\left(\left(X,\eta\right),\left(X',\xi\right)\right)$.
Consequently, for every
$\sigma,\omega\in\Omega$,
\begin{equation}
 \begin{aligned}
 &\mathbb P\left(X^\varepsilon=\sigma,
                 \left(X'\right)^\varepsilon=\omega
                 \mid G=g,\varepsilon\right) =\mathbb P\left(X=\sigma,X'=\omega\mid G=g\right).
 \end{aligned}
 \label{eq:core-conditional-swapped-pair-law}
\end{equation}
By the law of total probability and the independence of $X$ and
$X'$, for every $\sigma,\omega\in\Omega$,
\begin{equation}
 \mathbb P\left(X^\varepsilon=\sigma,
                \left(X'\right)^\varepsilon=\omega\right)
 =\mathbb P\left(X=\sigma,X'=\omega\right)
 =\pi(\sigma)\pi(\omega).
 \label{eq:core-swapped-pair-law}
\end{equation}
Using $X^{\bar\varepsilon}=\left(X'\right)^\varepsilon$,
\eqref{eq:core-swapped-pair-law}, and the independence of $X$ and
$X'$ with common law $\pi$, we obtain
\begin{equation}
 \begin{aligned}
 &\mathbb E\!\left[\mathbb E_\varepsilon\!\left[
   \left(f\left(X^\varepsilon\right)
         -f\left(X^{\bar\varepsilon}\right)\right)^2\right]\right] 
 =\mathbb E\!\left[\mathbb E_\varepsilon\!\left[
   \left(f\left(X^\varepsilon\right)
         -f\left(\left(X'\right)^\varepsilon\right)\right)^2\right]\right] \\
 =&\mathbb E\!\left[
   \left(f(X)-f\left(X'\right)\right)^2\right]
   =2\operatorname{Var}_\pi(f).
 \end{aligned}
 \label{eq:core-independent-pair-variance}
\end{equation}
We now average over the independent lifted samples
$(X,\eta),\left(X',\xi\right)\sim\widehat\pi$.
In the following inequality, the outer $\mathbb E$ averages over
$X,\eta,X',\xi$, whereas the inner $\mathbb E_\varepsilon$ averages
only over the swap bits, holding the entire sample pair fixed.
Combining \eqref{eq:core-swap-antithetic} and
\eqref{eq:core-swap-efron-stein}, taking these expectations, and using
\eqref{eq:core-independent-pair-variance} gives
\begin{equation*}
 \operatorname{Var}_\pi(f)
 \le\frac12\mathbb E\!\left[
   \sum_{C\in\operatorname{Comp}(g)}\mathbb E_\varepsilon\!\left[
     \left(f\left(X^\varepsilon\right)
           -f\left(X^{\varepsilon^C}\right)\right)^2\right]\right].
\end{equation*}
\end{proof}

\begin{proof}[Proof of \Cref{lem:core-oscillation-incidence}]
If $J_\Gamma\cap A\ne\emptyset$, then some clause $a\in U_\Gamma$
contains a variable of $W_A$, by \eqref{eq:core-geometric-scope}.
Every label in $\Lambda$ has $|J_\Gamma|\ge2$, so
\eqref{eq:core-multiblock-root} gives
\[
 R(A)\le
 \sum_{\substack{a\in\C^{\good}\\\vbl(a)\cap W_A\ne\emptyset}}
 \sum_{\substack{\Gamma\in\Lambda\\a\in U_\Gamma}}|a_\Gamma|
 \le\rho_0
 \left|\{a\in\C^{\good}:\vbl(a)\cap W_A\ne\emptyset\}\right|
 \le\rho_0 I(W_A).
\]
The first sum may count a cluster once for each incident clause, which
only increases it. This proves the claimed bound.
\end{proof}

\begin{proof}[Proof of \Cref{lem-rqequalaj}]
  
Recall that $\zeta$ is a hypergraph on vertex set $[s]$, and that the
hyperedge indexed by a retained label $\Gamma$ has vertex set $J_\Gamma$.
Suppose that $\Gamma\in\mathcal D_t$ and $\zeta_\Gamma=1$. By the choice
of $a_t$, there is some $i\in\mathcal R_t$ such that
$B_i\cap\vbl(a_t)\ne\emptyset$. Since $a_t\in U_\Gamma$, the definition
of $J_\Gamma$ in \eqref{eq:core-geometric-scope} gives
$i\in J_\Gamma$. Thus the retained hyperedge $J_\Gamma$ meets
$\mathcal R_t$, and adding its vertices preserves connection to $j$.
It follows inductively that every vertex in $\mathcal R_Q$ is connected
to $j$ in the hypergraph $\zeta$, and hence $\mathcal R_Q\subseteq\mathcal A_j$.

Conversely, consider any hyperedge $J_\Gamma$ of $\zeta$ meeting
$\mathcal R_Q$. By the definition of $J_\Gamma$, some clause
$a\in U_\Gamma$ contains a variable from a block $B_i$ with
$i\in\mathcal R_Q$. Before the procedure stops, every such good clause
is used. When $a$ is used, the label $\Gamma$ is checked, unless it was
checked earlier; since $\zeta_\Gamma=1$, all vertices of $J_\Gamma$ are
therefore added to $\mathcal R_Q$. Thus every retained hyperedge that
meets $\mathcal R_Q$ is entirely contained in $\mathcal R_Q$.
Consequently, no path in $\zeta$ can connect a vertex of $\mathcal R_Q$
to a vertex outside $\mathcal R_Q$. Since $j\in\mathcal R_Q$, this proves
$\mathcal A_j\subseteq\mathcal R_Q$, and therefore
$\mathcal R_Q=\mathcal A_j$. 
\end{proof}

\begin{proof}[Proof of \Cref{lem:core-exploration-mgf}]
Let $\mathscr F_t$ be the information revealed before step $t$.
For $t<Q$, the set $\mathcal D_t$ is $\mathscr F_t$-measurable and
\begin{equation}
 \delta T_t\triangleq T_{t+1}-T_t
   =\sum_{\Gamma\in\mathcal D_t}K_\Gamma\zeta_\Gamma.
 \label{eq:core-exploration-increment}
\end{equation}
The indicators with labels in $\mathcal D_t$ have not been exposed.
Conditionally on $\mathscr F_t$, they remain independent with the
Bernoulli probabilities $p_\Gamma=1-e^{-2|a_\Gamma|}$ of
\eqref{eq:core-zeta-law}. Since $p_\Gamma\le2|a_\Gamma|$ and
$1+u\le e^u$, we have
\begin{align*}
 \mathbb E[e^{\delta T_t}\mid\mathscr F_t]
 &=\prod_{\Gamma\in\mathcal D_t}
    \left(1+p_\Gamma(e^{K_\Gamma}-1)\right)\\
 &\le\exp\left(2\sum_{\Gamma\in\mathcal D_t}
          |a_\Gamma|(e^{K_\Gamma}-1)\right).
\end{align*}
Every checked label has $K_\Gamma\ge2$, by
\eqref{eq:core-scope-size}. As $\tau\ge1$, for each integer $K\ge2$
we have
\[
 e^K-1\le e^K
 =e^{\tau K}e^{-(\tau-1)K}
 \le e^2e^{-2\tau}e^{\tau K}.
\]
Moreover, $\mathcal D_t\subseteq\{\Gamma:a_t\in U_\Gamma\}$.
The cluster-root bound \eqref{eq:core-cluster-root} therefore yields
\[
 \sum_{\Gamma\in\mathcal D_t}|a_\Gamma|(e^{K_\Gamma}-1)
 \le e^2e^{-2\tau}\rho=e^2\rho_0.
\]
Since $2e^2<15$, it follows that
\begin{equation}
 \mathbb E[e^{\delta T_t-15\rho_0}\mid\mathscr F_t]\le1,
 \qquad t<Q.
 \label{eq:core-one-step-supermartingale}
\end{equation}
Define the stopped process
\begin{equation}
 M_t=\exp\left(T_{t\wedge Q}-15\rho_0(t\wedge Q)\right),
 \qquad t\ge0.
 \label{eq:core-stopped-supermartingale}
\end{equation}
On $\{t<Q\}$, equation \eqref{eq:core-one-step-supermartingale}
gives $\mathbb E[M_{t+1}\mid\mathscr F_t]\le M_t$; on
$\{t\ge Q\}$, the process is constant. Thus $M_t$ is a nonnegative
supermartingale. No clause is queried twice, so $Q\le m$, and hence
\[
 \mathbb E e^{T-15\rho_0Q}=\mathbb E M_m\le M_0=1.
\]
This proves the lemma without an unbounded stopping-time argument.
\end{proof}

\begin{proof}[Proof of \Cref{lem-wleksquareovertwodeltaiw}]
 By
\eqref{eq:core-component-growth-input}, every $v\in W\cap\V^{\high}$ satisfies
$\deg(v)\ge\Delta$. Hence,
\[
 \Delta|W\cap\V^{\high}|
 \le\sum_{v\in W\cap\V^{\high}}\deg(v).
\]
By the definition of $I(W)$ in \eqref{eq:incident-clause-count}, exactly
$I(W)$ clauses contain at least one variable of $W$. Each such clause
contains at most $k$ variables and therefore contributes at most $k$ to
the degree sum. Hence
\[
 \sum_{v\in W\cap\V^{\high}}\deg(v)\le kI(W).
\]
Therefore, 
\[
  |W\cap\V^{\high}|\le\frac{k}{\Delta}I(W).
\]
Combining this with 
\eqref{eq:core-component-growth-input} gives
\[
 |W|\le\frac{k}{2}|W\cap\V^{\high}|
 \le\frac{k^2}{2\Delta}I(W).
\]
The lemma is proved.
\end{proof}

\begin{proof}[Proof of \Cref{lem-winlq-wplust}]
Recall that $\H_\Psi$ is the dependency graph of the residual formula
$\Psi$ on the clause labels $\C^{\good}$, where two clause labels are
adjacent exactly when their residual clauses share a good variable.
Recall also that $\G_\Phi$ is the primal graph of the original formula
$\Phi$: its vertices are the variables of $\Phi$, and two variables are
adjacent when they occur together in an original clause.

Fix
$\Gamma\in\mathcal E_Q$ with $\zeta_\Gamma=1$. Recall that
$\Gamma=(\gamma_1,\ldots,\gamma_\ell)$ is a tuple of polymers, each
$\gamma_i$ is a connected set of clause labels in $\H_\Psi$, and
$U_\Gamma=\bigcup_{i=1}^{\ell}\gamma_i$ is itself a set of clause labels. 
Since $\zeta_\Gamma=1$,
\eqref{eq:core-zeta-law} forces $a_\Gamma\ne0$: indeed, if
$a_\Gamma=0$, then
$\mathbb P_\zeta(\zeta_\Gamma=1)=1-e^{-2|a_\Gamma|}=0$.
By \Cref{lem-core-cluster-connectivity}, the induced clause graph
$\H_\Psi[U_\Gamma]$ is connected.

Choose a spanning tree $T_\Gamma$ of $\H_\Psi[U_\Gamma]$. For each edge
$e=\{c,d\}$ of $T_\Gamma$, choose a good variable $x_e$ appearing in
both original clauses $C_c$ and $C_d$; such a variable exists by the
definition of $\H_\Psi$. Define
\begin{equation}
 L_\Gamma=
 \left(\bigcup_{c\in U_\Gamma}
       \bigl(\vbl(c)\cap\V^{\bad}\bigr)\right)
 \cup\{x_e:e\in E(T_\Gamma)\}
 \cup W_{J_\Gamma}.
 \label{eq:core-label-variable-set}
\end{equation}
For each $c\in U_\Gamma$, the connector variables $x_e$ on tree edges
incident to $c$, together with the bad variables of $C_c$, lie in the
same clause $C_c$ and hence are mutually adjacent in $\G_\Phi$.
Here $|U_\Gamma|\ge|J_\Gamma|\ge2$ by
\eqref{eq:core-scope-size}, so every vertex $c$ of the spanning tree
is incident to a tree edge. Thus even a purely good clause, which has
no bad variables, contains at least one chosen connector variable.
The nonempty variable sets attached to adjacent clauses share their
edge's connector. Since $T_\Gamma$ is connected, the union of the
first two sets in \eqref{eq:core-label-variable-set} is connected.

For every $i\in J_\Gamma$, \eqref{eq:core-geometric-scope} gives
$B_i\cap\vbl(U_\Gamma)\ne\emptyset$. Thus $B_i$ intersects the first
set in \eqref{eq:core-label-variable-set}. Since $B_i$ is itself
connected in $\G_\Phi$, adjoining the blocks
$\{B_i:i\in J_\Gamma\}$ to the union of the first two sets in
\eqref{eq:core-label-variable-set} preserves connectivity. By
\eqref{eq:core-block-union}, the resulting set
is exactly $L_\Gamma$.
Hence, $L_\Gamma$ is connected. 
Moreover, the label $\Gamma$
belongs to a unique $\mathcal D_t$. By
\eqref{eq:core-exploration-new-labels}
and the choice of $a_t$, we have $a_t\in U_\Gamma$. 
Furthermore, by the
exploration rule, $a_t$ intersects some currently discovered block
$B_i$ with $i\in\mathcal R_t$, and hence
$\vbl(a_t)\cap W_{\mathcal R_t}\ne\emptyset$ by
\eqref{eq:core-block-union}.
Since $a_t\in U_\Gamma$ and
$W_{\mathcal R_t}\subseteq\V^{\bad}$, we have
\[
 \emptyset\ne\vbl(a_t)\cap W_{\mathcal R_t}
 \subseteq W_{\mathcal R_t}\cap
 \bigcup_{c\in U_\Gamma}\bigl(\vbl(c)\cap\V^{\bad}\bigr).
\]
Combined with
\eqref{eq:core-label-variable-set}, we have
\begin{equation}
 L_\Gamma\cap W_{\mathcal R_t}\ne\emptyset.
 \label{eq:core-label-set-meets-current-blocks}
\end{equation}

Set $L_0=B_j$ and, for
$0\le t<Q$, define
\begin{align}\label{eq-def-ltplusone}
L_{t+1}=L_t\cup
 \bigcup_{\substack{\Gamma\in\mathcal D_t\\
                     \zeta_\Gamma=1}}L_\Gamma.  
\end{align}
We prove by induction that $L_t$ is connected and contains
$W_{\mathcal R_t}$. At $t=0$, we have
$L_0=B_j=W_{\mathcal R_0}$ because $\mathcal R_0=\{j\}$. Now suppose
that $L_t$ is connected and $W_{\mathcal R_t}\subseteq L_t$. Then every
set adjoined in the definition of $L_{t+1}$ intersects
$W_{\mathcal R_t}$ by
\eqref{eq:core-label-set-meets-current-blocks}, so $L_{t+1}$ is
connected because $W_{\mathcal R_t}\subseteq L_t$. 
By \eqref{eq:core-label-variable-set}, we have 
\begin{align}\label{eq-wjgammasubseteqlgamma}
\forall\  \Gamma,\qquad    W_{J_\Gamma}
 \subseteq L_\Gamma. 
\end{align}
Moreover, by the
update rule for $\mathcal R_{t+1}$,
\[
 W_{\mathcal R_{t+1}}
 =W_{\mathcal R_t}\cup
   \bigcup_{\substack{\Gamma\in\mathcal D_t\\
                       \zeta_\Gamma=1}}W_{J_\Gamma}.
\]
Combined with \eqref{eq-def-ltplusone}, \eqref{eq-wjgammasubseteqlgamma}, and $W_{\mathcal R_t}\subseteq L_t$,
we obtain
\[
 W_{\mathcal R_{t+1}}
 =W_{\mathcal R_t}\cup
   \bigcup_{\substack{\Gamma\in\mathcal D_t\\
                       \zeta_\Gamma=1}}W_{J_\Gamma}
 \subseteq L_t\cup
 \bigcup_{\substack{\Gamma\in\mathcal D_t\\
                     \zeta_\Gamma=1}}L_\Gamma = L_{t+1}.
\]
Thus $L_Q$ is connected and contains
$W_{\mathcal R_Q}=W_{\mathcal A_j}=W$.

Since $T_\Gamma$ has $|U_\Gamma|-1$ edges, it requires at most
$|U_\Gamma|-1$ connector variables. By
\eqref{eq:core-cluster-data},
$U_\Gamma=\bigcup_{i=1}^{\ell}\gamma_i$ and
$K_\Gamma=\sum_{i=1}^{\ell}|\gamma_i|$, and therefore
$|U_\Gamma|\le K_\Gamma$. Hence the total number of variables added to
$W$ is at most
\[
 \sum_{\Gamma\in\mathcal E_Q}K_\Gamma\zeta_\Gamma=T,
\]
which proves the claimed bound $|W|+T$.
\end{proof}

\subsection{Proof for Bernoulli factory and reciprocal test} \label{sec:reciprocal-test}
Here we state the proof of the reciprocal test which follows \cite[Lemma 6.5]{LWYZZ26}.
\begin{proof}[Proof of \Cref{lem:reciprocal-test}]
  At beginning, the auxiliary procedure $\Coin(\O)$ has success probability $u = 2(1 - b)/(2 - b) \le 2/11$ and uses $2/(2 - b) \le 2$ oracle calls in expectation.
  Just for analysis, suppress the main procedure's return-one tests and leave the auxiliary procedure unchanged.
  The walk decreases $i$ by $1$ with probability $1 - v$ and increases $(i, j)$ by $(2, 1)$ with probability $v$ where
  \begin{align*}
    v = \frac{u^2}{1 - u + u^2} \le \frac{4}{103}, \quad 1 - 3v \ge \frac{91}{103} > 0.
  \end{align*}
  Let $H$ be the number of counter changes until the embedded walk hits zero from $i$. Its drift before $H$ is $-(1-3v)$. Stopping after $H\wedge t$ changes gives $\E{i_{H\wedge t}}=i-(1-3v) \E{H\wedge t}\ge0$, where $i_s$ denotes the counter after $s$ changes. Letting $t\to\infty$ yields $\E{H}\le i/(1-3v)$ and almost-sure termination.
  Let $G(z)$ be the probability generating function of the number of increases when the process starts at $i = 1$.
  A similar argument to the proof of \cite[Lemma 6.5]{LWYZZ26} gives that
  \begin{align*}
    G(z) = 1 - v + v z G(z)^3.
  \end{align*}
  Let $N$ be the final value of $j$ in the suppressed process.
  Then $\E{z^{N - 1}} = (1 - u)G(z) + uz G(z)^3$.
  Solving the equation provides that
  \begin{align*}
    \int_0^1 G(z) \d{z} = 1 - \frac{v}{2(1 - v)}, \qquad \int_0^1 z G(z)^3 \d{z} = \frac{1 - 2v}{2(1 - v)}.
  \end{align*}
  Conditional on $N$, the probability that all return-one tests fail is $\frac{1}{2} \prod_{j = 2}^N (1 - 1/j) = 1/(2N)$.
  Therefore, the probability of $\Reciprocal(\O)$ returning zero is
  \begin{align*}
    \Pr{\Reciprocal(\O) = 0} = \frac{1}{2} \E{\frac{1}{N}} = \frac{1}{2} \E{\int_0^1 z^{N - 1} \d{z}}.
  \end{align*}
  By Tonelli's theorem,
  \begin{align*}
    \Pr{\Reciprocal(\O) = 0} &= \frac{1}{2} \E{\int_0^1 z^{N - 1} \d{z}} \\
    &= \frac{1}{2} \int_0^1 \E{z^{N - 1}} \d{z} \\
    &= \frac{1}{2} \int_0^1 \ab((1 - u)G(z) + uz G(z)^3) \d{z} \\
    &= 1 - \frac{1}{2b}.
  \end{align*}

  To bound the oracle cost, let $A$ be the expected number of auxiliary calls needed to obtain one counter change.
  An iteration uses one auxiliary call plus a second call with probability $u$.
  It restarts without a counter change with probability $u(1 - u)$.
  Thus
  \begin{align*}
    A = (1 + u) + u(1 - u)A, \quad \text{and thus} \quad A = \frac{1 + u}{1 - u + u^2} \le \frac{13}{9}.
  \end{align*}
  After initialization, the starting value $I_0$ of $i$ has mean $\E{I_0} = 1 + 2u \le 15/11$.
  Ignoring every early return after the initial fair test, the expected number of auxiliary calls is bounded by
  \begin{align*}
    \frac{1}{2} \ab(1 + \E{I_0} \cdot A \cdot \frac{1}{1 - 3v}) \le \frac{1}{2} \ab(1 + \frac{15}{11} \frac{13}{9} \frac{103}{91}).
  \end{align*}
  Since each auxiliary call uses at most two oracle calls in expectation,
  \begin{align*}
    \E{\text{oracle calls}} \le \frac{1}{2}\ab(1 + \frac{15}{11} \frac{13}{9} \frac{103}{91}) \cdot 2 \le 4.
  \end{align*}

  It remains to justify the $O(1)$ expected additional work.
  The additional work includes counter arithmetic and the tests of probability $1/j$.
  Let $M$ be the total number of embedded counter changes in the suppressed path.
  For one branching-process tree truncated at depth $r$, let $F_r \defeq \E{(3/2)^{T_r}}$ where $T_r$ is its total size.
  Then $F_0 = 3/2$ and
  \begin{align*}
    F_{r + 1} = \frac{3}{2}\ab(1 - v + v F_r^3).
  \end{align*}
  A simple induction on $r$ gives that $F_r \le 2$.
  Conditional on the initial counter $I_0\le3$, the suppressed walk explores a forest of $I_0$ independent trees: each change processes one vertex and produces either zero or three children. Its number $M$ of changes is their total size. Independence at depth $r$ and monotone convergence therefore give
  \begin{align*}
    \E{(3/2)^M\mid I_0}=\lim_{r\to\infty}F_r^{I_0}\le8.
  \end{align*}
  The number of raw iterations between counter changes is geometric with parameter $1 - u + u^2 \ge 9/11$.
  Conditional on $M$, the expected number of raw iterations is at most $M/(1 - u + u^2)$ and throughout the path,
  \begin{align*}
    i \le 3 + 2M, \qquad j \le 2 + M.
  \end{align*}
  Therefore, counter operations cost at most $O(\log_2(M + 4))$ bit operations each.
  A test of probability $1/j$ can be implemented by drawing a uniform integer from $\set{0, \ldots, j - 1}$ using binary rejection sampling, which costs at most $O(\log_2(j + 1))$ expected bit operations.
  Hence the additional work costs in expectation at most
  \begin{align*}
    \E{(M + 1) \log_2(M + 4)} = O(1)
  \end{align*}
  bit operations.
  Putting everything together concludes the Bernoulli factory.
\end{proof}

\end{document}